\documentclass[12pt]{article}
\usepackage[utf8]{inputenc}
\usepackage[margin=0.8in]{geometry}
\usepackage{multirow,multicol}
\usepackage{amsfonts}
\usepackage{amsmath,amssymb,amsthm}
\usepackage[mathscr]{euscript}
\usepackage{comment}
\usepackage{graphicx,color}
\usepackage{mathtools}
\usepackage{mathabx} 
\usepackage[linesnumbered,ruled,vlined]{algorithm2e}
\usepackage{setspace}
\usepackage{sectsty}
\usepackage{makecell}
\usepackage{chngcntr}
\usepackage{apptools}
\usepackage{booktabs}
\usepackage{caption}
\usepackage{subcaption}
\usepackage{tabularx}
\newcolumntype{Y}{>{\raggedleft\arraybackslash}X}
\newcolumntype{Z}{>{\centering\arraybackslash}X}
\usepackage{enumitem}
\usepackage{underscore} 
\usepackage{array} 
\usepackage{arydshln}
\usepackage{soul}
\usepackage{tikz}
\usepackage{hyperref} 
\usetikzlibrary{matrix}
\usepackage[english]{babel}
\usepackage[toc,page]{appendix}
\usepackage{multirow} 

\usepackage{lipsum}
\usepackage{calrsfs} 
\usepackage{pdflscape}
\usepackage{float} 
\usetikzlibrary{shapes.misc}
\usetikzlibrary{shapes}

\DeclareMathAlphabet{\pazocal}{OMS}{zplm}{m}{n} 

\DeclareFontFamily{OT1}{pzc}{}
\DeclareFontShape{OT1}{pzc}{m}{it}{<-> s * [1.10] pzcmi7t}{}
\DeclareMathAlphabet{\mathpzc}{OT1}{pzc}{m}{it}

\usepackage{natbib}
\bibliographystyle{apalike}
\usepackage{bibunits}
\usepackage{bookmark}


\newtheorem{assumption}{Assumption}

\newtheorem{definition}{Definition}
\newtheorem{lemma}{Lemma}
\newtheorem{proposition}{Proposition}
\newtheorem{theorem}{Theorem}

\newtheorem{remark}{Remark}

\newcolumntype{P}[1]{>{\centering\arraybackslash}p{#1}}

\newcommand{\bm}{\boldsymbol}
\newcommand{\cm}[1]{\mbox{\boldmath$\mathscr{#1}$}}
\newcommand{\cmt}[1]{\mbox{\boldmath\scriptsize$\mathscr{#1}$}}
\newcommand{\cmtt}[1]{\mbox{\boldmath\tiny$\mathscr{#1}$}}
\newcommand{\Fr}{{\mathrm{F}}}
\newcommand{\op}{{\mathrm{op}}}

\newcommand{\ma}{{\mathrm{MA}}}

\newcommand{\colsp}{\mathrm{colsp}}
\newcommand{\stcomp}[1]{{#1}^\complement} 
\definecolor{c_rate40}{RGB}{102,153,221}
\definecolor{c_rate20}{RGB}{58,76,101}
\definecolor{c_rate10}{RGB}{37,38,41}
\definecolor{c_order60}{RGB}{237,209,203}
\definecolor{c_order65}{RGB}{200,135,158}
\definecolor{c_order70}{RGB}{131,76,125}
\definecolor{c_order75}{RGB}{45,30,62}
\definecolor{white}{RGB}{255,255,255}

\newcommand{\Lratefourty}{\raisebox{0pt}{\tikz{\draw[c_rate40,solid,line width = 1.0pt,fill=c_rate40](2mm,0) rectangle ++(1.8mm,1.8mm);\draw[-,c_rate40,densely dotted,line width = 1.5pt](0.,0.8mm) -- (5.5mm,0.8mm)}}}
\newcommand{\Lratetwenty}{\raisebox{-2pt}{\tikz{\node[draw=c_rate20,scale=0.35,cross out,minimum size=5mm,rotate=0,line width=1.mm] at (2.75mm,0.65mm) {};\draw[-,c_rate20,dashed,line width = 1.5pt](0.,0.65mm) -- (5.5mm,0.65mm)}}}
\newcommand{\Lrateten}{\raisebox{0pt}{\tikz{\draw[c_rate10,solid,line width = 1.0pt,fill=c_rate10](2.75mm,0.75mm) circle (0.9mm);\draw[-,c_rate10,solid,line width = 1.5pt](0.,0.8mm) -- (5.5mm,0.8mm)}}}
\newcommand{\Lorderone}{\raisebox{0pt}{\tikz{\draw[-,c_order60,solid,line width = 1.5pt](0.,0.8mm) -- (5.5mm,0.8mm); \draw[white,solid,line width = 1.0pt,fill=c_order60](2.75mm,0.75mm) circle (1mm);\draw[c_order60,solid,line width = 1.0pt,fill=c_order60](2.75mm,0.75mm) circle (0.9mm);}}}
\newcommand{\Lordertwo}{\raisebox{-1pt}{\tikz{\draw[-,c_order65,solid,line width = 1.5pt](0mm,0.65mm) -- (5.5mm,0.65mm); \node[draw=white,scale=0.36,cross out,minimum size=5mm,rotate=0,line width=1.1mm] at (2.75mm,0.65mm) {}; \node[draw=c_order65,scale=0.35,cross out,minimum size=5mm,rotate=0,line width=1mm] at (2.75mm,0.65mm) {};  }}}
\newcommand{\Lorderthree}{\raisebox{0pt}{\tikz{\draw[-,c_order75,solid,line width = 1.5pt](0.,0.8mm) -- (5.5mm,0.8mm); \draw[white,solid,line width = 1.0pt,fill=c_order75](1.95mm,0) rectangle ++(1.6mm,1.6mm); \draw[c_order75,solid,line width = 1.0pt,fill=c_order75](2mm,0) rectangle ++(1.5mm,1.5mm);}}}

\def\HH{{\mathrm{\scriptscriptstyle\mathsf{H}}}}

\DeclareMathOperator*{\vect}{vec}
\DeclareMathOperator*{\rank}{rank}
\DeclareMathOperator*{\trace}{tr}
\DeclareMathOperator*{\argmin}{arg\,min}

\DeclareMathOperator*{\diag}{diag}

\DeclareMathOperator*{\var}{var}
\DeclareMathOperator*{\stk}{stack}


\baselineskip=15.5pt
\renewcommand{\arraystretch}{0.85}

\numberwithin{equation}{section}

\title{SARMA:  Scalable Low-Rank High-Dimensional Autoregressive Moving Averages via Tensor Decomposition}
\author{Feiqing Huang,  Kexin Li, and Yao Zheng\footnote{Corresponding Author. Assistant Professor, Department of Statistics, University of Connecticut, Storrs, CT 06269. (Email: yao.zheng@uconn.edu).}
	\\ \textit{University of Hong Kong and University of Connecticut}}
\date{}

\begin{document}
	
\setlength{\parindent}{16pt}
	
\maketitle
	
\begin{abstract}
	Existing models for high-dimensional time series are overwhelmingly developed within the  finite-order vector autoregressive (VAR) framework. However, the more flexible vector autoregressive moving averages (VARMA)  have been much less considered. This paper introduces a Tucker-low-rank framework to efficiently capture  VARMA-type dynamics for high-dimensional time series, named the Scalable ARMA (SARMA) model. 
It generalizes the Tucker-low-rank finite-order VAR model to the infinite-order case via flexible parameterizations of the AR coefficient tensor along the temporal dimension. The resulting model enables dynamic factor extraction across  response and predictor variables, facilitating interpretation of group patterns.
Additionally, we consider sparsity assumptions on the factor loadings to accomplish automatic variable selection and greater estimation efficiency. Both rank-constrained and sparsity-inducing estimators are developed for the proposed model, along with algorithms and model selection methods.
The validity of our theory and empirical advantages of our approach are confirmed by simulation studies and real data examples.
\end{abstract}

\textit{Keywords}:  	High-dimensional time series, Identifiability, Reduced-rank regression, Scalability,  Tensor decomposition, VAR($\infty$), VARMA

\textit{MSC2020 subject classifications}: Primary 62M10; secondary 62H12, 60G10

\newpage

\begin{bibunit}[apalike]
	
\section{Introduction}\label{sec:intro}

Consider the vector autoregressive (VAR) model of order $1\leq P\leq \infty$ for an $N$-dimensional time series $\{\bm{y}_t\}_{t=1}^T$, 
\begin{equation}\label{eq:infVARmodel}
	\bm{y}_t=\sum_{j=1}^{P} \bm{A}_j \bm{y}_{t-j}+\bm{\varepsilon}_t,
\end{equation}
where $\bm{A}_j\in\mathbb{R}^{N\times N}$ are the AR coefficient matrices, and 
$\bm{\varepsilon}_t\in\mathbb{R}^{N}$ is the innovation term at time $t$.  
VAR models are fundamental in multivariate time series modeling \citep{Lutkepohl2005, Tsay14}. Their versatility have led to numerous extensions to accommodate more complex dynamic structures and high dimensions \citep[e.g.,][]{BM21, Wang2021High,bai2023multiple}. 
However, the VAR model is typically employed with a small lag order  $P$ in practice; or theoretically, $P$ is fixed as the sample size and dimension increase. This amounts to assuming that the present observation $\bm{y}_t$ is  influenced by only a few small lags, $\bm{y}_{t-j}$. By excluding data from further in the past, this significantly limits the flexibility of the VAR model. In this paper, we address this limitation by developing a dimension reduction approach to accommodate $P=\infty$. Note that it  includes the finite-order VAR as a special case, yet eliminates the need for order selection.


Let $\cm{A}\in\mathbb{R}^{N\times N\times \infty}$ be the AR coefficient tensor defined by stacking $\bm{A}_j$ for $1\leq j\leq P=\infty$. To address the large dimension $N$, this paper  considers a low-Tucker-rank approach for dimension reduction in the first two modes of $\cm{A}$.  When the cross-sectional dependency among the $N$ component series is sparse,  it is natural to impose sparsity  on AR coefficient matrices  \citep[e.g.,][]{basu2015regularized, sparseARMA}. However, in many applications, especially in economics and finance, the component series often exhibit co-movements and are believed to be driven by a small number of common factors. In such scenarios, the cross-sectional dependency would be non-sparse. For  VAR models with a fixed order $P$, \cite{Wang2021High} imposes a low-Tucker-rank assumption on $\cm{A}\in\mathbb{R}^{N\times N\times P}$ for simultaneous dimension reduction across all three modes of the coefficient tensor. However, it remains an open question how to extend this approach to  $P=\infty$. 

To accommodate both large $N$ and $P=\infty$, this paper introduces a novel solution, which can be encapsulated by the following decomposition of $\cm{A}\in\mathbb{R}^{N\times N\times \infty}$:
\begin{equation}\label{eq:SUL}
	\cm{A}=\cm{S}\times_1 \bm{U}_1\times_2 \bm{U}_2 \times_3\bm{L}(\bm{\omega}),
\end{equation}
where 
$\cm{S}\in\mathbb{R}^{\pazocal{R}_1\times \pazocal{R}_2\times \pazocal{R}_3}$ and $\bm{U}_i\in\mathbb{R}^{N\times \pazocal{R}_i}$ for $i=1,2$ are unknown parameter tensor and matrices, and $\bm{L}(\bm{\omega})\in\mathbb{R}^{\infty\times \pazocal{R}_3}$ is a predetermined function with an unknown low-dimensional parameter vector $\bm{\omega}$.  Similar to \cite{Wang2021High}, this implies that $\cm{A}\in\mathbb{R}^{N\times N\times \infty}$ has low Tucker ranks, $(\pazocal{R}_1, \pazocal{R}_2, \pazocal{R}_3)$. However, instead of factoring out an unrestricted matrix $\bm{U}_3\in\mathbb{R}^{\infty\times \pazocal{R}_3}$ at the third mode,  we deal with the infinite dimension via further parameterization. As a result,  the number of parameters is reduced from infinity to $O(N)$. It is worth noting that our theoretical analysis will differ substantially from  \cite{Wang2021High}  due to the possible nonlinearity of the general parameterization for $\bm{L}$. The conjunction of \eqref{eq:infVARmodel} and \eqref{eq:SUL} leads to a VAR process with simultaneous scalability across response, predictor, and temporal dimensions, allowing large $N$ and $P=\infty$. Moreover, we discover that the factorization of $\bm{L}(\bm{\omega})$ is closely related to vector autoregressive moving average (VARMA) models; in other words, VARMA models are essentially low-rank   VAR($\infty$) models. 
Therefore, we call the proposed high-dimensional low-Tucker-rank VAR($\infty$) model the \textit{scalable ARMA (SARMA)} model.

Recently, there has been growing interest in the estimation of large and high-dimensional VARMA models \citep[e.g.,][]{AV08,CEK16,Dias18,WBBM21}. However,  VARMA models require complex identification constraints, which often obscure the interpretation of the parameters. On the other hand,  \cite{sparseARMA} develops a sparse VAR($\infty$) model which is motivated by the VARMA model but resolves the identification issue. Interestingly, we can show that the model in \cite{sparseARMA} is a special case of \eqref{eq:SUL} without factorizations of $\bm{U}_1$ and $\bm{U}_2$, i.e., there is no low-rank structure for the first two modes. The parametric approach in \cite{sparseARMA} indeed constitutes an easy-to-implement specification for  the user-defined function $\bm{L}(\bm{\omega})$. Furthermore,  we prove the identifiability of the model for the first time in this paper.


Additionally, in the  ultra-high-dimensional setup where $N$ may grow exponentially with the sample size, we further consider a  sparse low-Tucker-rank (SLTR) structure for $\cm{A}$ by imposing  entrywise-sparsity  on the loadings of the response and predictor factors. This results in a more substantial dimension reduction and can be interpreted as an automatic selection of important variables into the response and predictor factors. For the proposed SARMA model, we introduce two estimators: (i) the rank-constrained estimator for the case with non-sparse factor loadings, and (ii) the SLTR estimator for the case with sparse factor loadings.  For both estimators, we derive nonasymptotic error bounds.  A consistent estimator for the Tucker ranks, method for model order selection, and algorithms for implementing the proposed methods are detailed in the supplementary file.


The rest of this paper is organized as follows. Section \ref{section:model} introduces the proposed model, including a general formulation, representative examples, the high-dimensional SARMA model, and the dynamic factor interpretations of the latter. Section \ref{sec:HDmodel} develops estimation methods in both non-sparse and sparse cases, together with theoretical properties. Empirical studies are provided in Section \ref{sec:empirical}. Section \ref{sec:conclusion} concludes with a brief discussion.  Due to the page limit, methods for rank and model order selection, simulation studies,  algorithms, and technical details are given in a separate supplementary file.


%



\subsection{Notations}

Unless otherwise specified, we denote scalars by lowercase letters $x, y, \dots$, vectors by  boldface lowercase letters $\bm{x}, \bm{y}, \dots$, and matrices by boldface capital letters $\bm{X}, \bm{Y}, \dots$.
For any $a,b\in\mathbb{R}$, denote $a\vee b=\max\{a,b\}$ and $a\wedge b=\min\{a,b\}$. 
For any vector $\bm{x}$, denote its $\ell_2$ norm by $\|\bm{x}\|_2$. 
For any matrix $\bm{X}\in\mathbb{R}^{d_1\times d_2}$, let $\sigma_{1}(\bm{X})\geq \sigma_{2}(\bm{X})\geq  \cdots \geq \sigma_{d_1\wedge d_2}(\bm{X})\geq 0$ be its singular values in descending order.  Let $\bm{X}^\prime$, $\sigma_{\max}(\bm{X})$ (or $\sigma_{\min}(\bm{X})$),  $\lambda_{\max}(\bm{X})$ (or $\lambda_{\min}(\bm{X})$), and $\rank(\bm{X})$ denote its transpose, largest (or smallest) singular value, largest (or smallest) eigenvalue, and rank, respectively. 
Its vectorization $\vect(\bm{X})$ is the long vector obtained by stacking all its columns.
In addition, its operator norm, Frobenius norm, and nuclear norm are $\|\bm{X}\|_\op=\sigma_{\max}(\bm{X})$,  $\|\bm{X}\|_\Fr=\sqrt{\sum_{i,j}\bm{X}_{ij}^2}=\sqrt{\sum_{k=1}^{d_1\wedge d_2}\sigma_{k}^2(\bm{X})}$, and $\|\bm{X}\|_*=\sum_{k=1}^{d_1\wedge d_2}\sigma_{k}(\bm{X})$, respectively.  For any two sequences $x_n$ and $y_n$, denote $x_n\lesssim y_n$ (or $x_n\gtrsim y_n$) if there exists an absolute constant $C>0$ such that $x_n\leq C y_n$ (or $x_n\geq C y_n$). Write $x_n\asymp y_n$ if $x_n\lesssim y_n$ and $x_n\gtrsim y_n$. Let $\mathbb{I}_{\{\cdot\}}$ be the indicator function taking value one when the condition is true and zero otherwise. The capital letters $C, C_{\cmtt{G}}, \dots$ and lowercase letters $c, c_{\cmtt{G}}, \dots$ represent generic large and small positive absolute constants, respectively, whose values may vary from place to place.

This paper involves third-order tensors, a.k.a. three-way arrays, which are denoted by calligraphic capital letters. For example, a  $d_1\times d_2\times d_3$ tensor is $\cm{X}=(\cm{X}_{i_1 i_2 i_3})_{1\leq i_1\leq d_1, 1\leq i_2\leq d_2, 1\leq i_3\leq d_3}$. It has three modes, with dimension $d_i$ for mode $i$, for $1\leq i\leq 3$. The  Frobenius norm of the tensor is defined as $\|\cm{X}\|_{\Fr} = \sqrt{\sum_{i_1,i_2, i_3}\cm{X}_{i_1 i_2 i_3}^2}$.  The mode-3 product of $\cm{X}$ and a $K\times d_3$ matrix $\bm{Y}$ is the $d_1\times d_2\times K$ tensor given by $\cm{X} \times_3 \bm{Y} = (\sum_{i_3=1}^{d_3}\cm{X}_{i_1 i_2 i_3}\bm{Y}_{k i_3})_{1\leq i_1\leq d_1, 1\leq i_2 \leq d_2,1\leq k\leq K}$. Similarly, the mode-$i$ multiplication $\times_i$ between $\cm{X}$ and a $K\times d_i$ matrix can be defined for $i=1,2$.
The matricization along mode $i$ of $\cm{X}$ results in a matrix where the mode $i$ becomes the rows of the matrix, and the other modes are collapsed into the columns. The mode-$i$  matricization is denoted by $\cm{X}_{(i)}$, and it can be shown that $\cm{X}_{(1)}=(\bm{X}_1,\dots,\bm{X}_{d_3})$,
$\cm{X}_{(2)}=(\bm{X}_1', \dots,\bm{X}_{d_3}') \in\mathbb{R}^{d_2\times d_1d_3}$, and $\cm{X}_{(3)}=(\text{vec}(\bm{X}_1),\dots,\text{vec}(\bm{X}_{d_3}))'\in\mathbb{R}^{d_3\times d_1d_2}$.
The Tucker rank of $\cm{X}$  at mode $i$ is the rank of $\cm{X}_{(i)}$, i.e., $\pazocal{R}_i=\textrm{rank}(\cm{X}_{(i)})$ for $1\leq i\leq 3$ \citep{tucker1966some,delathauwer2000multilinear}. Unlike row and column ranks of a matrix, $\pazocal{R}_1, \pazocal{R}_2$ and $\pazocal{R}_3$ in general are not identical.

\section{Proposed model} \label{section:model}
\subsection{General formulation}
We propose a general VAR($\infty$) framework as follows:
\begin{equation}\label{eq:genmodel}
	\bm{y}_t=\sum_{j=1}^\infty \bm{A}_j(\bm{\omega},\cm{G}) \bm{y}_{t-j}+\bm{\varepsilon}_t, \quad \text{with}\quad  	\cm{A}(\bm{\omega},\cm{G})=\cm{G}\times_3\bm{L}(\bm{\omega}),
\end{equation}
where $\cm{A}=\cm{A}(\bm{\omega},\cm{G})\in\mathbb{R}^{N\times N\times \infty}$ is the AR coefficient tensor formed by stacking  $\bm{A}_j=\bm{A}_j(\bm{\omega},\cm{G})$ for $j\geq 1$. In addition,  $\cm{G}\in\mathbb{R}^{N\times N\times d}$ is the  tensor formed by  stacking $\bm{G}_k\in\mathbb{R}^{N\times N}$ for $1\leq k\leq  d$, and $\bm{L}(\bm{\omega})\in\mathbb{R}^{\infty\times d}$  is a user-defined function dependent on $\bm{\omega}\in \bm{\Omega}$, with a finite-dimensional parameter space $\bm{\Omega}$. Here
the unknown parameters are $\cm{G}$ and $\bm{\omega}$.

Note that $\cm{A}$ is factorized into $\cm{G}$ and $\bm{L}(\bm{\omega})$ along the third mode. It implies that the Tucker rank of $\cm{A}$ at the third mode is $\pazocal{R}_3=d$. This tensor factorization is equivalent to
\begin{equation}\label{eq:Afactor}
	\bm{A}_j=
	\bm{A}_j(\bm{\omega},\cm{G})=\sum_{k=1}^{d}\ell_{j,k}(\bm{\omega})\bm{G}_k,\quad \text{for}\quad j\geq1.
\end{equation}
If $\cm{G}$ is Tucker-low-rank at the first two modes, we further have the factorization in \eqref{eq:SUL}, where $\pazocal{R}_3 =d$. The low-Tucker-rank VAR($\infty$) model to be developed in this paper, the sparse VAR($\infty$) model in \cite{sparseARMA}, and even the VARMA model, can all be regarded as special cases of \eqref{eq:genmodel}; we discuss this further in the next subsection.

We establish the stationarity condition for model \eqref{eq:genmodel} under the following assumption on $\bm{L}(\bm{\omega})=(\ell_{j,k}(\bm{\omega}))_{j\geq1, 1\leq k\leq d}$.

\begin{assumption}\label{assum:decay1}
	There exists $0<\rho<1$ and $C>0$ such that $\max_{1\leq k\leq K}  |\ell_{j,k}(\bm{\omega})|\leq C\rho^j$ for $j\geq 1$.
\end{assumption}

\begin{theorem}[Weak stationarity]\label{thm:sol}
	Suppose that   Assumption \ref{assum:decay1} holds and $\{\bm{\varepsilon}_t\}$ is an $i.i.d.$ sequence  with $E(\|\bm{\varepsilon}_t\|_2)<\infty$. If $\sum_{k=1}^{d}\|\bm{G}_k\|_{\op}<1/\rho-1$,
	then there exists a unique weakly stationary solution to model  \eqref{eq:genmodel},  and it has the form of $\bm{y}_t = \bm{\varepsilon}_t+ \sum_{j=1}^{\infty}\bm{\Psi}_j\bm{\varepsilon}_{t-j}$, where 
	$\bm{\Psi}_j=\sum_{k=1}^{\infty}\sum_{j_1+\cdots+ j_k=j}\bm{A}_{j_1}\cdots \bm{A}_{j_k}$, and
	$\bm{A}_j$ are given by \eqref{eq:Afactor}.
\end{theorem}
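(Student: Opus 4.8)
The plan is to reduce the entire statement to control of the single scalar quantity $r:=\sum_{j=1}^\infty\|\bm{A}_j\|_{\op}$ and to show that the hypotheses force $r<1$, which then drives both existence and uniqueness. First I would bound the coefficient matrices: by \eqref{eq:Afactor} and the triangle inequality, $\|\bm{A}_j\|_{\op}\le\sum_{k=1}^d|\ell_{j,k}(\bm{\omega})|\,\|\bm{G}_k\|_{\op}\le\big(\max_{1\le k\le d}|\ell_{j,k}(\bm{\omega})|\big)\sum_{k=1}^d\|\bm{G}_k\|_{\op}\le C\rho^j\sum_{k=1}^d\|\bm{G}_k\|_{\op}$, where the last step uses Assumption \ref{assum:decay1}. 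Summing the geometric series and invoking the condition $\sum_{k=1}^d\|\bm{G}_k\|_{\op}<1/\rho-1$ (with the constant normalized to $C=1$) yields $r\le\frac{\rho}{1-\rho}\sum_{k=1}^d\|\bm{G}_k\|_{\op}<1$. Thus the AR coefficients are absolutely summable with total operator mass strictly below one.

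Second, I would show that the MA coefficients inherit absolute summability. Writing each $\bm{\Psi}_j$ as the stated sum over compositions $j_1+\cdots+j_k=j$ of $j$ into $k$ positive parts and bounding submultiplicatively, $\|\bm{\Psi}_j\|_{\op}\le\sum_{k\ge1}\sum_{j_1+\cdots+j_k=j}\prod_{i=1}^k\|\bm{A}_{j_i}\|_{\op}$; summing over $j$ and reorganizing by $k$ collapses the inner double sum into $\big(\sum_{j\ge1}\|\bm{A}_j\|_{\op}\big)^k$, giving $\sum_{j=1}^\infty\|\bm{\Psi}_j\|_{\op}\le\sum_{k=1}^\infty r^k=r/(1-r)<\infty$. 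Hence $\bm{\Psi}(z)=\bm{I}+\sum_{j\ge1}\bm{\Psi}_j z^j$ is a well-defined absolutely summable filter, and it is by construction the Neumann-series inverse of $\bm{I}-\sum_{j\ge1}\bm{A}_j z^j$; in particular it satisfies the convolution identity $\bm{\Psi}_n=\sum_{j=1}^n\bm{A}_j\bm{\Psi}_{n-j}$ for $n\ge1$, with $\bm{\Psi}_0=\bm{I}$.

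Third, I would define $\bm{y}_t=\bm{\varepsilon}_t+\sum_{j\ge1}\bm{\Psi}_j\bm{\varepsilon}_{t-j}$ and verify it is a genuine solution. Since $\{\bm{\varepsilon}_t\}$ is i.i.d.\ with $E\|\bm{\varepsilon}_t\|_2<\infty$, the estimate $\sum_j\|\bm{\Psi}_j\|_{\op}E\|\bm{\varepsilon}_{t-j}\|_2=E\|\bm{\varepsilon}_0\|_2\sum_j\|\bm{\Psi}_j\|_{\op}<\infty$ shows, via Fubini/monotone convergence, that the series converges absolutely almost surely and in $L^1$; the resulting process is a time-invariant measurable functional of the i.i.d.\ innovations, hence strictly stationary and ergodic (and second-order weakly stationary once $E\|\bm{\varepsilon}_t\|_2^2<\infty$ upgrades the convergence to $L^2$). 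Substituting this representation into $\bm{y}_t-\sum_{j\ge1}\bm{A}_j\bm{y}_{t-j}$, interchanging the absolutely convergent double sum, and collapsing it through the convolution identity of the previous step returns exactly $\bm{\varepsilon}_t$, confirming \eqref{eq:genmodel}.

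Finally, for uniqueness I would take the difference $\bm{d}_t=\bm{y}_t-\tilde{\bm{y}}_t$ of any two stationary $L^1$ solutions, which satisfies the homogeneous recursion $\bm{d}_t=\sum_{j\ge1}\bm{A}_j\bm{d}_{t-j}$. Setting $m:=\sup_t E\|\bm{d}_t\|_2<\infty$, the triangle inequality gives $m\le r\,m$, and since $r<1$ this forces $m=0$, i.e.\ $\bm{d}_t=\bm{0}$ almost surely. The main obstacle I anticipate is not a single deep estimate but the careful bookkeeping needed to justify interchanging the infinite sums---over the expansion order $k$, over the compositions $(j_1,\dots,j_k)$, and over the time shifts---which is precisely where the strict inequality $r<1$ (rather than mere summability) is indispensable: it simultaneously guarantees convergence of the Neumann series defining $\bm{\Psi}$ and supplies the contraction that annihilates the homogeneous solution.
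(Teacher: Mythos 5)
Your proposal is correct, and its core coincides with the paper's: both arguments hinge on showing that $S=\sum_{j\ge1}\|\bm{A}_j\|_{\op}<1$ (your $r$) follows from Assumption \ref{assum:decay1} together with $\sum_{k=1}^{d}\|\bm{G}_k\|_{\op}<1/\rho-1$, and both then pass to the MA($\infty$) series with coefficients $\bm{\Psi}_j$. The two halves are finished differently, however. For existence, the paper obtains the representation by iterating the model equation forward and bounding $\mathbb{E}\|\bm{y}_t\|_2\le \mathbb{E}\|\bm{\varepsilon}_t\|_2/(1-S)$, whereas you construct the candidate process first and verify that it solves \eqref{eq:genmodel} via the convolution identity $\bm{\Psi}_n=\sum_{j=1}^{n}\bm{A}_j\bm{\Psi}_{n-j}$ with $\bm{\Psi}_0=\bm{I}$; these are two phrasings of the same Neumann-series computation, and yours makes explicit the substitution/interchange step that the paper leaves implicit. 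For uniqueness the routes genuinely diverge: the paper takes a weakly stationary \emph{causal} solution, substitutes the recursion $m$ times, and kills the remainder $\bm{r}_{t,m}$ (with $\mathbb{E}\|\bm{r}_{t,m}\|_2\le S^m\,\mathbb{E}\|\bm{y}_t\|_2$) via Markov's inequality and Borel--Cantelli, while you difference two stationary $L^1$ solutions and run the one-line contraction $m\le S\,m\Rightarrow m=0$ on $m=\sup_t\mathbb{E}\|\bm{d}_t\|_2$. Your argument is shorter and, notably, does not need the causality assumption the paper imposes on the competing solution. Two shared points of looseness are worth recording: both proofs effectively normalize the constant $C$ in Assumption \ref{assum:decay1} to one (otherwise the stated condition only yields $S<C$), and both derive only an $L^1$ bound, which gives strict stationarity of the constructed process; genuine second-order (weak) stationarity requires $\mathbb{E}\|\bm{\varepsilon}_t\|_2^2<\infty$, a caveat you flag explicitly and the paper does not.
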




\subsection{Motivation and representative examples of model \eqref{eq:genmodel}}\label{subsec:repar}

The formulation in \eqref{eq:genmodel} is intrinsically motivated by an interesting discovery: VARMA models are essentially Tucker-low-rank VAR($\infty$) processes, where the low-rankness occurs  at  the third mode of the AR coefficient tensor $\cm{A}\in\mathbb{R}^{N\times N\times \infty}$ and can be parameterized in the form of $\bm{L}(\bm{\omega})$.

To show this, consider an invertible VARMA($p,q$) model in the form of  $\bm{y}_t =\sum_{i=1}^{p} \bm{\Phi}_i\bm{y}_{t-i}+\bm{\varepsilon}_t - \sum_{j=1}^{q}\bm{\Theta}_j\bm{\varepsilon}_{t-j}$, where $\bm{\Phi}_i, \bm{\Theta}_j\in\mathbb{R}^{N\times N}$ are the coefficient matrices. Denote the MA companion matrix  \citep{Lutkepohl2005} by
\[
\underline{\bm{\Theta}}=
\left(\begin{matrix}
	\bm{\Theta}_1&\bm{\Theta}_2&\cdots&\bm{\Theta}_{q-1}&\bm{\Theta}_q\\
	\bm{I}&\bm{0}&\cdots&\bm{0}&\bm{0}\\
	\bm{0}&\bm{I}&\cdots&\bm{0}&\bm{0}\\
	\vdots&\vdots&\ddots&\vdots&\vdots\\
	\bm{0}&\bm{0}&\cdots&\bm{I}&\bm{0}
\end{matrix} \right ). 
\]
Let $\underline{\bm{\Theta}}$ have $R+2S$ nonzero eigenvalues, where $R=\sum_{k=1}^{r} n_j$ and $S=\sum_{k=1}^{s}m_k$ account for real and complex eigenvalues, respectively. Specifically, let the distinct nonzero real eigenvalues be $\lambda_j$ for $1\leq j\leq r$, with algebraic multiplicity $n_j$. Let the distinct conjugate pairs of complex eigenvalues be $(\gamma_ke^{i\theta_k},\gamma_ke^{-i\theta_k})$  for $1\leq k\leq s$, with algebraic multiplicity  $m_k$, where $\gamma_k\in(0,1)$ and $\theta_k \in (-\pi/2, \pi/2)$. 

\begin{proposition}\label{prop:VARMAgen}
	If the geometric multiplicities of all nonzero eigenvalues of $\underline{\bm{\Theta}}$ are one, then for all $j\geq1$, any invertible VARMA($p,q$) model can be written in the form of \eqref{eq:infVARmodel} with $P=\infty$  and 
	\begin{align}\label{eq:linearcomb2gen}
		\begin{split}
			\bm{A}_{j} 
			&=\sum_{k=1}^{p}\mathbb{I}_{\{j=k\}}\bm{G}_{k}
			+ \sum_{k=1}^{r}  \sum_{i=1}^{n_k} \mathbb{I}_{\{j\geq p+(i-1)\vee1\}} \lambda_k^{j-p - i + 1} \binom{ j-p}{ i - 1}  \bm{G}_{k,i}^{I} \\
			&\hspace{5mm}+\sum_{k=1}^{s}  \sum_{i=1}^{m_k}  \mathbb{I}_{\{j\geq p+(i-1)\vee1\}} \gamma_k^{j-p-i+1} \binom{ j-p }{i- 1 }  \\
			&\hspace{22mm} \cdot \left [ \cos \{ (j-p-i+1) \theta_k\} \bm{G}_{k,i}^{II,1}
			+\sin \{( j-p-i+1) \theta_k\} \bm{G}_{k,i}^{II,2}\right ],
		\end{split}
	\end{align}
	where 
	$\bm{G}_{k,i}^{I}, \bm{G}_{k,i}^{II,1},\bm{G}_{k,i}^{II,2}\in\mathbb{R}^{N\times N}$ are all determined jointly by $\bm{\Phi}_i$'s and $\bm{\Theta}_j$'s, and the corresponding sum will be suppressed if $p, r$ or $s$ is zero. Moreover, 
	for any fixed $k$ and $i$, $\bm{G}_{k,i}^{II,h}$ for $h=1,2$ have the same row and column spaces, and 
	$\rank(\bm{G}_{j,l}^{I})\leq n_k$ and $\rank(\bm{G}_{k,i}^{II,h})\leq 2m_k$
	for all $1\leq j\leq r$, $1\leq k\leq s$, $1\leq l\leq n_k$, $1\leq i\leq m_k$, and $h=1,2$.
\end{proposition}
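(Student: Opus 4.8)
The plan is to translate the model into backshift-operator form, invert the moving-average polynomial, and read off the coefficients $\bm{A}_j$ from the spectral structure of $\underline{\bm{\Theta}}$. Writing $\bm{\Phi}(B)=\bm{I}-\sum_{i=1}^{p}\bm{\Phi}_iB^i$ and $\bm{\Theta}(B)=\bm{I}-\sum_{j=1}^{q}\bm{\Theta}_jB^j$, the model is $\bm{\Phi}(B)\bm{y}_t=\bm{\Theta}(B)\bm{\varepsilon}_t$, and invertibility means every root of $\det\bm{\Theta}(z)$ lies outside the unit disk, equivalently every eigenvalue of $\underline{\bm{\Theta}}$ lies strictly inside it. This licenses the expansion $\bm{\Theta}(B)^{-1}=\sum_{l\geq0}\bm{\Pi}_lB^l$ with $\bm{\Pi}_0=\bm{I}$, so the induced VAR($\infty$) operator is $\bm{I}-\sum_{j\geq1}\bm{A}_jB^j=\bm{\Theta}(B)^{-1}\bm{\Phi}(B)$. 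Matching the coefficient of $B^j$ gives the closed form $\bm{A}_j=\sum_{i=1}^{\min(j,p)}\bm{\Pi}_{j-i}\bm{\Phi}_i-\bm{\Pi}_j$ for $j\geq1$, reducing the whole statement to an explicit description of the inverse-MA coefficients $\bm{\Pi}_l$.

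Next I would express $\bm{\Pi}_l$ through the companion matrix. Regarding $\bm{\Theta}(B)^{-1}$ as the transfer function of the auxiliary AR($q$) recursion whose companion form is $\underline{\bm{\Theta}}$, the standard companion identity yields $\bm{\Pi}_l=\bm{E}\,\underline{\bm{\Theta}}^{\,l}\bm{E}'$, where $\bm{E}=(\bm{I}_N,\bm{0},\dots,\bm{0})\in\mathbb{R}^{N\times Nq}$ extracts the leading block, so everything hinges on the powers $\underline{\bm{\Theta}}^{\,l}$. Since by hypothesis every nonzero eigenvalue has geometric multiplicity one, each distinct eigenvalue corresponds to a single Jordan block, of size $n_k$ for the real eigenvalue $\lambda_k$ and of size $m_k$ for each member of a conjugate pair $\gamma_ke^{\pm i\theta_k}$, while a possible zero eigenvalue contributes only a nilpotent part that vanishes for $l$ beyond the size of its blocks. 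Writing the restriction of $\underline{\bm{\Theta}}$ to a generalized eigenspace as $\lambda\bm{I}+\bm{N}$ with $\bm{N}$ nilpotent, the identity $(\lambda\bm{I}+\bm{N})^{l}=\sum_{i\geq1}\binom{l}{i-1}\lambda^{l-i+1}\bm{N}^{i-1}$ produces exactly the factors $\lambda_k^{l-i+1}\binom{l}{i-1}$; grouping each conjugate pair over the reals and applying de Moivre's formula converts the complex block into the real pair $\gamma_k^{l-i+1}\binom{l}{i-1}\cos\{(l-i+1)\theta_k\}$ and $\gamma_k^{l-i+1}\binom{l}{i-1}\sin\{(l-i+1)\theta_k\}$.

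I would then substitute these expansions into $\bm{A}_j=\sum_{i=1}^{\min(j,p)}\bm{\Pi}_{j-i}\bm{\Phi}_i-\bm{\Pi}_j$ and collect terms. For $j\leq p$ the quantities $\bm{\Pi}_{j-i}$ and $\bm{\Pi}_j$ are still of low order (and absorb the zero-eigenvalue transient), and these contributions assemble into the boundary matrices $\bm{G}_k$ carried by $\mathbb{I}_{\{j=k\}}$. For the eigenvalue-driven part, the degree-$p$ factor $\bm{\Phi}(B)$ mixes the shifted scalar sequences $\binom{j-i}{\cdot}\lambda^{j-i-\cdot}$, $0\le i\le p$, which all lie in the same $n_k$-dimensional (respectively $2m_k$-dimensional) space of polynomial-times-$\lambda^{\,j}$ sequences; re-expanding them in the basis $\binom{j-p}{i-1}\lambda^{j-p-i+1}$ via binomial identities introduces the shift $j\mapsto j-p$ and recombines the matrix coefficients into the desired $\bm{G}_{k,i}^{I}$ and $\bm{G}_{k,i}^{II,h}$, each of the form $\bm{E}\,(\text{spectral projector}\cdot\bm{N}^{i-1})\,\bm{E}'$ post-multiplied by a fixed polynomial in the $\bm{\Phi}_i$, hence determined entirely by the $\bm{\Phi}_i$ and $\bm{\Theta}_j$. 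Because $\binom{j-p}{i-1}=0$ unless $j\geq p+i-1$ while the $i=1$ term at $j=p$ has already been folded into $\bm{G}_p$, the net support is exactly $\mathbb{I}_{\{j\geq p+(i-1)\vee1\}}$.

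Finally I would read the structural claims off these representations. Each $\bm{G}_{k,i}^{I}$ factors through the $n_k$-dimensional generalized eigenspace of $\lambda_k$ on the column side, forcing $\rank(\bm{G}_{k,i}^{I})\leq n_k$; likewise the real $2m_k$-dimensional generalized eigenspace attached to $\gamma_ke^{\pm i\theta_k}$ forces $\rank(\bm{G}_{k,i}^{II,h})\leq 2m_k$. Moreover $\bm{G}_{k,i}^{II,1}$ and $\bm{G}_{k,i}^{II,2}$ arise, up to scaling, as the real and imaginary parts of one and the same complex coefficient matrix, which undergo an identical post-multiplication by the $\bm{\Phi}_i$; hence they share both their column and row spaces. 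The main obstacle I anticipate is precisely the index bookkeeping in the collection step — controlling how multiplication by $\bm{\Phi}(B)$ shifts the binomial/power sequences, pinning down where the $j=p$ boundary splits off into $\bm{G}_p$, and tracking the zero-eigenvalue transient — together with verifying that the spectral-projector form of the $\bm{G}$'s delivers exactly the stated ranks and shared subspaces; the operator inversion and the Jordan-power evaluation themselves are routine.
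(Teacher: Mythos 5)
Your proposal is correct and is essentially the paper's own route: reduce to the MA companion matrix $\underline{\bm{\Theta}}$, take its real Jordan form, expand the Jordan powers binomially (de Moivre for the conjugate pairs), and read the rank and shared-subspace claims off the resulting outer-product structure. Your identity $\bm{A}_j=\sum_{i=1}^{\min(j,p)}\bm{\Pi}_{j-i}\bm{\Phi}_i-\bm{\Pi}_j$ with $\bm{\Pi}_l=\bm{P}\,\underline{\bm{\Theta}}^{\,l}\bm{P}^\prime$ is exactly the representation the paper reaches by iterating the companion recursion, namely $\bm{A}_k=\sum_{i=0}^{p\wedge k}\bm{P}\underline{\bm{\Theta}}^{\,k-i}\bm{P}^\prime\bm{\Phi}_i$ with $\bm{\Phi}_0=-\bm{I}$. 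The only substantive difference is the collection step, which is precisely where you anticipate your main obstacle: rather than Jordan-expanding each $\bm{\Pi}_{j-i}$, $0\le i\le p$, separately and then re-expanding the shifted sequences in the basis $\binom{j-p}{i-1}\lambda^{j-p-i+1}$ via binomial identities, the paper observes that for $j\ge1$ the entire tail factors as $\bm{A}_{p+j}=\bm{P}\,\underline{\bm{\Theta}}^{\,j}\bm{M}$ with the single fixed matrix $\bm{M}=\sum_{i=0}^{p}\underline{\bm{\Theta}}^{\,p-i}\bm{P}^\prime\bm{\Phi}_i$ (equation \eqref{eq:Ajordan}); one Jordan expansion of $\underline{\bm{\Theta}}^{\,j}$ then yields \eqref{eq:linearcomb2gen} with the shift $j\mapsto j-p$ built in, and the matrices $\bm{G}_{k,i}^{I},\bm{G}_{k,i}^{II,h}$ come out explicitly as sums of outer products of columns of $\bm{P}\bm{B}$ with rows of $\bm{B}^{-1}\bm{M}$ (equation \eqref{eq:Gmat}), making the rank bounds and the common row/column spaces immediate. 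Your change-of-basis route does go through, since for $\lambda_k\neq0$ the shifted sequences span the same $n_k$-dimensional (resp.\ $2m_k$-dimensional) space of polynomial-times-exponential sequences, but it is strictly heavier bookkeeping with no gain. One caveat, which applies equally to the paper: the hypothesis constrains only the nonzero eigenvalues, yet the zero eigenvalue of $\underline{\bm{\Theta}}$ may carry nontrivial nilpotent blocks, whose transient contribution to $\bm{A}_j$ persists up to lag $p+(\text{nilpotency index})-1$ and is not of the form \eqref{eq:linearcomb2gen}; your claim that this transient is absorbed by the $j\le p$ boundary terms therefore requires the nilpotency index to be at most one, which is exactly what the paper implicitly assumes by writing the zero block of the real Jordan form \eqref{eq:real-Jordan-form} as an actual zero matrix.
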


While \eqref{eq:linearcomb2gen} looks complicated, it can be simply written as \eqref{eq:Afactor} with $d=p+R+2S$. By tensor algebra, we can rewrite it in a surprisingly succinct form, i.e., the second equation in \eqref{eq:genmodel}. 
This reveals that the VARMA model in general can be  understood as a low-rank dimension reduction approach for VAR($\infty$) processes.

A special case of \eqref{eq:linearcomb2gen} is considered in \cite{sparseARMA}, which corresponds to   $n_1=\cdots=n_r=m_1=\cdots=m_s=1$, i.e., $R=r$ and $S=s$. In this case,  $d=p+r+2s=\pazocal{R}_3$, and
\begin{equation}\label{eq:Lform}
	\ell_{j,k}(\bm{\omega}) = \begin{dcases}\mathbb{I}_{\{j = k\}}& \text{if }1\leq k\leq p,\\
		\mathbb{I}_{\{j \geq p+1\}}\lambda_{m_k}^{j-p} & \text{if } p+1\leq k\leq p+r\\\mathbb{I}_{\{j \geq p+1\}}\gamma_{n_k}^{j-p} [\cos(j-p) \theta_{n_k} + \sin(j-p) \theta_{n_k}], & \text{if }p+r+1\leq k \leq d,\end{dcases} 
\end{equation}  
where $m_k=k-p$, $n_k=\lceil\frac{k-p-r}{2}\rceil$, and $\bm{\omega}=(\lambda_1,\dots,\lambda_r,\gamma_1,\theta_1,\dots,\gamma_s,\theta_s)^\prime$.

\begin{remark}[Infinite distributed lags]
	Based on \eqref{eq:linearcomb2gen},  $\lim_{j\rightarrow\infty}\ell_{j,k}(\bm{\omega})= 0$  for each $k$, which ensures that the coefficient matrix $\bm{A}_j$ diminishes to zero as the lag index  $j\rightarrow\infty$. This feature is reminiscent of the infinite distributed lag  model \citep{Judge1991}, which captures the gradually diminishing impact of past values of the variable $x_t\in\mathbb{R}$ on the response  $y_t\in\mathbb{R}$. For example, in the case with geometric lags, the model is 
	$y_t= \sum_{j=0}^{\infty}\beta_j x_{t-j} + e_t$, where $\beta_j= \lambda^j \alpha.$
	Here $\lambda\in(-1,1)$ and $\alpha\in\mathbb{R}$  are unknown parameters, which are analogous to  $\lambda_k$ and $\bm{G}_{k}$, respectively.  Obviously, $\alpha$ is extended to a matrix as the response and predictor in VAR($\infty$) models are  both $N$-dimensional.  However, \eqref{eq:linearcomb2gen} also extends geometric lags by allowing for various decay patterns.
\end{remark}

Other types of decay patterns can be found in the literature on infinite distributed lags, such as the Pascal lags, discounted polynomial lags, Gamma lags and exponential lags \citep{Judge1991}. Thus, for $\bm{A}_j=\sum_k\ell_{j,k}(\bm{\omega})\bm{G}_k$, a modeler may consider alternative functional forms of $\ell_{j,k}(\bm{\omega})$ based on their prior knowledge or  preferred interpretation, rather than relying solely on   \eqref{eq:linearcomb2gen}.

In  the sequel, for succinctness, we will restrict our discussion to the special case  in \eqref{eq:Lform}.
However, all theoretical results can be readily extended to general functions $\bm{L}$ under regularity conditions on the decay rate of $\ell_{j,k}(\cdot)$ as $j\rightarrow\infty$ and their smoothness.

\subsection{Identifiability}

In what follows, we will focus on the special case in \eqref{eq:Lform}. That is, given orders $(p,r,s)$, the model is written as
\begin{equation}\label{eq:model-scalar}
	\bm{y}_t=\sum_{j=1}^\infty  \bm{A}_j \bm{y}_{t-j}+\bm{\varepsilon}_t, \quad\text{with}\quad \bm{A}_j=
	\bm{A}_j(\bm{\omega},\cm{G})=\sum_{k=1}^{d}\ell_{j,k}(\bm{\omega})\bm{G}_k,
\end{equation}
where $d=p+r+2s$, $\bm{G}_k\in\mathbb{R}^{N\times N}$ for $1\leq k\leq d$,  and the parameter space of $\bm{\omega}$ is
\begin{equation}\label{eq:Omega}
	\bm{\Omega} =  \{\bm{\omega}\in\mathbb{R}^{r+2s} \mid  |\lambda_k|, \gamma_h \in(0,1), \theta_h \in(0,\pi) \text{ for }1\leq k\leq r \text{ and } 1\leq h\leq s\}.
\end{equation}
Note that under the specification in \eqref{eq:Lform},  $\ell_{j,k}(\bm{\omega})$ is the $(j,k)$-th entry of the matrix 
\begin{equation}\label{eq:Lfunc}
	\bm{L}(\bm{\omega})
	=\left (\begin{matrix}
		\bm{I}_p &\bm{0}&\cdots&\bm{0}&\bm{0}&\cdots&\bm{0}\\
		\bm{0}& \bm{\ell}^{I}(\lambda_1) & \cdots & \bm{\ell}^{I}(\lambda_r) & \bm{\ell}^{II}(\gamma_1,\theta_1) & \cdots & \bm{\ell}^{II}(\gamma_s,\theta_s)
	\end{matrix}\right )\in \mathbb{R}^{\infty\times d},
\end{equation}
where 
\begin{equation*}
	\bm{\ell}^{I}(\lambda)= (\lambda, \lambda^2, \lambda^3, \dots)^\prime\quad \text{and}\quad
	\bm{\ell}^{II}(\gamma,\theta) = 
	\left (\begin{array}{cccc}
		\gamma \cos(\theta)& \gamma^2\cos(2\theta)& \gamma^3\cos(3\theta)&\cdots\\
		\gamma \sin(\theta)& \gamma^2\sin(2\theta)& \gamma^3\sin(3\theta)&\cdots\\
	\end{array}\right )^\prime,
\end{equation*}	
for any $\lambda$ and $(\gamma,\theta)$.
Given the model orders $(p,r,s)$, the following theorem implies that the parameters $\bm{\omega}$ and $\bm{G}_1,\ldots,\bm{G}_d$ for this model are identifiable.
\begin{theorem}[Identifiability]\label{thm:identifiable}
	Suppose that $\bm{G}_1,\dots, \bm{G}_d\neq \bm{0}$ and $\bm{\omega}\in\bm{\Omega}$. If  $\lambda_1<\cdots <\lambda_r$, and the pairs $(\gamma_m,\theta_m)$'s are distinct and sorted in ascending order of $\gamma_m$'s and $\theta_m$'s, then there is a one-to-one correspondence between matrices $\{\bm{A}_1,\bm{A}_2,\dots\}$ and $\{\bm{\omega},  \bm{G}_1,\ldots,\bm{G}_d\}$, where $\bm{A}_j$'s are defined as in \eqref{eq:model-scalar}.
\end{theorem}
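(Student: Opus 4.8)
The forward map $(\bm\omega,\bm{G}_1,\dots,\bm{G}_d)\mapsto\{\bm{A}_j\}_{j\ge1}$ is clearly single-valued, so it suffices to show it is injective, i.e.\ that $\bm\omega$ and $\bm{G}_1,\dots,\bm{G}_d$ can be recovered from the sequence $\{\bm{A}_j\}_{j\ge1}$. The first $p$ matrices are immediate: for $1\le j\le p$ every term with $k>p$ in \eqref{eq:Lform} carries the factor $\mathbb{I}_{\{j\ge p+1\}}=0$, so $\bm{A}_j=\bm{G}_j$, recovering $\bm{G}_1,\dots,\bm{G}_p$. It remains to treat the tail $\{\bm{A}_{p+i}\}_{i\ge1}$, which by \eqref{eq:Lform} is the matrix-valued exponential sum
\[
\bm{A}_{p+i}=\sum_{k=1}^{r}\lambda_k^{\,i}\,\bm{G}_{p+k}+\sum_{m=1}^{s}\gamma_m^{\,i}\big[\cos(i\theta_m)\,\bm{G}_{p+r+2m-1}+\sin(i\theta_m)\,\bm{G}_{p+r+2m}\big],\qquad i\ge1.
\]
Writing $\gamma^i\cos(i\theta)$ and $\gamma^i\sin(i\theta)$ as real combinations of $(\gamma e^{\pm\mathrm{i}\theta})^{i}$, this becomes $\bm{A}_{p+i}=\sum_{a}\bm{M}_a z_a^{\,i}$, where the $z_a$ range over the $r+2s$ complex numbers $Z=\{\lambda_1,\dots,\lambda_r\}\cup\{\gamma_m e^{\pm\mathrm{i}\theta_m}\}_{m=1}^{s}$ and $\bm{M}_a$ are the associated (possibly complex) matrix amplitudes.

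The first key step is to verify that the elements of $Z$ are distinct and nonzero. Since $|\lambda_k|,\gamma_m\in(0,1)$ every point is nonzero; since $\theta_m\in(0,\pi)$ each $\gamma_m e^{\mathrm{i}\theta_m}$ has strictly positive imaginary part, so the complex roots are genuinely non-real (hence distinct from the real $\lambda_k$), their conjugates lie in the open lower half-plane, and distinct pairs $(\gamma_m,\theta_m)$ give distinct points because $(\gamma,\theta)\mapsto\gamma e^{\mathrm{i}\theta}$ is injective on $(0,1)\times(0,\pi)$. Given distinctness, the geometric sequences $\{z_a^{\,i}\}_{i\ge1}$ are linearly independent by a Vandermonde argument; this is the workhorse for what follows.

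Next I would recover $Z$, equivalently $\bm\omega$, by a minimal-recurrence argument. Each scalar sequence $z_a^{\,i}$ is annihilated by the shift operator minus $z_a$, so $q(z)=\prod_a(z-z_a)$ annihilates $\{\bm{A}_{p+i}\}$, whence the sequence's monic minimal (annihilating) polynomial $\mu$ satisfies $\mu\mid q$. Conversely, applying $\mu$ of the shift to the sum and invoking linear independence of the $\{z_a^{\,i}\}$ forces $\mu(z_a)\bm{M}_a=\bm{0}$ for every $a$; since each amplitude is nonzero---guaranteed by the standing hypothesis $\bm{G}_1,\dots,\bm{G}_d\neq\bm{0}$, which gives $\bm{G}_{p+k}\neq\bm0$ and $\tfrac12(\bm{G}_{p+r+2m-1}\mp\mathrm{i}\,\bm{G}_{p+r+2m})\neq\bm0$---we obtain $\mu(z_a)=0$ for all $a$, so $q\mid\mu$ and thus $\mu=q$. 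As $\mu$ is intrinsic to the sequence, its root set $Z$ is determined; separating real from non-real roots and sorting by the stated conventions yields $\bm\omega$ uniquely. Finally, with $\bm\omega$ (hence all $\ell_{j,k}$) fixed, the linear system $\bm{A}_{p+i}=\sum_{k=p+1}^{d}\ell_{p+i,k}(\bm\omega)\bm{G}_k$ is injective in $(\bm{G}_{p+1},\dots,\bm{G}_d)$ by the same linear independence, pinning down the remaining loadings and completing the recovery.

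The main obstacle is precisely the identification of $\bm\omega$, i.e.\ proving $\mu=q$. It is here that the assumption that no $\bm{G}_k$ vanishes is indispensable: were some amplitude zero, its root would drop out of the minimal polynomial and the corresponding $\lambda_k$ or $(\gamma_m,\theta_m)$ would become unidentifiable. It is also here that the root distinctness---secured by $\theta_m\in(0,\pi)$, $|\lambda_k|,\gamma_m\in(0,1)$, and the ordering conventions---is essential, ensuring $\mu$ attains the full degree $r+2s$ with simple roots matching $Z$.
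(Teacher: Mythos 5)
Your proposal is correct, and it rests on the same two pillars as the paper's own proof: complexifying the tail $\{\bm{A}_{p+i}\}$ into geometric terms with root set $Z=\{\lambda_k\}\cup\{\gamma_m e^{\pm\mathrm{i}\theta_m}\}$, which the conditions $|\lambda_k|,\gamma_m\in(0,1)$, $\theta_m\in(0,\pi)$ and the ordering hypotheses make distinct and nonzero, and the Vandermonde-type fact that geometric sequences with distinct nonzero ratios are linearly independent (in the paper this is the invertibility of $\pazocal{T}(\bm{\omega})$, proved by counting nonzero roots of $c_1x+\cdots+c_{r+2s}x^{r+2s}$). Where you genuinely diverge is the mechanism for identifying $\bm{\omega}$. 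The paper argues by contradiction: if $\widetilde{\bm{\omega}}\neq\bm{\omega}$ produced the same sequence, some root $\widetilde{x}_k$ of the second parameterization would lie outside $Z$; an augmented system with $r+2s+1$ distinct nonzero roots shows the corresponding column is linearly independent of the others, which forces $\widetilde{\bm{G}}_k=\bm{0}$ and contradicts the nonvanishing hypothesis. You instead exhibit an intrinsic invariant of the sequence, its monic minimal annihilating polynomial $\mu$, and prove $\mu=q:=\prod_a(z-z_a)$: the divisibility $\mu\mid q$ is automatic since $q$ annihilates, while $q\mid\mu$ uses exactly the two hypotheses the paper uses, namely distinctness of the roots and nonvanishing of the amplitudes (your observation that $\tfrac12(\bm{G}_{p+r+2m-1}\mp\mathrm{i}\,\bm{G}_{p+r+2m})=\bm{0}$ would force both real matrices to vanish is the right way the assumption $\bm{G}_k\neq\bm{0}$ enters for the complex pairs). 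The paper's route is marginally more elementary, requiring nothing beyond polynomial root counting; yours recovers the entire root set at once as an invariant of $\{\bm{A}_{p+i}\}$, makes the recovery of $\bm{\omega}$ constructive in principle (compute the minimal linear recurrence and read off its roots), and is the formulation that would extend most naturally to the repeated-root/higher-multiplicity case underlying Proposition \ref{prop:VARMAgen}. Your last step, recovering the loadings from $\bm{\omega}$ by injectivity of the linear map, is the same as the paper's use of the invertibility of $\bm{L}_{[1:d]}(\bm{\omega})$.
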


Since any VAR($\infty$) process is uniquely defined by its AR coefficient matrices $\{\bm{A}_1,\bm{A}_2,\dots\}$,  Theorem \ref{thm:identifiable} establishes the identifiability of $\bm{\omega}$ and $\bm{G}_1,\ldots,\bm{G}_d$ up to a permutation. Thus, unlike the VARMA model, no additional parameter constraint is needed for the identification of the parameters $\{\bm{\omega},  \bm{G}_1,\ldots,\bm{G}_d\}$. Moreover, with $\bm{A}_j$'s parameterized as linear combinations of matrices,  the computation for this model avoids any high-order matrix polynomials, which substantially reduces the computational cost compared to the VARMA model.



\subsection{The high-dimensional SARMA model} \label{subsec:HDmodel}
The general formulation in \eqref{eq:genmodel} addresses the third mode of $\cm{A}$, i.e., the temporal mode, by imposing $\pazocal{R}_3 = \rank(\cm{A}_{(3)})=d$.
When $N$ is large, we need to further conduct the dimension reduction for the first two modes of $\cm{A}$, i.e.,  the \textit{response} and \textit{predictor} modes. Specifically, we impose the  low-Tucker-rank assumption on $\cm{A}$ for its first two modes as follows: 
\[
\pazocal{R}_i = \rank(\cm{A}_{(i)})\ll N, \quad i=1,2.
\]
Note that $\rank(\cm{A}_{(i)})=\rank(\cm{G}_{(i)})$ for $i=1,2$, as the factorizations along different modes of the tensor do not interfere with each other. Thus, this is also equivalent to assuming that $\cm{G}$ has low Tucker ranks at its first two modes:
\begin{equation}\label{eq:Glowrank}
	\pazocal{R}_i = \rank(\cm{G}_{(i)})\ll N, \quad i=1,2.
\end{equation}
Then, under this assumption, there exist a small tensor $\cm{S}\in\mathbb{R}^{\pazocal{R}_1\times \pazocal{R}_2\times d}$  and full-rank matrices $\bm{U}_i\in\mathbb{R}^{N\times \pazocal{R}_i}$ for $i=1,2$ such that $\cm{G}=\cm{S}\times_1 \bm{U}_1\times_2\bm{U}_2$. Thus,
\begin{equation}\label{eq:tuckerL}
	\cm{A}=\underbrace{\cm{S}\times_1 \bm{U}_1\times_2\bm{U}_2}_{\cm{G}}\times_3 \bm{L}(\bm{\omega}):=[\![\cm{S};\bm{U}_1,\bm{U}_2,\bm{L}(\bm{\omega})]\!],
\end{equation}
In tensor algebra, \eqref{eq:tuckerL} is called the Tucker decomposition of the tensor $\cm{A}$, with $\cm{S}$ termed the core tensor, and $\bm{U}_1, \bm{U}_2$ and $\bm{L}(\bm{\omega})$ termed the factor matrices.
Note that the factorization of $\cm{G}$ is written mainly to facilitate the understanding of  low-Tucker-rank assumption; see Section \ref{subsec:df}. The unknown parameters to be estimated are still $\bm{\omega}$ and $\bm{G}_1,\dots, \bm{G}_d$ (i.e., the tensor $\cm{G}$). 

Similar to the low-rankness of matrices, the low-Tucker-rank assumption enables a reduction in the number of parameters for the coefficient tensor: it reduces the effective dimension of $\cm{G}$ from $N^2d$ to $O(N(\pazocal{R}_1+\pazocal{R}_2)+\pazocal{R}_1\pazocal{R}_2d)$. From the viewpoint of VAR($\infty$) modelling,   \eqref{eq:tuckerL} reveals that a simultaneous dimension reduction is conducted across the response, predictor, and temporal modes of the AR coefficient tensor $\cm{A}$.
To emphasize the resulting scalability across all three directions, we name model \eqref{eq:model-scalar} with the low-Tucker-rank assumption in \eqref{eq:Glowrank} for $\cm{G}$ the Scalable ARMA (SARMA) model.

\subsection{Dynamic factor interpretation}\label{subsec:df}

We discuss the interpretation of the low-Tucker-rank assumption in \eqref{eq:Glowrank} for the SARMA model and show that it implies low-dimensional dynamic factor structures underlying  both the response  and  lagged predictor series.

As the model is parameterized by $\bm{\omega}$ and  $\cm{G}$, it is not necessary to construct estimators for the components $\cm{S}$, $\bm{U}_1$ and $\bm{U}_2$ in the factorization of $\cm{G}$. Nonetheless, \eqref{eq:tuckerL} facilitates our understanding of the low-Tucker-rank assumption on $\cm{G}$ for the VAR($\infty$) model. 
It reveals that while $\bm{L}(\bm{\omega})$ extracts essential patterns from the temporal mode of the coefficient tensor $\cm{A}$, the matrices  $\bm{U}_1$ and $\bm{U}_2$ summarize information along the cross-sectional dimension of the response and lagged predictors, respectively. Note that 
\begin{equation}\label{eq:rotation}
	\cm{G}=\cm{S}\times_1 \bm{U}_1\times_2\bm{U}_2=(\cm{S}\times_1\bm{O}_1 \times_2\bm{O}_2)\times_1 (\bm{U}_1\bm{O}_1^{-1})\times_2(\bm{U}_2\bm{O}_2^{-1})
\end{equation}
for any invertible matrices $\bm{O}_i$ with $i=1,2$, indicating the  rotational and scale indeterminacies of the components. Without loss of generality, the normalization constraint
$\bm{U}_i'\bm{U}_i=\bm{I}_{\pazocal{R}_i}$ for $i=1,2$ can be imposed to facilitate interpretations.  

Moreover,   the SARMA model can be interpreted from the factor modelling perspective, with $\bm{U}_1$ and $\bm{U}_2$ representing loading matrices for the response factors and lagged predictor factors, respectively. To see this, first consider the simple example with Tucker ranks $\pazocal{R}_1=\pazocal{R}_2=1$. In this case, $\cm{S}=(s_1,\dots, s_d)^\prime\in\mathbb{R}^d:=\bm{s}$ and $\bm{U}_i:=\bm{u}_i\in\mathbb{R}^N$ for $i=1,2$ all reduce to  vectors, hence denoted by bold lowercase letters.  Then, \eqref{eq:tuckerL} implies $\bm{G}_k=s_k \bm{u}_1\bm{u}_2^\prime$ for $1\leq k\leq d$, which  are  rank-one matrices. As a result, 	$\bm{A}_j=\sum_{k=1}^{d}\ell_{j,k}(\bm{\omega})s_k \bm{u}_1\bm{u}_2^\prime$ for $j\geq1$. Note that $\bm{u}_1$ and $\bm{u}_2$ capture patterns from the rows and columns of $\bm{A}_j$'s, respectively. Consequently, with the normalization $\bm{u}_i^\prime \bm{u}_i=1$ for $i=1,2$, a single-factor model is implied:
\[
\underbrace{\bm{u}_1^\prime \bm{y}_t}_{\text{single response factor}} = \sum_{j=1}^{\infty} \sum_{k=1}^{d}\ell_{j,k}(\bm{\omega})s_k  \underbrace{\bm{u}_2^\prime \bm{y}_{t-j}}_{\text{single  predictor factor}} + \bm{e}_t,
\]
where $\bm{e}_t=\bm{u}_1^\prime\bm{\varepsilon}_t$. For instance, suppose that $\bm{y}_t$ contains  realized volatilities  of $N$ stocks in a market. Then $\bm{u}_1^{\prime} \bm{y}_t$ and $\bm{u}_2^\prime \bm{y}_{t-j}$ can be viewed as  latent \textit{response} and \textit{lagged predictor factors}, respectively, which can also be  regarded as two different  market volatility indices. The predictor factor loading $\bm{u}_2$ encapsulates how the the \textit{past} signals from various stocks are absorbed into the market, while the response factor loading  $\bm{u}_1$ summarizes the overall response of the \textit{present} market to these signals; see also Section \ref{sec:empirical} for an empirical example. 

For general Tucker ranks $\pazocal{R}_1$ and $\pazocal{R}_2$, analogously  we have 
\begin{equation}\label{eq:factors}
	\bm{U}_1^{\prime}\bm{y}_t
	=  \sum_{j=1}^\infty \sum_{k=1}^{d}\ell_{j,k}(\bm{\omega})\bm{S}_k \bm{U}_2^{\prime} \bm{y}_{t-j} + \bm{e}_t.
\end{equation}
Here, $\bm{U}_1^{\prime} \bm{y}_t$   represents  $\pazocal{R}_1$ response factors, while
$\bm{U}_2^{\prime} \bm{y}_{t-j}$ represents $\pazocal{R}_2$  lagged predictor factors. with the loading matrices being $\bm{U}_i\in\mathbb{R}^{N\times \pazocal{R}_i}$ for $i=1$ and 2, respectively. Thus, by imposing the low-Tucker-rank assumption on $\cm{G}$ in \eqref{eq:Glowrank},  simultaneous dimension reduction is achieved by extracting factors across both the response  and lagged predictors. For convenience, we call $\pazocal{R}_1$ and $\pazocal{R}_2$ the \textit{response} and \textit{predictor ranks}, respectively.

In addition, when $N$ is extremely large,  we may further assume that  $\bm{U}_1$ and $\bm{U}_2$ are sparse matrices for more efficient dimension reduction. This implies that each factor contains only a small subset  of variables. Take $\bm{U}_1^{\prime} \bm{y}_t$ as an example. For  $1\leq i\leq N$ and $1\leq k\leq \pazocal{R}_1$, if the $(i,k)$th entry of $\bm{U}_1$ is nonzero, then it implies that the $i$th variable in $\bm{y}_t$ is selected into the $k$th response factor. This sparsity assumption, which is embedded in the Tucker decomposition, will make the estimation of the SARMA model more challenging; see  Section \ref{subsec:hdest} for details.



\section{High-dimensional estimation}\label{sec:HDmodel}

\subsection{Rank-constrained estimator}\label{subsec:est}

We first introduce a rank-constrained   approach to estimate the  parameter vector $\bm{\omega}$ and the   low-Tucker-rank parameter tensor $\cm{G}$. As will be shown in Section \ref{subsec:theory}, this estimator is consistent under $N=o(T)$, where $T$ is the sample size; another estimation method applicable to  the ultra-high-dimensional case which allows  $\log (N)/ T \rightarrow0$ is introduced in Section \ref{subsec:hdest}.

Let $\bm{x}_{t}=(\bm{y}_{t-1}^\prime,\bm{y}_{t-2}^\prime,\dots)^\prime$. Then the squared error loss function is $\mathbb{L}_T(\bm{\omega},\cm{G})=\sum_{t=1}^{T} \| \bm{y}_t - \cm{A}_{(1)}\bm{x}_{t}\|_2^2$, where $\cm{A}_{(1)} = (\bm{A}_1,\bm{A}_2,\dots)$ with $\bm{A}_j =\bm{A}_j(\bm{\omega}, \cm{G})= \sum_{k=1}^d\ell_{j,k}(\bm{\omega})\bm{G}_k$  for $j\geq 1$.  Since the loss depends on observations in the  infinite past, initial values for $\{\bm{y}_t, t\leq 0\}$ are needed in practice. We set them to zero for simplicity, that is, let $\bm{\widetilde{x}}_{t}= (\bm{y}_{t-1}^\prime,\dots,\bm{y}_1^\prime,0,0,\dots)^\prime$ be the initialized version of $\bm{x}_t$, and define the feasible squared loss function:
\begin{equation}\label{eq:ls}
	\widetilde{\mathbb{L}}_T(\bm{\omega},\cm{G})=\sum_{t=1}^{T}\| \bm{y}_t - \cm{A}_{(1)}\bm{\widetilde{x}}_{t}\|_2^2 =\sum_{t=1}^{T} \Big \|	\bm{y}_t-\sum_{j=1}^{t-1}  \bm{A}_j(\bm{\omega},\cm{G}) \bm{y}_{t-j}\Big\|_2^2.
\end{equation}
The initialization effect will be accounted for in our theoretical analysis.

Suppose that the response and predictor ranks $(\pazocal{R}_1, \pazocal{R}_2)$ are known; see Section \ref{sec:selection} for a data-driven selection procedure.
When $N$ is moderately large compared to $T$,   we  propose the rank-constrained estimator as follows: 
\begin{equation}\label{eq:lse}
	(\bm{\widehat{\omega}},\cm{\widehat{G}})=\argmin_{\bm{\omega}\in\bm{\Omega}, \cmt{G}\in \bm{\Gamma}(\pazocal{R}_1,\pazocal{R}_2)}\widetilde{\mathbb{L}}_T(\bm{\omega},\cm{G}), 
\end{equation}
where  $\bm{\Omega}$ is defined in \eqref{eq:Omega}, and the parameter space of $\cm{G}$ is
\[
\bm{\Gamma}(\pazocal{R}_1,\pazocal{R}_2) = \{\cm{G}\in\mathbb{R}^{N\times N\times d}\mid \rank(\cm{G}_{(1)})\leq \pazocal{R}_1, \rank(\cm{G}_{(2)})\leq \pazocal{R}_2 \}.
\] 
Then based on the results from \eqref{eq:lse}, we can obtain $\cm{\widehat{A}} = \cm{\widehat{G}}\times_3\bm{L}(\bm{\widehat{\omega}})
$; i.e., the corresponding AR coefficient matrices are estimated by	$\bm{\widehat{A}}_j=\sum_{k=1}^d\ell_{j,k}(\bm{\widehat{\omega}})\bm{\widehat{G}}_k$ for $j\geq1$.

\begin{remark}
	Note that \eqref{eq:lse} does not require estimation of $\cm{S}$, $\bm{U}_1$ and $\bm{U}_2$, i.e., the components in the Tucker decomposition of $\cm{G}$. Thus, the rotational and scale indeterminacies in \eqref{eq:rotation} are not an issue. However, to interpret the underlying dynamic factor structure presented in \eqref{eq:factors},  it is beneficial to conduct the Tucker decomposition of $\cm{\widehat{G}}$ to obtain the corresponding estimated loading matrices $\bm{\widehat{U}}_1$ and  $\bm{\widehat{U}}_2$ after the rank-constrained estimation in \eqref{eq:lse}. A common approach to ensure the uniqueness of the Tucker decomposition is to employ the higher-order singular value decomposition (HOSVD), which is the special Tucker  decomposition as follows \citep{delathauwer2000multilinear}. Specifically, to get the HOSVD, $\cm{\widehat{G}}=\cm{\widehat{S}}\times_1\bm{\widehat{U}}_1\times_2\bm{\widehat{U}}_2$, the matrix $\bm{\widehat{U}}_i$ is defined  as the top $\pazocal{R}_i$ left singular vectors of $\cm{\widehat{G}}_{(i)}$ with the first element in each column of $\bm{\widehat{U}}_i$ being positive, for $i=1,2$. This rules out both rotational and sign indeterminacies. In addition, by the orthonormality of $\bm{\widehat{U}}_i$'s, we can compute $\cm{\widehat{S}}=\cm{\widehat{G}}\times_1\bm{\widehat{U}}_1'\times_2\bm{\widehat{U}}_2'$. Thus, the factor representation in \eqref{eq:factors} for the fitted model can be obtained.
	This will allow us to clearly interpret the dynamic factor structure based on the uniquely defined loading matrices $\bm{\widehat{U}}_1$ and $\bm{\widehat{U}}_2$.
\end{remark}

\subsection{Sparse low-Tucker-rank estimator}\label{subsec:hdest}

When $N$ is very large relative to the sample size $T$, the rank-constrained estimator can be inefficient, and a more substantial dimension reduction is needed. Motivated by the dynamic factor structure in \eqref{eq:factors}, we additionally assume that the loadings  $\bm{U}_1$ and $\bm{U}_2$  are sparse, and develop a high-dimensional  estimator that simultaneously enforces the low-Tucker-rank and sparse structures. This not only improves the estimation efficiency but enhances the interpretability as it  automatically   selects only important variables into the factors.

However, unlike the rank-constrained estimator in  \eqref{eq:lse}, explicit factorization of $\cm{G}$  must be incorporated into the sparse estimation.  Moreover, to ensure the identifiability of the sparsity patterns, we  assume that $\bm{U}_i$ is the orthonormal matrix consisting of the top $\pazocal{R}_i$ left singular vectors of $\cm{G}_{(i)}$, for $i=1,2$. This implies that $\cm{S}=\cm{G}\times_1\bm{U}_1'\times_2\bm{U}_2'$. Note that since $\bm{U}_i$ is orthonormal, it can be shown that $\cm{S}_{(i)}$ is row-orthogonal, for $i=1,2$.

We consider the following $\ell_1$-regularized sparse 	low-Tucker-rank (SLTR) estimator:
\begin{equation}\label{eq:sparse_lse}
	(\bm{\widetilde{\omega}},\cm{\widetilde{S}}, \bm{\widetilde{U}}_1,  \bm{\widetilde{U}}_2)=\argmin_{\bm{\omega}\in\bm{\Omega}, \, \cmt{S}\in\text{RO}(\pazocal{R}_1,\pazocal{R}_2), \,\bm{U}_i'\bm{U}_i=\bm{I}_{\pazocal{R}_i}, \, i=1,2}
	\left \{\widetilde{\mathbb{L}}_T(\bm{\omega},\cm{S}\times_{1}\bm{U}_1 \times_{2}\bm{U}_2) + \lambda\sum_{i=1}^{2}\|\bm{U}_i\|_{1} \right \}, 
\end{equation}
where 
\[
\text{RO}(\pazocal{R}_1,\pazocal{R}_2)=\{\cm{S}\in\mathbb{R}^{\pazocal{R}_1\times \pazocal{R}_2\times d}: \cm{S}_{(i)}\text{ is row-orthogonal},~i=1,2\}.
\]
Then it is straightforward to estimate $\cm{G}$ and $\cm{A}$ by $\cm{\widetilde{G}}=\cm{\widetilde{S}}\times_1\bm{\widetilde{U}}_1\times_2\bm{\widetilde{U}}_2$ and $\cm{\widetilde{A}} = \cm{\widetilde{G}}\times_3\bm{L}(\bm{\widetilde{\omega}})$, respectively; i.e., the estimated coefficient matrices $\bm{\widetilde{G}}_1,\dots, \bm{\widetilde{G}}_d$ and $\bm{\widetilde{A}}_j$ for $j\geq1$ can be obtained.



\subsection{Nonasymptotic error bounds}\label{subsec:theory}

This section provides nonasymptotic error bounds for the proposed rank-constrained and SLTR estimators, in the non-sparse and sparse cases, respectively. We assume that the observed time series  $\{\bm{y}_t\}_{t=1}^T$ is generated from a stationary SARMA model with  response and predictor ranks $(\pazocal{R}_1, \pazocal{R}_2)$. 


Let $\bm{\omega}^* \in\bm{\Omega}$ and $\cm{G}^*  \in\bm{\Gamma}(\pazocal{R}_1,\pazocal{R}_2)$ denote the true values of $\bm{\omega}$ and $\cm{G}$, respectively.
Similarly,  $\cm{A}^*$, $\lambda_k^*$'s, $\gamma_k^*$'s, $\theta_k^*$'s, etc., denote the true values of the corresponding parameters. 
To prove the consistency of the rank-constrained estimator, we make the following assumptions.

\begin{assumption}[Sub-Gaussian error]\label{assum:error}
	Let $\bm{\varepsilon}_t = \bm{\Sigma}_\varepsilon^{1/2}\bm{\xi}_t$, where  $\bm{\xi}_t$ is a sequence of i.i.d. random vectors with zero mean and $\var(\bm{\xi}_t) = \bm{I}_{N}$, and $\bm{\Sigma}_\varepsilon$ is a positive definite covariance matrix.
	In addition, the coordinates $(\bm{\xi}_{it})_{1\leq i\leq N}$ within $\bm{\xi}_t$ are mutually independent and $\sigma^2$-sub-Gaussian.
\end{assumption}

\begin{assumption}[Parameters]\label{assum:statn}
	(i) There exists an absolute constant $0<\bar{\rho}<1$ such that for all $\bm{\omega}\in\bm{\Omega}$, $|\lambda_1|,\ldots,|\lambda_r|, \gamma_1,\ldots,\gamma_s\in \Lambda$, where $\Lambda$ is  a compact subset of $(0,\bar{\rho})$;
	(ii) all $\lambda_k^*$'s are bounded away from each other, and all pairs  $(\gamma_m^*,\theta_m^*)$'s are bounded away from each other, for $1\leq k\leq r$ and $1\leq m\leq s$;
	and (iii) $\max_{1\leq k\leq d}\|\bm{G}_k^*\|_{\op}\leq C_{\cmtt{G}}$ for some absolute constant $C_{\cmtt{G}}>0$, and $ \|\bm{G}_k^*\|_{\Fr} \asymp \alpha$ for $p+1 \leq k \leq d$, where $\alpha=\alpha(N)>0$ may depend on the dimension $N$.
\end{assumption}


Assumption \ref{assum:error} is weaker than the commonly imposed Gaussian assumption in the literature on high-dimensional time series; see, e.g., \cite{basu2015regularized} and  \cite{WBBM21}. Assumption \ref{assum:statn}(i) requires $|\lambda_k|$'s and $\gamma_m$'s to be bounded away from   one. Assumption  \ref{assum:statn}(ii)  ensures that different elements of $\bm{\omega}^*$ can be distinguished in the estimation. While Assumption \ref{assum:statn}(iii) requires that $ \|\bm{G}_k^*\|_{\Fr}$ for $p+1\leq k\leq d$ have the same order of magnitude $\alpha$, it is allowed to vary with $N$. This condition can be readily relaxed through a slightly more involved proof. In this case, the lower and upper bounds of $ \|\bm{G}_k^*\|_{\Fr}$ will affect the error bounds.


While the proposed model is linear in $\cm{A}$,  the loss function in \eqref{eq:lse} is nonconvex with respect to  $\bm{\omega}$ and $\cm{G}$ jointly. As an intermediate step to prove the consistency of the proposed estimators, the following lemma allows us to linearize  $\cm{A}$ with respect to  $\bm{\omega}$ and $\cm{G}$ within a constant-radius neighborhood of $\bm{\omega}^*$; see Remark \ref{remark:radius} for more details about the radius $c_{\bm{\omega}}$. 

\begin{lemma}\label{lemma:delnorm}
	Under Assumption \ref{assum:statn}, for any  $\cm{A}=\cm{G} \times_3 \bm{L}(\bm{\omega})$ with $\cm{G}\in\mathbb{R}^{N\times N\times d}$ and  $\bm{\omega}\in\bm{\Omega}$,  if  $\|\bm{\omega} - \bm{\omega}^*\|_2\leq c_{\bm{\omega}}$, then
	$\|\cm{A}-\cm{A}^*\|_{\Fr} \asymp
	\|\cm{G}-\cm{G}^*\|_{\Fr} +\alpha\|\bm{\omega} - \bm{\omega}^*\|_2$,
	where $c_{\bm{\omega}}>0$ is a non-shrinking radius.
\end{lemma}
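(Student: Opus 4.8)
The plan is to exploit the block structure of $\bm{L}(\bm{\omega})$ in \eqref{eq:Lfunc}: its first $p$ rows are $[\bm{I}_p\ \bm{0}]$ and free of $\bm{\omega}$, while the remaining rows form a matrix $\bm{M}(\bm{\omega})\in\mathbb{R}^{\infty\times(r+2s)}$ whose columns are the geometrically decaying sequences $\bm{\ell}^{I}(\lambda_k)$ and the columns of $\bm{\ell}^{II}(\gamma_m,\theta_m)$. Since $\cm{A}_{(3)}=\bm{L}(\bm{\omega})\cm{G}_{(3)}$, partitioning $\cm{G}_{(3)}$ into its top $p$ rows and bottom $r+2s$ rows gives $\bm{A}_j=\bm{G}_j$ for $j\le p$ and decouples the two blocks, so that $\|\cm{A}-\cm{A}^*\|_{\Fr}^2=\sum_{k=1}^{p}\|\bm{G}_k-\bm{G}_k^*\|_{\Fr}^2+\|\bm{M}(\bm{\omega})\bm{W}-\bm{M}(\bm{\omega}^*)\bm{V}\|_{\Fr}^2$, where $\bm{W}$ and $\bm{V}$ stack $\mathrm{vec}(\bm{G}_k)'$ and $\mathrm{vec}(\bm{G}_k^*)'$ for $p<k\le d$. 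The AR block already matches $\sum_{k\le p}\|\bm{G}_k-\bm{G}_k^*\|_{\Fr}$, so it suffices to prove the equivalence for the MA block and recombine via $\sqrt{x^2+y^2}\asymp x+y$. For the upper bound I split $\bm{M}(\bm{\omega})\bm{W}-\bm{M}(\bm{\omega}^*)\bm{V}=\bm{M}(\bm{\omega})(\bm{W}-\bm{V})+(\bm{M}(\bm{\omega})-\bm{M}(\bm{\omega}^*))\bm{V}$: under Assumption \ref{assum:statn}(i) each column of $\bm{M}(\bm{\omega})$ is a geometric series with ratio in a compact subset of $(0,\bar\rho)$, so $\|\bm{M}(\bm{\omega})\|_{\op}\asymp1$ and the first term is $\lesssim\|\bm{W}-\bm{V}\|_{\Fr}$; each column difference is Lipschitz in the relevant coordinate (e.g.\ $\sum_{j\ge1}(\lambda^j-(\lambda^*)^j)^2\lesssim|\lambda-\lambda^*|^2$ as both are $<\bar\rho$), and pairing with the rows of $\bm{V}$, each of norm $\asymp\alpha$ by Assumption \ref{assum:statn}(iii), bounds the second term by $\lesssim\alpha\|\bm{\omega}-\bm{\omega}^*\|_2$.

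The lower bound is the crux. Writing $\Delta\bm{W}=\bm{W}-\bm{V}$ and decomposing the MA difference as $\bm{T}_1+\bm{T}_2+\bm{T}_3$, with $\bm{T}_1=\bm{M}(\bm{\omega}^*)\Delta\bm{W}$, $\bm{T}_2=(\bm{M}(\bm{\omega})-\bm{M}(\bm{\omega}^*))\bm{V}$ and $\bm{T}_3=(\bm{M}(\bm{\omega})-\bm{M}(\bm{\omega}^*))\Delta\bm{W}$, the Lipschitz bound gives $\|\bm{T}_3\|_{\Fr}\lesssim\|\bm{\omega}-\bm{\omega}^*\|_2\,\|\Delta\bm{W}\|_{\Fr}\le C c_{\bm{\omega}}\|\Delta\bm{W}\|_{\Fr}$, so for a fixed, sufficiently small radius $c_{\bm{\omega}}$ the cross term is a small fraction of $\|\Delta\bm{W}\|_{\Fr}$ and can be absorbed at the end; this is precisely why $c_{\bm{\omega}}$ need not shrink. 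It then remains to bound $\|\bm{T}_1+\bm{T}_2\|_{\Fr}$ below by $\|\Delta\bm{W}\|_{\Fr}+\alpha\|\bm{\omega}-\bm{\omega}^*\|_2$.

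To rule out cancellation between $\bm{T}_1$ and $\bm{T}_2$ I work in the temporal index space $\mathbb{R}^\infty$. Let $\bm{P}$ be the orthogonal projection onto $\colsp\bm{M}(\bm{\omega}^*)$ and $\bm{P}^{\perp}=\bm{I}-\bm{P}$. Every column of $\bm{T}_1$ lies in $\colsp\bm{M}(\bm{\omega}^*)$, so $\bm{P}^{\perp}\bm{T}_1=\bm{0}$ and $\bm{P}^{\perp}(\bm{T}_1+\bm{T}_2)=\bm{P}^{\perp}\bm{T}_2$. Expanding $\bm{T}_2=\sum_l\Delta\omega_l\,\partial_{\omega_l}\bm{M}(\bm{\omega}^*)\,\bm{V}+\bm{R}$ with remainder $\|\bm{R}\|_{\Fr}\lesssim\alpha\|\bm{\omega}-\bm{\omega}^*\|_2^2\le Cc_{\bm{\omega}}\alpha\|\bm{\omega}-\bm{\omega}^*\|_2$, again absorbable, reduces $\bm{P}^\perp(\bm T_1+\bm T_2)$ to the projected first-order directions. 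On the complementary side, $\bm{P}(\bm{T}_1+\bm{T}_2)=\bm{M}(\bm{\omega}^*)\Delta\bm{W}+\bm{P}\bm{T}_2$ with $\|\bm{P}\bm{T}_2\|_{\Fr}\lesssim\alpha\|\bm{\omega}-\bm{\omega}^*\|_2$. A short case analysis then finishes: if $\sigma_{\min}(\bm{M}(\bm{\omega}^*))\|\Delta\bm{W}\|_{\Fr}$ dominates $\alpha\|\bm{\omega}-\bm{\omega}^*\|_2$, the $\bm{P}$-component controls $\|\Delta\bm{W}\|_{\Fr}$; otherwise $\|\Delta\bm{W}\|_{\Fr}\lesssim\alpha\|\bm{\omega}-\bm{\omega}^*\|_2$ and the $\bm{P}^{\perp}$-component controls $\alpha\|\bm{\omega}-\bm{\omega}^*\|_2\gtrsim\|\Delta\bm{W}\|_{\Fr}$. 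Either way $\|\bm{T}_1+\bm{T}_2\|_{\Fr}^2=\|\bm{P}(\cdot)\|_{\Fr}^2+\|\bm{P}^{\perp}(\cdot)\|_{\Fr}^2\gtrsim\|\Delta\bm{W}\|_{\Fr}^2+\alpha^2\|\bm{\omega}-\bm{\omega}^*\|_2^2$.

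The main obstacle is the two conditioning facts feeding this argument: (a) $\sigma_{\min}(\bm{M}(\bm{\omega}^*))\gtrsim1$, and (b) $\|\bm{P}^{\perp}\sum_l\Delta\omega_l\,\partial_{\omega_l}\bm{M}(\bm{\omega}^*)\bm{V}\|_{\Fr}\gtrsim\alpha\|\bm{\omega}-\bm{\omega}^*\|_2$. Both assert that the sequences $j\mapsto\lambda_k^j,\ \gamma_m^j\cos(j\theta_m),\ \gamma_m^j\sin(j\theta_m)$, together with their first $\bm{\omega}$-derivatives $j\lambda_k^{j-1},\ j\gamma_m^{j-1}\cos(j\theta_m),\ j\gamma_m^{j-1}\sin(j\theta_m)$, form a uniformly well-conditioned system in $\ell_2(\mathbb{N})$ — a confluent generalized-Vandermonde property. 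Distinctness and separation of the nodes (Assumption \ref{assum:statn}(ii), preserved for $\|\bm{\omega}-\bm{\omega}^*\|_2\le c_{\bm{\omega}}$ small) make the relevant Gram matrix nonsingular, while the compact $\Lambda\subset(0,\bar\rho)$ of Assumption \ref{assum:statn}(i), bounded away from both $0$ and $\bar\rho$, keeps all inner products (e.g.\ $\sum_j\lambda^j\mu^j=\lambda\mu/(1-\lambda\mu)$) finite with smallest Gram eigenvalue bounded below by a constant depending only on $\bar\rho$ and the separation. The complex pairs need extra care, since the $\gamma_m$- and $\theta_m$-derivatives of the two companion columns $\bm{\ell}^{II}_1,\bm{\ell}^{II}_2$ share factors; here I would use $\partial_{\theta}\bm{\ell}^{II}_{1}=-\gamma\,\partial_{\gamma}\bm{\ell}^{II}_{2}$ and $\partial_{\theta}\bm{\ell}^{II}_{2}=\gamma\,\partial_{\gamma}\bm{\ell}^{II}_{1}$ to show that the $(\Delta\gamma_m,\Delta\theta_m)$-block maps into two orthogonal directions of sizes $\asymp\alpha|\Delta\gamma_m|$ and $\asymp\gamma_m\alpha|\Delta\theta_m|\gtrsim\alpha|\Delta\theta_m|$, the last step using that $\gamma_m$ is bounded below. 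Granting (a)–(b), recombining the MA and AR blocks and reinstating the cross and remainder terms (all controlled by the fixed small $c_{\bm{\omega}}$) yields both directions of the stated equivalence.
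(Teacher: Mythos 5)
Your proposal is correct, and its skeleton coincides with the paper's proof: both split off the AR block (where $\bm{A}_j=\bm{G}_j$ exactly), linearize the MA block in $\bm{\omega}$ around $\bm{\omega}^*$ with a second-order Taylor remainder absorbed by the fixed radius $c_{\bm{\omega}}$, use the derivative identities $\nabla_\gamma\ell_{j}^{II,1}=\gamma^{-1}\nabla_\theta\ell_{j}^{II,2}$ and $\nabla_\gamma\ell_{j}^{II,2}=-\gamma^{-1}\nabla_\theta\ell_{j}^{II,1}$ (the paper's \eqref{eq:eqsp}, identical to yours) to collapse the complex-pair derivative directions, and ultimately rest on the same key fact: the confluent generalized-Vandermonde system of sequences plus their first derivatives is uniformly well conditioned in $\ell_2$. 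Where you genuinely diverge is the packaging of the lower bound and the proof of that conditioning fact. The paper writes the entire linear part as one mode-3 product $\cm{\widetilde{H}}=\cm{G}_{\rm{stack}}\times_3\bm{L}_{\rm{stack}}(\bm{\omega}^*)$ (see \eqref{eq:stackH}), computes $\|\cm{G}_{\rm{stack}}\|_{\Fr}\asymp\|\cm{G}-\cm{G}^*\|_{\Fr}+\alpha\|\bm{\omega}-\bm{\omega}^*\|_2$ exactly — the cross-term cancellation in \eqref{eq:DFr2} is precisely your ``two orthogonal directions'' observation for the $(\Delta\gamma_m,\Delta\theta_m)$ block — and then sandwiches $\|\cm{\widetilde{H}}\|_{\Fr}$ between $\sigma_{\min,L}$ and $\sigma_{\max,L}$ times this quantity, so a single singular-value bound (Lemma \ref{lemma:fullrank}) delivers the upper and lower bounds at once, with no projection or case split. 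Your facts (a) and (b) are both corollaries of that one bound: for any coefficient vector $\bm{c}$, $\|\bm{P}^{\perp}\bm{D}\bm{c}\|_2=\min_{\bm{u}}\|\bm{D}\bm{c}-\bm{M}(\bm{\omega}^*)\bm{u}\|_2\geq\sigma_{\min}\bigl((\bm{M}(\bm{\omega}^*),\bm{D})\bigr)\|\bm{c}\|_2$, so the projection-plus-case-analysis route is a valid but more laborious repackaging of $\sigma_{\min}(\bm{L}_{\rm{stack}}(\bm{\omega}^*))\gtrsim1$. Finally, for the conditioning fact itself you invoke linear independence plus compactness/continuity of the Gram matrix over the separated parameter set, which is sound under Assumption \ref{assum:statn}(i)--(ii) but nonconstructive; the paper instead computes the determinant of the leading $2(r+2s)\times2(r+2s)$ block as a generalized Vandermonde determinant (proof of Lemma \ref{lemma:fullrank}), which buys explicit constants $c_{\bar{\rho}},C_{\bar{\rho}}$ that are reused downstream in the definitions of $\kappa_1$ and $\kappa_2$. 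In short: same linearization and same key lemma in substance, with your version trading the paper's explicit algebra for a softer compactness argument and a case analysis that the stacked-matrix formulation makes unnecessary.
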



Note that  any stationary VAR($\infty$) process admits the VMA($\infty$)  representation, $\bm{y}_t =\bm{\Psi}_*(B)\bm{\varepsilon}_{t}$, where $B$ is the backshift operator, and $\bm{\Psi}_*(B) = \bm{I}_N+\sum_{j=1}^{\infty}\bm{\Psi}_j^* B^j$; see Theorem \ref{thm:sol} for a sufficient condition for the stationarity  of the SARMA model. Here we suppress the dependence of $\bm{\Psi}_j^*$'s on $\bm{A}^*_j$'s and hence $\bm{\omega}^*$ and $\bm{G}_k^*$'s for brevity. Let $\mu_{\min}(\bm{\Psi}_*) = \min_{|z|=1}\lambda_{\min}(\bm{\Psi}_*(z)\bm{\Psi}_*^{\HH}(z))$ and $\max_{|z|=1}\lambda_{\max}(\bm{\Psi}_*(z)\bm{\Psi}_*^{\HH}(z))$,
where   $\bm{\Psi}_*^{\HH}(z)$ is the conjugate transpose of $\bm{\Psi}_*(z)$ for $z\in\mathbb{C}$, and it can be verified that $\mu_{\min}(\bm{\Psi}_*) > 0$; see also \cite{basu2015regularized}. Then let
$\kappa_1=	\lambda_{\min}(\bm{\Sigma}_\varepsilon)\mu_{\min}(\bm{\Psi}_*) 	\min\{1, c_{\bar{\rho}}^2\}$ and  
$\kappa_2=\lambda_{\max}(\bm{\Sigma}_\varepsilon)\mu_{\max}(\bm{\Psi}_*)\max\{1, C_{\bar{\rho}}^2\}$,
where $c_{\bar{\rho}}, C_{\bar{\rho}}>0$ are absolute constants defined in Lemma \ref{lemma:fullrank} in the supplementary file. 


\begin{theorem}[Rank-constrained estimator]\label{thm:parametric}
	Let $d_{\pazocal{R}}=
	\pazocal{R}_1\pazocal{R}_2 d  + (\pazocal{R}_1+ \pazocal{R}_2)N$. Suppose that  $\|\bm{\widehat{\omega}} - \bm{\omega}^*\|_{2}\leq c_{\bm{\omega}}$ and 
	$T \gtrsim (\kappa_2 / \kappa_1)^2  d_{\pazocal{R}} \log(\kappa_2/\kappa_1)$. Then under  Assumptions \ref{assum:error} and \ref{assum:statn},   with probability at least $1-4e^{-cd_{\pazocal{R}}\log(\kappa_2/\kappa_1)}-8e^{-cN}-\{ 2+\sqrt{\kappa_2/\lambda_{\max}(\bm{\Sigma}_{\varepsilon})} \} \sqrt{N/\{ (\pazocal{R}_1+ \pazocal{R}_2) T\}}$, we have the following estimation and prediction error bounds:
	\begin{equation*}
		\|\cm{\widehat{A}} - \cm{A}^*\|_{\Fr} \lesssim  \sqrt{\frac{\kappa_2 \lambda_{\max}(\bm{\Sigma}_{\varepsilon}) d_{\pazocal{R}}}{\kappa_1^2 T}}
		\hspace{5mm}\text{and}\hspace{5mm}
		\frac{1}{T}\sum_{t=1}^{T} \left \| (\cm{\widehat{A}} - \cm{A}^*)_{(1)}\bm{\widetilde{x}}_{t} \right \|_2^2 \lesssim \frac{\kappa_2 \lambda_{\max}(\bm{\Sigma}_{\varepsilon})  d_{\pazocal{R}} }{\kappa_1 T}.
	\end{equation*} 
\end{theorem}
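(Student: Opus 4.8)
The plan is to combine the optimality of $(\bm{\widehat{\omega}},\cm{\widehat{G}})$ with a restricted-eigenvalue (restricted strong convexity) lower bound and a matching bound on the stochastic score, using Lemma~\ref{lemma:delnorm} to linearize $\cm{A}$ inside the constant-radius neighborhood $\|\bm{\widehat{\omega}}-\bm{\omega}^*\|_2\leq c_{\bm{\omega}}$. First I would write the residual at the truth as $\bm{r}_t=\bm{y}_t-\cm{A}^*_{(1)}\bm{\widetilde{x}}_t=\bm{\varepsilon}_t+\bm{\delta}_t$, where $\bm{\delta}_t=\cm{A}^*_{(1)}(\bm{x}_t-\bm{\widetilde{x}}_t)$ collects the truncation error from zeroing out the infinite past, and set $\bm{\widehat{\Delta}}_t=(\cm{\widehat{A}}-\cm{A}^*)_{(1)}\bm{\widetilde{x}}_t$. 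Expanding $\widetilde{\mathbb{L}}_T$ at $\cm{\widehat{A}}$ and $\cm{A}^*$ and invoking $\widetilde{\mathbb{L}}_T(\bm{\widehat{\omega}},\cm{\widehat{G}})\leq\widetilde{\mathbb{L}}_T(\bm{\omega}^*,\cm{G}^*)$ yields the basic inequality
\[
\sum_{t=1}^{T}\|\bm{\widehat{\Delta}}_t\|_2^2\;\leq\;2\sum_{t=1}^{T}\langle\bm{\varepsilon}_t,\bm{\widehat{\Delta}}_t\rangle\;+\;2\sum_{t=1}^{T}\langle\bm{\delta}_t,\bm{\widehat{\Delta}}_t\rangle,
\]
whose left side is the empirical prediction error and whose right-hand terms are a martingale (stochastic) term and an initialization term.

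Next I would pin down the effective dimension and the curvature. Since $\cm{\widehat{G}},\cm{G}^*\in\bm{\Gamma}(\pazocal{R}_1,\pazocal{R}_2)$, the difference $\cm{\widehat{G}}-\cm{G}^*$ has mode-$1$ and mode-$2$ ranks at most $2\pazocal{R}_1$ and $2\pazocal{R}_2$; together with the $O(1)$-dimensional $\bm{\omega}$, the admissible error directions form a set of metric entropy $\lesssim d_{\pazocal{R}}=\pazocal{R}_1\pazocal{R}_2 d+(\pazocal{R}_1+\pazocal{R}_2)N$ (up to $\log$ factors). On the event $\|\bm{\widehat{\omega}}-\bm{\omega}^*\|_2\leq c_{\bm{\omega}}$, Lemma~\ref{lemma:delnorm} gives $\|\cm{\widehat{A}}-\cm{A}^*\|_{\Fr}\asymp\|\cm{\widehat{G}}-\cm{G}^*\|_{\Fr}+\alpha\|\bm{\widehat{\omega}}-\bm{\omega}^*\|_2$, so it suffices to control $\|\cm{\widehat{A}}-\cm{A}^*\|_{\Fr}$ and read off the components afterward. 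I would then establish a restricted eigenvalue bound, namely $T^{-1}\sum_t\|\cm{D}_{(1)}\bm{\widetilde{x}}_t\|_2^2\gtrsim\kappa_1\|\cm{D}\|_{\Fr}^2$ uniformly over this low-dimensional cone. The population version follows from the VMA($\infty$) representation of Theorem~\ref{thm:sol} and the spectral lower bound $\mu_{\min}(\bm{\Psi}_*)>0$ (which, with $\lambda_{\min}(\bm{\Sigma}_\varepsilon)$, yields $\kappa_1$); a concentration inequality for quadratic forms of the stationary sub-Gaussian process, in the spirit of \cite{basu2015regularized}, transfers this to the empirical Gram matrix over an $\epsilon$-net of the cone, at the cost of the sample-size requirement $T\gtrsim(\kappa_2/\kappa_1)^2 d_{\pazocal{R}}\log(\kappa_2/\kappa_1)$ and failure probability of order $e^{-c\,d_{\pazocal{R}}\log(\kappa_2/\kappa_1)}$.

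For the two right-hand terms, I would bound the martingale term by $\sum_t\langle\bm{\varepsilon}_t,\bm{\widehat{\Delta}}_t\rangle=\langle\cm{\widehat{A}}-\cm{A}^*,\mathcal{Z}\rangle\leq\|\cm{\widehat{A}}-\cm{A}^*\|_{\Fr}\cdot\sup_{\|\cm{D}\|_{\Fr}=1}\sum_t\langle\bm{\varepsilon}_t,\cm{D}_{(1)}\bm{\widetilde{x}}_t\rangle$, where $\mathcal{Z}$ is built from $\sum_t\bm{\varepsilon}_t\bm{\widetilde{x}}_t'$. Because $\bm{\varepsilon}_t$ is $\sigma^2$-sub-Gaussian and independent of $\bm{\widetilde{x}}_t$, each fixed direction is a sub-Gaussian martingale with conditional variance $\lesssim\lambda_{\max}(\bm{\Sigma}_\varepsilon)\|\cm{D}_{(1)}\bm{\widetilde{x}}\|_T^2\lesssim T\kappa_2\lambda_{\max}(\bm{\Sigma}_\varepsilon)$; a covering argument over the net (cardinality controlled by $d_{\pazocal{R}}$) then gives a supremum of order $\sqrt{T\kappa_2\lambda_{\max}(\bm{\Sigma}_\varepsilon)d_{\pazocal{R}}}$ with failure probability $e^{-cN}$. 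For the initialization term, Assumption~\ref{assum:statn}(i) forces $\|\bm{A}_j^*\|_{\op}\lesssim\bar{\rho}^{\,j}$, so $\bm{\delta}_t$ decays geometrically and $\sum_t\|\bm{\delta}_t\|_2^2$ has summable expectation; bounding $\sum_t\langle\bm{\delta}_t,\bm{\widehat{\Delta}}_t\rangle$ by Cauchy--Schwarz and controlling $\sum_t\|\bm{\delta}_t\|_2^2$ through Markov's inequality produces the residual polynomial term $\{2+\sqrt{\kappa_2/\lambda_{\max}(\bm{\Sigma}_\varepsilon)}\}\sqrt{N/\{(\pazocal{R}_1+\pazocal{R}_2)T\}}$ in the probability statement and shows this contribution is dominated.

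Finally, substituting these two bounds into the basic inequality and using the restricted eigenvalue lower bound $T^{-1}\sum_t\|\bm{\widehat{\Delta}}_t\|_2^2\gtrsim\kappa_1\|\cm{\widehat{A}}-\cm{A}^*\|_{\Fr}^2$ gives, after cancelling one factor of $\|\cm{\widehat{A}}-\cm{A}^*\|_{\Fr}$, the estimation bound $\|\cm{\widehat{A}}-\cm{A}^*\|_{\Fr}\lesssim\sqrt{\kappa_2\lambda_{\max}(\bm{\Sigma}_\varepsilon)d_{\pazocal{R}}/(\kappa_1^2 T)}$; re-inserting this rate into the right-hand side of the basic inequality then yields the prediction bound $T^{-1}\sum_t\|\bm{\widehat{\Delta}}_t\|_2^2\lesssim\kappa_2\lambda_{\max}(\bm{\Sigma}_\varepsilon)d_{\pazocal{R}}/(\kappa_1 T)$, and a union bound over the three events assembles the stated failure probability. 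The hard part will be establishing the restricted eigenvalue bound uniformly over the \emph{nonlinear} restricted set: unlike the purely multilinear reduced-rank setting of \cite{Wang2021High}, the error direction $\cm{\widehat{A}}-\cm{A}^*$ is not exactly low-Tucker-rank because of the nonlinear dependence of $\cm{A}$ on $\bm{\omega}$ through $\bm{L}(\bm{\omega})$. Reconciling this requires invoking Lemma~\ref{lemma:delnorm} to linearize within the constant-radius neighborhood, then merging the resulting parameter perturbation with the tensor low-rank net while propagating the dependent-data spectral bounds ($\mu_{\min},\mu_{\max}$) through the covering argument.
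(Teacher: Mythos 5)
Your proposal is correct and follows essentially the same route as the paper's proof: the basic inequality from optimality of $(\bm{\widehat{\omega}},\cm{\widehat{G}})$, a restricted strong convexity bound and a deviation bound over the structured error set (established via Lemma~\ref{lemma:delnorm}'s linearization, generalized $\epsilon$-nets whose entropy is of order $d_{\pazocal{R}}$, and dependent-data concentration in the spirit of Hanson--Wright/martingale inequalities), initialization effects handled by geometric decay plus Markov's inequality yielding the polynomial probability term, and the final cancellation/assembly. The only difference is bookkeeping: you state the curvature and martingale bounds directly for the truncated regressor $\bm{\widetilde{x}}_t$, whereas the paper proves them for the stationary $\bm{x}_t$ (where the spectral-density and Hanson--Wright tools apply) and transfers to $\bm{\widetilde{x}}_t$ via three explicit correction terms $S_1(\bm{\widehat{\Delta}})$, $S_2(\bm{\widehat{\Delta}})$, $S_3(\bm{\widehat{\Delta}})$ --- the same corrections your plan concentrates into a single truncation term.
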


Combining Theorem \ref{thm:parametric} with Lemma \ref{lemma:delnorm}, we immediately have that with the same probability,   $\|\cm{\widehat{G}}-\cm{G}^*\|_{\Fr}\lesssim \sqrt{\kappa_2 \lambda_{\max}(\bm{\Sigma}_{\varepsilon}) d_{\pazocal{R}}/(\kappa_1^2 T)}$ and  $\|\bm{\widehat{\omega}} - \bm{\omega}^*\|_2 \lesssim \sqrt{\kappa_2 \lambda_{\max}(\bm{\Sigma}_{\varepsilon}) d_{\pazocal{R}}/(\alpha^2 \kappa_1^2 T)}$.

For the SLTR estimator, we make the following additional assumptions.

\begin{assumption}[Sparsity]\label{assum:sparse}
	Each column of the matrix $\bm{U}^*_i$ has at most $s_i$ nonzero entries, where $i=1, 2$.
\end{assumption}

\begin{assumption}[Restricted parameter space]\label{assum:para_add}
	The parameter spaces for $\cm{S}$ and $\bm{U}_i$ with $i=1$ or $2$ are $\bm{\Omega}_{\cmtt{S}}=\{\cm{S}\in\mathrm{AO}(\pazocal{R}_1, \pazocal{R}_2):\sigma_{1}(\cm{S}_{(i)}) \leq C_{\cmtt{S}}<\infty, i=1,2\}$ and $\pazocal{U}_i = \{\bm{U}\in\mathbb{R}^{N\times \pazocal{R}_i} \mid \bm{U}^\prime \bm{U} = \bm{I}_{\pazocal{R}_i}, \text{and }\bm{U}_{j,m}^2 \geq \underline{u}>0\text{ or }\bm{U}_{j,m}^2 = 0, \forall1\leq j\leq N, 1\leq m\leq \pazocal{R}_i\}$, respectively, where $\underline{u}$ is a uniform lower threshold, and $\bm{U}_{j,m}$ is the $(j,m)$-th entry of the matrix $\bm{U}$.
\end{assumption}

\begin{assumption}[Relative spectral gap]\label{assum:spec_gap}
	The nonzero singular values of $\cm{G}_{(i)}$ satisfy that $\sigma_{j-1}^2(\cm{G}_{(i)}) - \sigma_{j}^2(\cm{G}_{(i)}) \geq \beta \sigma_{j-1}^2(\cm{G}_{(i)})$ for $2 \leq j \leq \pazocal{R}_i$ and $i=1,2$, where $\beta>0$ is a constant. 
\end{assumption}

Assumption \ref{assum:sparse} defines the entrywise sparsity of $\bm{U}^*_i$'s. In Assumption \ref{assum:para_add}, the upper bound condition on $\cm{S}$ is mild since large singular values in $\cm{S}$ could cause nonstationarity of the process. The lower threshold $\underline{u}$ for  $\bm{U}_i$'s is needed to establish the restricted eigenvalue condition \citep{Bickel2009}. Since $\underline{u}$  may shrink to zero as the dimension increases, this is not a stringent condition.  Assumption \ref{assum:spec_gap} requires that the singular values of $\cm{G}_{(i)}$'s are well separated to ensure identifiability. See \cite{Wang2021High} for similar assumptions.
The consistency of the SLTR estimator is established as follows.

\begin{theorem}[SLTR estimator]\label{thm:sparse}
	Let $d_{\pazocal{S}} = \pazocal{R}_1\pazocal{R}_2 d + \sum_{i=1}^{2} s_i \pazocal{R}_i \log (N\pazocal{R}_i)$. Suppose that $\|\bm{\widetilde{\omega}} - \bm{\omega}^*\|_2\leq c_{\omega}$ and $T \gtrsim d_{\pazocal{S}} + \underline{u}^{-1}\sum_{i=1}^{2}\pazocal{R}_i \log (N\pazocal{R}_i) + \underline{u}^{-2}(s_1\vee s_2)(\pazocal{R}_1\vee \pazocal{R}_2)(d+\log N)$. Then under Assumptions \ref{assum:error}--\ref{assum:spec_gap}, if $\lambda \gtrsim \sqrt{\kappa_2 \lambda_{\max}(\bm{\Sigma}_{\varepsilon})d_{\pazocal{S}}/{T}}$, 	with probability at least $1-5e^{-cd_{\pazocal{S}}-c\underline{u}^{-1}(\pazocal{R}_1 + \pazocal{R}_2 + \log N\pazocal{R}_1 + \log N\pazocal{R}_2)}-7e^{-cs_2\log N(\pazocal{R}_1 \wedge \pazocal{R}_2)} - c\sqrt{(s_2+\underline{u}^{-1})\pazocal{R}_2/T}(1+\sqrt{(s_2+\underline{u}^{-1})\pazocal{R}_2/d_{\pazocal{S}}})$, it holds
	\[
	\|\cm{\widetilde{A}} - \cm{A}^*\|_{\Fr} \lesssim \frac{(\eta_1 + \eta_2) \sqrt{s_1 + s_2} \lambda}{\beta\kappa_1}\hspace{2mm}\text{and}\hspace{2mm}  \frac{1}{T}\sum_{t=1}^{T}\|(\cm{\widetilde{A}} - \cm{A}^*)_{(1)}\bm{\widetilde{x}}_{t}\|_2^2 \lesssim \frac{(\eta_1 + \eta_2)^2 (s_1 + s_2) \lambda^2}{\beta^2\kappa_1},
	\]
	where $\eta_i = \sum_{j=1}^{\pazocal{R}_i}\sigma_{1}^2(\cm{G}_{(i)}^*) / \sigma_{j}^2(\cm{G}_{(i)}^*)$ for $i=1, 2$.
\end{theorem}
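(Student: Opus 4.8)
The plan is to combine the basic inequality coming from the optimality of the SLTR estimator with a restricted-strong-convexity argument, using Lemma~\ref{lemma:delnorm} to pass between the coefficient tensor $\cm{A}$ and the pair $(\bm{\omega},\cm{G})$, and using spectral perturbation theory to pass between $\cm{G}$ and the sparse loadings $\bm{U}_1,\bm{U}_2$ where the $\ell_1$ penalty acts. Writing $\cm{\widetilde{G}}=\cm{\widetilde{S}}\times_1\bm{\widetilde{U}}_1\times_2\bm{\widetilde{U}}_2$ and letting $\bm{\Delta}=\cm{\widetilde{A}}-\cm{A}^*$ denote the error tensor, optimality in \eqref{eq:sparse_lse} gives $\widetilde{\mathbb{L}}_T(\bm{\widetilde{\omega}},\cm{\widetilde{G}})+\lambda\sum_i\|\bm{\widetilde{U}}_i\|_1\le\widetilde{\mathbb{L}}_T(\bm{\omega}^*,\cm{G}^*)+\lambda\sum_i\|\bm{U}_i^*\|_1$. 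Expanding the quadratic loss around $\cm{A}^*$ isolates the empirical prediction error $T^{-1}\sum_t\|\bm{\Delta}_{(1)}\bm{\widetilde{x}}_t\|_2^2$ together with a cross term $T^{-1}\sum_t\langle\bm{r}_t,\bm{\Delta}_{(1)}\bm{\widetilde{x}}_t\rangle$, where $\bm{r}_t=\bm{y}_t-\cm{A}^*_{(1)}\bm{\widetilde{x}}_t$ decomposes into the innovation $\bm{\varepsilon}_t$ and a deterministic initialization remainder arising from setting $\{\bm{y}_t,t\le0\}$ to zero.

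The first block of work is stochastic. I would write the innovation part of the cross term as $\langle\sum_t\bm{\varepsilon}_t\bm{\widetilde{x}}_t',\bm{\Delta}_{(1)}\rangle$ and control it through the dual norm adapted to the sparse-plus-low-rank-plus-parameterized structure of $\bm{\Delta}$. Using the sub-Gaussianity of Assumption~\ref{assum:error}, the VMA($\infty$) representation furnished by Theorem~\ref{thm:sol}, and the spectral quantities $\mu_{\min}(\bm{\Psi}_*),\mu_{\max}(\bm{\Psi}_*)$ entering $\kappa_1,\kappa_2$, concentration for dependent sub-Gaussian sequences yields that the score is of order $\sqrt{\kappa_2\lambda_{\max}(\bm{\Sigma}_\varepsilon)d_{\pazocal{S}}/T}$ with the stated exponential probability, while the deterministic remainder is controlled by the geometric decay in Assumption~\ref{assum:statn}(i) and absorbed into lower-order terms. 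Choosing $\lambda\gtrsim\sqrt{\kappa_2\lambda_{\max}(\bm{\Sigma}_\varepsilon)d_{\pazocal{S}}/T}$ to dominate the score then produces the usual cone condition on the loadings: invoking Assumption~\ref{assum:sparse}, the off-support error is controlled by the on-support part, so $\sum_i\|\bm{\widetilde{U}}_i-\bm{U}_i^*\|_1\lesssim\sum_i\sqrt{s_i\pazocal{R}_i}\,\|\bm{\widetilde{U}}_i-\bm{U}_i^*\|_{\Fr}$.

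The second block is deterministic. On the cone, a restricted-eigenvalue lower bound gives $T^{-1}\sum_t\|\bm{\Delta}_{(1)}\bm{\widetilde{x}}_t\|_2^2\gtrsim\kappa_1\|\bm{\Delta}\|_{\Fr}^2$; establishing this for the joint low-rank, sparse, and $\bm{\omega}$-parameterized perturbation set is where Lemma~\ref{lemma:delnorm} is essential, since within the radius $\|\bm{\widetilde{\omega}}-\bm{\omega}^*\|_2\le c_{\bm{\omega}}$ it linearizes $\|\bm{\Delta}\|_{\Fr}\asymp\|\cm{\widetilde{G}}-\cm{G}^*\|_{\Fr}+\alpha\|\bm{\widetilde{\omega}}-\bm{\omega}^*\|_2$ and lets the curvature in $\bm{\omega}$ and in $\cm{G}$ be treated jointly, with the sample-size condition on $T$ keeping the RE constant bounded below. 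To connect the two blocks I would invoke a Davis--Kahan/Wedin-type bound under the relative spectral gap of Assumption~\ref{assum:spec_gap} and the restricted parameter space of Assumption~\ref{assum:para_add}: the perturbation of each left singular subspace of $\cm{G}_{(i)}$ is controlled by $\|\cm{\widetilde{G}}-\cm{G}^*\|_{\Fr}$ divided by the gap $\beta\sigma_j^2(\cm{G}^*_{(i)})$, and summing over the $\pazocal{R}_i$ directions is exactly what produces the factor $\eta_i=\sum_{j=1}^{\pazocal{R}_i}\sigma_1^2(\cm{G}^*_{(i)})/\sigma_j^2(\cm{G}^*_{(i)})$ and the dependence on $1/\beta$. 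Feeding this back into the cone-bounded penalty term and cancelling one power of $\|\bm{\Delta}\|_{\Fr}$ against the RE lower bound yields $\|\bm{\Delta}\|_{\Fr}\lesssim(\eta_1+\eta_2)\sqrt{s_1+s_2}\,\lambda/(\beta\kappa_1)$, and resubstituting into the prediction error gives the second bound.

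The main obstacle will be the perturbation step relating the loading-matrix error (where the $\ell_1$ penalty and sparsity live) to the coefficient-tensor error (where the prediction loss and RE condition live): one must track the spectral-gap-weighted error across all $\pazocal{R}_i$ singular directions, carrying it through both directions of the inequality to obtain the correct power of $\eta_i$, while the orthonormality of $\bm{U}_i$ and the row-orthogonality of $\cm{S}$ couple the factors nontrivially. A secondary difficulty, absent from the purely linear VAR analysis of \cite{Wang2021High}, is that the temporal factor $\bm{L}(\bm{\omega})$ is nonlinear in $\bm{\omega}$; controlling the $\bm{\omega}$-direction requires the non-shrinking-radius linearization of Lemma~\ref{lemma:delnorm} and care that the curvature in $\bm{\omega}$, scaled by $\alpha$, does not degrade the restricted-eigenvalue constant.
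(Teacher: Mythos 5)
Your architecture matches the paper's proof almost step for step: basic inequality from optimality, a deviation bound for the score, a restricted-strong-convexity lower bound, the standard support-splitting argument for the $\ell_1$ penalty, and an HOSVD perturbation bound (the paper's Lemma \ref{lemma:sparse_perturb_svd}, inherited from \cite{Wang2021High}) combined with Lemma \ref{lemma:delnorm} to convert between $\|\cm{\widetilde{A}}-\cm{A}^*\|_{\Fr}$ and the factor-level errors, which is exactly where $\eta_i$ and $1/\beta$ enter. However, there is a genuine gap in your stochastic block. You treat the score term as if it were linear in the error and propose to kill it by taking $\lambda$ large enough ("choosing $\lambda$ to dominate the score"). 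But the error tensor is \emph{bilinear} in the loading perturbations: writing $\cm{\widetilde{G}}-\cm{G}^* = \cm{S}\times_1\bm{\widetilde{\Delta}}_{\bm{U}_1}\times_2\bm{U}_2^* + \cm{S}\times_1\bm{U}_1^*\times_2\bm{\widetilde{\Delta}}_{\bm{U}_2} + \cm{S}\times_1\bm{\widetilde{\Delta}}_{\bm{U}_1}\times_2\bm{\widetilde{\Delta}}_{\bm{U}_2} + (\cm{\widetilde{S}}-\cm{S}^*)\times_1\bm{U}_1^*\times_2\bm{U}_2^*$, the cross term $\cm{S}\times_1\bm{\widetilde{\Delta}}_{\bm{U}_1}\times_2\bm{\widetilde{\Delta}}_{\bm{U}_2}$ forces the deviation bound (the paper's Lemma \ref{lemma:sparsedev}) to take the form
\begin{equation*}
\frac{1}{T}\Big|\sum_{t=1}^{T}\langle \bm{\varepsilon}_t, \bm{\widetilde{\Delta}}_{(1)}\bm{x}_t\rangle\Big| \lesssim \lambda\Big(\|\cm{\widetilde{S}}-\cm{S}^*\|_{\Fr}+\alpha\|\bm{\widetilde{\omega}}-\bm{\omega}^*\|_2+\sum_{i=1}^{2}\|\bm{\widetilde{\Delta}}_{\bm{U}_i}\|_1\Big)\Big/4 \;+\; \tau\,\|\bm{\widetilde{\Delta}}_{\bm{U}_1}\|_1\|\bm{\widetilde{\Delta}}_{\bm{U}_2}\|_1,
\end{equation*}
with $\tau=\sqrt{(d+\log N)/T}$. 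The last term is \emph{quadratic} in the error, so no choice of $\lambda$ can dominate it; it must instead be absorbed into the curvature term $\kappa_1\|\bm{\widetilde{\Delta}}\|_{\Fr}^2$ supplied by RSC, which works only when $\beta^{-2}\eta_1\eta_2\sqrt{\bar{s}_1\bar{s}_2}\,\tau\lesssim\kappa_1$ with $\bar{s}_i=(s_i+\underline{u}^{-1})\pazocal{R}_i$. This absorption is precisely what generates the third sample-size condition $T\gtrsim \underline{u}^{-2}(s_1\vee s_2)(\pazocal{R}_1\vee\pazocal{R}_2)(d+\log N)$ in the theorem; your proposal neither produces this term nor explains that condition, so as written the final cancellation step would fail.

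A secondary, fixable discrepancy: you invoke a restricted-eigenvalue bound "on the cone," whereas the paper establishes RSC over the whole feasible class $\bm{\Upsilon}_{\mathbb{S}}$ by exploiting the hard sparsity built into Assumption \ref{assum:para_add} — the lower threshold $\underline{u}$ forces every feasible $\bm{U}_i$ to have at most $\underline{u}^{-1}\pazocal{R}_i$ nonzero entries per matrix, so $\bm{\widetilde{\Delta}}_{\bm{U}_i}$ is exactly $\bar{s}_i$-sparse and a union bound over supports suffices (this is also where the $\underline{u}^{-1}$ terms in the stated probability and sample size originate). Proving a uniform RE condition over an $\ell_1$ cone for this dependent, nonlinearly parameterized tensor class would be substantially harder and is not needed; you should replace the cone-RE step with the exact-sparsity argument.
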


Taking $\lambda  \asymp \sqrt{\kappa_2 \lambda_{\max}(\bm{\Sigma}_{\varepsilon})d_{\pazocal{S}}/{T}}$, the estimation and prediction error bounds in Theorem \ref{thm:sparse}  become $(\eta_1 + \eta_2) \sqrt{(s_1 + s_2)	\kappa_2 \lambda_{\max}(\bm{\Sigma}_{\varepsilon})d_{\pazocal{S}} /(\beta^2\kappa_1^2 T)}$ and $(\eta_1 + \eta_2) (s_1 + s_2)	\kappa_2 \lambda_{\max}(\bm{\Sigma}_{\varepsilon})d_{\pazocal{S}} /(\beta^2\kappa_1 T)$, respectively. Then, in view of Lemma \ref{lemma:delnorm}, the high probability bounds for $\|\cm{\widetilde{G}}-\cm{G}^*\|_{\Fr}$ and  $\|\bm{\widetilde{\omega}} - \bm{\omega}^*\|_2$ can be easily obtained.

In practice, the ranks  $\pazocal{R}_1, \pazocal{R}_2$ and model orders $p,r,s$ are usually small. Then by fixing the constants $\lambda_{\min}(\bm{\Sigma}_\varepsilon)$,  $\lambda_{\max}(\bm{\Sigma}_\varepsilon)$,  $\mu_{\min}(\bm{\Psi}_*)$, $\mu_{\max}(\bm{\Psi}_*)$, $\eta_1, \eta_2$ and $\beta$, the estimation error bound for the rank-constrained estimator $\cm{\widehat{A}}$ can be simplified to $\sqrt{N/T}$, while that for the SLTR estimator $\cm{\widetilde{A}}$  reduces to $\sqrt{(s_1+s_2)^2\log (N) /T}$. 

\begin{remark}\label{remark:radius}
	We give more details about the non-shrinking radius $c_{\bm{\omega}}$ in Lemma \ref{lemma:delnorm}. The result of Lemma \ref{lemma:delnorm} comes from the following first-order Taylor expansion:
	$\bm{\Delta}(\bm{\omega}, \cm{G}) =\cm{{A}}(\bm{\omega}, \cm{G})-\cm{A}^* = \cm{M}(\bm{\omega}-\bm{\omega}^*, \cm{G}-\cm{G}^*) \times_3\bm{L}_{\rm{stack}}(\bm{\omega}^*) + \text{remainder}$,
	where  $\cm{M}:  \mathbb{R}^{r+2s} \times \mathbb{R}^{N\times N\times d} \to \mathbb{R}^{N \times N\times (d+r+2s)}$ is a bilinear function, and $\bm{L}_{\rm{stack}}(\bm{\omega}^*)$ is a $\infty\times (d+r+2s)$ constant matrix; see the proof of Lemma \ref{lemma:delnorm} in the  supplementary file. The negligibility of the remainder term  requires that  $\bm{\omega}$ lies  within a constant radius  of $\bm{\omega}^*$. In our proof, we derive the radius
	$c_{\bm{\omega}}= \min \left \{2, \frac{c_{\cmtt{G}}(1-\bar{\rho})\sigma_{\min, L} }{8\sqrt{2}C_L}\right \}$,
	where $\sigma_{\min, L}:=\sigma_{\min}(\bm{L}_{\rm{stack}}(\bm{\omega}^*))$, $c_{\cmtt{G}}:=\min_{p+1\leq k\leq d}\|\bm{G}_k^*\|_{\Fr} / \max_{p+1\leq k\leq d}\|\bm{G}_k^*\|_{\Fr}$, and $C_L>0$ is an absolute constant given in Lemma \ref{cor1} in the supplementary file. Note that Assumption \ref{assum:statn} implies that $\sigma_{\min, L}>0$  and  $c_{\cmtt{G}}>0$ are both absolute constants: the former is shown by Lemma \ref{lemma:fullrank} in the supplementary file, and the latter is a direct consequence of Assumption \ref{assum:statn}(iii). Thus, the radius $c_{\bm{\omega}}$ is  non-shrinking. 
\end{remark}

\begin{remark}
	In the proofs of  Theorems \ref{thm:parametric} and \ref{thm:sparse},  we show that the effect of initial values for $\{\bm{y}_t,t\leq 0\}$ has no contribution to the final estimation error rates; see  the quantities $|S_i(\widehat{\bm{\Delta}})|$ for $1\leq i\leq 3$ in the supplementary file.  We bound the initialization error terms by Markov's inequality, resulting in a nonexponential tail probability, which  may be sharpened by employing more sophisticated concentration inequalities. 
\end{remark}

\begin{remark}
	The  ranks  $\pazocal{R}_1, \pazocal{R}_2$ and  model orders $p,r,s$ are unknown in practice, but can be selected by data-driven methods. Due the page limit, we discuss them in Section \ref{sec:selection} of the supplementary material. We present an easy-to-implement estimator of the ranks, which is proved to be consistent. Model order selection  via the  Bayesian information criterion  is also discussed therein.
\end{remark}

\section{Two empirical examples}\label{sec:empirical}

\subsection{Macroeconomic dataset}\label{subsec:macro}
This dataset contains observations of 20 quarterly macroeconomic variables from June 1959 to December 2019, with $T=243$, retrieved from FRED-QD \citep{MN16}. These variables come from four categories: (i) stock market, (ii) exchange rates, (iii) money and credit, and (iv) interest rates. These categories are usually considered in the construction of financial condition index, since they reflect important factors that can affect the stance of monetary policy and aggregate demand conditions
\citep{goodhart2001asset,bulut2016financial,hatzius2010financial}. All series are transformed to be stationary, and standardized to have zero mean and unit variance; see the supplementary file for more details of the variables and their transformations.

We first explore the factor structures of this dataset. As discussed in Section \ref{subsec:HDmodel}, $\bm{U}_1$ and  $\bm{U}_2$  capture response factor and predictor factor spaces, respectively.  By the rank selection method in Section \ref{sec:selection}, we obtain $(\widehat{\pazocal{R}}_1, \widehat{\pazocal{R}}_2) = (3,3)$. 
Figure~\ref{fig:sparse} displays $\widehat{\bm{U}}_1$ and $\widehat{\bm{U}}_2$ based on the proposed SLTR estimation in Section \ref{subsec:HDmodel}, where the regularization parameter is selected by cross-validation. Overall, it can be observed that the response factors (RFs) are mainly influenced by variables in categories (i) and (iv) and business loan indicator from category (iii), while the influence from categories (ii) is relatively weak. On the other hand, only the S\&P 500 index from category (i) and category (iv) contributes significantly to the predictor factors (PFs).

\begin{figure}[t]
	\centering
	\includegraphics[width=1.\linewidth]{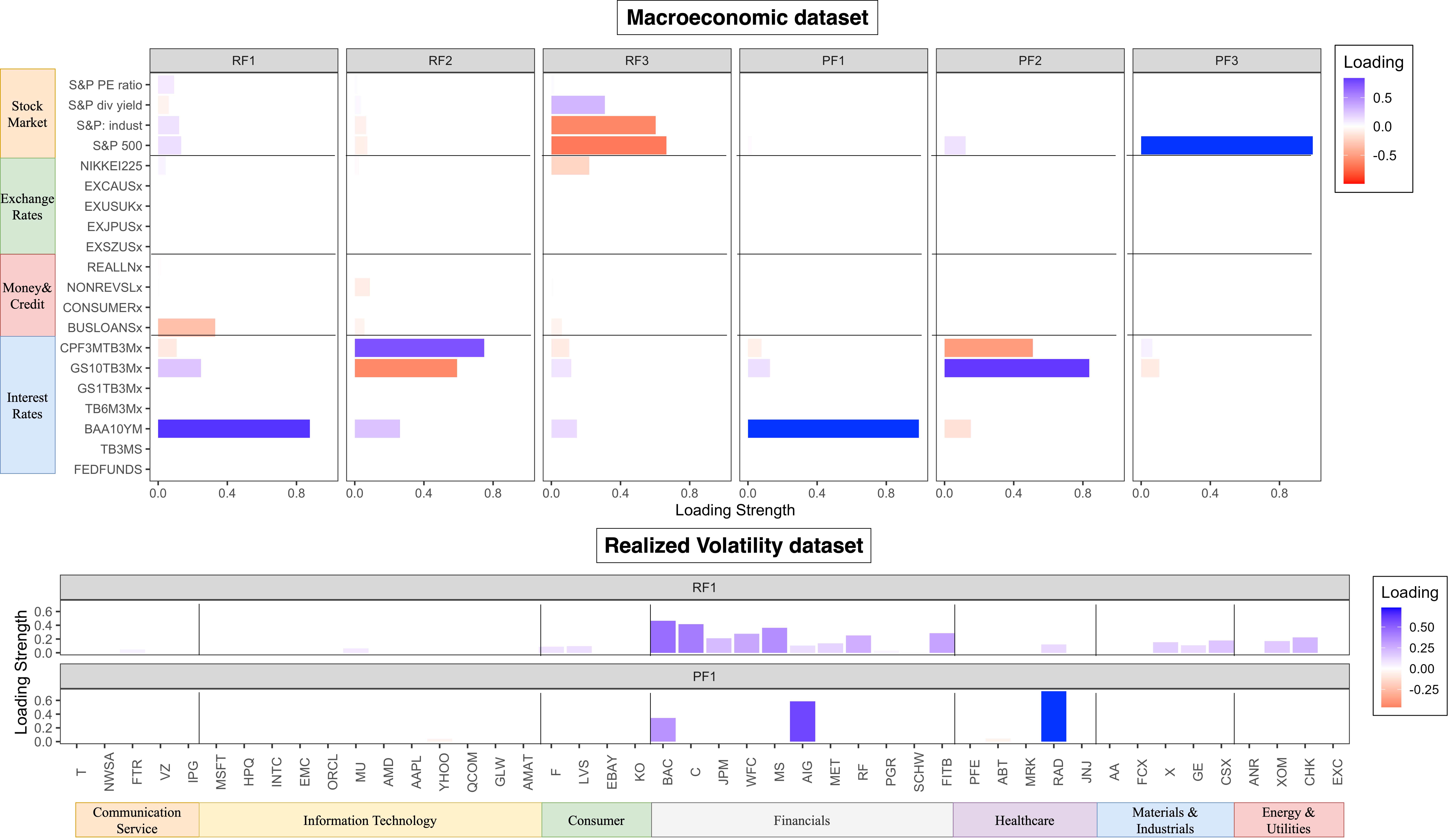}
	\caption{SLTR estimates of factor loadings in the proposed SARMA model for the macroeconomic and realized volatility datasets. Response factors (``RF''s) correspond to columns of $\bm{U}_1$ while predictor factors (``PF''s)  correspond to columns of $\bm{U}_2$.}
	\label{fig:sparse}
\end{figure}

We evaluate the performance of our method based on out-of-sample forecast accuracy. The following rolling forecast procedure is adopted: we first fit the models using  historical data with the end point rolling from the fourth quarter of 2015 to the third quarter of 2019, and then conduct one-step-ahead forecasts based on the fitted models. In addition to the proposed rank-constrained (RC) and SLTR estimators, we consider five other existing methods, including three based on the VAR model and two based on the VARMA model.  Specifically, for the VAR model, we consider (a) the Lasso method \citep{basu2015regularized} and two methods in \cite{Wang2021High}: (b) the multilinear low-rank (MLR) method and (c) the sparse higher-order reduced-rank (SHORR) method,  which  further imposes sparsity on the factor matrices in (b) using a slightly different regularizer than the method in this paper. 
For the VARMA model, we apply the method in \cite{WBBM21} with (d) the $\ell_1$-penalty or (e) the HLag penalty. Note that (a) is used as the Phase-I estimator for the estimators in (d) and (e), and the AR order is selected according to  \cite{WBBM21}.  The AR order for (b) and (c) is chosen as in \cite{Wang2021High}. For the proposed low-Tucker-rank SARMA model,  the estimated model orders are  $(\widehat{p}, \widehat{r}, \widehat{s}) = (0,1,0)$.
Throughout the rolling forecast procedure, the same model orders and ranks are used.

Table~\ref{tab:real} reports the mean squared forecast error (MSFE) and mean absolute forecast error (MAFE) for all methods. It can be observed that the proposed methods achieve the smallest forecast errors among all competing ones. Compared to sparse but non-low-rank models, i.e., (a), (d) and (e), the proposed model can better capture the factor structure which is prominent in this dataset. Meanwhile, its higher flexibility than the VAR model is supported by its  better forecasting performance than (b) and (c). In addition, note that imposing sparsity on the factor matrices generally results in smaller forecast errors for both VAR and SARMA models; see Figure \ref{fig:sparse}.

\begin{table}[t]\small
	\caption{Forecast errors for macroeconomic and realized volatility datasets. The smallest numbers in each row are marked in bold.}
	\label{tab:real}
	\centering
	\resizebox{\textwidth}{!}{\begin{tabular}{@{}rrrrrrrrr@{}}
			\hline
			&&\multicolumn{3}{c}{\makecell{VAR}}&\multicolumn{2}{c}{\makecell{VARMA }} & \multicolumn{2}{c}{\makecell{SARMA }} \\
			\cmidrule(r){3-5}\cmidrule(r){6-7}\cmidrule(r){8-9}
			&&(a) Lasso& (b) MLR& (c) SHORR & (d) $\ell_1$& (e) HLag &  RC &  SLTR\\
			\hline
			\multirow{2}{1.1cm}{\makecell{Macro}}&MSFE&2.78&2.77&2.71&2.80&2.79&2.67&\textbf{2.62}\\
			&MAFE&9.26&9.27&8.99&9.28&9.24&8.75&\textbf{8.45}\\
			\hline
			\multirow{2}{1cm}{\makecell{RV}}&MSFE&5.17&4.93&4.87&5.19&5.19&4.78&\textbf{4.74}\\
			&MAFE&21.58&19.02&18.22&21.70&21.70&\textbf{16.45}&16.99\\
			\hline
	\end{tabular}}
\end{table}


\subsection{Realized volatility}

As another example, we study  daily realized volatilities for 46 stocks from January 2, 2012 to December 31, 2013, with $T=495$. These are the  stocks of top S\&P 500 companies ranked by trading volumes on the first day of 2013.
Specifically, we obtain the tick-by-tick data  from WRDS (\url{https://wrds-www.wharton.upenn.edu}) and compute the daily realized volatility from five-minute returns \citep{andersen2006volatility}. By examining the sample autocorrelation functions, we have confirmed the stationarity of all series. Each series is then standardized to have zero mean and unit variance. More information about the stocks is given in Table S.2 in the supplementary file.
We conduct the same rolling forecast procedure as in Section \ref{subsec:macro}, where the last 10\% of the sample is used as the forecast period. As shown in Table~\ref{tab:real}, the proposed methods considerably outperform the other ones in terms of forecast accuracy. 


The estimated ranks and model orders  are  $(\widehat{\pazocal{R}}_1, \widehat{\pazocal{R}}_2, \widehat{p}, \widehat{r}, \widehat{s}) = (1,1,0,1,0)$. As a result, the fitted model has the following factor structure: $\widehat{\bm{u}}_1^{\prime} \bm{y}_t= 0.336 \sum_{j=1}^\infty 0.872^j \widehat{\bm{u}}_2^{\prime} \bm{y}_{t-j}  + \bm{e}_t$, where the loadings $\bm{\widehat{u}}_1$ and $\bm{\widehat{u}}_2$ are displayed in Figure \ref{fig:sparse}. We have several interesting findings. First, $\widehat{\lambda}=0.872$ indicates that the influence of the past on the present decays quite slowly. 	 This lends  support to the well-established fact that the volatility of asset returns is highly persistent, that is, the AR process of the volatility is nearly unit-root; see, e.g., \cite{ABDL03}. Second, it can be observed that the weights in $\bm{\widehat{u}}_1$ are more evenly spread out across four sectors, including Financials, Healthcare, Material \& Industrials, and Energy \& Utilities. However, the weights in $\bm{\widehat{u}}_2$ are more concentrated on a few stocks. As discussed in Section \ref{subsec:df}, $\widehat{\bm{u}}_1^{\prime} \bm{y}_t$ and $\widehat{\bm{u}}_2^{\prime} \bm{y}_{t}$  can be regarded  as  two different market volatility indices, with the loadings $\widehat{\bm{u}}_1$  and $\widehat{\bm{u}}_2$  capturing how the market responds to and  picks up risks across stocks, respectively. Lastly, the estimated slope $0.336$ signifies the overall association between $\bm{y}_t$ and its lags, after summarizing the information across all stocks into market indices, while taking into account the decaying temporal dependence over lags. It shows that the present and past volatilities are positively correlated, a phenomenon commonly known as the volatility clustering in the literature of financial time series \citep{Tsay2010}.


\section{Conclusion and discussion}\label{sec:conclusion}

This paper contributes to the underdeveloped literature on high-dimensional  VARMA models. First, the originally unwieldy VARMA form is turned into a much more tractable infinite-order VAR form. Second, building on the close connection between this form and the tensor decomposition for the AR coefficient tensor $\cm{A}$, a low-Tucker-rank structure is naturally considered, so that dimension reduction can be simultaneously performed across all time lags and variables. 
In summary, by combining the reparameterization and tensor decomposition techniques, this paper expands the available model family for high-dimensional time series from finite-order VAR to VARMA processes. 

Moreover, a comprehensive high-dimensional estimation procedure is developed, together with theoretical properties and efficient   algorithms that leverage the tractable form of the model. The limited literature on high-dimensional VARMA and VAR($\infty$) models focuses on sparse coefficient matrices. To the best of our knowledge, this is the first approach to accommodate low-rank structures across both response and predictor dimensions in high-dimensional VAR($\infty$) processes.

There are  many worthwhile questions  that remain to be explored. Firstly, the  convergence theory developed for the estimators in this paper focuses on the statistical error, whereas the optimization error of the algorithm is not studied. For the nonconvex estimation of low-rank tensor models, \cite{Han2021} establishes both the statistical error bound and the linear rate of computational convergence of their proposed algorithm. For our model, the main difficulty in conducting such an algorithmic analysis  lies in the nonconvexity of the coefficient tensor $\cm{A}$ with respect to $\bm{\omega}$. Second, it is important to develop high-dimensional statistical inference procedures for the proposed model. So far there have been limited studies on inference for low-rank tensor regression models. A recent work is \cite{Xia2022} which, however, focuses on $i.i.d.$ low-Tucker-rank models with non-sparse factor matrices. Extensions of such asymptotic distributional results to the time series setting can be challenging. Moreover, when the factor matrices are sparse, the corresponding inference will be even more difficult, and debiasing techniques are likely inevitable. We leave these interesting problems to future research.

\section*{Supplementary Materials}

The Supplementary Material contains methods for rank and model order selection, algorithms for the proposed estimators, simulation studies, all technical details,  and  additional details for empirical studies in this paper.

\putbib[SARMA]
\end{bibunit}

\newpage
\renewcommand{\arraystretch}{0.85}

\begin{bibunit}[apalike]
\vspace*{10pt}	
\begin{center}
	{\Large \bf Online Supplement for ``SARMA:  Scalable Low-Rank High-Dimensional Autoregressive Moving Averages via Tensor Decomposition"}
\end{center}
\vspace{10pt}

\begin{abstract}
This supplementary file contains six sections. Section \ref{sec:selection} develops a method for selecting the response and predictor ranks, which is further proved to be consistent. Section \ref{section:algo} presents algorithms for the proposed rank-constrained and SLTR estimators. Section \ref{sec:sim} presents simulation studies for verifying the theoretical results.
Section \ref{sec5} provides descriptions of datasets  for the
empirical examples in the main paper. All technical proofs for the theoretical results in Sections \ref{section:model} and \ref{sec:HDmodel} in  main paper are provided in Sections \ref{sec:proof1} and \ref{sec:proof2}, respectively.
\end{abstract}

\renewcommand\thetable{S.\arabic{table}}
\renewcommand{\thesection}{S\arabic{section}}
\renewcommand{\thesubsection}{S\arabic{section}.\arabic{subsection}}
\renewcommand{\thetheorem}{S\arabic{theorem}}
\renewcommand{\theproposition}{S\arabic{proposition}}
\renewcommand{\theequation}{S\arabic{equation}}
\renewcommand{\thefigure}{S\arabic{figure}}
\renewcommand{\thelemma}{S\arabic{lemma}}
\setcounter{lemma}{0}
\setcounter{section}{0}

\section{Model selection}\label{sec:selection}
As the response and predictor ranks  are unknown in practice, we provide a data-driven method to select them and prove  the consistency of the estimated ranks.

Denote the true values of the ranks by $(\pazocal{R}_1^*, \pazocal{R}_2^*)$.
Suppose that $\cm{\widehat{A}}^{\text{init}}$ is a consistent initial estimator of $\cm{A}^*$; see Remark \ref{remark:init} for a detailed discussion on its choice. Denote by $\widehat{\sigma}_j(i)$ and $\sigma_j^*(i)$ the $j$th largest singular value of $\cm{\widehat{A}}^{\text{init}}_{(i)}$ and $\cm{A}^*_{(i)}$, respectively, for $i=1$ or $2$.  
We adopt the ridge-type ratio estimator \citep{Xia2015Consistently, Wang2021High}:
\[
\widehat{\pazocal{R}}_{i} = \argmin_{1\leq j \leq N-1} \frac{\widehat{\sigma}_{j+1}(i)+\tau}{\widehat{\sigma}_j(i)+\tau}, \quad i = 1, 2,
\] 
where $\tau$ is a parameter to be chosen such that Assumption \ref{assum:rank-select} below is satisfied.

Let
\[
\zeta_i =  \frac{1}{\sigma^*_{\min}(i)} \cdot \max_{1\leq j \leq \pazocal{R}_i^* -1}\frac{\sigma_j^*(i)}{ \sigma^*_{j+1}(i) }, \quad \text{for}\hspace{2mm}i=1,2,
\]
where $\sigma^*_{\min}(i)$ is the minimum singular value of $\cm{A}^*_{(i)}$.
The following assumption is needed for the consistency of the rank selection method.

\begin{assumption}[Signal strength]\label{assum:rank-select}
	The parameter $\tau>0$ is specified such that (i) $\|\cm{\widehat{A}}^{\text{init}} - \cm{A}^*\|_{\Fr}/\tau = o_p(1)$; and (ii) $\tau\max \{\zeta_1, \zeta_2\} = o(1)$.
\end{assumption}

In Assumption \ref{assum:rank-select}, condition (i) requires that the estimation error of $\cm{\widehat{A}}^{\text{init}}$ is dominated by $\tau$, and condition (ii) can be regarded as the minimal signal assumption which will simply reduce to $\tau=o(1)$ if $\sigma_{j}^*$ for $1\leq j\leq \pazocal{R}_i^*$ and $i=1,2$ are bounded above and away from zero by some absolute constant. Following \cite{Wang2021High}, it is straightforward to establish the consistency of the estimator.

\begin{theorem}\label{thm:rankselection}
	Under Assumption \ref{assum:rank-select}, $\mathbb{P}( \widehat{\pazocal{R}}_1 = \pazocal{R}_1^*, \widehat{\pazocal{R}}_2 = \pazocal{R}_2^*) \to 1$ as $T\to\infty$. 
\end{theorem}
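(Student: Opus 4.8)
The plan is to follow the standard argument for ridge-type ratio rank estimators (as in \cite{Xia2015Consistently, Wang2021High}), adapted to the tensor matricizations here, with Weyl's inequality as the only perturbation tool. For each mode $i\in\{1,2\}$ and each index $j$, Weyl's inequality gives $|\widehat{\sigma}_j(i) - \sigma_j^*(i)| \le \|\cm{\widehat{A}}^{\text{init}}_{(i)} - \cm{A}^*_{(i)}\|_{\op} \le \|\cm{\widehat{A}}^{\text{init}} - \cm{A}^*\|_{\Fr} =: \delta$. By Assumption \ref{assum:rank-select}(i), $\delta/\tau = o_p(1)$, so every estimated singular value lies within an $o_p(\tau)$ band of its population counterpart. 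Since $\cm{A}^*_{(i)}$ has rank $\pazocal{R}_i^*$, the population singular values satisfy $\sigma_j^*(i)>0$ for $j\le \pazocal{R}_i^*$ and $\sigma_j^*(i)=0$ for $j>\pazocal{R}_i^*$, with $\sigma^*_{\min}(i)=\sigma^*_{\pazocal{R}_i^*}(i)$ denoting the smallest nonzero value.

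The core of the proof is to analyze the ratio $r_j(i)=(\widehat{\sigma}_{j+1}(i)+\tau)/(\widehat{\sigma}_j(i)+\tau)$ in three regimes and show its minimizer is $j=\pazocal{R}_i^*$ with probability tending to one. At the true rank $j=\pazocal{R}_i^*$, the numerator obeys $\widehat{\sigma}_{\pazocal{R}_i^*+1}(i)+\tau \le \delta+\tau = \tau(1+o_p(1))$ because the matching population singular value vanishes, while the denominator obeys $\widehat{\sigma}_{\pazocal{R}_i^*}(i)+\tau \ge \sigma^*_{\min}(i)-\delta+\tau = \sigma^*_{\min}(i)(1-o_p(1))$; hence $r_{\pazocal{R}_i^*}(i)\le (\tau/\sigma^*_{\min}(i))(1+o_p(1))$. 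For $j>\pazocal{R}_i^*$, both relevant population singular values are zero, so $r_j(i)\ge \tau/(\delta+\tau)=1-o_p(1)$, bounded away from zero. For $j<\pazocal{R}_i^*$, both $\sigma^*_j(i)$ and $\sigma^*_{j+1}(i)$ are positive, and since condition (ii) forces $\tau=o(1/\zeta_i)\le o(\sigma^*_{\min}(i))$ and thus $\delta=o_p(\tau)=o_p(\sigma^*_{\min}(i))$, the perturbations can be absorbed to yield $r_j(i)\ge (\sigma^*_{j+1}(i)/\sigma^*_j(i))(1-o_p(1)) \ge (\zeta_i\sigma^*_{\min}(i))^{-1}(1-o_p(1))$ by the definition of $\zeta_i$.

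The decisive comparison is then between the true-rank ratio and the $j<\pazocal{R}_i^*$ ratios: their quotient satisfies $r_{\pazocal{R}_i^*}(i)/r_j(i) \lesssim \tau\zeta_i(1+o_p(1))$, which is $o_p(1)$ precisely because Assumption \ref{assum:rank-select}(ii) imposes $\tau\zeta_i=o(1)$. Together with the $j>\pazocal{R}_i^*$ bound, this establishes $r_{\pazocal{R}_i^*}(i)<r_j(i)$ for all $j\ne \pazocal{R}_i^*$ on an event of probability $1-o(1)$, whence $\widehat{\pazocal{R}}_i=\pazocal{R}_i^*$ there. A union bound over $i=1,2$ delivers $\mathbb{P}(\widehat{\pazocal{R}}_1=\pazocal{R}_1^*,\widehat{\pazocal{R}}_2=\pazocal{R}_2^*)\to 1$.

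I expect the main obstacle to be the regime $j<\pazocal{R}_i^*$. The ridge shift $\tau$ is what lifts the numerator at the true rank above the noise floor $\delta$, but one must verify that this same $\tau$ does not distort the ratios among genuine signal singular values, which are allowed to shrink with $N$. Controlling this tension is exactly the role of the minimal-signal condition $\tau\zeta_i=o(1)$, so the crux is to invoke the definition of $\zeta_i$ with care rather than to assume the $\sigma^*_j(i)$ are bounded below by an absolute constant.
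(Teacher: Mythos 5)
Your proof is correct and is essentially the argument the paper has in mind: the paper gives no explicit proof of Theorem \ref{thm:rankselection}, delegating it to the standard ridge-type ratio analysis of \cite{Xia2015Consistently} and \cite{Wang2021High}, which is exactly what you carry out (Weyl's inequality plus the three-regime comparison of ratios, with Assumption \ref{assum:rank-select}(ii) handling possibly shrinking signal via $\zeta_i$). Your treatment of the $j<\pazocal{R}_i^*$ regime, using $\tau\zeta_i=o(1)$ together with $\zeta_i\geq 1/\sigma^*_{\min}(i)$ to get $\delta+\tau=o_p(\sigma^*_{\min}(i))$, is precisely the care that argument requires.
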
 

\begin{remark}\label{remark:init}
	We can obtain the initial estimator $\cm{\widehat{A}}^{\text{init}}$ through a VAR($P$) approximation of the VAR($\infty$) process, where $P$ scales with the sample size $T$ \citep{Lutkepohl2005}. Let $\cm{A}_{\mathrm{trim}}$ be a truncated form of $\cm{A}$ such that $(\cm{A}_{\mathrm{trim}})_{(1)} = (\bm{A}_1,\ldots,\bm{A}_P)$. We begin by estimating $\cm{\widehat{A}}^{\text{init}}_{\mathrm{trim}}$, and then append infinitely many zero matrices to $\cm{\widehat{A}}^{\text{init}}_{\mathrm{trim}}$ to obtain $\cm{\widehat{A}}^{\text{init}}$ with  $\cm{\widehat{A}}^{\text{init}}_{(1)}=((\cm{\widehat{A}}^{\text{init}}_{\mathrm{trim}})_{(1)}, \bm{0}_{N\times N}, \bm{0}_{N\times N}, \dots )$.  Following Proposition 4.2 in \cite{WBBM21}, under regularity conditions, the approximation error due to the truncation after lag $P$  can be shown to be negligible if $P\asymp T^{1/2-\epsilon}$, where $\epsilon\in(0,1/2)$. Some possible choices for $\cm{\widehat{A}}^{\text{init}}$ are as follows: (a) The nuclear norm regularized estimator $\cm{\widehat{A}}_{\pazocal{R},\mathrm{trim}} = \argmin_{\cmtt{A}\in\mathbb{R}^{N\times N\times P}}\sum_{t=P+1}^{T}\|\bm{y}_t - \sum_{j=1}^{P}\bm{A}_j\bm{y}_{t-j}\|_2^2 /(T-P) + \lambda_{\text{nuc}}\sum_{i=1}^{2} \|(\cm{A}_{\mathrm{trim}})_{(i)}\|_*$,  where $\lambda_{\text{nuc}}>0$, and the low-rankness of $(\cm{A}_{\mathrm{trim}})_{(i)}$ for $i=1, 2$ is enforced via the nuclear norm penalty; see, e.g., \cite{gandy2011tensor} and \cite{Raskutti17}; (b) the group-lasso estimator $\cm{\widehat{A}}_{\pazocal{S},\mathrm{trim}} = \argmin_{\cmtt{A}\in\mathbb{R}^{N\times N\times P}}\sum_{t=P+1}^{T}\|\bm{y}_t - \sum_{j=1}^{P}\bm{A}_j\bm{y}_{t-j}\|_2^2 /(T-P) + \lambda_{\text{lag}} \sum_{j=1}^{P}\|\bm{A}_{j}\|_{\Fr}$, which 	corresponds to the lag-sparse estimator  in \cite{nicholson2017varx}; and (c) the spectral estimator in \cite{Han2021} which captures the low-Tucker-rank structure of $\cm{A}$.
	In practice, we suggest setting $P = T^{1/3}$ and choosing the regularization parameters $\lambda_{\text{nuc}}$ and $\lambda_{\text{lag}}$ chosen by the time series cross-validation method similar to that in \cite{WBBM21}.  For the non-sparse case, we employ (a) to obtain the initialization for the rank-constrained estimator. Along the lines of the proofs of Theorem 2 in \cite{WZL21}, under some regularity conditions, it can be shown that $\|\cm{\widehat{A}}^{\text{init}} - \cm{A}^*\|_{\Fr}=O_p\{\sqrt{(\pazocal{R}_1^*+\pazocal{R}_2^*)NP/(T-P)}\}$. For the sparse case, we recommend (b) for initializing the SLTR estimator, and it can be shown that $\|\cm{\widehat{A}}^{\text{init}} - \cm{A}^*\|_{\Fr}=O_p\{\sqrt{N^2\log P/(T-P)}\}$.
\end{remark}

\begin{remark}\label{remark:order}
	In practice,  the model orders $(p,r,s)$ also need to be chosen. Suppose the Tucker ranks are consistently estimated, we can then select the model orders by minimizing the  Bayesian information criterion (BIC), $\textup{BIC}(p,r,s) = \log\{ T^{-1}\sum_{t=1}^{T} \|	\bm{y}_t-\sum_{j=1}^{t-1}  \bm{A}_j(\breve{\bm{\omega}},\breve{\cm{G}}) \bm{y}_{t-j}\|_2^2\} +T^{-1} c d_{\pazocal{M}} \log T$, where $(p,r,s)$ is searched over the range $0\leq p \leq p_{\max}$, $0\leq r \leq r_{\max}$, and $0\leq s \leq s_{\max}$, for some predetermined upper bounds, $c>0$ is a constant, and $\bm{\breve{\omega}}$ and $\cm{\breve{G}}$ are the estimates obtained  by fitting the model with orders $(p, r, s)$ using either the rank-constrained estimator or the SLTR estimator. In addition, 
	$d_{\pazocal{M}} = 
	\pazocal{R}_1\pazocal{R}_2 d  + (\pazocal{R}_1+ \pazocal{R}_2)N$ for the former, and  $d_{\pazocal{M}} = \pazocal{R}_1\pazocal{R}_2 d + \sum_{i=1}^{2} \pazocal{R}_i \log (N\pazocal{R}_i)$ for the latter. Then, the consistency of the selected model orders via the BIC can be established along the lines of \cite{sparseARMA}.
\end{remark}

\section{Algorithms}\label{section:algo}

\subsection{Algorithm for the rank-constrained estimator}
We first consider the algorithm for the rank-constrained estimator. By the factorization $\cm{G} = \cm{S}\times_1\bm{U}_1\times_2\bm{U}_2$, the  rank-constrained estimation in \eqref{eq:lse} can be rewritten as the unconstrained problem,
\begin{equation}\label{eq:als}
	(\bm{\widehat{\omega}},\cm{\widehat{S}},\bm{\widehat{U}}_1,\bm{\widehat{U}}_2)=\argmin \widetilde{\mathbb{L}}_T(\bm{\omega},\cm{S},\bm{U}_1,\bm{U}_2).
\end{equation}
Then  we have $\cm{\widehat{G}} = \cm{\widehat{S}}\times_1\bm{\widehat{U}}_1\times_2\bm{\widehat{U}}_2$.  Note that $\bm{U}_i$'s need not be subject to any orthogonality constraint in this minimization.

To implement \eqref{eq:als}, we adopt an alternating minimization algorithm; see Algorithm \ref{alg:sarma}. Note that the optimization for $\lambda_k$'s and $(\gamma_k, \theta_k)$'s in lines 3--6 is efficient due to the following property:
\[
\cm{A}_{(1)}\bm{\widetilde{x}}_t- \sum_{k=1}^p\bm{G}_{k}\bm{y}_{t-k}=\sum_{k=1}^r f^I(\widetilde{\bm{x}}_t; \lambda_{k})+\sum_{k=1}^s f^{II}(\widetilde{\bm{x}}_t; \gamma_{k}, \theta_{k}),
\]
where $f^I(\widetilde{\bm{x}}_t; \lambda_{k})=\sum_{j=1}^{t-p-1} \lambda_k^j \bm{G}_{p+k} \bm{y}_{t-p-j}$, and  
$f^{II}(\widetilde{\bm{x}}_t; \gamma_{k}, \theta_{k})=\sum_{j=1}^{t-p-1}\gamma_{k}^{j} [ \cos(j\theta_{k}) \bm{G}_{p+r+2k-1}+\sin(j\theta_{k}) \bm{G}_{p+r+2k}] \bm{y}_{t-p-j}$. Note that  each  $\lambda_k$ or $(\gamma_{k}, \theta_{k})$ appears in only one summand. Thus, fixing all other parameters, the optimization problem for each  $\lambda_k$ or $(\gamma_{k}, \theta_{k})$ will be only one- or two-dimensional,  where the irrelevant summands will be treated as the intercept and absorbed into the response. These problems can be solved efficiently by the  Newton-Raphson method or even in parallel.

In lines 7--9 of Algorithm \ref{alg:sarma}, the updates for $\bm{U}_1, \bm{U}_2$ and $\cm{S}$ are simple linear least squares problems. To see this, we can write 
\begin{align*}
	\cm{A}_{(1)}\bm{\widetilde{x}}_t &=\left [ \{\bm{z}_t^\prime(\bm{\omega}) (\bm{I}_d\otimes \bm{U}_2) \cm{S}_{(1)}^\prime\}\otimes\bm{I}_N \right ] \vect(\bm{U}_1)
	=\bm{U}_1\cm{S}_{(1)} \{\bm{Z}_t^\prime(\bm{\omega}) \otimes \bm{I}_{R_2}\}\vect(\bm{U}_2^\prime)\\
	&=\left [\{\bm{z}_t^\prime(\bm{\omega})(\bm{I}_d\otimes \bm{U}_2)\}\otimes\bm{U}_1\right ] \vect(\cm{S}_{(1)}),
\end{align*}
where $\bm{z}_t(\bm{\omega})=(\bm{z}_{t,1}^\prime(\bm{\omega}), \dots, \bm{z}_{t,d}^\prime(\bm{\omega}))^\prime=\{\bm{L}^\prime(\bm{\omega})\otimes \bm{I}_N\}\bm{\widetilde{x}}_t$, with $\bm{z}_{t,k}(\bm{\omega})=\sum_{j=1}^{t-1} \ell_{j,k}(\bm{\omega}) \bm{y}_{t-j}$, and  $\bm{Z}_t(\bm{\omega})=(\bm{z}_{t,1}(\bm{\omega}), \dots, \bm{z}_{t,d}(\bm{\omega}))\in\mathbb{R}^{N\times d}$. Alternatively, when $N$ is large and the computation of closed-form solutions is time-consuming, the gradient descent method can be used to further speed up the algorithm. 
In addition,  Algorithm \ref{alg:sarma} can be applied to  the basic SARMA model without any low-Tucker-rank constraint on $\cm{G}$; see the supplementary file for a simulation study, which demonstrates its computational advantage over the VARMA model. In this case, lines 7, 8 and 10 will be dropped, and line 9 will be the update of $\cm{G}^{(i+1)}$, where both $\bm{U}_1$ and $\bm{U}_2$ are set to the $N\times N$ identity matrix.

\begin{remark}
	We initialize Algorithm \ref{alg:sarma} in practice as follows. First, we apply the data-driven procedure in Section \ref{sec:selection} to select  the ranks and model orders. Then, given the selected  $(\pazocal{R}_1,\pazocal{R}_2, p,r,s)$, we initialize the parameters  $\bm{\omega}$, $\bm{U}_1$, $\bm{U}_2$, $\cm{S}$, and $\cm{G}$ by the method described in Section \ref{sec:init}, which exhibits reliable numerical performance in our simulations. 
\end{remark}

\begin{algorithm}[H]
	\caption{Alternating minimization algorithm}
	\label{alg:sarma}
	\textbf{Input:} ranks $(\pazocal{R}_1,\pazocal{R}_2)$, model orders $(p,r,s)$,  initialization $\bm{\omega}^{(0)}, \bm{U}_1^{(0)},  \bm{U}_2^{(0)}, \cm{S}^{(0)}$ and $\cm{G}^{(0)}$.\\
	\textbf{repeat} $i=0,1,2,\dots$\\\vspace{2mm}
	\hspace*{5mm}\textbf{for} $k=1,\dots,r$:\\\vspace{2mm}
	\hspace*{8mm}$\displaystyle \lambda_k^{(i+1)} \leftarrow \underset{\lambda \in (-1,1)}{\argmin} \; \widetilde{\mathbb{L}}_T(\lambda_1^{(i+1)},\dots,\lambda_{k-1}^{(i+1)},\lambda,\lambda_{k+1}^{(i)},\dots, \theta_s^{(i)}, \cm{G}^{(i)})	$\\\vspace{2mm}
	\hspace*{5mm}\textbf{for} $k=1,\dots,s$:\\\vspace{2mm}
	\hspace*{8mm}	$\displaystyle(\gamma_k^{(i+1)}, \theta_k^{(i+1)}) \leftarrow \underset{\gamma\in(0,1), \theta\in(0,\pi)}{\argmin} \widetilde{\mathbb{L}}_T(\lambda_1^{(i+1)},\dots, \theta_{k-1}^{(i+1)}, \gamma,\theta, \gamma_{k+1}^{(i)}, \dots, \theta_{s}^{(i)},\cm{G}^{(i)})$\\\vspace{2mm}
	\hspace*{5mm}$\bm{U}_1^{(i+1)}\leftarrow\underset{\bm{U}_1}{\argmin}\sum_{t=1}^{T}\|\bm{y}_t- [ \{\bm{z}_t^\prime(\bm{\omega}^{(i+1)}) (\bm{I}_d\otimes \bm{U}_2^{(i)}) \cm{S}_{(1)}^{(i)\prime}\}\otimes\bm{I}_N  ] \vect(\bm{U}_1)\|_2^2$\\\vspace{2mm}
	\hspace*{5mm}$\bm{U}_2^{(i+1)} \leftarrow \underset{\bm{U}_2}{\argmin}\sum_{t=1}^{T}\|\bm{y}_t - \bm{U}_1^{(i+1)}\cm{S}_{(1)}^{(i)} \{\bm{Z}_t^\prime(\bm{\omega}^{(i+1)}) \otimes \bm{I}_{R_2}\}\vect(\bm{U}_2^\prime)\|_2^2$\\\vspace{2mm}
	\hspace*{5mm}$\cm{S}^{(i+1)}\leftarrow \underset{\cmt{S}}{\argmin}\sum_{t=1}^{T}\|\bm{y}_t-[\{\bm{z}_t^\prime(\bm{\omega}^{(i+1)})(\bm{I}_d\otimes \bm{U}_2^{(i+1)})\}\otimes\bm{U}_1^{(i+1)} ] \vect(\cm{S}_{(1)})\|_2^2$\\
	\hspace*{5mm}$\cm{G}^{(i+1)} = \cm{S}^{(i+1)}\times_1\bm{U}_1^{(i+1)}\times_2\bm{U}_2^{(i+1)}$.\\
	\textbf{until convergence}
\end{algorithm}

\begin{algorithm}[H]\small
	\caption{ADMM algorithm for the SLTR estimator}
	\label{alg:sparse-sarma}
	\textbf{Input:} ranks $(\pazocal{R}_1,\pazocal{R}_2)$, model orders $(p,r,s)$,  initialization $\bm{\omega}^{(0)}, \bm{U}_1^{(0)},  \bm{U}_2^{(0)}, \cm{S}^{(0)}$ and $\cm{G}^{(0)}$, hyperparameters $(\lambda, \varrho_1, \varrho_2)$.\\
	\textbf{repeat} $i=0,1,2,\dots$\\\vspace{2mm}
	\hspace*{5mm}\textbf{for} $k=1,\dots,r$:\\\vspace{2mm}
	\hspace*{8mm}$\displaystyle \lambda_k^{(i+1)} \leftarrow \underset{\lambda \in (-1,1)}{\argmin} \; \widetilde{\mathbb{L}}_T(\lambda_1^{(i+1)},\dots,\lambda_{k-1}^{(i+1)},\lambda,\lambda_{k+1}^{(i)},\dots,\lambda_{r}^{(i)}, \bm{\eta}^{(i)},\cm{G}^{(i)})	$\\\vspace{2mm}
	\hspace*{5mm}\textbf{for} $k=1,\dots,s$:\\\vspace{2mm}
	\hspace*{8mm}	$\displaystyle\bm{\eta}_k^{(i+1)} \leftarrow \underset{\bm{\eta}\in [0,1)\times(0,\pi)}{\argmin} \widetilde{\mathbb{L}}_T(\bm{\lambda}^{(i+1)}, \bm{\eta}_1^{(i+1)},\dots, \bm{\eta}_{k-1}^{(i+1)}, \bm{\eta}, \bm{\eta}_{k+1}^{(i)}, \dots, \bm{\eta}_{s}^{(i)},\cm{G}^{(i)})$\\\vspace{2mm}
	\hspace*{5mm}$\bm{U}_1^{(i+1)}\leftarrow\underset{\bm{U}_1^\prime\bm{U}_1 = \bm{I}_{\pazocal{R}_1}}{\argmin}\sum_{t=1}^{T}\|\bm{y}_t- [ \{\bm{z}_t^\prime(\bm{\omega}^{(i+1)}) (\bm{I}_d\otimes \bm{U}_2^{(i)}) \cm{S}_{(1)}^{(i)\prime}\}\otimes\bm{I}_N  ] \vect(\bm{U}_1)\|_2^2+\lambda\|\bm{U}_1\|_1$\\\vspace{2mm}
	\hspace*{5mm}$\bm{U}_2^{(i+1)} \leftarrow \underset{\bm{U}_2^\prime\bm{U}_2 = \bm{I}_{\pazocal{R}_2}}{\argmin}\sum_{t=1}^{T}\|\bm{y}_t - \bm{U}_1^{(i+1)}\cm{S}_{(1)}^{(i)} \{\bm{Z}_t^\prime(\bm{\omega}^{(i+1)}) \otimes \bm{I}_{R_2}\}\vect(\bm{U}_2^\prime)\|_2^2+\lambda\|\bm{U}_2\|_1$\\\vspace{2mm}
	\hspace*{5mm}$\cm{S}^{(i+1)}\leftarrow \underset{\cmt{S}}{\argmin}\sum_{t=1}^{T}\|\bm{y}_t-[\{\bm{z}_t^\prime(\bm{\omega}^{(i+1)})(\bm{I}_d\otimes \bm{U}_2^{(i+1)})\}\otimes\bm{U}_1^{(i+1)} ] \vect(\cm{S}_{(1)})\|_2^2$\\
	\hspace*{45mm}$+\sum_{j=1}^{2}\varrho_j\|\cm{S}_{(j)} - \bm{D}_j^{(i)}\bm{V}_j^{(i)\prime}+(\cm{C}_j^{(i)})_{(j)}\|_{\Fr}^2$\\
	\hspace*{5mm}\textbf{for }$j\in\{1,2\}$ \textbf{do}\\
	\hspace*{15mm}$\bm{D}_j^{(i+1)} \leftarrow \argmin_{\bm{D}_i = \diag(\bm{d}_i)}\| \cm{S}_{(j)}^{(i+1)} - \bm{D}_j\bm{V}_j^{(i)\prime}+(\cm{C}_j^{(i)})_{(j)}\ \|_{\Fr}^2$\\
	\hspace*{15mm}$\bm{V}_j^{(i+1)} \leftarrow \argmin_{\bm{V}_j^\prime \bm{V}_j = \bm{I}_{\pazocal{R}_j}}\| \cm{S}_{(j)}^{(i+1)} - \bm{D}_j^{(i+1)}\bm{V}_j^{\prime}+(\cm{C}_j^{(i)})_{(j)}\ \|_{\Fr}^2$\\
	\hspace*{15mm}$(\cm{C}_j^{(i+1)})_{(j)} \leftarrow (\cm{C}_j^{(i)})_{(j)} + \cm{S}^{(i+1)}_{(j)} - \bm{D}_j^{(i+1)}\bm{V}_j^{(i+1)\prime}$\\\vspace*{2mm}
	\hspace*{5mm}$\cm{G}^{(i+1)} = \cm{S}^{(i+1)}\times_1\bm{U}_1^{(i+1)}\times_2\bm{U}_2^{(i+1)}$.\\
	\textbf{until convergence}
\end{algorithm}

\begin{algorithm}[H]
	\caption{ADMM subroutine for sparse and orthogonal regression}
	\label{alg:sub}
	\textbf{Initialize:} $\bm{B}^{(0)} = \bm{W}^{(0)}, \bm{M}^{(0)} = \bm{0}$\\
	\textbf{repeat} $k=0,1,2,\dots$\\\vspace{2mm}
	\hspace*{5mm} $\bm{B}^{(k+1)}\leftarrow \arg\min_{\bm{B}^\prime \bm{B} = \bm{I}} \{ \|\bm{y} - \bm{X}\vect(\bm{B})\|_2^2 + \kappa\|\bm{B} - \bm{W}^{(k)} +\bm{M}^{(k)} \|_{\Fr}^2\}$\\
	\hspace*{5mm} $\bm{W}^{(k+1)}\leftarrow \arg\min_{\bm{W}} \{ \kappa\|\bm{B}^{(k+1)} - \bm{W} +\bm{M}^{(k)} \|_{\Fr}^2 + \lambda\|\bm{W}\|_1\}$\\
	\hspace*{5mm} $\bm{M}^{(k+1)} \leftarrow \bm{M}^{(k)} + \bm{B}^{(k+1)} - \bm{W}^{(k+1)}$\\
	\textbf{until convergence}
\end{algorithm}

\subsection{Algorithm for the SLTR estimator}\label{sec:ADMM}

For the SLTR estimator, we adopt an alternating direction methods of multipliers (ADMM) algorithm \citep{Boyd2011}, where  lines 7--9 in Algorithm \ref{alg:sarma} are revised to incorporate the $\ell_1$-penalties and orthogonality constraints. A similar approach is employed in \cite{Wang2021High}.

Developing an efficient algorithm for the SLTR estimator involves two major challenges. The first challenge arises from the row-orthogonal constraint imposed on the mode-1 and mode-2 unfolding of the core tensor $\cm{S}$, i.e. $\cm{S}_{(1)}$ and $\cm{S}_{(2)}$ in Assumption \ref{assum:para_add}. This constraint cannot be handled in a straightforward manner. The second challenge is related to the joint imposition of $l_1$-regularization and orthogonality constraints on $\bm{U}_i$'s, as specified by the same assumption. The $l_1$-regularization introduces non-smoothness to the algorithm, while the orthogonality constraints increase its nonconvexity.
To address these challenges, we employ the ADMM algorithm which updates the variables $\bm{U}_i$'s and $\cm{S}$ alternately. For a detailed step-by-step procedure, refer to Algorithm \ref{alg:sparse-sarma}.

Firstly, to address the row-orthogonal constraint of $\cm{S}_{(j)}$ (where $j=1$ or $2$), we decompose it using the equation $\cm{S}_{(j)} = \bm{D}_j\bm{V}_j^\prime$. Here, $\bm{D}_j\in\mathbb{R}^{\pazocal{R}_j\times \pazocal{R}_j}$ represents a diagonal matrix, while $\bm{V}_1\in\mathbb{R}^{\pazocal{R}_2\pazocal{R}_3 \times \pazocal{R}_1}$ and $\bm{V}_2\in\mathbb{R}^{\pazocal{R}_1\pazocal{R}_3 \times \pazocal{R}_2}$ are orthogonormal matrices. These matrices satisfy the condition $\bm{V}_j^\prime \bm{V}_j = \bm{I}_{\pazocal{R}_j}$ for $j=1, 2$.
By introducing these decompositions, we can then express the augmented Lagrangian corresponding to the objective functions given in \eqref{eq:sparse_lse}  as follows:

\begin{align*}
	\pazocal{L}_{\varrho}(\cm{S},\{\bm{U}_i\},\bm{\omega}, \{\bm{D}_j\}, \{\bm{V}_j\}; \{\cm{C}_j\}) &= \mathbb{\widetilde{L}}_T(\bm{\omega}, \cm{S}, \bm{U}_1, \bm{U}_2) + \lambda\sum_{i=1}^{2}\|\bm{U}_i\|_1\\
	&\hspace{5mm}+2\sum_{j=1}^{2}\varrho_j\langle(\cm{C}_j)_{(j)}, \cm{S}_{(j)} - \bm{D}_j\bm{V}_j^\prime\rangle + \sum_{j=1}^{2}\varrho_j\|\cm{S}_{(j)} - \bm{D}_j\bm{V}_j^\prime\|_{\Fr}^2
\end{align*}
where $\cm{C}_j\in\mathbb{R}^{\pazocal{R}_1 \times \pazocal{R}_2 \times \pazocal{R}_3}$ are the tensor-valued dual variables, and $\bm{\varrho} = (\varrho_1, \varrho_2)^\prime$ is the set of regularization parameters, which in practice can be selected together with $\lambda$ by a fine grid search with information criterion such as the BIC or its high-dimensional extensions, where the total number of nonzero parameters could be used as proxies for the degree of freedom. This brings us to Algorithm \ref{alg:sparse-sarma}. It is important to note that the row-orthogonal constraint originally imposed on $\cm{S}_{(1)}$ and $\cm{S}_{(2)}$ has been effectively transferred to the matrices $\bm{V}_j$'s in line 13. As a result, no explicit constraint is required for updating $\cm{S}$ in lines 9-10 of Algorithm \ref{alg:sparse-sarma}.

Secondly, we discuss the update process for $\bm{U}_i$. In \eqref{eq:ls}, $\mathbb{\widetilde{L}}_T(\bm{\omega}, \cm{S}, \bm{U}_1, \bm{U}_2)$ represents a least squares loss function with respect to each $\bm{U}_i$. Therefore, in lines 7-8 of Algorithm \ref{alg:sparse-sarma}, the update steps for $\bm{U}_i$ involve solving $\ell_1$-regularized least squares problems under an orthogonality constraint. This can be expressed in a general form as follows:
\begin{equation} \label{eq:algo-ortho}
	\min_{\bm{B}} \{ \|\bm{y} - \bm{X}\vect(\bm{B})\|_2^2 + \lambda\|\bm{B}\|_1\}, \quad \text{s.t. } \bm{B}^\prime \bm{B} = \bm{I}.
\end{equation}
Since handling the $\ell_1$-regularization and the orthogonality constraint for $\bm{B}$ together is challenging, we employ an ADMM subroutine to separate them into two steps. To achieve this, we introduce a dummy variable $\bm{W}$ as a surrogate for $\bm{B}$ and rewrite the problem \eqref{eq:algo-ortho} equivalently as:
\[
\min_{\bm{B}, \bm{W}} \{ \|\bm{y} - \bm{X}\vect(\bm{B})\|_2^2 + \lambda\|\bm{W}\|_1\}, \quad \text{s.t. } \bm{B}^\prime \bm{B} = \bm{I}\text{ and }\bm{B} = \bm{W}.
\]
Then, the corresponding Lagrangian formulation is:
\begin{equation}\label{eq:admm-ortho}
	\min_{\bm{B}, \bm{W}} \{ \|\bm{y} - \bm{X}\vect(\bm{B})\|_2^2 + \lambda\|\bm{W}\|_1 + 2\kappa\langle  \bm{M},\bm{B} - \bm{W} \rangle + \kappa\|\bm{B} - \bm{W}\|_{\Fr}^2\},
\end{equation}
where $\bm{M}$ represents the dual variable and $\kappa$ is the regularization parameter. Algorithm 3 in \cite{WZL21} presents the ADMM subroutine for solving \eqref{eq:admm-ortho}, which provides solutions for the $\bm{U}_i$-update subproblems in Algorithm \ref{alg:sparse-sarma}. We also include it as Algorithm \ref{alg:sub} here for sake of completeness.

Note that both the $\bm{B}$-update step in Algorithm \ref{alg:sub} and the $\bm{V}_i$-update step in line 13 of Algorithm \ref{alg:sparse-sarma} involve solving least squares problems subject to an orthogonality constraint. These problems can be efficiently solved using the splitting orthogonality constraint (SOC) method \citep{lai2014splitting}. On the other hand, the $\bm{W}$-update step in Algorithm \ref{alg:sub} corresponds to an $\ell_1$-regularized minimization, which can be effectively addressed through explicit soft-thresholding. As for the $\cm{S}$- and $\bm{D}_i$-update steps in lines 9 and 12 of Algorithm \ref{alg:sparse-sarma}, they simply entail solving straightforward least squares problems.

\subsection{Initialization for the algorithms}\label{sec:init}

The proposed algorithms require suitable initial values for the parameters $\bm{\omega}$, $\bm{U}_1$, $\bm{U}_2$, $\cm{S}$, and $\cm{G}$. In this section, we provide an easy-to-implement method for initializing these values.

Given the initial value  $\cm{A}^{(0)}=\cm{\widehat{A}}^{\text{init}}$ and pre-selected $(\pazocal{R}_1,\pazocal{R}_2, p,r,s)$, we are ready to  initialize $\bm{\omega}, \bm{U}_1, \bm{U}_2, \cm{S}$ and $\cm{G}$ for the proposed algorithms as follows:
\begin{itemize}
	\item Conduct the  HOSVD of $\cm{A}^{(0)}$ for the first two modes, $\cm{A}^{(0)} = \cm{H}^{(0)} \times_1 \bm{U}_1^{(0)} \times_2 \bm{U}_2^{(0)}$, to obtain $\cm{H}^{(0)}\in\mathbb{R}^{\pazocal{R}_1 \times \pazocal{R}_2\times \infty}$ and $\bm{U}_i^{(0)} \in\mathbb{R}^{N\times \pazocal{R}_i}$ for $i=1,2$.
	\item Next we aim to obtain $\cm{S}^{(0)}$ and $\bm{\omega}^{(0)}$ such that $\cm{S}^{(0)} \times_3 \bm{L}(\bm{\omega}^{(0)}) \approx \cm{H}^{(0)}$:
	\begin{itemize}
		\item[(i)] To determine $\bm{\omega}^{(0)}$, note that  $\bm{\omega}$ lies in the bounded parameter space $\bm{\Omega}$ defined in \eqref{eq:Omega}. Thus, we consider a grid of initial values for each element of $\bm{\omega}$ within the parameter space; e.g., 
		$\lambda_k \in \{-0.75,-0.5,-0.25,0.25,0.5,0.75\}$, $\gamma_h\in\{0.25,0.5,0.75\}$, and $\theta_h\in\{\pi/4,4\pi/3\}$, for any $1\leq k\leq r$ and $1\leq h\leq s$. To ensure identifiability, we only consider the combinations with distinct $\lambda_k$'s and $(\gamma_h,\theta_h)$'s. 
		\item[(ii)] For each choice of $\bm{\omega}^{(0)}$, we get the corresponding $\cm{S}^{(0)}=\cm{H}^{(0)}\times_3 \bm{L}^{\dagger}(\bm{\omega}^{(0)})$, where
		$\bm{L}^{\dagger}(\bm{\omega}^{(0)})$ is the left pseudo-inverse of $\bm{L}(\bm{\omega}^{(0)})$.  
	\end{itemize}
	\item Then, let $\cm{G}^{(0)} = \cm{S}^{(0)} \times_1 \bm{U}_1^{(0)} \times_2 \bm{U}_2^{(0)}$. 
	\item Finally, among all choices of initial values, we select the one leading to the smallest value for the loss function.
\end{itemize}

\section{Simulation studies}\label{sec:sim}

In this section, we present simulation experiments to examine finite-sample performance of the proposed  methods for the SARMA model with non-sparse or sparse  factor matrices. 

We consider the following two VARMA models as the data generating processes (DGPs),
\begin{itemize}
	\item DGP1:  the VMA(1) model $\bm{y}_t = \bm{\varepsilon}_t - \bm{\Theta}\bm{\varepsilon}_{t-1}$, and
	\item DGP2:  the VARMA($1,1$) model $\bm{y}_t = \bm{\Phi}\bm{y}_{t-1}+\bm{\varepsilon}_t - \bm{\Theta}\bm{\varepsilon}_{t-1}$,
\end{itemize}
which correspond to $p=0$ and 1, respectively. For both DGPs, $\{\bm{\varepsilon}_t\}$ are $i.i.d.$ $N(\bm{0}, \bm{I}_N)$, and we set  $\bm{\Theta}=\bm{B}\bm{J}\bm{B}^{-1}$, where $\bm{J} = \diag\{\lambda_1, \dots, \lambda_r, \bm{C}(\gamma_1,\theta_1), \dots, \bm{C}(\gamma_s,\theta_s), \bm{0}\}$ is the real Jordan normal form, with each $\bm{C}(\gamma, \theta)$ being the $2\times 2$ block defined as
\[
\bm{C}(\gamma, \theta)=
\gamma\cdot \left(\begin{matrix}
	\cos \theta& \sin \theta\\
	- \sin \theta& \cos \theta
\end{matrix}\right),
\]
and $\bm{B}$ is generated by a method to be specified below. 
For DGP2, we set
$\bm{\Phi}=\bm{B}\bm{K}\bm{B}^{-1}$,  where $\bm{K} = \diag\{\delta,0,\dots, 0\}$, with the entry $\delta\neq0$. It is noteworthy that both DGPs can be written in the form of the  SARMA model with orders ($p,r,s$) and Tucker ranks $\pazocal{R}_1=\pazocal{R}_2=r+2s$. Moreover, to produce non-sparse and sparse factor matrices, we generate $\bm{B}$ as follows:
\begin{itemize}
	\item Non-sparse case: $\bm{B}\in\mathbb{R}^{N\times N}$ is a randomly generated orthogonal matrix.
	\item Sparse case: $\bm{B}$ is obtained by inserting $N-\pazocal{S}$ zero rows into the randomly generated orthogonal matrix $\bm{B}_{\pazocal{S}}\in\mathbb{R}^{\pazocal{S} \times \pazocal{S}}$ and then concatenating the resulting $N\times \pazocal{S}$ matrix on the right with a $N\times(N-\pazocal{S})$ zero matrix. As a result, $(s_1, s_2) = (\pazocal{S}, \pazocal{S})$.
\end{itemize}
The non-sparse and sparse cases are fitted by the rank-constrained and SLTR estimators, respectively.

\begin{figure}[!t]
	\centering
	\includegraphics[width=\columnwidth]{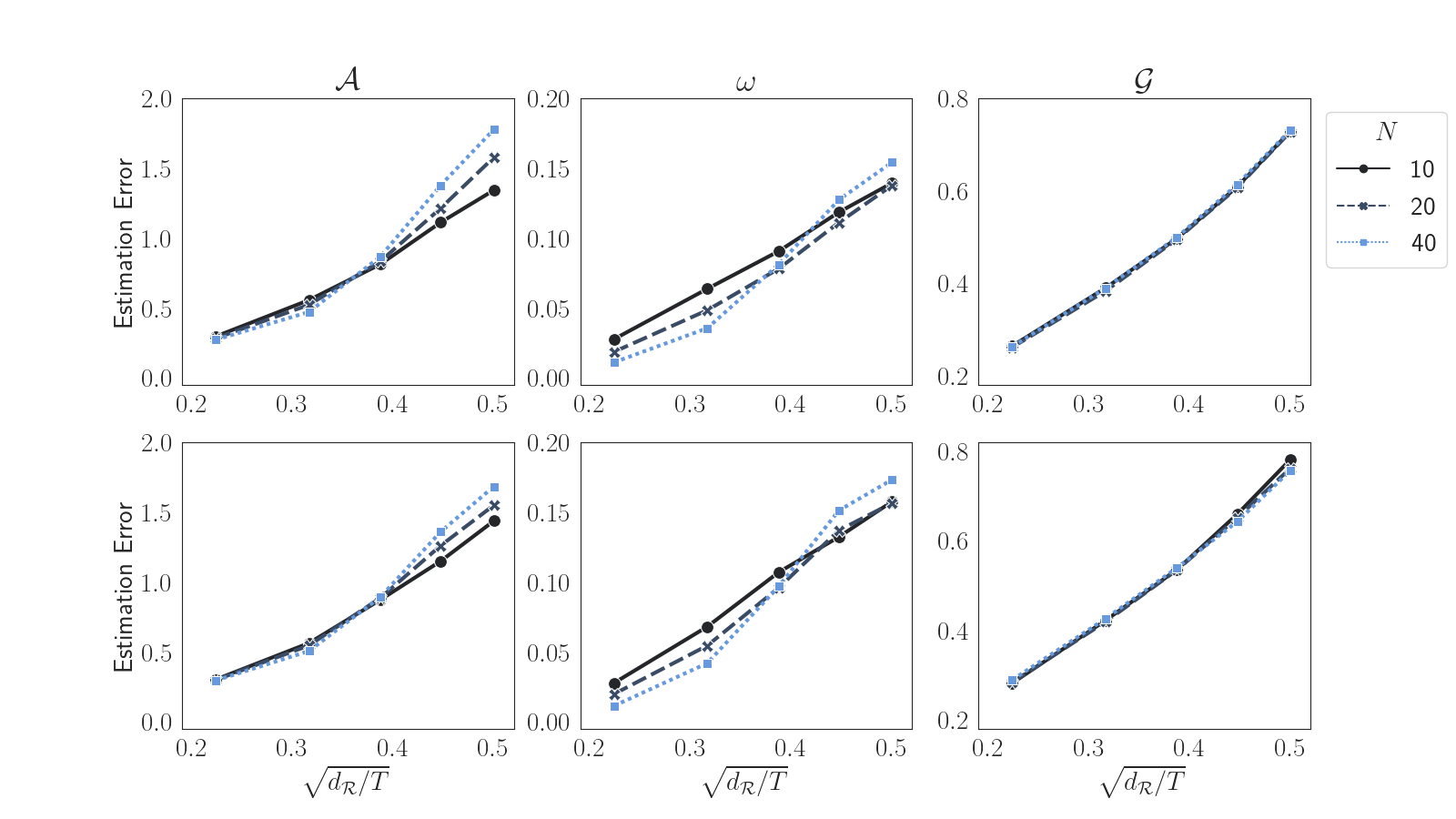}
	\caption{Plots of estimation errors $\|\cm{\widehat{A}} - \cm{A}^*\|_{\Fr}$ (left panel), $\|\bm{\widehat{\omega}} - \bm{\omega}^*\|_{2}$ (middle panel) and $\|\cm{\widehat{G}} - \cm{G}^*\|_{\Fr}$ (right panel) against $\sqrt{d_{\pazocal{R}}/{T}}$ for the rank-constrained estimator, where $(\pazocal{R}_1,\pazocal{R}_2, p, r, s) = (1,1,0,1,0)$ (top panel) or $(\pazocal{R}_1,\pazocal{R}_2, p, r, s) = (1,1,1,1,0)$ (bottom panel), and  $N=10$ (\protect\Lrateten), 20 (\protect\Lratetwenty) or 40 (\protect\Lratefourty).
	}
	\label{fig:rate_LTR}
\end{figure}

\begin{figure}[!t]
	\centering
	\includegraphics[width=\columnwidth]{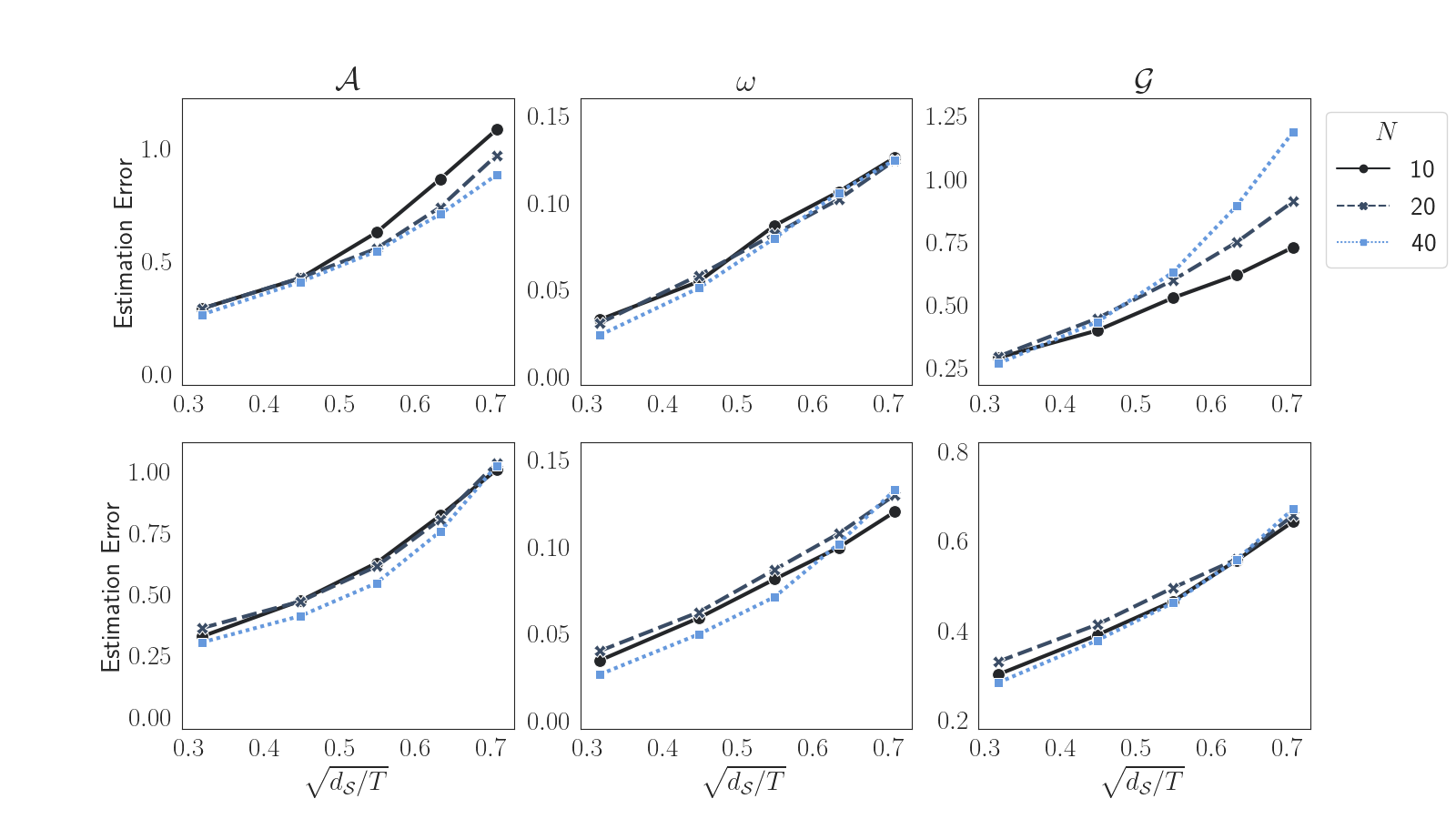}
	\caption{Plots of estimation errors $\|\cm{\widetilde{A}} - \cm{A}^*\|_{\Fr}$ (left panel), $\|\bm{\widetilde{\omega}} - \bm{\omega}^*\|_{2}$ (middle panel) and $\|\cm{\widetilde{G}} - \cm{G}^*\|_{\Fr}$ (right panel) against $\sqrt{d_{\pazocal{S}}/{T}}$ for the SLTR estimator, where $(\pazocal{R}_1,\pazocal{R}_2, p, r, s) = (1,1,0,1,0)$ (top panel) or $(\pazocal{R}_1,\pazocal{R}_2, p, r, s) = (1,1,1,1,0)$ (bottom panel), and  $N=10$ (\protect\Lrateten), 20 (\protect\Lratetwenty) or 40 (\protect\Lratefourty).
	}
	\label{fig:rate_SHORR}
\end{figure}




In the first experiment, we aim to verify the estimation error rates of the proposed estimators derived in Theorems \ref{thm:parametric} and \ref{thm:sparse}. 
We set $(r, s)=(1, 0)$ and $\lambda_1 = -0.7$ for both DGPs,   $\delta = 0.5$ for DGP2, and $N=10$, 20 or 40. The estimation is conducted via the algorithm in  Section \ref{section:algo} or the ADMM Algorithm \ref{alg:sparse-sarma} in the supplementary file given the true ranks and model orders. For the non-sparse case,  $T$ is chosen such that $d_{\pazocal{R}}/T \in \{0.05, 0.1,0.15,0.2, 0.25\}$. 
Figure~\ref{fig:rate_LTR} plots the estimation errors averaged over 500 replications against  $\sqrt{d_{\pazocal{R}}/T}$.  In all settings, it can be observed that there exists a roughly linear relationship between the estimation errors and the theoretical rate, which confirms our theoretical results.
For the sparse case, we set  $\pazocal{S} = 5$ for both DGPs  and choose $T$ such that $d_{\pazocal{S}}/T\in\{0.1,0.2,0.3,0.4,0.5\}$.  Figure~\ref{fig:rate_SHORR} plots the estimation errors averaged over 500 replications against  $\sqrt{d_{\pazocal{S}}/T}$. Similar to the non-sparse case, we observe an approximately linear relationship between the estimation errors and the theoretical rate across all settings, although the estimation error for $\cm{G}$ might be influenced by algorithmic errors when $N$ is large.

%

\begin{figure}[t]
	\centering
	\includegraphics[width=\columnwidth]{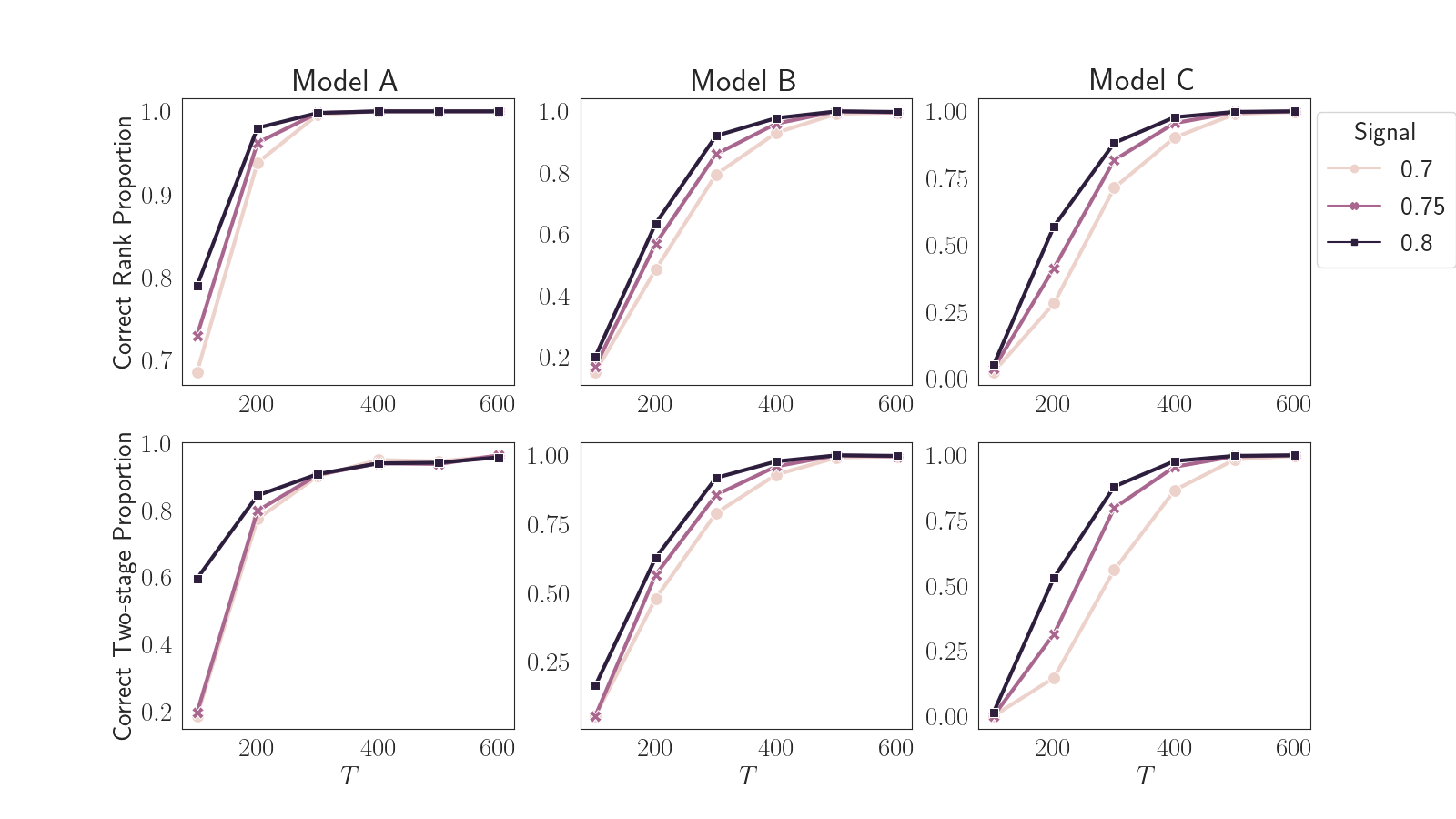}
	\caption{Proportions of correct rank selection (top panel) and two-stage selection (bottom panel) for models A (left panel), B (middle panel) and C (right panel) in the non-sparse case,  where the signal strength is 0.7 (\protect\Lorderone), 0.75 (\protect\Lordertwo) or 0.8 (\protect\Lorderthree).}
	\label{fig:twostep_LTR}
\end{figure}

\begin{figure}[t]
	\centering
	\includegraphics[width=\columnwidth]{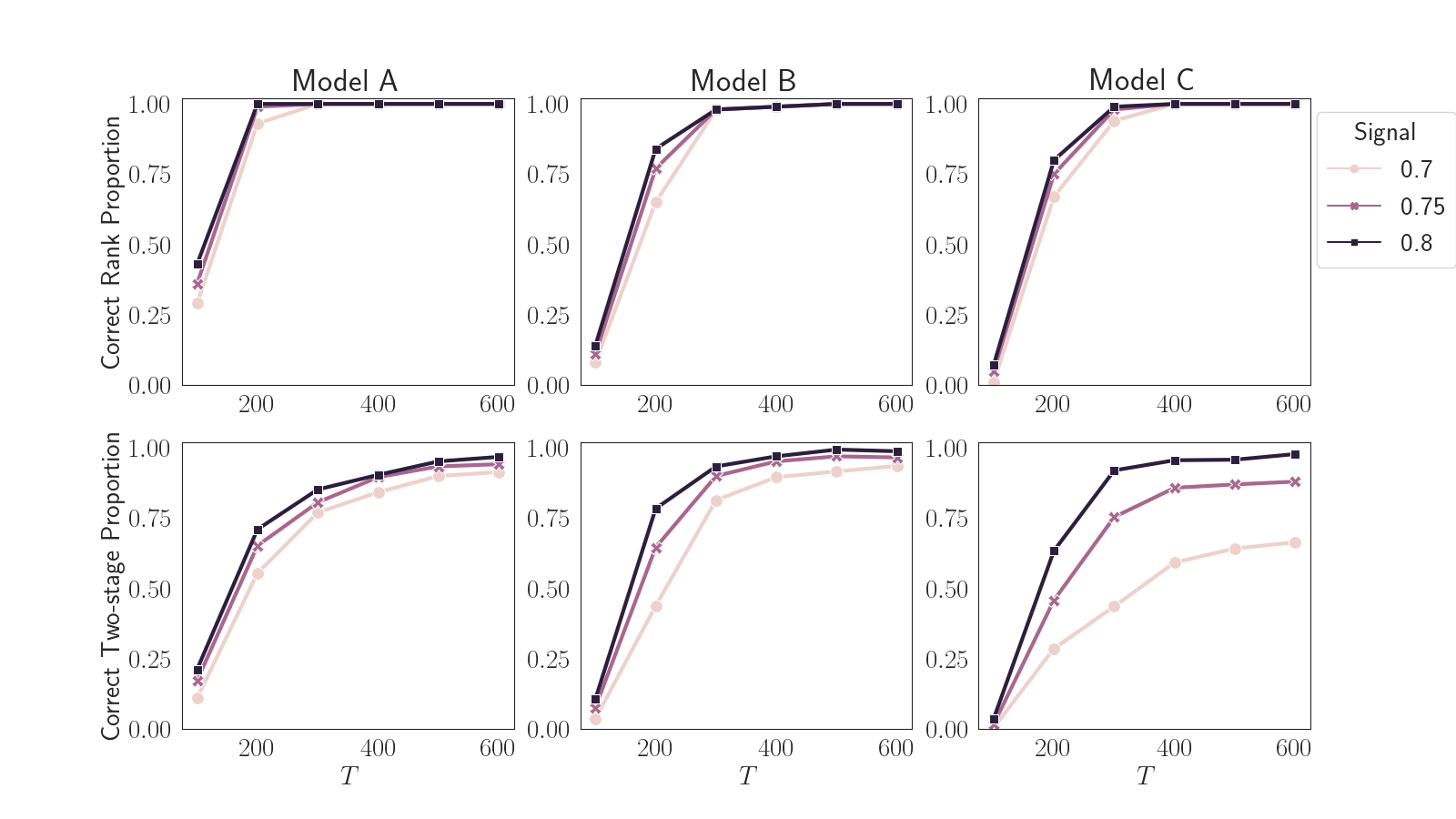}
	\caption{Proportions of correct rank selection (top panel) and two-stage selection (bottom panel) for models A (left panel), B (middle panel) and C (right panel) in the sparse case, where the signal strength is 0.7 (\protect\Lorderone), 0.75 (\protect\Lordertwo) or 0.8 (\protect\Lorderthree).}
	\label{fig:twostep_SHORR}
\end{figure}


The second experiment examines the performance of the rank selection method in Section \ref{sec:selection} and the model order selection criterion in Remark \ref{remark:order}. Almost identical settings apply to both the non-sparse case and the sparse case. Specifically, we consider three cases under DGP1: $(\pazocal{R}_1, \pazocal{R}_2, r, s)$=(1, 1, 1, 0) (model A),  (2, 2, 0, 1) (model B), and (3, 3, 1, 1) (model C). The results for DGP2 are similar and hence are omitted for brevity. For models B and C, we set $\theta_1=\pi/4$.  Note that   $\cm{A}_{(1)}$ and  $\cm{A}_{(2)}$ have the same singular values  under DGP1.  Moreover, when $r\leq 1$ and $s\leq1$, the magnitude of the nonzero singular values are directly determined by $|\lambda_1|$ and $\gamma_{1}$, which control the signal strength for the rank selection. We consider three levels of signal strength $\{ 0.7, 0.75, 0.8\}$, and set $-\lambda_{1}$ in model A, $\gamma_1$ in model B, and $-\lambda_{1}=\gamma_1$ in model C to these values. In addition, we consider $N=10$ and $T\in[100,600]$. The initial estimator $\cm{\widehat{A}}^{\text{init}}$ is obtained by  the nuclear norm or  lag group lasso regularized method in Remark \ref{remark:init} for the non-sparse or sparse case, respectively. For the model order selection, we minimize the BIC in Remark \ref{remark:order} with $c = 0.1$.

For the non-sparse case, the proportion of correct rank selection, $\{(\widehat{\pazocal{R}}_1, \widehat{\pazocal{R}}_2) = (\pazocal{R}_1^*, \pazocal{R}_2^*)\}$, and that of correct rank and model order selection,  $\{(\widehat{\pazocal{R}}_1, \widehat{\pazocal{R}}_2, \widehat{p}, \widehat{r}, \widehat{s}) = (\pazocal{R}_1^*, \pazocal{R}_2^*, p^*, r^*, s^*)\}$, based on the two-stage procedure are reported in Figure~\ref{fig:twostep_LTR}. 
It can be clearly seen that both proportions increase to one as $T$ and the signal strength increases. For all models, the proportion that the ranks and model orders are correctly selected simultaneously is fairly close to one when $T\geq 400$ across all settings. 

For the sparse case, we utilize the same data generation settings, with the only difference being that $\bm{B}$ is produced using a row sparsity of $\pazocal{S}=5$.  
The results are presented in Figure \ref{fig:twostep_SHORR}. Generally speaking, the patterns are similar to those in Figure \ref{fig:twostep_LTR}. However, it is more evident that when the signal strength $\leq 0.7$, model C requires a larger $T$   to achieve comparable proportions of simultaneously correct ranks and model orders selection,  since the model is more complex.
Nevertheless, although not shown in the figure, the accuracy of two-stage selections for model C will continue to increase as  $T$ grows.

\section{Descriptions of datasets} \label{sec5}

We provide more detailed descriptions of the variables and their transformations for the two datasets in Section \ref{sec:empirical} of the main paper through two tables.  Table  \ref{tab:macro-description} is for the quarterly macroeconomic dataset, and Table \ref{tab:rv-description} is for the daily realized volatilities dataset. 
%

\begin{table}[t]
	\caption{Forty six selected S\&P 500 stocks. CODE: stock code in the New York Stock Exchange. NAME: name of  company. G: group code, where 1 = communication service, 2 = information technology, 3 = consumer, 4 = financials, 5 = healthcare, 6 = materials and industrials, and 7 = energy and utilities.}
	\label{tab:rv-description}
	\centering
	\resizebox{.8\columnwidth}{!}{
		\begin{tabular}{@{}llllll@{}}
			\hline
			CODE & NAME                                 & G & CODE & NAME                              & G \\
			\hline
			T    & AT\&T Inc.                           & 1 & JPM  & JPMorgan Chase \& Co.             & 4 \\
			NWSA & News   Corp                          & 1 & WFC  & Wells Fargo \& Company            & 4 \\
			FTR  & Frontier   Communications Parent Inc & 1 & MS   & Morgan Stanley                    & 4 \\
			VZ   & Verizon Communications Inc.          & 1 & AIG  & American International Group Inc. & 4 \\
			IPG  & Interpublic   Group of Companies Inc & 1 & MET  & MetLife Inc.                      & 4 \\
			MSFT & Microsoft Corporation                & 2 & RF   & Regions   Financial Corp          & 4 \\
			HPQ  & HP   Inc                             & 2 & PGR  & Progressive Corporation           & 4 \\
			INTC & Intel Corporation                    & 2 & SCHW & Charles Schwab Corporation        & 4 \\
			EMC  & EMC   Instytut Medyczny SA           & 2 & FITB & Fifth   Third Bancorp             & 4 \\
			ORCL & Oracle Corporation                   & 2 & PFE  & Pfizer Inc.                       & 5 \\
			MU   & Micron Technology Inc.               & 2 & ABT  & Abbott Laboratories               & 5 \\
			AMD  & Advanced Micro Devices Inc.          & 2 & MRK  & Merck \& Co. Inc.                 & 5 \\
			AAPL & Apple Inc.                           & 2 & RAD  & Rite   Aid Corporation            & 5 \\
			YHOO & Yahoo! Inc.                          & 2 & JNJ  & Johnson \& Johnson                & 5 \\
			QCOM & Qualcomm Inc                         & 2 & AA   & Alcoa   Corp                      & 6 \\
			GLW  & Corning   Incorporated               & 2 & FCX  & Freeport-McMoRan Inc.             & 6 \\
			AMAT & Applied Materials Inc.               & 2 & X    & United   States Steel Corporation & 6 \\
			F    & Ford Motor Company                   & 3 & GE   & General Electric Company          & 6 \\
			LVS  & Las   Vegas Sands Corp.              & 3 & CSX  & CSX Corporation                   & 6 \\
			EBAY & eBay Inc.                            & 3 & ANR  & Alpha Natural Resources           & 7 \\
			KO   & Coca-Cola Company                    & 3 & XOM  & Exxon Mobil Corporation           & 7 \\
			BAC  & Bank of America Corp                 & 4 & CHK  & Chesapeake Energy                 & 7 \\
			C    & Citigroup Inc.                       & 4 & EXC  & Exelon Corporation                & 7\\
			\hline
		\end{tabular}
	}
\end{table}
\begin{landscape}
	\begin{table}[ht]
		\caption{Twenty quarterly macroeconomic variables. FRED MNEMONIC: mnemonic for data in FRED-QD. SW MNEMONIC: mnemonic in \cite{stock2012disentangling}. T: data transformation, where 1 = no transformation, 2 = first difference, and 3 = first difference of log series.  DESCRIPTION: brief definition of the data. G: Group code, where 1 = interest rate, 2 = money and credit, 3 = exchange rate, and 4 = stock market. }
		\label{tab:macro-description}
		\centering
		\resizebox{\columnwidth}{!}{
			\begin{tabular}{@{}llclc@{}}
				\hline
				FRED MNEMONIC  & SW MNEMONIC       & T & DESCRIPTION                                                                                                        & G \\
				\hline
				FEDFUNDS       & FedFunds          & 2     & Effective Federal Funds Rate (Percent)                                                                             & 1     \\
				TB3MS          & TB-3Mth           & 2     & 3-Month Treasury Bill: Secondary Market Rate (Percent)                                                             & 1     \\
				BAA10YM        & BAA\_GS10         & 1     & Moody's Seasoned Baa Corporate Bond Yield Relative to Yield on 10-Year   Treasury Constant Maturity (Percent)      & 1     \\
				TB6M3Mx        & tb6m\_tb3m        & 1     & 6-Month Treasury Bill Minus 3-Month Treasury Bill, secondary market   (Percent)                                    & 1     \\
				GS1TB3Mx       & GS1\_tb3m         & 1     & 1-Year Treasury Constant Maturity Minus 3-Month Treasury Bill, secondary   market (Percent)                        & 1     \\
				GS10TB3Mx      & GS10\_tb3m        & 1     & 10-Year Treasury Constant Maturity Minus 3-Month Treasury Bill, secondary   market (Percent)                       & 1     \\
				CPF3MTB3Mx     & CP\_Tbill Spread  & 1     & 3-Month Commercial Paper Minus 3-Month Treasury Bill, secondary market   (Percent)                                 & 1     \\
				BUSLOANSx      & Real C\&Lloand    & 3     & Real Commercial and Industrial Loans, All Commercial Banks (Billions of   2009 U.S. Dollars), deflated by Core PCE & 2     \\
				CONSUMERx      & Real ConsLoans    & 3     & Real Consumer Loans at All Commercial Banks (Billions of 2009 U.S.   Dollars), deflated by Core PCE                & 2     \\
				NONREVSLx      & Real NonRevCredit & 3     & Total Real Nonrevolving Credit Owned and Securitized, Outstanding   (Billions of Dollars), deflated by Core PCE    & 2     \\
				REALLNx        & Real LoansRealEst & 3     & Real Real Estate Loans, All Commercial Banks (Billions of 2009 U.S.   Dollars), deflated by Core PCE               & 2     \\
				EXSZUSx        & Ex rate:Switz     & 3     & Switzerland / U.S. Foreign Exchange Rate                                                                           & 3     \\
				EXJPUSx        & Ex rate:Japan     & 3     & Japan / U.S. Foreign Exchange Rate                                                                                 & 3     \\
				EXUSUKx        & Ex rate:UK        & 3     & U.S. / U.K. Foreign Exchange Rate                                                                                  & 3     \\
				EXCAUSx        & EX rate:Canada    & 3     & Canada / U.S. Foreign Exchange Rate                                                                                & 3     \\
				NIKKEI225      &                   & 3     & Nikkei Stock Average                                                                                               & 4     \\
				S\&P 500       &                   & 3     & S\&P's Common Stock Price Index: Composite                                                                         & 4     \\
				S\&P: indust   &                   & 3     & S\&P's Common Stock Price Index: Industrials                                                                       & 4     \\
				S\&P div yield &                   & 2     & S\&P's Composite Common Stock: Dividend Yield                                                                      & 4     \\
				S\&P PE ratio  &                   & 3     & S\&P's Composite Common Stock: Price-Earnings Ratio                                                                & 4    \\
				\hline
		\end{tabular}}
	\end{table}
\end{landscape}

\section{Proofs for Section 2 in the main paper} \label{sec:proof1}
\subsection{Proof of Proposition  \ref{prop:VARMAgen}}

\begin{proof}[Proof of Proposition \ref{prop:VARMAgen}]
	Consider the general VARMA$(p,q)$ model
	\[
	\bm{y}_t =\sum_{i=1}^{p} \bm{\Phi}_i\bm{y}_{t-i}+\bm{\varepsilon}_t - \sum_{j=1}^{q}\bm{\Theta}_j\bm{\varepsilon}_{t-j},\hspace{5mm} t\in\mathbb{Z}.
	\] 
	Note that it  can be written equivalently as
	\begin{equation} \label{aeq:VARMA}
		\bm{\varepsilon}_t = \bm{\Theta}_1\bm{\varepsilon}_{t-1}-\cdots-\bm{\Theta}_q\bm{\varepsilon}_{t-q}+\bm{\Phi}(B) \bm{y}_t,
	\end{equation} 
	where 
	$\bm{\Phi}(B) = \bm{I}-\sum_{i=1}^{p}\bm{\Phi}_i B^i=-\sum_{i=0}^{p}\bm{\Phi}_i B^i$, with $\bm{\Phi}_0=-\bm{I}$. Then we have
	\begin{align*}
		\underbrace{\left(\begin{matrix}
				\bm{\varepsilon}_t\\\bm{\varepsilon}_{t-1}\\\bm{\varepsilon}_{t-2}\\\vdots\\\bm{\varepsilon}_{t-q+1}\\
			\end{matrix} \right ) }_{\underline{\bm{\varepsilon}}_t}	
		= \underbrace{\left(\begin{matrix}
				\bm{\Theta}_1&\bm{\Theta}_2&\cdots&\bm{\Theta}_{q-1}&\bm{\Theta}_q\\
				\bm{I}&\bm{0}&\cdots&\bm{0}&\bm{0}\\
				\bm{0}&\bm{I}&\cdots&\bm{0}&\bm{0}\\
				\vdots&\vdots&\ddots&\vdots&\vdots\\
				\bm{0}&\bm{0}&\cdots&\bm{I}&\bm{0}
			\end{matrix} \right )}_{\underline{\bm{\Theta}}}  \underbrace{\left(\begin{matrix}
				\bm{\varepsilon}_{t-1}\\\bm{\varepsilon}_{t-2}\\\bm{\varepsilon}_{t-3}\\\vdots\\\bm{\varepsilon}_{t-q}\\
			\end{matrix} \right ) }_{\underline{\bm{\varepsilon}}_{t-1}} + \underbrace{\left(\begin{matrix}
				\bm{\Phi}(B) \bm{y}_t\\\bm{0}\\\bm{0}\\\vdots\\\bm{0}\\
			\end{matrix} \right ) }_{\underline{\bm{y}}_t},
	\end{align*}
	where $\underline{\bm{\Theta}}\in\mathbb{R}^{Nq\times Nq}$ is the MA companion matrix. By recursion, we have $\underline{\bm{\varepsilon}}_t=\sum_{j=0}^{\infty} \underline{\bm{\Theta}}^j \underline{\bm{y}}_{t-j}$. Let $\bm{P} = (\bm{I}_{N}, \bm{0}_{N\times N(q-1)})$. 
	Note that $\bm{P}\underline{\bm{\varepsilon}}_t=\bm{\varepsilon}_t$, and  $\underline{\bm{y}}_t=\bm{P}^\prime \bm{\Phi}(B)\bm{y}_t$. Thus,
	\begin{equation}\label{eq:VARMA_inf0}
		\bm{\varepsilon}_t=\sum_{j=0}^{\infty}\bm{P} \underline{\bm{\Theta}}^j \bm{P}^\prime \bm{\Phi}(B)\bm{y}_{t-j}
		=-\sum_{j=0}^{\infty} \bm{P} \underline{\bm{\Theta}}^j \bm{P}^\prime \sum_{i=0}^{p}\bm{\Phi}_i\bm{y}_{t-j-i} 
		=-\sum_{k=0}^{\infty} \left ( \sum_{i=0}^{p\wedge k}\bm{P} \underline{\bm{\Theta}}^{k-i} \bm{P}^\prime \bm{\Phi}_i \right )\bm{y}_{t-k}. 
	\end{equation} 
	Since $\bm{P} \bm{P}^\prime = \bm{I}_N$, it follows from \eqref{eq:VARMA_inf0} that the VAR($\infty$) representation of the  VARMA($p,q$)  model can be written as
	\begin{equation}\label{eq:VARMA_infa}
		\bm{y}_t =  \sum_{k=1}^{\infty} \underbrace{ \left ( \sum_{i=0}^{p\wedge k} \bm{P} \underline{\bm{\Theta}}^{k-i} \bm{P}^\prime \bm{\Phi}_i \right )}_{\bm{A}_k}\bm{y}_{t-k}+\bm{\varepsilon}_t.
	\end{equation}
	First, we simply set 
	\begin{align} \label{eq:init}
		\bm{G}_j=\sum_{i=0}^{j}\bm{P} \underline{\bm{\Theta}}^{j-i} \bm{P}^\prime \bm{\Phi}_i=\bm{A}_j,
		\hspace{5mm}\text{for } 1\leq j\leq p,
	\end{align}
	and then  we only need to focus on the reparameterization of $\bm{A}_k$ for $k>p$. By \eqref{eq:VARMA_infa}, for $j\geq 1$, we have
	\begin{equation} \label{eq:VARMA-Am}
		\bm{A}_{p+j} = \bm{P} \underline{\bm{\Theta}}^{j}\left (\sum_{i=0}^{p} \underline{\bm{\Theta}}^{p-i} \bm{P}^\prime \bm{\Phi}_i \right ).
	\end{equation}
	Next we  derive an alternative parameterization for  $\bm{A}_{p+j}$ with $j\geq 1$.
	
	Let $K=R+2S$, where $R=\sum_{k=1}^{r} n_k$ and  $S=\sum_{k=1}^{s}m_k$. Under the conditions of this proposition, the  real Jordan form \citep[][Chap. 3]{HJ12} of $\underline{\bm{\Theta}}$ can be written as
	\begin{align} \label{eq:real-Jordan-form}
		\underline{\bm{\Theta}} = \bm{B}\bm{J}\bm{B}^{-1} = \bm{B}\left(\begin{array}{ccccccc}
			\bm{J}_1&&&&&&\\
			&\ddots&&&&&\\
			&&\bm{J}_{r}&&&&\\
			&&&\bm{J}_{r+1}&&&\\
			&&&&\ddots&&\\
			&&&&&\bm{J}_{r+s}&\\
			&&&&&&\bm{0}_{(Nq-K)\times (Nq-K)}\\
		\end{array}\right)\bm{B}^{-1},
	\end{align}
	where $\bm{B}\in\mathbb{R}^{Nq \times Nq}$ is an invertible matrix, each $\bm{J}_k$ with $1\leq k\leq r$ is the $n_k\times n_k$ Jordan block corresponding to $\lambda_{k}$,
	\[
	\bm{J}_{k}=\left(\begin{matrix}
		\lambda_{k} & 1 & &\\
		& \lambda_{k}   & \ddots &\\
		&            & \ddots & 1\\
		&&& \lambda_{k}\\
	\end{matrix}\right), \hspace{5mm} 1\leq k\leq r,
	\]
	and each $\bm{J}_{r+k}$ for $1\leq k\leq s$ is the $2m_k\times 2m_k$ real Jordan block corresponding to the conjugate pair $(\lambda_{r+2k-1}, \lambda_{r+2k})$, 
				\[
				\bm{J}_{r+k}=\left(\begin{matrix}
					\bm{C}_k & \bm{I}_{2} & &\\
					& \bm{C}_k   & \ddots &\\
					&            & \ddots & \bm{I}_2\\
					&&& \bm{C}_k\\
				\end{matrix}\right)\hspace{5mm}\text{with}\quad
				\bm{C}_k=
				\gamma_k\cdot \left(\begin{matrix}
					\cos (\theta_k)& \sin (\theta_k)\\
					- \sin (\theta_k)& \cos (\theta_k)
				\end{matrix}\right), \hspace{5mm} 1\leq k\leq s.
				\]

				Denote $\bm{\widetilde{B}}=\bm{P} \bm{B}$
				and
				$\bm{\widetilde{B}}_-=\bm{B}^{-1}\left (\sum_{i=0}^{p} \underline{\bm{\Theta}}^{p-i} \bm{P}^\prime \bm{\Phi}_i \right )$. 
				Note that when $p=q=1$, $\bm{\widetilde{B}}=\bm{B}$ and $\bm{\widetilde{B}}_-=\bm{B}^{-1}(\bm{\Phi}_1-\bm{\Theta}_1)$.
				Then by \eqref{eq:VARMA-Am} and \eqref{eq:real-Jordan-form}, for $j\geq1$, we have 
				\begin{equation}\label{eq:Ajordan}
					\bm{A}_{p+j} = \bm{\widetilde{B}} \bm{J}^{j} \bm{\widetilde{B}}_-.
				\end{equation}
				According to the block form of $\bm{J}$ in \eqref{eq:real-Jordan-form}, we can partition the $N\times Nq$ matrix $\bm{\widetilde{B}}$ vertically and the $Nq\times N$ matrix $\bm{\widetilde{B}}_-$ horizontally as
				\[
				\bm{\widetilde{B}}=(\bm{\widetilde{B}}_1,\dots, \bm{\widetilde{B}}_{r+s}, \bm{\widetilde{B}}_{r+s+1})\quad\text{and}\quad
				\bm{\widetilde{B}}_-=(\bm{\widetilde{B}}_{-1},\dots, \bm{\widetilde{B}}_{-(r+s)}, \bm{\widetilde{B}}_{-(r+s+1)})^\prime,
				\]
				where $\bm{\widetilde{B}}_k$ and $\bm{\widetilde{B}}_{-k}$ are $N\times n_k$ matrices for $1\leq k\leq r$, $\bm{\widetilde{B}}_{r+k}$ and $\bm{\widetilde{B}}_{-(r+k)}$ are $N\times 2m_k$ matrices for $1\leq k\leq s$, and $\bm{\widetilde{B}}_{r+s+1}$ and 
				$\bm{\widetilde{B}}_{-(r+s+1)}$ are $N\times (Nq-K)$ matrices.
				Notice that for any  $j\geq1$,
				\[
				\bm{J}_k^{j} = \left(
				\begin{matrix}
					\lambda_k^{j}&\binom{j}{1}\lambda_k^{j-1}&\binom{j}{2}\lambda_k^{j-2}&\cdots&\binom{j}{n_k-1}\lambda_k^{j-n_k+1}\\	0&\lambda_k^{j}&\binom{j}{1}\lambda_k^{j-1}&\cdots&\binom{j}{n_k-2}\lambda_k^{j-n_k+2}\\
					\vdots&\vdots&\vdots&\ddots&\vdots\\
					0&0&0&\cdots&\lambda_k^{j}\\
				\end{matrix}
				\right), \quad 1\leq k\leq r,
				\]
				and 
				\begin{equation*}
					\bm{J}_{r+k}^{j} = \left(\begin{matrix}
						\bm{C}_k^{j}&\binom{{j}}{1}\bm{C}_k^{{j}-1}&\binom{{j}}{2}\bm{C}_k^{{j}-2}&\cdots&\binom{{j}}{m_k-1}\bm{C}_k^{{j}-m_k+1}\\
						\bm{0}&\bm{C}_k^{{j}}&\binom{{j}}{1}\bm{C}_k^{{j}-1}&\cdots&\binom{{j}}{m_k-2}\bm{C}_k^{{j}-m_k+2}\\
						\vdots&\vdots&\vdots&\ddots&\vdots\\
						\bm{0}&\bm{0}&\bm{0}&\cdots&\bm{C}_k^{{j}}\\
					\end{matrix}\right), \quad 1\leq k\leq s, 
				\end{equation*}
				with 
				\[
				\bm{C}_k^{j} = \gamma_k^{j} \cdot \left(\begin{matrix}
					\cos(j\theta_k)& \sin(j\theta_k)\\
					-\sin(j\theta_k)&\cos(j\theta_k)
				\end{matrix}\right).
				\]
				
				Let $\bm{\widetilde{b}}_k^{(i)}$ and $\bm{\widetilde{b}}_{-k}^{(i)}$ be the $i$-th columns of $\bm{\widetilde{B}}_k$ and $\bm{\widetilde{B}}_{-k}$, respectively. In addition, denote $\bm{\eta}_k=(\gamma_k,\theta_k)^\prime$ for $1\leq k \leq s$.
				Then by \eqref{eq:Ajordan}, for $j\geq 1$, we can show that 
				\begin{equation}\label{eq:linearcomb1}
					\bm{A}_{p+j} = \sum_{k=1}^{r+s} \bm{\widetilde{B}}_{k} \bm{J}_k^{j} \bm{\widetilde{B}}_{-k}^\prime
					=\sum_{k=1}^{r}\sum_{i=1}^{n_k}\ell_{i,j}^{I}(\lambda_k)\bm{G}_{k,i}^{I}
					+\sum_{k=1}^{s}\sum_{i=1}^{m_k}\left \{\ell_{i,j}^{II,1}(\bm{\eta}_k)\bm{G}_{k,i}^{II,1}+\ell_{i,j}^{II,2}(\bm{\eta}_k)\bm{G}_{k,i}^{II,2}\right \},
				\end{equation}
				where  $\ell_{i,j}^{I}(\cdot)$, $\ell_{i,j}^{II,1}(\cdot)$, and $\ell_{i,j}^{II,2}(\cdot)$ are real-valued functions defined as
				\begin{align}\label{eq:lij}
					\begin{split}
						\ell_{i,j}^{I}(\lambda) & =  \lambda^{j - i + 1} \binom{ j}{ i - 1} \mathbb{I}_{\{j\geq i-1\}},\\
						\ell_{i,j}^{II,1}(\bm{\eta}) &= \gamma^{j-i+1} \binom{ j }{i-1}  \cos \{ (j-i+1) \theta\}\mathbb{I}_{\{j\geq i-1\}},\\
						\ell_{i,j}^{II,2}(\bm{\eta}) &=
						\gamma^{j-i+1} \binom{ j }{i- 1 }  \sin \{( j-i+1) \theta\}\mathbb{I}_{\{j\geq i-1\}},
					\end{split}
				\end{align}
				for any  $\lambda\in (-1,0)\cap(0,1)$ and 
				$\bm{\eta}=(\gamma,\theta)^\prime\in(0,1)\times (-\pi/2, \pi/2)$, 
				and
				\begin{align}\label{eq:Gmat}
					\begin{split}
						\bm{G}_{k,i}^{I} &= \sum_{h=i}^{n_k} \bm{\widetilde{b}}_{k}^{(h-i+1)}\bm{\widetilde{b}}_{-k}^{(h)^\prime}, \quad 1\leq k\leq r,\; 1 \leq i \leq n_k,\\
						\bm{G}_{k,i}^{II,1} &= 	\sum_{h=i}^{ m_k} \left ( \bm{\widetilde{b}}_{r+k}^{(2h - 2i+1)} \bm{\widetilde{b}}_{-(r+k)}^{(2h-1)\prime} +  
						\bm{\widetilde{b}}_{r+k}^{(2h-2i+2)} \bm{\widetilde{b}}_{-(r+k)}^{(2h)\prime} \right ), \quad 1\leq k\leq s,\; 1 \leq i \leq m_k,\\
						\bm{G}_{k,i}^{II,2} &= 	\sum_{h=i}^{ m_k}\left ( \bm{\widetilde{b}}_{r+k}^{(2h - 2i+1)} \bm{\widetilde{b}}_{-(r+k)}^{(2h)\prime} -  
						\bm{\widetilde{b}}_{r+k}^{(2h-2i+2)} \bm{\widetilde{b}}_{-(r+k)}^{(2h-1)\prime} \right ), \quad 1\leq k\leq s,\; 1 \leq i \leq m_k.
					\end{split}
				\end{align}
				Note that for any fixed $k$ and $i$, $\bm{G}_{k,i}^{II,h}$ for $h=1,2$ have the same row and column spaces. Moreover,
				$\rank(\bm{G}_{j,l}^{I})\leq n_k$ and $\rank(\bm{G}_{k,i}^{II,h})\leq 2m_k$
				for all $1\leq j\leq r$, $1\leq k\leq s$, $1\leq l\leq n_k$, $1\leq i\leq m_k$, and $h=1,2$.
				Finally, combining \eqref{eq:init} and \eqref{eq:linearcomb1}, we accomplish the proof of this proposition.
			\end{proof}
			
			\subsection{Proof of Theorem \ref{thm:sol}}
			We first prove the existence. Using iteratively model \eqref{eq:model-scalar}, we get 
			\begin{align}\label{eq:sol}
				\bm{y}_t = \bm{\varepsilon}_t + \sum_{k=1}^{\infty}\sum_{j_1,\dots, j_k\geq1}\bm{A}_{j_1}\cdots \bm{A}_{j_k}\bm{\varepsilon}_{t-j_1-\dots-j_k},  \hspace{5mm} 
				\bm{A}_j=\sum_{k=1}^{d}\ell_{j,k}(\bm{\omega}) \bm{G}_k,
			\end{align}
			where the first equation can be written as $\bm{y}_t = \bm{\varepsilon}_t+ \sum_{j=1}^{\infty}\bm{\Psi}_j\bm{\varepsilon}_{t-j}$ with $\bm{\Psi}_j=\sum_{k=1}^{\infty}\sum_{j_1+\cdots+ j_k=j}\bm{A}_{j_1}\cdots \bm{A}_{j_k}$. Note that  $\bm{y}_t$ takes value in $[-\infty, \infty]^N$. 
			
			
			By the triangle inequality, we have
			\[
			\|\bm{y}_t\|_2 \leq \|\bm{\varepsilon}_t\|_2 + \sum_{k=1}^{\infty}\sum_{j_1,\dots, j_k\geq1}\|\bm{A}_{j_1}\|_{\op} \cdots \|\bm{A}_{j_k}\|_{\op} \| \bm{\varepsilon}_{t-j_1-\dots-j_k}\|_2.
			\]
			Denote $S = \sum_{j=1}^{\infty}\|\bm{A}_j\|_{\op}$. Since $\bm{A}_j=\bm{G}_j$ for $1\leq j\leq p$ and $\bm{A}_j=\sum_{k=p+1}^{d}\ell_{j,k}(\bm{\omega}) \bm{G}_k$ for $j\geq p+1$, we have 
			\begin{equation}\label{eq:S}
				S \leq \sum_{k=1}^{p}\|\bm{G}_j\|_{\op} + \sum_{j=1}^{\infty}\sum_{k=p+1}^{d} |\ell_{j+p,k}(\bm{\omega})| \|\bm{G}_k\|_{\op} \leq \sum_{k=1}^{p}\|\bm{G}_j\|_{\op}+ \sum_{j=1}^{\infty}\sum_{k=p+1}^{d} \rho^j \|\bm{G}_k\|_{\op}< 1.
			\end{equation}
			Since $\bm{\varepsilon}_t$ are $i.i.d.$ with $E(\|\bm{\varepsilon}_t\|_2)<\infty$, this leads to
			\begin{equation}\label{eq:finitey}
				\mathbb{E}(\| \bm{y}_t \|_2) \leq \mathbb{E}(\| \bm{\varepsilon}_t \|_2) (1 + \sum_{k=1}^{\infty} S^k) = \frac{  \mathbb{E}(\| \bm{\varepsilon}_t \|_2) }{ 1- S } < \infty.
			\end{equation}
			Thus, the VMA($\infty$) process  $\{\bm{y}_t\}$    is weakly stationary. 
			This proves the existence of a weakly stationary solution to model \eqref{eq:model-scalar}.
			
			To prove the uniqueness, suppose that $\{\bm{y}_t, t \in \mathbb{Z}\}$ is a weakly stationary and causal solution to model  \eqref{eq:model-scalar}. Then, applying recurrence relation \eqref{eq:model-scalar} $m$ times, we obtain
			\[
			\bm{y}_t = \bm{\varepsilon}_t + \sum_{k=1}^{m}\sum_{j_1,\dots, j_k\geq1}\bm{A}_{j_1}\cdots \bm{A}_{j_k}\bm{\varepsilon}_{t-j_1-\dots-j_k} \bm{\varepsilon}_{t-j_1-\dots-j_k} + \bm{r}_{t,m},  
			\]
			where
			\[
			\bm{r}_{t,m} = \sum_{j_1,\dots, j_{m+1} \geq1}\bm{A}_{j_1}\cdots \bm{A}_{j_k}\bm{y}_{t-j_1-\dots-j_{m+1}}.
			\]
			As shown in  \eqref{eq:S} and \eqref{eq:finitey}, under the conditions of this theorem, $0\leq S<1$ and $\mathbb{E}(\|\bm{y}_t\|_2) <\infty$. As a result,
			\[
			\mathbb{E}(\|\bm{r}_{t,m}\|_2) \leq \sum_{j_1,\dots, j_{m+1} \geq1} \|\bm{A}_{j_1}\|_{\op} \cdots \|\bm{A}_{j_{m+1}}\|_{\op} \mathbb{E}(\| \bm{y}_{t-j_1-\dots-j_{m+1}}\|_2) \leq S^m \mathbb{E}(\|\bm{y}_t\|_2)\rightarrow 0, 
			\]
			as $m\rightarrow\infty$. By the Borel-Cantelli Lemma, as $m\rightarrow\infty$, $\|\bm{r}_{t,m}\|_2 \to 0$ almost surely, that is, $\bm{r}_{t,m}\to \bm{0}$ almost surely.  Thus, $\bm{y}_t$ satisfies \eqref{eq:sol} almost surely, and the uniqueness is verified.
			
			\subsection{Proof of Theorem \ref{thm:identifiable}}
			Let $\pazocal{L}(\cdot):\bm{\Omega}\to\mathbb{R}^{(r+2s)\times(r+2s)}$ and $\pazocal{T}(\cdot):\bm{\Omega}\to\mathbb{C}^{(r+2s)\times(r+2s)}$ be two matrix-valued functions such that for any $\bm{\omega}\in\bm{\Omega}$, $\pazocal{L}(\bm{\omega}) = (\pazocal{L}_1(\bm{\omega}), \dots, \pazocal{L}_{r+2s}(\bm{\omega}))^\prime$ and $\pazocal{T}(\bm{\omega}) = (\pazocal{T}_1(\bm{\omega}), \dots, \pazocal{T}_{r+2s}(\bm{\omega}))^\prime$, where for all $1\leq j \leq r+2s$,
			\begin{align}
				&\pazocal{L}_j(\bm{\omega}) = (\lambda_1^j, \dots,\lambda_r^j, \gamma_1^j\cos(j\theta_1),\gamma_1^j\sin(j\theta_1),\dots, \gamma_s^j\cos(j\theta_s),\gamma_s^j\sin(j\theta_s))^\prime \hspace{3mm}\text{and}\notag\\
				&\pazocal{T}_j(\bm{\omega}) = (\lambda_1^j, \dots,\lambda_r^j,( \gamma_1e^{i\theta_1})^j, (\gamma_1e^{-i\theta_1})^j,\dots,(\gamma_se^{i\theta_s})^j, (\gamma_se^{-i\theta_s})^j)^\prime.\label{eq:Lj-map}
			\end{align}
			Then, let $\pazocal{F}: \pazocal{L}(\cdot) \to \pazocal{T}(\cdot)$ be a functional mapping such that for any $\bm{\omega}\in\bm{\Omega}$, 
			\begin{equation} \label{eq:F-map}
				\pazocal{T}(\bm{\omega}) = 	\pazocal{F}(\pazocal{L}(\bm{\omega})) =  \pazocal{L}(\bm{\omega})\bm{F},\quad\text{with}\quad \bm{F} = \left(\begin{matrix}\bm{I}_r &\\ &\bm{I}_s\otimes\bm{F}_c\end{matrix}\right)\quad\text{and}\quad\bm{F}_c = \left(\begin{matrix}1&1\\i&-i\end{matrix}\right).
			\end{equation}
			Since $\bm{F}$ is invertible, $\pazocal{F}(\cdot)$ is a bijective map. 
			Set $(x_1,\dots, x_{r+2s}) = \pazocal{T}_1(\bm{\omega})$ such that $x_k = \lambda_k$ for $1\leq k \leq r$, while $x_{r+2k-1}=\gamma_ke^{i\theta_k}$ and $x_{r+2k}=\gamma_ke^{-i\theta_k}$ for $1\leq k\leq s$, where we suppress $x_k$'s dependence on $\bm{\omega}$ for notation simplicity.
			
			For any $x\in\mathbb{C}$, we next define a vector-valued function $\bm{v}(x) = (x, x^2,\dots,x^{r+2s})^\prime\in\mathbb{C}^{r+2s}$. 
			For any $\bm{\omega}\in\bm{\Omega}$ satisfying the conditions of this lemma, we first show that $\pazocal{T}(\bm{\omega})$ is invertible. It holds trivially $\pazocal{T}(\bm{\omega}) = (\bm{v}(x_1), \dots, \bm{v}(x_{r+2s}))$. Suppose that there exists $\bm{c}\in\mathbb{C}^{r+2s}$ such that $\pazocal{T}(\bm{\omega})\bm{c} = \bm{0}$. By the condition that $x_k\neq x_\ell$ for all $1\leq k\neq \ell \leq r+2s$, this implies that the $(r+2s)$-order polynomial
			\begin{equation}\label{eq:poly}
				\text{poly}(x) = c_1x + c_2x^2 + \dots c_{r+2s}x^{r+2s} = x(c_1 + c_2x + \dots c_{r+2s}x^{r+2s-1})
			\end{equation}
			has $(r+2s)$ non-zero, distinct roots. Since the polynomial at \eqref{eq:poly} can have at most $(r+2s-1)$ non-zero, distinct roots, it must be a zero polynomial, i.e. $\bm{c} = \bm{0}$. As a result, $\pazocal{T}(\bm{\omega})$ has linearly independent columns, i.e. it is invertible. Moreover, by \eqref{eq:F-map}, $\pazocal{L}(\bm{\omega})$ is also invertible. 
			
			Then, let $\bm{L}_{[1:d]}(\bm{\omega})$ be a square matrix consisting of the first $d = p+r+2s$ rows of $\bm{L}(\bm{\omega})$, and it follows that
			\[
			\bm{L}_{[1:d]}(\bm{\omega}) =  \left(\begin{matrix}\bm{I}_p &\\ &\bm{\pazocal{L}}(\bm{\omega})\end{matrix}\right)\in\mathbb{R}^{d\times d}
			\]
			is an invertible matrix.
			Subsequently, the matrix $(\bm{G}_1,\dots,\bm{G}_d)$ can be uniquely defined as
			\[
			(\bm{G}_1,\dots,\bm{G}_d) = (\bm{A}_1,\dots,\bm{A}_d) \left( [\bm{L}_{[1:d]}(\bm{\omega})^\prime]^{-1} \otimes \bm{I}_N\right),
			\]
			where the inverse of $\bm{L}_{[1:d]}(\bm{\omega})$ is well-defined.
			
			It remains show that $\bm{\omega}$ is unique, i.e. there does not exist $\bm{\widetilde{\omega}} \neq \bm{\omega}$ such that $\bm{A}_j(\bm{\omega}, \cm{G}) = \bm{A}_j(\bm{\widetilde{\omega}}, \cm{\widetilde{G}})$ for all $j\geq 1$, where $\cm{\widetilde{G}}$ may be different from $\cm{G}$ but still satisfies the condition that all $\bm{\widetilde{G}}_k$'s are non-zero matrices. Suppose that such $\bm{\widetilde{\omega}}$ does exist and further set $(\widetilde{x}_1,\dots, \widetilde{x}_{r+2s}) = \pazocal{T}_1(\bm{\widetilde{\omega}})$ where $\pazocal{T}_1(\cdot)$ is defined in \eqref{eq:Lj-map}.
			There must exists some non-zero $\widetilde{x}_k\notin\{x_1,\dots,x_{r+2s}\}$. Suppose that there exists $\bm{c}\in\mathbb{C}^{r+2s+1}$ such that 
			\[
			(\bm{v}(x_1), \dots, \bm{v}(x_{r+2s}), \bm{v}(\widetilde{x}_k))\bm{c} = \bm{0}.
			\]
			This implies that the polynomial at \eqref{eq:poly} has $(r+2s+1)$ non-zero distinct roots, which only holds if $\bm{c}=0$, i.e. $\bm{v}(\widetilde{x}_k)$ is linearly independent of $\bm{v}(x_\ell)$ for all $1\leq \ell \leq r+2s$. By \eqref{eq:F-map}, this implies that the $k$-th column of $\bm{\pazocal{L}}(\bm{\widetilde{\omega}})$ does not belong to the column space of $\bm{\pazocal{L}}(\bm{\omega})$.
			Then, it is implied from
			\[
			(\bm{A}_{p+1}, \dots, \bm{A}_d) = (\bm{G}_r,\dots,\bm{G}_d)\left( \bm{\pazocal{L}}(\bm{\omega})^\prime \otimes \bm{I}_N\right)=(\bm{\widetilde{G}}_r,\dots,\bm{\widetilde{G}}_d)\left( \bm{L}(\bm{\widetilde{\omega}})^\prime \otimes \bm{I}_N\right)
			\] 
			that $\bm{\widetilde{G}}_k = \bm{0}$, leading to a contradiction. 
			
			Therefore, if the parameters $\bm{\omega}$ and $\cm{G}$ satisfy the conditions of the lemma, they are uniquely identified for any $\bm{A}_1,\bm{A}_2,\dots$.

			
			\renewcommand{\thelemma}{S.\arabic{lemma}}
			\setcounter{theorem}{0}	
			
			\section{Proofs for Section 3 in the main paper}\label{sec:proof2}
			
			\subsection{Useful properties of $\bm{L}(\bm{\omega})$}
			According to the definition of $\bm{L}(\bm{\omega})$, for $j\geq1$, denote the $j$th entry of  $\bm{\ell}^{I}(\lambda_i)$ by $\ell_{j}^{I}(\lambda_i)=\lambda_i^j$ with $1\leq i\leq r$,
			and denote the transpose of the  $j$th row of the $\infty\times 2$ matrix $\bm{\ell}^{II}(\bm{\eta}_k)$  by $\ell_{j}^{II}(\bm{\eta}_k):=(\ell_{j}^{II,1}(\bm{\eta}_k),\ell_{j}^{II,2}(\bm{\eta}_k))^\prime=(\gamma_k^j\cos(j\theta_k),\gamma_k^j\sin(j\theta_k) )^\prime$ with $1\leq k\leq s$. 
			Let $\bm{L}^{I}(\bm{\lambda})=(\bm{\ell}^{I}(\lambda_1), \cdots, \bm{\ell}^{I}(\lambda_r))$ and $\bm{L}^{II}(\bm{\eta})=(\bm{\ell}^{II}(\bm{\eta}_1), \cdots, \bm{\ell}^{II}(\bm{\eta}_s))$. In addition, define the following matrix by augmenting $\bm{L}(\bm{\omega})$ with $(r+2s)$ extra columns consisting of  first-order derivatives: 
			\begin{equation}\label{eq:Lstk}
				\bm{L}_{\rm{stack}}(\bm{\omega})
				=\left ( 
				\begin{array}{ccccc}
					\bm{I}_p& & & &\\
					&\bm{L}^{I}(\bm{\lambda})&\bm{L}^{II}(\bm{\eta})&\nabla \bm{L}^{I}(\bm{\lambda})&\nabla_{\theta}  \bm{L}^{II}(\bm{\eta})
				\end{array}
				\right ),
			\end{equation}
			where  $\nabla \bm{L}^{I}(\bm{\lambda})=(	\nabla \bm{\ell}^{I}(\lambda_1), \cdots, \nabla \bm{\ell}^{I}(\lambda_r))$ and $\nabla_{\theta}\bm{L}^{II}(\bm{\eta}) =(\nabla_{\theta}\bm{\ell}^{II}(\bm{\eta}_1), \cdots, \nabla_{\theta}\bm{\ell}^{II}(\bm{\eta}_s))$. 
			We can similarly define $\nabla_{\gamma}\bm{L}^{II}(\bm{\eta})$. Note that from \eqref{eq:eqsp}, it holds $\colsp\{\nabla_{\gamma}\bm{L}^{II}(\bm{\eta})\}=\colsp\{\nabla_{\theta}\bm{L}^{II}(\bm{\eta})\}$, which is why $\nabla_{\gamma}\bm{L}^{II}(\bm{\eta})$ is not included in $\bm{L}_{\rm{stack}}(\bm{\omega})$. Denote  
			\[
			\sigma_{\min, L}=\sigma_{\min}(\bm{L}_{\rm{stack}}(\bm{\omega}^*))\quad\text{and}\quad
			\sigma_{\max, L}=\sigma_{\max}(\bm{L}_{\rm{stack}}(\bm{\omega}^*)),
			\]
			where $\bm{\omega}^*$ is the true value of $\bm{\omega}$.
			Lemma \ref{cor1} below gives some exponential decay properties induced by the parametric form of $\bm{L}(\bm{\omega})$, which will be used repeatedly in our theoretical analysis. Then, based on Lemma \ref{cor1}(ii), we can show that $\sigma_{\min, L}\asymp1$ and $\sigma_{\max, L}\asymp 1$; this is stated in Lemma \ref{lemma:fullrank}. 
			
			\begin{lemma}\label{cor1}
				Suppose that Assumption \ref{assum:statn}(i) holds. Then  (i) there exists an absolute constant $C_L>0$ such that for all $\bm{\omega}\in\bm{\Omega}$ and $j\geq 1$,
				\[\max_{1\leq i\leq r, 1\leq k\leq s, 1\leq h \leq 2}\{|\nabla\ell_{j}^{I}(\lambda_i)|, \|\nabla \ell_{j}^{II,h}(\bm{\eta}_k)\|_2, |\nabla^2\ell_{j}^{I}(\lambda_i)|, \|\nabla^2 \ell_{j}^{II,h}(\bm{\eta}_k)\|_{\Fr}\}\leq C_{L}\bar{\rho}^{j};
				\]
				and (ii) there exists an absolute constant $C_*>0$ such that $\|\bm{A}_{j}^*\|_{\op}\leq C_* \bar{\rho}^j$ for all $j\geq 1$ if Assumption \ref{assum:statn}(iii) further holds.
			\end{lemma}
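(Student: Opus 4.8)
The plan is to treat the two parts separately, both reducing to the elementary fact that a fixed power of $j$ is absorbed by a strictly faster geometric decay. Throughout I would write $\rho_0 = \sup \Lambda$, so that Assumption \ref{assum:statn}(i) gives $|\lambda_i|, \gamma_k \le \rho_0 < \bar\rho$ uniformly over $\bm\omega \in \bm\Omega$ (the supremum is attained and strictly below $\bar\rho$ since $\Lambda$ is compact in $(0,\bar\rho)$), and record the key observation that for any fixed integer $m \ge 0$ the quantity $c_m := \sup_{j \ge 1} j^m (\rho_0/\bar\rho)^j$ is finite because $\rho_0/\bar\rho \in (0,1)$; consequently $j^m \rho_0^{\,j} \le c_m \bar\rho^{\,j}$ for all $j \ge 1$. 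This single inequality is the engine behind both parts.

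For part (i) I would simply differentiate the explicit scalar and vector functions. Since $\ell_j^I(\lambda) = \lambda^j$, we have $\nabla \ell_j^I(\lambda_i) = j\lambda_i^{j-1}$ and $\nabla^2 \ell_j^I(\lambda_i) = j(j-1)\lambda_i^{j-2}$, whose absolute values are at most $j^2 |\lambda_i|^{j-2} \le \rho_0^{-2} j^2 \rho_0^{\,j}$. For the oscillatory block, $\ell_j^{II,1}(\bm\eta) = \gamma^j\cos(j\theta)$ and $\ell_j^{II,2}(\bm\eta) = \gamma^j\sin(j\theta)$, so each entry of the gradient $\nabla_{\bm\eta}\ell_j^{II,h}$ and of the Hessian $\nabla_{\bm\eta}^2 \ell_j^{II,h}$ is a product of a factor $j^a\gamma^{j-b}$ (with $a \le 2$, $b \le 2$) and a bounded trigonometric factor. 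Bounding $|\cos|, |\sin| \le 1$ and collecting terms, both $\|\nabla \ell_j^{II,h}(\bm\eta_k)\|_2$ and $\|\nabla^2 \ell_j^{II,h}(\bm\eta_k)\|_{\Fr}$ are at most $C j^2 \gamma_k^{j-2} \le C\rho_0^{-2} j^2 \rho_0^{\,j}$ for an absolute constant $C$. Applying the observation above with $m = 2$ then yields all four bounds with $C_L$ a suitable multiple of $\rho_0^{-2} c_2$; the finitely many small-$j$ terms (e.g. $j=1,2$, where some derivatives vanish) are absorbed by enlarging $C_L$.

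For part (ii) I would expand $\bm A_j^* = \sum_{k=1}^d \ell_{j,k}(\bm\omega^*)\bm G_k^*$ and apply the triangle inequality together with $\|\bm G_k^*\|_\op \le C_{\cmtt{G}}$ from Assumption \ref{assum:statn}(iii). Reading off \eqref{eq:Lform}: for $1 \le j \le p$ only the indicator term $\mathbb{I}_{\{j=k\}}$ survives, giving $\|\bm A_j^*\|_\op = \|\bm G_j^*\|_\op \le C_{\cmtt{G}}$; for $j \ge p+1$ every nonzero coefficient is of the form $\lambda_{i}^{\,j-p}$ or $\gamma_{k}^{\,j-p}[\cos\{(j-p)\theta_k\} + \sin\{(j-p)\theta_k\}]$, which in modulus is at most $\sqrt 2\,\rho_0^{\,j-p}$. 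Summing the $d-p$ such terms gives $\|\bm A_j^*\|_\op \le \sqrt 2 (d-p) C_{\cmtt{G}}\rho_0^{-p}\rho_0^{\,j}$. Since $\rho_0 < \bar\rho$ we have $\rho_0^{\,j} \le \bar\rho^{\,j}$, and comparing both regimes we may take $C_* = \max\{C_{\cmtt{G}}\bar\rho^{-p}, \sqrt 2 (d-p)C_{\cmtt{G}}\rho_0^{-p}\}$, which is absolute since $p,d,\rho_0,\bar\rho,C_{\cmtt{G}}$ are.

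The work here is almost entirely bookkeeping; the one point that requires care — and the only place the hypotheses are genuinely used — is uniformity over $\bm\Omega$. What makes $C_L$ and $C_*$ independent of $\bm\omega$ is precisely the strict separation $\rho_0 < \bar\rho$ furnished by Assumption \ref{assum:statn}(i): it creates a spare geometric factor $(\rho_0/\bar\rho)^j$ into which the polynomial prefactors $j, j^2$ (part (i)) and the fixed shift $\rho_0^{-p}$ (part (ii)) can be absorbed once and for all. Without this gap (i.e. if one only had $|\lambda_i|, \gamma_k < 1$ pointwise) the constants would blow up as the parameters approach the unit circle, so I expect the verification of uniformity, rather than any individual derivative computation, to be where the argument must be stated carefully.
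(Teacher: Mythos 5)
Your proof is correct and follows essentially the same route as the paper's: part (i) by explicit differentiation, bounding the trigonometric factors by one, and absorbing the polynomial prefactor $j^2$ via the spectral gap $\rho_0<\bar\rho$ (the paper's $C_L\geq 2j^2(\rho_1/\bar\rho)^{j-2}\bar\rho^{-2}$ is exactly your $c_2$ construction); part (ii) by the triangle inequality with $\|\bm{G}_k^*\|_{\op}\leq C_{\cmtt{G}}$ and the coefficient bounds from \eqref{eq:Lform}, handling $j\leq p$ and $j\geq p+1$ separately and folding $\bar\rho^{-p}$ into the constant. Your handling of the $\sqrt{2}$ factor from $\cos+\sin$ is in fact slightly more careful than the paper's.
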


			\begin{lemma} \label{lemma:fullrank}
				Let $J=2(r+2s)$. Denote $x_k^* = \lambda_k^*$ for $1\leq k \leq r$ and $x^*_{r+2k-1} = \gamma_k^* e^{i\theta_k^*}, x^*_{r+2k} = \gamma_k^*e^{-i\theta_k^*}$ for $ 1\leq k\leq s$, and let $\nu_1^*=\min\{|x^*_k|,1\leq k\leq r+2s\}$ and $\nu_2^*=\min\{|x^*_j-x^*_k|, 1\leq j<k\leq r+2s\}$. Under Assumptions \ref{assum:statn}(i) and \ref{assum:statn}(ii),  the matrix $\bm{L}_{\rm{stack}}(\bm{\omega}^*)$ has full rank, and its maximum and minimum singular values satisfy 
				\begin{equation*}
					\min\{1, c_{\bar{\rho}}\}\leq \sigma_{\min, L}\leq 	\sigma_{\max, L}\leq \max\{1, C_{\bar{\rho}}\}.
				\end{equation*}
				where $	C_{\bar{\rho}}=C_1\bar{\rho}\sqrt{J}(1-\bar{\rho})^{-1}\asymp 1$ and $c_{\bar{\rho}}=0.25^s (\nu_1^*)^{3J/2}(\nu_2^*)^{J(J/2-1)}/C_{\bar{\rho}}^{J-1}\asymp 1$.
			\end{lemma}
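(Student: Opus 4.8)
Throughout write $J = 2(r+2s)$ and let $\bm{M} = (\bm{L}^{I}(\bm{\lambda}^*),\ \bm{L}^{II}(\bm{\eta}^*),\ \nabla\bm{L}^{I}(\bm{\lambda}^*),\ \nabla_{\theta}\bm{L}^{II}(\bm{\eta}^*))\in\mathbb{R}^{\infty\times J}$ denote the lower-right block of $\bm{L}_{\rm{stack}}(\bm{\omega}^*)$ in \eqref{eq:Lstk}. Since $\bm{L}_{\rm{stack}}(\bm{\omega}^*)$ is block diagonal with diagonal blocks $\bm{I}_p$ and $\bm{M}$, its nonzero singular values are exactly the singular values of $\bm{M}$ together with $p$ copies of $1$. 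Hence $\sigma_{\min,L} = \min\{1,\sigma_{\min}(\bm{M})\}$ and $\sigma_{\max,L} = \max\{1,\sigma_{\max}(\bm{M})\}$, and the whole statement reduces to proving $\sigma_{\max}(\bm{M})\le C_{\bar{\rho}}$ and $\sigma_{\min}(\bm{M})\ge c_{\bar{\rho}}$.

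The upper bound is routine. I would use $\sigma_{\max}(\bm{M})\le\|\bm{M}\|_{\Fr}\le\sqrt{J}\max_{\text{col}}\|\text{col}\|_2$. By Lemma \ref{cor1}(i) and Assumption \ref{assum:statn}(i), the $j$th entry of any column of $\bm{M}$ is $O(\bar{\rho}^{\,j})$ for the function columns and, after the polynomial factor $j$ is absorbed, also controlled by $C_L\bar{\rho}^{\,j}$ for the derivative columns, so each column has squared norm $\lesssim\sum_{j\ge1}j^2\bar{\rho}^{2(j-1)}\lesssim(1-\bar{\rho})^{-2}$ up to a dimension-free constant. Summing over the $J$ columns yields $\sigma_{\max}(\bm{M})\le C_1\bar{\rho}\sqrt{J}(1-\bar{\rho})^{-1}=C_{\bar{\rho}}$.

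The substance is the lower bound. Because $\bm{M}$ is tall, deleting all but its top $J$ rows can only shrink the smallest singular value and cannot enlarge the largest, so $\sigma_{\min}(\bm{M})\ge\sigma_{\min}(\bm{M}_{[1:J]})$ and $\sigma_{\max}(\bm{M}_{[1:J]})\le\sigma_{\max}(\bm{M})\le C_{\bar{\rho}}$, where $\bm{M}_{[1:J]}$ is the $J\times J$ leading submatrix. Applying the elementary bound $\sigma_{\min}(\bm{A})\ge|\det\bm{A}|/\sigma_{\max}(\bm{A})^{J-1}$ (valid for any $J\times J$ matrix since $|\det\bm{A}|=\prod_i\sigma_i(\bm{A})$) reduces everything to a lower bound on $|\det\bm{M}_{[1:J]}|$, and produces the denominator $C_{\bar{\rho}}^{J-1}$ appearing in $c_{\bar{\rho}}$. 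To evaluate this determinant I would pass to the complex nodes $x_k^*$: combining each conjugate pair of $(\cos,\sin)$ columns into $(x^j,\bar{x}^j)$ is the block transformation of \eqref{eq:F-map}, applied once to the function columns and once to the derivative columns, which multiplies the determinant modulus by $|\det\bm{F}_c|^{2s}=4^s$; thus $|\det\bm{M}_{[1:J]}|$ equals $0.25^{s}$ times the modulus of a complex determinant. Invoking \eqref{eq:eqsp} I may replace $\nabla_{\theta}\bm{L}^{II}$ by the radial derivatives without changing the column span, after which, for each node $x_k^*$, the two columns coincide (up to scalar factors $x_k^*$, arising because the monomials start at $x^1$ and because the differentiated columns carry extra powers) with those of a confluent Vandermonde matrix on the $n=r+2s$ nodes $x_k^*$, each of multiplicity two.

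The confluent Vandermonde determinant then factors, after pulling out the node powers, as a bounded constant times $\prod_{k}|x_k^*|^{a_k}\prod_{j<k}|x_j^*-x_k^*|^{4}$, where the pairwise exponent $4$ is the product of the two multiplicities and $\sum_k a_k\le 3J/2$. Since each $|x_k^*|\ge\nu_1^*$ with $\nu_1^*<\bar{\rho}<1$, and each $|x_j^*-x_k^*|\ge\nu_2^*$, with $4\binom{n}{2}=J(J/2-1)$ difference factors, these are bounded below by $(\nu_1^*)^{3J/2}$ and $(\nu_2^*)^{J(J/2-1)}$, respectively. Combining with the $0.25^{s}$ factor and dividing by $C_{\bar{\rho}}^{J-1}$ gives precisely $\sigma_{\min}(\bm{M})\ge c_{\bar{\rho}}$. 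The most delicate and error-prone step is this determinant evaluation: tracking the powers of $x_k^*$ factored out of each (shifted, differentiated) column to reduce to a standard confluent Vandermonde, and the unit-modulus factors from $\bm{F}_c$, so that the node-modulus total comes out bounded by $3J/2$ and the difference exponent equal to $J(J/2-1)$. Finally, since $p,r,s$ are fixed and, under Assumption \ref{assum:statn}, $\bar{\rho}<1$ and $\nu_1^*,\nu_2^*>0$ are absolute constants, both $C_{\bar{\rho}}$ and $c_{\bar{\rho}}$ are $\asymp1$, and the stated sandwich $\min\{1,c_{\bar{\rho}}\}\le\sigma_{\min,L}\le\sigma_{\max,L}\le\max\{1,C_{\bar{\rho}}\}$ follows.
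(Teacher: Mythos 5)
Your proposal is correct and follows essentially the same route as the paper's proof: block-diagonal reduction to the MA block, Frobenius-norm upper bound via Lemma \ref{cor1}(i), restriction to the top $J\times J$ submatrix with the bound $\sigma_{\min}\geq|\det|/\sigma_{\max}^{J-1}$, and a determinant evaluation that passes to complex nodes (yielding the $0.25^s$ factor) and reduces to a confluent/generalized Vandermonde form with node-power exponent $3J/2$ and difference exponent $J(J/2-1)$. The paper implements the determinant step via explicit column operations and the factorization $\bm{P}_3=\bm{P}_4\bm{P}_5$ with the generalized Vandermonde formula, but this is only a presentational difference from your argument.
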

			
			\subsection{Notations and main idea: linearization of parametric structure}\label{section:notation}
			
			For  simplicity, denote the perturbations of $\bm{\omega}^*$, $\cm{G}^*$ and $\cm{A}^*$ by $\delta_{\bm{\omega}}=\|\bm{\omega} - \bm{\omega}^*\|_2$, $\delta_{\cmtt{G}}=\|\cm{G}-\cm{G}^*\|_{\Fr}$ and $\delta_{\cmtt{A}}=\|\cm{A}-\cm{A}^*\|_{\Fr}= \|\bm{\Delta}\|_{\Fr}$, respectively. Let
			\begin{equation*}
				\bm{\Upsilon} = \left \{\bm{\Delta} = \cm{A}-\cm{A}^*\in\mathbb{R}^{N\times N\times \infty} \mid \cm{A}=\cm{G}\times_3\bm{L}(\bm{\omega}), \cm{G}\in\bm{\Gamma}(\pazocal{R}_1, \pazocal{R}_2), \bm{\omega}\in\bm{\Omega}, \delta_{\bm{\omega}} \leq c_{\bm{\omega}} \right \},
			\end{equation*} 
			where 	$\bm{\Gamma}(\pazocal{R}_1,\pazocal{R}_2) = \{\cm{G}\in\mathbb{R}^{N\times N\times d}\mid \rank(\cm{G}_{(1)})\leq \pazocal{R}_1, \rank(\cm{G}_{(2)})\leq \pazocal{R}_2 \}$. It is noteworthy that under the conditions of Theorem \ref{thm:parametric}, $\bm{\widehat{\Delta}}:= \cm{\widehat{A}}-\cm{A}^* \in \bm{\Upsilon}$.

			A crucial intermediate step for our theoretical analysis is to establish the following linear approximation within a fixed local neighborhood of $\bm{\omega}^*$,
			\begin{equation} \label{eq:linearize}
				\bm{\Delta}(\bm{\omega}, \cm{G}) =\cm{{A}}(\bm{\omega}, \cm{G})-\cm{A}^* \approx \cm{M}(\bm{\omega}-\bm{\omega}^*, \cm{G}-\cm{G}^*) \times_3\bm{L}_{\rm{stack}}(\bm{\omega}^*),
			\end{equation}
			where  $\cm{M}:  \mathbb{R}^{r+2s} \times \mathbb{R}^{N\times N\times d} \to \mathbb{R}^{N \times N\times (d+r+2s)}$ is a bilinear function defined as follows:
			\begin{align}\label{eq:Gfunc}
				\begin{split}
					& \cm{M}(\bm{a},\cm{B}) = \stk \Bigg ( \cm{B},\hspace{2mm} \big \{a_i \bm{G}_i^{I*} \big\}_{1\leq i\leq r}, \\
					&\Big\{ a_{r+2k} \bm{G}^{II,1*}_{k} - \frac{a_{r+2k-1}}{\gamma_k^*}  \bm{G}^{II,2*}_{k},\hspace{1mm} a_{r+2k-1} \bm{G}^{II,2*}_{k} + \frac{a_{r+2k}}{\gamma_k^*} \bm{G}^{II,1*}_{k} \Big \}_{1 \leq k \leq s} 
					\Bigg),
				\end{split}
			\end{align}
			for any $\bm{a} = (a_1,\dots,a_{r+2s})^\prime \in \mathbb{R}^{r+2s}$ and $\cm{B}\in\mathbb{R}^{N\times N\times d}$, with the true values $\bm{\omega}^*$ and  $\cm{G}^*$ fixed. 
			The linear approximation in \eqref{eq:linearize} will be formalized in the proof of Lemma \ref{lemma:delnorm}; in particular, see \eqref{eq:stackH} and \eqref{eq:Delta} for the linear form and the remainder term, respectively. 
			
			In addition, the following notations will be used in the proof of Lemma \ref{lemma:delnorm}.
			First, for the convenience of notation in the proof, according to the block form of $\bm{L}(\bm{\omega})$, we partition $\cm{G}\in\mathbb{R}^{N\times N\times d}$ as  
			$\cm{G}=\stk(\cm{G}^{\textrm{AR}}, \cm{G}^{\textrm{MA}})=(\cm{G}^{\textrm{AR}},\cm{G}^{I},\cm{G}^{II})$,
			where $\cm{G}^{\textrm{AR}}=\stk(\bm{G}_1,\dots, \bm{G}_p)$, $\cm{G}^{I}=\stk(\bm{G}^{I}_1,\ldots,\bm{G}^{I}_r)$, and $\cm{G}^{II}=\stk(\bm{G}^{II,1}_{1}, \bm{G}^{II,2}_{1},\ldots,\bm{G}^{II,1}_{s}, \bm{G}^{II,2}_{s})$ are $N\times N\times p$, $N\times N\times r$, and $N\times N\times 2s$ tensors, respectively.
			Here, $\bm{G}^{I}_i=\bm{G}_{p+i}$  for $1\leq i\leq r$, and $\bm{G}^{II,1}_{k}=\bm{G}_{p+r+2k-1}$ and $\bm{G}^{II,2}_{k}=\bm{G}_{p+r+2k}$ for $1\leq k\leq s$.
			Then, for any $\cm{A}=\cm{G}\times_3\bm{L}(\bm{\omega})$, we have $\bm{A}_k=\bm{G}_k$ for $1\leq k\leq p$, and
			\begin{align}\label{eq:linear}
				\bm{A}_{p+j} =  \sum_{i=1}^{r}\ell_{j}^{I}(\lambda_i)\bm{G}_i^{I}+\sum_{k=1}^{s}\left\{\ell_{j}^{II,1}(\bm{\eta}_k)\bm{G}_{k}^{II,1}+\ell_{j}^{II,2}(\bm{\eta}_k)\bm{G}_{k}^{II,2}\right\}, \hspace{2mm}\text{for}\hspace{2mm} j\geq 1.
			\end{align}
			Moreover, for simplicity, let 
			\begin{equation*}
				\cm{G}_{\rm{stack}} =\cm{M}(\bm{\omega}-\bm{\omega}^*, \cm{G}-\cm{G}^*).
			\end{equation*}
			Equivalently, we can express $\cm{G}_{\rm{stack}}=\stk\left (\cm{G}-\cm{G}^*, \cm{D}(\bm{\omega})\right )$ as the $N\times N\times (d+r+2s)$ tensor formed by augmenting $\cm{G}-\cm{G}^*$ with the $N\times N\times (r+2s)$ tensor
			\begin{align*}
				\cm{D}(\bm{\omega}) &= \stk \bigg ( \big \{(\lambda_i - \lambda_i^*)\bm{G}_i^{I*}\big\}_{1\leq i\leq r}, \notag\\
				&\big\{(\theta_k - \theta_k^*) \bm{G}^{II,1*}_{k} - \frac{\gamma_k -\gamma_k^*}{\gamma_k^*}\bm{G}^{II,2*}_{k}, \; (\theta_k - \theta_k^*) \bm{G}^{II,2*}_{k} + \frac{\gamma_k -\gamma_k^*}{\gamma_k^*}\bm{G}^{II,1*}_{k} \big \}_{1\leq k\leq s} \bigg).
			\end{align*}
			Lastly, note that for every $\bm{\Delta}(\bm{\omega}, \cm{G})\in \bm{\Upsilon}$, its corresponding $\cm{G}_{\rm{stack}}\in\bm{\Xi}$, where
			\begin{equation}\label{eq:Xi}
				\bm{\Xi} = \left\{ \cm{M}(\bm{a} , \cm{B} ) \in  \mathbb{R}^{N \times N \times (d+r+2s)} \mid   \bm{a}  \in \mathbb{R}^{r+2s}, \cm{B} \in \bm{\Gamma}(2\pazocal{R}_1, 2\pazocal{R}_2) \right\}.
			\end{equation}

			\subsection{Proof of Lemma \ref{lemma:delnorm}}
			By Assumption \ref{assum:statn}(iii), without loss of generality, let $\max_{p+1\leq k\leq d}\|\bm{G}_k^*\|_{\Fr}=\alpha$ and $\min_{p+1\leq k\leq d}\|\bm{G}_k^*\|_{\Fr}=c_{\cmtt{G}}\alpha$, where $c_{\cmtt{G}}>0$ is an absolute constant.
			
			Let $\bm{\Delta}=\cm{A}-\cm{A}^* = \cm{G} \times_3 \bm{L}(\bm{\omega}) - \cm{G}^* \times_3 \bm{L}(\bm{\omega}^*)$. Denote by $\bm{\Delta}_j$ with $j\geq 1$ the frontal slices of $\bm{\Delta}$, i.e. $\bm{\Delta}_{(1)}=(\bm{\Delta}_1,\bm{\Delta}_2,\dots)$.
			Then $\bm{\Delta}_j=\bm{G}_j-\bm{G}_j^*$ for $1\leq j\leq p$.
			For $j\geq 1$, by \eqref{eq:linear} and the Taylor expansion,
			\begin{align}\label{eq:delta}
				\bm{\Delta}_{p+j}&=\bm{A}_{p+j}-\bm{A}_{p+j}^* \notag \\
				&=\sum_{k=1}^{r}\Bigg \{\ell_{j}^{I}(\lambda_k^*) +\nabla\ell_{j}^{I}(\lambda_k^*) (\lambda_k-\lambda_k^*) +\frac{1}{2}\nabla^2\ell_{j}^{I}(\widetilde{\lambda}_k) (\lambda_k-\lambda_k^*)^2 \Bigg \}\bm{G}_k^{I}\notag \\
				&\hspace{5mm} +\sum_{k=1}^{s}\Bigg \{\ell_{j}^{II,1}(\bm{\eta}_k^*) +(\bm{\eta}_k-\bm{\eta}_k^*)^\prime \nabla \ell_{j}^{II,1}(\bm{\eta}_k^*) \notag\\
				&\hspace{33mm} +\frac{1}{2}(\bm{\eta}_k-\bm{\eta}_k^*)^{\prime}\nabla^2 \ell_{j}^{II,1}(\widetilde{\bm{\eta}}_k)(\bm{\eta}_k-\bm{\eta}_k^*)\Bigg \}\bm{G}_{k}^{II,1}\notag \\
				&\hspace{5mm} +\sum_{k=1}^{s}\Bigg \{\ell_{j}^{II,2}(\bm{\eta}_k^*) + (\bm{\eta}_k-\bm{\eta}_k^*)^\prime \nabla \ell_{j}^{II,2}(\bm{\eta}_k^*) \notag\\ &\hspace{33mm}+\frac{1}{2}(\bm{\eta}_k-\bm{\eta}_k^*)^{\prime}\nabla^2 \ell_{j}^{II,2}(\widetilde{\bm{\eta}}_k) (\bm{\eta}_k-\bm{\eta}_k^*)\Bigg \}\bm{G}_{k}^{II,2} -\bm{A}_{p+j}^* \notag\\
				&:=\bm{H}_j+\bm{R}_j,
			\end{align}
			where $\widetilde{\lambda}_k$ lies between $\lambda_k^*$ and $\lambda_k$ for $1 \leq k \leq r$, $\widetilde{\bm{\eta}}_k$ lies between $\bm{\eta}^*_k$ and $\bm{\eta}_k$ for $1 \leq k \leq s$, 
			\begin{align}
				\begin{split}\label{eq:Hj}
					\bm{H}_j &=\sum_{k=1}^{r}\ell_{j}^{I}(\lambda_k^*) (\bm{G}_k^{I}-\bm{G}_k^{I*}) +\sum_{k=1}^{s}\sum_{h=1}^2\ell_{j}^{II,h}(\bm{\eta}_k^*) (\bm{G}_{k}^{II,h}-\bm{G}_{k}^{II,h*}) \\
					&\hspace{5mm} +\sum_{k=1}^{r}(\lambda_k-\lambda_k^*)\nabla\ell_{j}^{I}(\lambda_k^*) \bm{G}_k^{I*} +\sum_{k=1}^{s}\sum_{h=1}^2(\bm{\eta}_k-\bm{\eta}_k^*)^\prime \nabla \ell_{j}^{II,h}(\bm{\eta}_k^*)\bm{G}_{k}^{II,h*},
				\end{split}
			\end{align}
			and
			\begin{align} 
				\begin{split}\label{eq:Rj}
					\bm{R}_j &= \sum_{i=1}^{r}\nabla\ell_{j}^{I}(\lambda_k^*) (\lambda_k-\lambda_k^*) (\bm{G}_k^{I} - \bm{G}_k^{I*}) \\
					&\hspace{5mm}+ \sum_{k=1}^{s}\sum_{h=1}^2(\bm{\eta}_k-\bm{\eta}_k^*)^\prime \nabla \ell_{j}^{II,h}(\bm{\eta}_k^*)(\bm{G}_{k}^{II,h} - \bm{G}_{k}^{II,h*}) \\
					&\hspace{5mm} +\frac{1}{2} \sum_{k=1}^{r}\nabla^2\ell_{j}^{I}(\widetilde{\lambda}_k) (\lambda_k-\lambda_k^*)^2 \bm{G}_k^{I} \\
					&\hspace{5mm} 
					+\frac{1}{2} \sum_{k=1}^{s}\sum_{h=1}^2(\bm{\eta}_k-\bm{\eta}_k^*)^{\prime}\nabla^2 \ell_{j}^{II,h}(\widetilde{\bm{\eta}}_k)(\bm{\eta}_k-\bm{\eta}_k^*)\bm{G}_{k}^{II,h}.	
				\end{split}
			\end{align}
			
			We first handle the terms in $\bm{R}_j$, and denote $\bm{R}_j=\bm{R}_{1j} +\bm{R}_{2j}+\bm{R}_{3j}$, where
			\begin{align}\label{eq:Rjs}
				\begin{split}
					\bm{R}_{1j}=& \sum_{k=1}^{r}\nabla\ell_{j}^{I}(\lambda_k^*)  (\lambda_k-\lambda_k^*) (\bm{G}_k^{I} - \bm{G}_k^{I*}) \\
					&
					+ \sum_{k=1}^{s}\sum_{h=1}^2(\bm{\eta}_k-\bm{\eta}_k^*)^\prime \nabla \ell_{j}^{II,h}(\bm{\eta}_k^*)(\bm{G}_{k}^{II,h} - \bm{G}_{k}^{II,h*}),\\
					\bm{R}_{2j} =&\frac{1}{2}\sum_{k=1}^{r}\nabla^2\ell_{j}^{I}(\widetilde{\lambda}_k) (\lambda_k-\lambda_k^*)^2 (\bm{G}_k^{I} - \bm{G}_k^{I*} ) 
					\\&+\frac{1}{2} \sum_{k=1}^{s}\sum_{h=1}^2(\bm{\eta}_k-\bm{\eta}_k^*)^{\prime}\nabla^2 \ell_{j}^{II,h}(\widetilde{\bm{\eta}}_k)(\bm{\eta}_k-\bm{\eta}_k^*)(\bm{G}_{k}^{II,h} - \bm{G}_{k}^{II,h*}), \\
					\bm{R}_{3j} =&\frac{1}{2}\sum_{k=1}^{r}\nabla^2\ell_{j}^{I}(\widetilde{\lambda}_k) (\lambda_k-\lambda_k^*)^2 \bm{G}_k^{I*} \\
					&+ \frac{1}{2} \sum_{k=1}^{s}\sum_{h=1}^2(\bm{\eta}_k-\bm{\eta}_k^*)^{\prime}\nabla^2 \ell_{j}^{II,h}(\widetilde{\bm{\eta}}_k)(\bm{\eta}_k-\bm{\eta}_k^*) \bm{G}_{k}^{II,h*}.
				\end{split}
			\end{align}
			Note that for any matrix $\bm{Y} = \sum_{k=1}^{d}a_k\bm{X}_k$, it holds 
			\[
			\|\bm{Y}\|_{\op}\leq \|\bm{Y}\|_{\Fr}\leq  (\sum_{k=1}^{d}\|\bm{X}_k\|_{\Fr}^2)^{1/2}(\sum_{k=1}^{d}a_k^2)^{1/2}=\|\cm{X}\|_{\Fr}\|\bm{a}\|_2 ,
			\] 
			and $\sum_{k=1}^da_k^4\leq (\sum_{k=1}^da_k^2)^2$, where $\bm{a} = (a_1,\dots,a_d)^\prime\in\mathbb{R}^d$, and $\cm{X}$ is a tensor with frontal slices $\bm{X}_k$'s such that $\cm{X}_{(1)}=(\bm{X}_1,\dots,\bm{X}_{d})$. 
			Then, by Lemma \ref{cor1}(i),
			\begin{align*}
				\| \bm{R}_{1j}\|_{\Fr}
				&\leq C_L\bar{\rho}^j \sqrt{\|\bm{\lambda} - \bm{\lambda}^*\|_2^2 + 2\|\bm{\eta} - \bm{\eta}^*\|_2^2} \cdot
				\|\cm{G}^{\textrm{MA}} - \cm{G}^{\textrm{MA}*}\|_{\Fr} 	 \notag \\
				&\leq \sqrt{2}C_L\bar{\rho}^j \delta_{\bm{\omega}} \cdot \|\cm{G}^{\textrm{MA}} - \cm{G}^{\textrm{MA}*}\|_{\Fr} \leq \sqrt{2}C_L\bar{\rho}^j \delta_{\bm{\omega}} \delta_{\cmtt{G}},
			\end{align*}
			and
			\begin{align*}
				\| \bm{R}_{2j}\|_{\Fr}&\leq \frac{\sqrt{2}}{2}C_L\bar{\rho}^j \delta_{\bm{\omega}}^2\sqrt{ \sum_{k=1}^{r}\|\bm{G}_k^{I} - \bm{G}_k^{I*}\|_{\Fr}^2 + \sum_{k=1}^{s}\sum_{h=1}^2 \|\bm{G}_{k}^{II,h} - \bm{G}_{k}^{II,h*}\|_{\Fr}^2 } \notag\\
				&\leq \frac{\sqrt{2}}{2} C_L\bar{\rho}^j \delta_{\bm{\omega}}^2 \cdot \|\cm{G}^{\textrm{MA}} - \cm{G}^{\textrm{MA}*}\|_{\Fr} \leq \frac{\sqrt{2}}{2} C_L\bar{\rho}^j \delta_{\bm{\omega}}^2 \delta_{\cmtt{G}}.
			\end{align*}
			Moreover, by Assumption \ref{assum:statn}(iii) and Lemma \ref{cor1}(i), we can show that
			\begin{equation*}
				\| \bm{R}_{3j}\|_{\Fr} \leq C_L \alpha \bar{\rho}^j\delta_{\bm{\omega}}^2.
			\end{equation*}
			As a result,
			\begin{align} \label{eq:Rnorm1}
				\|\bm{R}_j\|_{\Fr} &\leq \|\bm{R}_{1j}\|_{\Fr} + \|\bm{R}_{2j}\|_{\Fr} + \|\bm{R}_{3j}\|_{\Fr} \notag\\ &\leq    C_L \bar{\rho}^j \delta_{\bm{\omega}} \left ( \sqrt{2}  \delta_{\cmtt{G}} + \frac{\sqrt{2}}{2} \delta_{\bm{\omega}}\delta_{\cmtt{G}}+  \alpha\delta_{\bm{\omega}} \right ).
			\end{align}
			
			Now consider $\bm{H}_j$ in \eqref{eq:Hj}. 
			Notice that for any $j\geq1$ and $1\leq k\leq s$, 
			\begin{align}\label{eq:eqsp}
				\begin{split}
					&\nabla_\gamma\ell_{j}^{II,1}(\bm{\eta}_k)=j\gamma_k^{j-1}\cos(j\theta_k)=\frac{1}{\gamma_{k}}\nabla_\theta\ell_{j}^{II,2}(\bm{\eta}_k),\\
					&\nabla_\gamma\ell_{j}^{II,2}(\bm{\eta}_k)=j\gamma_k^{j-1}\sin(j\theta_k)=-\frac{1}{\gamma_{k}}\nabla_\theta\ell_{j}^{II,1}(\bm{\eta}_k).
				\end{split}
			\end{align}
			Thus, the last term on the right side of  \eqref{eq:Hj} can be simplified to 
			\begin{align} \label{eq:linearcomb}
				\begin{split}
					&\sum_{k=1}^{s}\sum_{h=1}^2(\bm{\eta}_k-\bm{\eta}_k^*)^\prime \nabla \ell_{j}^{II,h}(\bm{\eta}_k^*)\bm{G}_{k}^{II,h*}\\
					&\hspace{5mm}= \sum_{k=1}^{s}\left[ (\theta_k - \theta_k^*) \bm{G}^{II,1*}_{k} - \frac{1}{\gamma_k^*}(\gamma_k -\gamma_k^*)\bm{G}^{II,2*}_{k}\right]\nabla_{\theta} \ell_{j}^{II,1}(\bm{\eta}_k^*)\\
					&\hspace{10mm}+\sum_{k=1}^{s} \left[ (\theta_k - \theta_k^*) \bm{G}^{II,2*}_{k} + \frac{1}{\gamma_k^*}(\gamma_k -\gamma_k^*)\bm{G}^{II,1*}_{k}\right]\nabla_{\theta} \ell_{j}^{II,2}(\bm{\eta}_k^*).
				\end{split}
			\end{align}
			Let $\cm{H}=\stk(\bm{H}_1, \bm{H}_2, \dots)$ and  $\cm{R}=\stk(\bm{R}_1, \bm{R}_2, \dots)$. Then by \eqref{eq:Hj} and \eqref{eq:linearcomb}, it can be verified that 
			\begin{align}\label{eq:stackH}
				\cm{\widetilde{H}}&:= \stk(\bm{G}_1-\bm{G}_1^*,\cdots,\bm{G}_p-\bm{G}_p^*,\cm{H})\notag\\&=(\cm{G}-\cm{G}^*)\times_3  \bm{L}(\bm{\omega}^*) +\cm{D}(\bm{\omega}) \times_3 \left (\nabla \bm{L}^{I}(\bm{\lambda}^*), \nabla_{\theta}  \bm{L}^{II}(\bm{\eta}^*)\right ) \notag\\
				&= \cm{G}_{\rm{stack}}\times_3\bm{L}_{\rm{stack}}(\bm{\omega}^*),
			\end{align}
			where $\cm{D}(\bm{\omega})\in\mathbb{R}^{N\times N\times (r+2s)}$  
			and $\cm{G}_{\rm{stack}}\in\mathbb{R}^{N\times N\times (d+r+2s)}$ are  defined in Appendix \ref{section:notation}.
			Note that 
			\begin{equation}\label{eq:Delta}
				\bm{\Delta}=\cm{\widetilde{H}}+\stk(\bm{0}_{N\times N\times p}, \cm{R}).
			\end{equation}
			
			Moreover,
			\begin{align}\label{eq:DFr2}
				\|\cm{D}(\bm{\omega})\|_{\Fr}^2 &= \sum_{i=1}^{r}(\lambda_i - \lambda_i^*)^2\|\bm{G}_i^{I*}\|_{\Fr}^2 + \sum_{k=1}^{s} \left\| (\theta_k - \theta_k^*) \bm{G}^{II,1*}_{k} - \frac{\gamma_k -\gamma_k^*}{\gamma_k^*}\bm{G}^{II,2*}_{k}\right\|_{\Fr}^2 \notag\\
				&\hspace{5mm}+ \sum_{k=1}^{s}\left\| (\theta_k - \theta_k^*) \bm{G}^{II,2*}_{k} + \frac{\gamma_k -\gamma_k^*}{\gamma_k^*}\bm{G}^{II,1*}_{k}\right\|_{\Fr}^2 \notag\\
				&= \sum_{i=1}^{r}(\lambda_i - \lambda_i^*)^2\|\bm{G}_i^{I*}\|_{\Fr}^2 + \sum_{k=1}^{s} (\theta_k - \theta_k^*)^2(\|\bm{G}^{II,1*}_{k}\|_{\Fr}^2 + \|\bm{G}^{II,2*}_{k}\|_{\Fr}^2 ) \notag\\
				&\hspace{5mm}+ \sum_{k=1}^{s}\frac{(\gamma_k - \gamma_k^*)^2}{\gamma_k^{*2}} (\|\bm{G}^{II,1*}_{k}\|_{\Fr}^2 + \|\bm{G}^{II,2*}_{k}\|_{\Fr}^2 ),
			\end{align}
			which, together with Assumption \ref{assum:statn}(iii), leads to
			\begin{equation}\label{eq:DFr}
				\sqrt{2}c_{\cmtt{G}} \alpha \delta_{\bm{\omega}}  \leq \|\cm{D}(\bm{\omega})\|_{\Fr} \leq \frac{\sqrt{2}\alpha \delta_{\bm{\omega}}}{\min_{1\leq k\leq s}\gamma_{k}^*}.
			\end{equation}
			By the simple inequalities $ (|x| + |y|) / 2 \leq \sqrt{x^2 + y^2} \leq |x| + |y|$, we have
			$0.5(\delta_{\cmtt{G}} + \|\cm{D}(\bm{\omega})\|_{\Fr}) \leq  \|\cm{G}_{\rm{stack}}\|_{\Fr}	\leq \delta_{\cmtt{G}} + \|\cm{D}(\bm{\omega})\|_{\Fr}$,
			and thus in view of \eqref{eq:DFr} we further have
			\begin{equation}\label{eq:Gstacknorm}
				0.5\left (\delta_{\cmtt{G}} +\sqrt{2}c_{\cmtt{G}}\alpha \delta_{\bm{\omega}}\right ) \leq  \|\cm{G}_{\rm{stack}}\|_{\Fr}	\leq \delta_{\cmtt{G}} + \frac{\sqrt{2}\alpha \delta_{\bm{\omega}}}{\min_{1\leq k\leq s}\gamma_{k}^*}, 
			\end{equation} 
			where $\delta_{\cmtt{G}} = \|\cm{G} - \cm{G}^*\|_{\Fr}$.
			By Lemma \ref{lemma:fullrank}, $\sigma_{\min, L}=\sigma_{\min}(\bm{L}_{\rm{stack}}(\bm{\omega}^*))>0$. 
			Then  it follows from \eqref{eq:Gstacknorm} that
			\[
			0.5\sigma_{\min, L}\left (\delta_{\cmtt{G}} +\sqrt{2}c_{\cmtt{G}}\alpha \delta_{\bm{\omega}}\right )\leq \|\cm{\widetilde{H}}\|_{\Fr} \leq  \sigma_{\max, L}\left (\delta_{\cmtt{G}} + \frac{\sqrt{2}\alpha \delta_{\bm{\omega}}}{\min_{1\leq k\leq s}\gamma_{k}^*}\right ).
			\]
			Combining this with \eqref{eq:Rnorm1}, \eqref{eq:Delta}, \eqref{eq:DFr}, as well as the fact that $\|\cm{G}^{\textrm{MA}} - \cm{G}^{\textrm{MA}*}\|_{\Fr}\leq \delta_{\cmtt{G}}$, we have
			\begin{align*}
				\|\bm{\Delta}\|_{\Fr} &\leq \|\cm{\widetilde{H}}\|_{\Fr} + \|\cm{R}\|_{\Fr} \\
				&\leq \left\{\sigma_{\max, L} + \frac{\sqrt{2}C_L}{1-\bar{\rho}} \left (\delta_{\bm{\omega}}+\frac{\delta_{\bm{\omega}}^2}{2}\right )\right\} \delta_{\cmtt{G}}
				+\left(\frac{\sqrt{2} \sigma_{\max, L}}{\min_{1\leq k\leq s}\gamma_{k}^*} + \frac{C_L }{1-\bar{\rho}} \delta_{\bm{\omega}}\right)\alpha
			\end{align*}
			and
			\begin{align*}
				\|\bm{\Delta}\|_{\Fr} &\geq \|\cm{\widetilde{H}}\|_{\Fr} - \|\cm{R}\|_{\Fr}\\
				&\geq \left\{0.5\sigma_{\min, L}-\frac{\sqrt{2}C_L}{1-\bar{\rho}}  \left (\delta_{\bm{\omega}}+\frac{\delta_{\bm{\omega}}^2}{2}\right )\right\} \delta_{\cmtt{G}} +\left( \frac{c_{\cmtt{G}} \sigma_{\min, L}}{\sqrt{2}}- \frac{C_L }{1-\bar{\rho}} \delta_{\bm{\omega}}\right) \alpha. 
			\end{align*}
			Thus, by taking
			\[
			\delta_{\bm{\omega}}\leq c_{\bm{\omega}}= \min\left \{2, \frac{c_{\cmtt{G}}(1-\bar{\rho})\sigma_{\min, L} }{8\sqrt{2}C_L  }\right \},
			\]
			we can show that 
			\begin{equation*}
				c_{\Delta}	\left(\delta_{\cmtt{G}} + \alpha \delta_{\bm{\omega}}\right) 
				\leq \|\bm{\Delta}\|_{\Fr} \leq         
				C_{\Delta}	\left(\delta_{\cmtt{G}} +  \alpha \delta_{\bm{\omega}}\right),
			\end{equation*}
			where
			\[
			c_{\Delta}= c_l\cdot\sigma_{\min, L}  \asymp 1 \quad \text{and}\quad
			C_{\Delta}= c_u\cdot \max\left \{\sigma_{\max, L},(1-\bar{\rho})^{-1} \right \} \asymp 1,
			\]
			with $c_l=0.25\min\{1,\sqrt{2}c_{\cmtt{G}}\}$ and $c_u=1+\sqrt{2}(\min_{1\leq k\leq s}\gamma_{k}^*)^{-1}+(4\sqrt{2}+2)C_L$. By Lemma \ref{lemma:fullrank}, we have $c_{\bm{\omega}}\asymp 1$,  $c_{\Delta}\asymp 1$, and $C_{\Delta}\asymp 1$. The proof of this lemma is complete.
			
			
			\subsection{Proof of Theorem \ref{thm:parametric}}\label{section:proofThm2}
			
			Let $\bm{x}_t=(\bm{y}_{t-1}^\prime, \bm{y}_{t-2}^\prime, \dots)^\prime$.
			Denote by $\bm{\Delta}_j$ with $j\geq 1$ the frontal slices of $\bm{\Delta}$, i.e., $\bm{\Delta}_{(1)}=(\bm{\Delta}_1,\bm{\Delta}_2,\dots)$. Denote 
			\begin{align}\label{eq:notation_init}
				\begin{split}
					& S_1(\bm{\Delta}) = \frac{2}{T}\sum_{t=1}^{T}\langle \sum_{j=1}^{\infty}\bm{\Delta}_j\bm{y}_{t-j} , \sum_{k=t}^{\infty}\bm{\Delta}_k\bm{y}_{t-k}\rangle,\\
					&S_2(\bm{\Delta})  = \frac{2}{T}\sum_{t=1}^{T}\langle \sum_{j=t}^{\infty}\bm{A}_j^* \bm{y}_{t-j}, \sum_{k=1}^{t-1}\bm{\Delta}_k \bm{y}_{t-k} \rangle, \\
					&S_3(\bm{\Delta}) = \frac{2}{T}\sum_{t=1}^{T}\langle \bm{\varepsilon}_t, \sum_{j=t}^{\infty}\bm{\Delta}_j \bm{y}_{t-j} \rangle.
				\end{split}
			\end{align}
			The following three lemmas are sufficient for the proof of 
			Theorem \ref{thm:parametric}.
			
			\begin{lemma}[Strong convexity and smoothness properties] \label{lemma:rsc}
				Under Assumptions \ref{assum:error} and  \ref{assum:statn}, if $T \gtrsim (\kappa_2 / \kappa_1)^2  d_{\pazocal{R}} \log(\kappa_2/\kappa_1)$, then 	with probability at least $1  -2e^{-cd_{\pazocal{R}}\log(\kappa_2/\kappa_1)}-3e^{-cN}$, 
				\begin{equation*}
					\kappa_1  \|\bm{\Delta}\|_{\Fr}^2 \lesssim	\frac{1}{T}\sum_{t=1}^{T}\|\bm{\Delta}_{(1)}\bm{x}_t\|_2^2 \lesssim  \kappa_2 \|\bm{\Delta}\|_{\Fr}^2, \quad \forall \bm{\Delta}\in\bm{\Upsilon}.
				\end{equation*}
			\end{lemma}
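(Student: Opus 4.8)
The plan is to read the two-sided inequality as a combined restricted strong convexity (lower) and restricted strong smoothness (upper) statement, and to prove it by separately controlling the \emph{population} quadratic form and its \emph{uniform fluctuation} over the restricted set $\bm{\Upsilon}$. Writing $\bm{\Delta}_{(1)}\bm{x}_t = \sum_{j\geq1}\bm{\Delta}_j\bm{y}_{t-j} = \bm{\Delta}(B)\bm{y}_t$ with $\bm{\Delta}(B)=\sum_{j\geq1}\bm{\Delta}_j B^j$, the goal is to show that $Q_T(\bm{\Delta}):=T^{-1}\sum_{t=1}^{T}\|\bm{\Delta}(B)\bm{y}_t\|_2^2$ is sandwiched between constant multiples of $\kappa_1\|\bm{\Delta}\|_{\Fr}^2$ and $\kappa_2\|\bm{\Delta}\|_{\Fr}^2$ simultaneously over all $\bm{\Delta}\in\bm{\Upsilon}$.

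First I would handle the population version $\mathbb{E}\,Q_T(\bm{\Delta})=\mathbb{E}\|\bm{\Delta}(B)\bm{y}_t\|_2^2$. Using the VMA($\infty$) representation $\bm{y}_t=\bm{\Psi}_*(B)\bm{\varepsilon}_t$ guaranteed by Theorem \ref{thm:sol} and the spectral representation of the stationary process, this expectation equals $\tfrac{1}{2\pi}\int_{-\pi}^{\pi}\trace[\bm{\Delta}(e^{-\mathsf{i}\lambda})\bm{\Psi}_*(e^{-\mathsf{i}\lambda})\bm{\Sigma}_\varepsilon\bm{\Psi}_*^{\HH}(e^{-\mathsf{i}\lambda})\bm{\Delta}^{\HH}(e^{-\mathsf{i}\lambda})]\,d\lambda$. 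Bounding the inner matrix uniformly in $\lambda$ from below by $\lambda_{\min}(\bm{\Sigma}_\varepsilon)\mu_{\min}(\bm{\Psi}_*)$ and from above by $\lambda_{\max}(\bm{\Sigma}_\varepsilon)\mu_{\max}(\bm{\Psi}_*)$, and then invoking Parseval's identity $\tfrac{1}{2\pi}\int_{-\pi}^{\pi}\|\bm{\Delta}(e^{-\mathsf{i}\lambda})\|_{\Fr}^2\,d\lambda=\sum_{j\geq1}\|\bm{\Delta}_j\|_{\Fr}^2=\|\bm{\Delta}\|_{\Fr}^2$, delivers the two-sided population bound with the stated spectral constants.

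The second and harder step is the uniform deviation bound for $\sup_{\bm{\Delta}\in\bm{\Upsilon},\,\|\bm{\Delta}\|_{\Fr}=1}|Q_T(\bm{\Delta})-\mathbb{E}\,Q_T(\bm{\Delta})|$. The obstacle is that $\bm{\Delta}$ lives in $\mathbb{R}^{N\times N\times\infty}$ and $\bm{\Upsilon}$ is nonlinear in $\bm{\omega}$, so no direct covering is available. I would reduce to a finite-dimensional problem through the linearization from the proof of Lemma \ref{lemma:delnorm}, namely $\bm{\Delta}=\cm{\widetilde{H}}+\stk(\bm0_{N\times N\times p},\cm{R})$ with $\cm{\widetilde{H}}=\cm{G}_{\rm{stack}}\times_3\bm{L}_{\rm{stack}}(\bm{\omega}^*)$, $\cm{G}_{\rm{stack}}\in\bm{\Xi}$, and remainder obeying $\|\cm{R}\|_{\Fr}\lesssim\delta_{\bm{\omega}}(\delta_{\cmtt{G}}+\alpha\delta_{\bm{\omega}})$. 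Since $\cm{G}_{\rm{stack}}$ ranges over a low-Tucker-rank set of effective dimension $\asymp d_{\pazocal{R}}=\pazocal{R}_1\pazocal{R}_2 d+(\pazocal{R}_1+\pazocal{R}_2)N$ and $\bm{L}_{\rm{stack}}(\bm{\omega}^*)$ is a fixed operator, $\cm{\widetilde{H}}_{(1)}\bm{x}_t$ is a finite-dimensional linear image of $\cm{G}_{\rm{stack}}$; I would construct an $\epsilon$-net of log-cardinality $\lesssim d_{\pazocal{R}}\log(\kappa_2/\kappa_1)$ on the normalized set, apply a Hanson--Wright / Basu--Michailidis-type concentration inequality for quadratic forms of the stationary sub-Gaussian process (Assumption \ref{assum:error}) at each net point with fluctuation scale governed by $\kappa_2$, union-bound, and extend to off-net points by continuity. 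The norm equivalence $\|\cm{\widetilde{H}}\|_{\Fr}\asymp\|\cm{G}_{\rm{stack}}\|_{\Fr}$ from Lemma \ref{lemma:fullrank}, whose constants $\sigma_{\min, L},\sigma_{\max, L}$ are controlled by $c_{\bar{\rho}}$ and $C_{\bar{\rho}}$, is what transfers the net on $\bm{\Xi}$ back to $\bm{\Delta}$ and accounts for the factors $\min\{1,c_{\bar{\rho}}^2\}$ and $\max\{1,C_{\bar{\rho}}^2\}$ embedded in $\kappa_1,\kappa_2$; the remainder $\cm{R}$ is absorbed because $\delta_{\bm{\omega}}\leq c_{\bm{\omega}}$ keeps it a small fraction of $\|\bm{\Delta}\|_{\Fr}$.

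Finally, the choice $T\gtrsim(\kappa_2/\kappa_1)^2 d_{\pazocal{R}}\log(\kappa_2/\kappa_1)$ forces the uniform deviation to be at most a fixed fraction of the population lower bound, so the empirical form inherits $\kappa_1\|\bm{\Delta}\|_{\Fr}^2\lesssim Q_T(\bm{\Delta})\lesssim\kappa_2\|\bm{\Delta}\|_{\Fr}^2$ on the stated event, with the $2e^{-cd_{\pazocal{R}}\log(\kappa_2/\kappa_1)}$ term arising from the net union bound and the $3e^{-cN}$ term from an auxiliary operator-norm concentration controlling the process along the $N$-dimensional factor directions. I expect the main obstacle to be the interplay of temporal dependence and the infinite past: the concentration must be executed in the spectral domain rather than with i.i.d.\ tools, and the linearization remainder must be shown uniformly negligible so that it does not degrade the restricted strong convexity lower bound.
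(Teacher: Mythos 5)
Your overall architecture coincides with the paper's own proof: the population two-sided bound via spectral-density arguments (Lemma \ref{lemma:Wcov}), the reduction of the uniform deviation to the finite-dimensional set $\bm{\Xi}$ through the linearization $\bm{\Delta}=\cm{G}_{\rm{stack}}\times_3\bm{L}_{\rm{stack}}(\bm{\omega}^*)+\stk(\bm{0}_{N\times N\times p},\cm{R})$, a generalized $\epsilon$-net of log-cardinality $\lesssim d_{\pazocal{R}}\log(\kappa_2/\kappa_1)$ combined with Hanson--Wright concentration for the stationary linear process at each net point (Lemmas \ref{lemma:epsilon-net}, \ref{lemma:hansonw}, \ref{lemma:rscG1}), and a separate operator-norm concentration, uniform over all lags $j$, for the remainder term (Lemma \ref{lemma:Rjy}, equation \eqref{eq:Rjy1}), which is indeed the source of the $3e^{-cN}$ probability. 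You also correctly attribute $2e^{-cd_{\pazocal{R}}\log(\kappa_2/\kappa_1)}$ to the net union bound.

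There is, however, one step that fails as you state it: the claim that the remainder $\cm{R}$ ``is absorbed because $\delta_{\bm{\omega}}\leq c_{\bm{\omega}}$ keeps it a small fraction of $\|\bm{\Delta}\|_{\Fr}$.'' After the operator-norm concentration, the remainder contributes $T^{-1}\sum_{t=1}^{T}\|\cm{R}_{(1)}\bm{x}_{t-p}\|_2^2\lesssim\delta_{\bm{\omega}}^2\,\lambda_{\max}(\bm{\Sigma}_\varepsilon)\mu_{\max}(\bm{\Psi}_*)\,\|\bm{\Delta}\|_{\Fr}^2$ to the empirical quadratic form, while the main term's lower bound is of order $\kappa_1\|\bm{\Delta}\|_{\Fr}^2$ with $\kappa_1=\lambda_{\min}(\bm{\Sigma}_\varepsilon)\mu_{\min}(\bm{\Psi}_*)\min\{1,c_{\bar{\rho}}^2\}$. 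Domination therefore requires $\delta_{\bm{\omega}}\lesssim\sqrt{\kappa_1/\{\lambda_{\max}(\bm{\Sigma}_\varepsilon)\mu_{\max}(\bm{\Psi}_*)\}}$, a quantity that can be arbitrarily smaller than the non-shrinking absolute constant $c_{\bm{\omega}}$ whenever $\bm{\Sigma}_\varepsilon$ or the transfer function $\bm{\Psi}_*$ is ill-conditioned, so $\delta_{\bm{\omega}}\leq c_{\bm{\omega}}$ alone buys nothing. Normalizing to the unit sphere does not repair this either: on $\|\bm{\Delta}\|_{\Fr}=1$, Lemma \ref{lemma:delnorm} only yields $\delta_{\bm{\omega}}\lesssim 1/\alpha$, and $\alpha$ need not be large. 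The missing idea is the paper's radius-shrinking device: because the two-sided inequality is invariant under rescaling of $\bm{\Delta}$, it suffices to prove it on $\bm{\Upsilon}\cap\pazocal{S}(\delta)$ for a single radius $\delta$, and $\delta$ is then chosen small enough --- conditions \eqref{eq:condition1} and \eqref{eq:condition2} --- so that the relation $\delta_{\bm{\omega}}\leq\delta c_{\Delta}^{-1}/\alpha$ from Lemma \ref{lemma:delnorm} forces exactly the smallness of $\delta_{\bm{\omega}}$ needed for the remainder to be dominated by the strong-convexity lower bound. With that device inserted, the rest of your argument goes through essentially as in the paper.
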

			
			\begin{lemma}[Deviation bound]\label{lemma:dev}
				Under the conditions of Lemma \ref{lemma:rsc}, with probability at least $1-2e^{-cd_{\pazocal{R}}\log(\kappa_2/\kappa_1)}-5e^{-cN}$,
				\[
				\frac{1}{T}\left |\sum_{t=1}^{T}\langle \bm{\varepsilon}_t, \bm{\Delta}_{(1)}\bm{x}_t \rangle \right | \lesssim  \sqrt{\frac{\kappa_2 \lambda_{\max}(\bm{\Sigma}_{\varepsilon}) d_{\pazocal{R}}}{T}} \|\bm{\Delta}\|_{\Fr}, \quad \forall \bm{\Delta}\in\bm{\Upsilon}.
				\]
			\end{lemma}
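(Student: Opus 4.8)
The plan is to read the left-hand side as a linear functional of $\bm{\Delta}$ and to bound its supremum over $\bm{\Upsilon}$ by a covering argument on the associated low-Tucker-rank set, after collapsing the infinite temporal mode through $\bm{L}_{\rm{stack}}(\bm{\omega}^*)$. First I would introduce the gradient tensor $\cm{E}\in\mathbb{R}^{N\times N\times\infty}$ with $\cm{E}_{(1)}=T^{-1}\sum_{t=1}^{T}\bm{\varepsilon}_t\bm{x}_t^\prime$, so that $T^{-1}\sum_{t=1}^{T}\langle\bm{\varepsilon}_t,\bm{\Delta}_{(1)}\bm{x}_t\rangle=\langle\cm{E},\bm{\Delta}\rangle$. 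Since this quantity is homogeneous of degree one in $\bm{\Delta}$, it suffices to bound $\sup_{\bm{\Delta}\in\bm{\Upsilon}}|\langle\cm{E},\bm{\Delta}\rangle|/\|\bm{\Delta}\|_{\Fr}$.

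Next I would invoke the linearization of Lemma~\ref{lemma:delnorm}. Writing $\bm{\Delta}=\cm{\widetilde{H}}+\stk(\bm{0},\cm{R})$ with $\cm{\widetilde{H}}=\cm{G}_{\rm{stack}}\times_3\bm{L}_{\rm{stack}}(\bm{\omega}^*)$ and $\cm{G}_{\rm{stack}}\in\bm{\Xi}$ as in \eqref{eq:stackH}, and with $\|\cm{R}\|_{\Fr}$ of second order in $\delta_{\bm{\omega}}$ by \eqref{eq:Rnorm1}, the inner product splits into a leading term $\langle\cm{E},\cm{\widetilde{H}}\rangle$ and a remainder $\langle\cm{E},\stk(\bm{0},\cm{R})\rangle$. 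The leading term collapses the infinite mode exactly: $\langle\cm{E},\cm{\widetilde{H}}\rangle=\langle\cm{\widetilde{E}},\cm{G}_{\rm{stack}}\rangle$ with the finite tensor $\cm{\widetilde{E}}:=\cm{E}\times_3\bm{L}_{\rm{stack}}(\bm{\omega}^*)^\prime\in\mathbb{R}^{N\times N\times(d+r+2s)}$. Because every element of $\bm{\Xi}$ in \eqref{eq:Xi} has Tucker ranks of order $(\pazocal{R}_1,\pazocal{R}_2)$ in its first two modes and mode-3 dimension $O(d)$, Cauchy--Schwarz over this low-rank cone gives $|\langle\cm{\widetilde{E}},\cm{G}_{\rm{stack}}\rangle|\le\|\cm{G}_{\rm{stack}}\|_{\Fr}\,Z$, where $Z:=\sup_{\cm{C}\in\bm{\Xi},\,\|\cm{C}\|_{\Fr}=1}|\langle\cm{\widetilde{E}},\cm{C}\rangle|$, and $\|\cm{G}_{\rm{stack}}\|_{\Fr}\asymp\|\bm{\Delta}\|_{\Fr}$ again by Lemma~\ref{lemma:delnorm}.

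The core estimate is the control of $Z$. For a fixed $\cm{C}$, set $\bm{W}_t^{C}=(\cm{C}\times_3\bm{L}_{\rm{stack}}(\bm{\omega}^*))_{(1)}\bm{x}_t$, so that $\langle\cm{\widetilde{E}},\cm{C}\rangle=T^{-1}\sum_{t=1}^{T}\langle\bm{\varepsilon}_t,\bm{W}_t^{C}\rangle$. Since $\bm{W}_t^{C}$ is measurable with respect to the past and $\bm{\varepsilon}_t$ is independent of it, the summands form a martingale-difference sequence whose conditional law is sub-Gaussian with parameter at most $\sigma^2\lambda_{\max}(\bm{\Sigma}_\varepsilon)\|\bm{W}_t^{C}\|_2^2$, using $\bm{\varepsilon}_t=\bm{\Sigma}_\varepsilon^{1/2}\bm{\xi}_t$ from Assumption~\ref{assum:error}. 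On the event of Lemma~\ref{lemma:rsc} (applied to the linearized set, whose slices still decay exponentially by Lemma~\ref{cor1}) one has $T^{-1}\sum_{t=1}^{T}\|\bm{W}_t^{C}\|_2^2\lesssim\kappa_2$, so each $\langle\cm{\widetilde{E}},\cm{C}\rangle$ concentrates at scale $\sqrt{\lambda_{\max}(\bm{\Sigma}_\varepsilon)\kappa_2/T}$. Covering $\{\cm{C}\in\bm{\Xi}:\|\cm{C}\|_{\Fr}=1\}$ by a net whose log-cardinality is $\lesssim d_{\pazocal{R}}$ (the core contributes $\pazocal{R}_1\pazocal{R}_2 d$ and the two factor matrices contribute $(\pazocal{R}_1+\pazocal{R}_2)N$), a union bound together with the standard discretization step—off-net error is absorbed into $Z$ by linearity of $\langle\cm{\widetilde{E}},\cdot\rangle$—yields $Z\lesssim\sqrt{\lambda_{\max}(\bm{\Sigma}_\varepsilon)\kappa_2 d_{\pazocal{R}}/T}$ with the asserted probability: the net furnishes the $e^{-cd_{\pazocal{R}}\log(\kappa_2/\kappa_1)}$ factor, while the $N$-dimensional factor directions and the conditioning event of Lemma~\ref{lemma:rsc} furnish the $e^{-cN}$ terms. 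Finally, the remainder obeys $|\langle\cm{E},\stk(\bm{0},\cm{R})\rangle|\le Z^\prime\|\cm{R}\|_{\Fr}$ for an analogous deviation factor $Z^\prime$, and since $\|\cm{R}\|_{\Fr}=o(\|\bm{\Delta}\|_{\Fr})$ on $\bm{\Upsilon}$ this contribution is of strictly lower order and is absorbed.

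I expect the main obstacle to be twofold. First, the infinite temporal mode makes naive entropy bounds diverge; the resolution—folding $\bm{L}_{\rm{stack}}(\bm{\omega}^*)$ into $\cm{E}$ to obtain the finite tensor $\cm{\widetilde{E}}$—relies essentially on the exponential decay of Lemma~\ref{cor1} and on the linear reparametrization of Lemma~\ref{lemma:delnorm} being accurate over the non-shrinking radius $c_{\bm{\omega}}$. Second, the per-net-point concentration must be driven by the \emph{data-dependent} second-moment bound $T^{-1}\sum_{t}\|\bm{W}_t^{C}\|_2^2\lesssim\kappa_2$ from Lemma~\ref{lemma:rsc} rather than a deterministic one, so the restricted-smoothness event and the net-concentration event must be intersected carefully to reproduce exactly the stated $e^{-cd_{\pazocal{R}}\log(\kappa_2/\kappa_1)}$ and $e^{-cN}$ tail terms.
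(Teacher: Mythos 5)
Your treatment of the leading linearized term is essentially the paper's own argument (Lemma \ref{lemma:rscG1}, bound \eqref{eq:devG}): fold $\bm{L}_{\rm{stack}}(\bm{\omega}^*)$ into the regressors to form $\bm{z}_t$, cover the unit-Frobenius slice of $\bm{\Xi}$ with a net of log-cardinality $\lesssim d_{\pazocal{R}}$, apply the martingale bound of Lemma \ref{lemma:martgl} at each net point on the intersection with the second-moment event, and discretize. The genuine gap is in your final step, the remainder term.

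First, the claim that $\|\cm{R}\|_{\Fr}=o(\|\bm{\Delta}\|_{\Fr})$ on $\bm{\Upsilon}$ is false. By \eqref{eq:Rnorm1}, $\|\bm{R}_j\|_{\Fr}\lesssim \bar{\rho}^{j}\,\delta_{\bm{\omega}}\bigl(\|\cm{G}-\cm{G}^*\|_{\Fr}+\alpha\delta_{\bm{\omega}}\bigr)\lesssim \bar{\rho}^{j}\,\delta_{\bm{\omega}}\|\bm{\Delta}\|_{\Fr}$, and the only restriction on $\delta_{\bm{\omega}}$ inside $\bm{\Upsilon}$ is $\delta_{\bm{\omega}}\leq c_{\bm{\omega}}$, a \emph{non-shrinking} constant radius. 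Hence the remainder is of exactly the same order as $\|\bm{\Delta}\|_{\Fr}$ and cannot be ``absorbed as strictly lower order.'' Second, even with the correct magnitude, the bound $|\langle \cm{E},\stk(\bm{0},\cm{R})\rangle|\leq Z^\prime\|\cm{R}\|_{\Fr}$ is vacuous until you specify the set over which $Z^\prime$ is a supremum: $\cm{R}$ has infinitely many slices and no Tucker structure as a tensor, so a supremum over all unit-Frobenius perturbations yields a rate of order $N/\sqrt{T}$ rather than $\sqrt{d_{\pazocal{R}}/T}$, and a naive entropy bound over the infinite mode diverges. What rescues the argument in the paper is a structural fact your proposal never uses: by \eqref{eq:Rjs}, every slice $\bm{R}_j$ has column space contained in the span of the columns of $\cm{G}_{(1)}$ and $\cm{G}^*_{(1)}$, and row space contained in the span of the columns of $\cm{G}_{(2)}$ and $\cm{G}^*_{(2)}$, so $\rank(\bm{R}_j)\leq 2(\pazocal{R}_1\wedge\pazocal{R}_2)$ for all $j$. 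The paper then invokes, lag by lag, the uniform low-rank deviation bound \eqref{eq:Rjy2} of Lemma \ref{lemma:Rjy} (this, not the conditioning event of Lemma \ref{lemma:rsc}, is the source of the $e^{-cN}$ tail terms in the statement), which gives $\sup_{\bm{M}\in\bm{\Pi}(2(\pazocal{R}_1\wedge\pazocal{R}_2))}T^{-1}\sum_{t}\langle \bm{M}\bm{y}_{t-p-j},\bm{\varepsilon}_t\rangle\lesssim \lambda_{\max}(\bm{\Sigma}_\varepsilon)(2j\sigma^2+1)\sqrt{\mu_{\max}(\bm{\Psi}_*)N(\pazocal{R}_1\wedge\pazocal{R}_2)/T}$ simultaneously for all $j\geq1$, and sums this against $\|\bm{R}_j\|_{\Fr}\lesssim\delta\bar{\rho}^{j}$. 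The resulting contribution is dominated by the main term only because $d_{\pazocal{R}}\gtrsim N(\pazocal{R}_1\wedge\pazocal{R}_2)$ and $\kappa_2\gtrsim\lambda_{\max}(\bm{\Sigma}_\varepsilon)\mu_{\max}(\bm{\Psi}_*)$ --- a comparison of rates, not asymptotic negligibility of $\cm{R}$. To repair your proof you must drop the $o(\cdot)$ claim, establish the slice-wise rank bound on $\bm{R}_j$, and prove a deviation bound uniform over low-rank matrices and over all lags, then exploit the geometric decay in $j$.
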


			\begin{lemma}[Effects of initial values]\label{lemma:init}
				Under Assumptions \ref{assum:error} and  \ref{assum:statn}, if $T \gtrsim (\kappa_2 / \kappa_1)  d_{\pazocal{R}}$, then with probability at least $1-\{ 2+\sqrt{\kappa_2/\lambda_{\max}(\bm{\Sigma}_{\varepsilon})} \} \sqrt{N/\{ (\pazocal{R}_1+ \pazocal{R}_2) T\}}$,
				\[
				|S_1(\bm{\Delta})| \lesssim \kappa_1 \|\bm{\Delta}\|_{\Fr}^2, \quad |S_i(\bm{\Delta})| \lesssim \sqrt{\frac{\kappa_2 \lambda_{\max}(\bm{\Sigma}_{\varepsilon}) d_{\pazocal{R}}}{T}}  \|\bm{\Delta}\|_{\Fr},\quad i=2,3, \quad \forall \bm{\Delta}\in\bm{\Upsilon}.
				\]
			\end{lemma}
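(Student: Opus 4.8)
The three quantities $S_1,S_2,S_3$ in \eqref{eq:notation_init} are precisely the cross terms produced when the infinite past $\bm{x}_t$ in the idealized loss is replaced by the truncated $\bm{\widetilde{x}}_t$: each contains a \emph{pre-sample tail} $\sum_{j\geq t}(\cdot)_j\,\bm{y}_{t-j}$, collecting the omitted observations $\bm{y}_{t-j}$ with $t-j\leq 0$. My plan is to bound each $S_i$ by Cauchy--Schwarz over $t$, splitting the summand into an \emph{in-sample} factor (handled by the strong-convexity/smoothness estimates of Lemma \ref{lemma:rsc}) and a pre-sample tail factor, and then to control the tail factor in second moment and invoke Markov's inequality. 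First I would record the structural decay that makes this work: expanding $\bm{\Delta}_{p+j}=\bm{H}_j+\bm{R}_j$ exactly as in the proof of Lemma \ref{lemma:delnorm} and applying Lemma \ref{cor1}, every frontal slice satisfies $\|\bm{\Delta}_j\|_{\op}\lesssim\bar{\rho}^{\,j}\|\bm{\Delta}\|_{\Fr}$ uniformly over $\bm{\Delta}\in\bm{\Upsilon}$, while $\|\bm{A}_j^*\|_{\op}\lesssim\bar{\rho}^{\,j}$ by Lemma \ref{cor1}(ii).

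Introducing the \emph{$\bm{\Delta}$-free envelope} $Z_t=\sum_{k\geq t}\bar{\rho}^{\,k}\|\bm{y}_{t-k}\|_2$, these decay bounds give $\|\sum_{j\geq t}\bm{\Delta}_j\bm{y}_{t-j}\|_2\lesssim\|\bm{\Delta}\|_{\Fr}\,Z_t$ and $\|\sum_{j\geq t}\bm{A}_j^*\bm{y}_{t-j}\|_2\lesssim Z_t$. The decisive consequence is that the tail factors split as $\|\bm{\Delta}\|_{\Fr}$ (or no $\bm{\Delta}$ at all) times the single random sequence $\{Z_t\}$, which does not depend on $\bm{\Delta}$; this is exactly what lets me avoid any union bound or chaining over $\bm{\Upsilon}$, reducing the whole lemma to moment control of $\sum_t Z_t^2$ and of $\frac1T\sum_t\|\bm{\varepsilon}_t\|_2 Z_t$. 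By stationarity $\mathbb{E}\|\bm{y}_s\|_2^2\lesssim\kappa_2 N$, and a weighted Cauchy--Schwarz estimate yields $\mathbb{E}Z_t^2\lesssim\bar{\rho}^{\,2t}\kappa_2 N/(1-\bar{\rho})^2$; summing the geometric series in $t$ gives $\sum_{t=1}^{T}\mathbb{E}Z_t^2\lesssim\kappa_2 N$, a bound \emph{independent of} $T$. This boundedness in $T$ is the quantitative expression of the asymptotic negligibility of the initialization.

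With these ingredients each term follows by Cauchy--Schwarz over $t$. For $S_1$, pairing $\|\bm{\Delta}_{(1)}\bm{x}_t\|_2$ (bounded in aggregate by $\sqrt{T\kappa_2}\,\|\bm{\Delta}\|_{\Fr}$ via the smoothness half of Lemma \ref{lemma:rsc}) against the tail factor $\|\bm{\Delta}\|_{\Fr}(\sum_t Z_t^2)^{1/2}$ gives $|S_1|\lesssim(\kappa_2\sqrt{N}/\sqrt{T})\|\bm{\Delta}\|_{\Fr}^2$ on the Markov event $\{\sum_t Z_t^2\lesssim\kappa_2 N\}$, which the sample-size lower bound then forces below $\kappa_1\|\bm{\Delta}\|_{\Fr}^2$. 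For $S_2$, the first factor is the $\bm{A}_j^*$-tail, bounded by $(\sum_t Z_t^2)^{1/2}$, and the second is the in-sample $\bm{\Delta}_{(1)}\bm{\widetilde{x}}_t$, again controlled by Lemma \ref{lemma:rsc} after writing $\bm{\widetilde{x}}_t=\bm{x}_t-(\mathrm{tail})_t$; Cauchy--Schwarz delivers the advertised linear-in-$\|\bm{\Delta}\|_{\Fr}$ rate. For $S_3$ I would exploit that $\bm{\varepsilon}_t$ is independent of the entirely pre-sample tail, so the term is mean-zero and a conditional-variance computation gives $\mathbb{E}[\langle\bm{\varepsilon}_t,\sum_{j\geq t}\bm{\Delta}_j\bm{y}_{t-j}\rangle^2]\leq\lambda_{\max}(\bm{\Sigma}_{\varepsilon})\,\mathbb{E}\|\sum_{j\geq t}\bm{\Delta}_j\bm{y}_{t-j}\|_2^2$, converting the envelope into the $\lambda_{\max}(\bm{\Sigma}_{\varepsilon})$-weighted deviation bound matching $S_2$.

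Finally I would turn each second-moment estimate into a tail probability through Markov's inequality, tuning the thresholds so that the error bounds attain the stated rates, and sum the (at most three) failure probabilities to obtain the polynomial-in-$T$ bound $\{2+\sqrt{\kappa_2/\lambda_{\max}(\bm{\Sigma}_{\varepsilon})}\}\sqrt{N/\{(\pazocal{R}_1+\pazocal{R}_2)T\}}$. The main obstacle is exactly this last step. Because each tail factor mixes pre-sample and in-sample observations of a single dependent process, the clean sub-Gaussian/exponential concentration available for Lemmas \ref{lemma:rsc} and \ref{lemma:dev} is not available here, and only a second-moment/Markov argument applies; this is what forces the non-exponential, $T^{-1/2}$-type probability. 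The accompanying difficulty is the delicacy of the rate--probability trade-off in choosing the Markov thresholds: the factor $\sqrt{\kappa_2/\lambda_{\max}(\bm{\Sigma}_{\varepsilon})}$ enters precisely as the price of matching the $S_2$ rate while keeping its threshold large enough for the failure probability to vanish.
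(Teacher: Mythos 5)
Your skeleton matches the paper's in its key structural ingredients: the uniform slice decay $\|\bm{\Delta}_j\|_{\Fr}\lesssim\bar{\rho}^{j}\|\bm{\Delta}\|_{\Fr}$ obtained from the $\bm{H}_j+\bm{R}_j$ expansion and Lemma \ref{cor1}, a $\bm{\Delta}$-free envelope so that the supremum over $\bm{\Upsilon}$ costs nothing, and Markov's inequality as the source of the non-exponential tail. The gap is in how you reduce $S_2$ and $S_3$ to moment bounds. You apply Cauchy--Schwarz \emph{across} $t$, pairing the tail factor $(\sum_t Z_t^2)^{1/2}$ with the in-sample aggregate $(\sum_t\|\bm{\Delta}_{(1)}\bm{x}_t\|_2^2)^{1/2}\asymp\sqrt{T\kappa_2}\,\|\bm{\Delta}\|_{\Fr}$ supplied by Lemma \ref{lemma:rsc}. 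On the Markov event $\{\sum_t Z_t^2\leq a\}$ this gives $|S_2|\lesssim\|\bm{\Delta}\|_{\Fr}\sqrt{\kappa_2\,a/T}$, so matching the required rate $\sqrt{\kappa_2\lambda_{\max}(\bm{\Sigma}_{\varepsilon})d_{\pazocal{R}}/T}\,\|\bm{\Delta}\|_{\Fr}$ forces $a\lesssim\lambda_{\max}(\bm{\Sigma}_{\varepsilon})d_{\pazocal{R}}$, a threshold that is \emph{constant in} $T$. Since $\mathbb{E}\sum_t Z_t^2$ is also constant in $T$ (of order $\lambda_{\max}(\bm{\Sigma}_{\varepsilon})\mu_{\max}(\bm{\Psi}_*)N$), Markov then leaves a failure probability of order $\mu_{\max}(\bm{\Psi}_*)N/d_{\pazocal{R}}$, which does not vanish as $T\to\infty$; the claimed tail $O(\sqrt{N/\{(\pazocal{R}_1+\pazocal{R}_2)T\}})$ is unreachable by this route. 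The root cause is that Cauchy--Schwarz across $t$ discards the alignment between the geometric weights $\bar{\rho}^{t}$ in the tail and the in-sample factor, inflating the bound by roughly $\sqrt{T/N}$ relative to what is true.

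The paper never invokes Lemma \ref{lemma:rsc} here. Instead it bounds $\mathbb{E}\sup_{\bm{\Delta}}|S_i|$ directly by applying Cauchy--Schwarz \emph{per summand, in expectation}: each factor at time $t$ has $L_2$-norm $O(\sqrt{\lambda_{\max}(\bm{\Sigma}_{\varepsilon})\mu_{\max}(\bm{\Psi}_*)N})$, and the weights $\bar{\rho}^{j+k}$ sum to $O(1)$ over $j$, $k$ and $t$, giving $\mathbb{E}\sup|S_1|\lesssim\delta^2\kappa_2 N/T$ and $\mathbb{E}\sup|S_i|\lesssim\delta\kappa_2 N/T$ for $i=2,3$ --- a $1/T$ rather than $1/\sqrt{T}$ bound --- after which Markov at the stated thresholds yields the vanishing probabilities. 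Two secondary problems with your plan: (i) importing Lemma \ref{lemma:rsc} also imports its stronger sample-size requirement $T\gtrsim(\kappa_2/\kappa_1)^2 d_{\pazocal{R}}\log(\kappa_2/\kappa_1)$ and its exponential failure terms, neither of which appears in the statement you are proving (it assumes only $T\gtrsim(\kappa_2/\kappa_1)d_{\pazocal{R}}$ and has a purely polynomial tail); (ii) your conditional-variance argument for $S_3$ is valid for a fixed $\bm{\Delta}$ but does not survive the supremum over $\bm{\Upsilon}$ --- once you pass to the envelope $\frac{1}{T}\sum_t\|\bm{\varepsilon}_t\|_2 Z_t$ the mean-zero structure is destroyed, and you are back to the paper's first-moment bound.
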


			Now we prove Theorem \ref{thm:parametric}. 
			Note that $\sum_{j=1}^{t-1}\bm{A}_j\bm{y}_{t-j}= \cm{A}_{(1)}\bm{\widetilde{x}}_{t}$. 
			Due to the optimality of $\cm{\widehat{A}}$, we have
			\[
			\sum_{t=1}^{T} \| \bm{y}_t - \cm{A}^*_{(1)}\bm{\widetilde{x}}_{t} - \widehat{\bm{\Delta}}_{(1)} \bm{\widetilde{x}}_{t}\|_2^2
			\leq \sum_{t=1}^{T} \| \bm{y}_t - \cm{A}^*_{(1)}\bm{\widetilde{x}}_{t}\|_2^2,
			\]
			Then, since $\bm{y}_t - \cm{A}^*_{(1)}\bm{\widetilde{x}}_{t}=\bm{\varepsilon}_t +   \sum_{j=t}^{\infty}\bm{A}_j^* \bm{y}_{t-j}$, it follows that
			\begin{align}\label{eq:thm1eq}
				\frac{1}{T}\sum_{t=1}^{T}\|\widehat{\bm{\Delta}}_{(1)}\bm{\widetilde{x}}_{t}\|_2^2 &\leq 
				\frac{2}{T}\sum_{t=1}^{T}\langle \bm{\varepsilon}_t, \widehat{\bm{\Delta}}_{(1)}\bm{\widetilde{x}}_{t} \rangle + \underbrace{\frac{2}{T}\sum_{t=1}^{T}\langle \sum_{j=t}^{\infty}\bm{A}_j^* \bm{y}_{t-j}, \widehat{\bm{\Delta}}_{(1)}\bm{\widetilde{x}}_{t}  \rangle}_{S_2(\bm{\widehat{\Delta}})} \notag \\
				&= \frac{2}{T}\sum_{t=1}^{T}\langle \bm{\varepsilon}_t, \bm{\widehat{\Delta}}_{(1)}\bm{x}_t \rangle + S_2(\bm{\widehat{\Delta}}) - S_3(\bm{\widehat{\Delta}}),
			\end{align}
			where  $S_2(\cdot)$ and $S_3(\cdot)$ are defined as in \eqref{eq:notation_init}, $\widehat{\bm{\Delta}}_{(1)}\bm{\widetilde{x}}_{t}= \sum_{k=1}^{t-1}\widehat{\bm{\Delta}}_k \bm{y}_{t-k}$, and $\widehat{\bm{\Delta}}_{(1)}\bm{x}_{t}= \sum_{k=1}^{\infty}\widehat{\bm{\Delta}}_k \bm{y}_{t-k}$.
			
			Moreover, applying the inequality $\|\bm{a}-\bm{b}\|_2^2\geq\|\bm{a}\|_2^2-2\langle\bm{a},\bm{b}\rangle$ with $\bm{a}=\widehat{\bm{\Delta}}_{(1)}\bm{x}_{t}=\sum_{j=1}^{\infty}\widehat{\bm{\Delta}}_j\bm{y}_{t-j}$ and $\bm{b}=\sum_{k=t}^{\infty}\widehat{\bm{\Delta}}_k\bm{y}_{t-k}$, we can lower bound the left-hand side of  \eqref{eq:thm1eq} to further obtain that
			\begin{align} \label{eq:thm1eq1}
				\frac{1}{T}\sum_{t=1}^{T}\|\bm{\widehat{\Delta}}_{(1)}\bm{x}_t\|_{2}^2 - S_1(\bm{\widehat{\Delta}})
				&\leq \frac{2}{T}\sum_{t=1}^{T}\langle \bm{\varepsilon}_t, \bm{\widehat{\Delta}}_{(1)}\bm{x}_t \rangle + S_2(\bm{\widehat{\Delta}}) - S_3(\bm{\widehat{\Delta}}),
			\end{align}
			where $S_1(\cdot)$ is defined as in \eqref{eq:notation_init}.
			It is worth pointing out that  $S_i(\bm{\widehat{\Delta}})$ for $1\leq i\leq 3$ capture the initialization effect of $\bm{y}_s=\bm{0}$ for $s\leq 0$ on the estimation.
			
			Note that  $\widehat{\bm{\Delta}}=\cm{\widehat{A}}-\cm{A}^* \in\bm{\Upsilon}$ and $\kappa_2\geq \kappa_1$. Suppose that the high probability events in Lemmas  \ref{lemma:rsc}--\ref{lemma:init} hold. Then we can derive the estimation error bound from \eqref{eq:thm1eq1}:
			\[
			\kappa_1 \| \bm{\widehat{\Delta}}\|_{\Fr}^2 \lesssim \sqrt{\frac{\kappa_2 \lambda_{\max}(\bm{\Sigma}_{\varepsilon}) d_{\pazocal{R}}}{T}}  \|\bm{\widehat{\Delta}}\|_{\Fr}, \quad \text{or} \quad \| \bm{\widehat{\Delta}}\|_{\Fr} \lesssim  \sqrt{\frac{\kappa_2 \lambda_{\max}(\bm{\Sigma}_{\varepsilon}) d_{\pazocal{R}}}{\kappa_1^2 T}}.
			\]
			Furthermore, applying Lemma \ref{lemma:init} again, we can  derive  the prediction error bound from \eqref{eq:thm1eq} and the above result as follows: 
			\begin{align*}
				\frac{1}{T}\sum_{t=1}^{T} \| \bm{\widehat{\Delta}}_{(1)}\bm{\widetilde{x}}_{t}\|_2^2 
				\lesssim \frac{\kappa_2 \lambda_{\max}(\bm{\Sigma}_{\varepsilon})  d_{\pazocal{R}} }{\kappa_1 T}.
			\end{align*}
			The proof of this theorem is complete.

			
			\subsection{Proof of Lemma \ref{cor1}}
			
			
			\noindent\textbf{Proof of (i):} By definition,  $\ell_{j}^{I}(\lambda_i)=\lambda_i^j$ for $1\leq i\leq r$, and $\ell_{j}^{II,1}(\bm{\eta}_k)=\gamma_k^j\cos(j\theta_k)$ and  $\ell_{j}^{II,2}(\bm{\eta}_k)=\gamma_k^j\sin(j\theta_k)$ for $1\leq k\leq s$. 
			Then the first-order derivatives are $\nabla\ell_{j}^{I}(\lambda_i)=j\lambda_i^{j-1}$, $\nabla_\gamma\ell_{j}^{II,1}(\bm{\eta}_k)=j\gamma_k^{j-1}\cos(j\theta_k)$, 
			$\nabla_\theta\ell_{j}^{II,1}(\bm{\eta}_k)=-j\gamma_k^{j}\sin(j\theta_k)$, $\nabla_\gamma\ell_{j}^{II,2}(\bm{\eta}_k)=j\gamma_k^{j-1}\sin(j\theta_k)$, and
			$\nabla_\theta\ell_{j}^{II,2}(\bm{\eta}_k)=j\gamma_k^{j}\cos(j\theta_k)$. The second-order derivatives are
			$\nabla^2\ell_{j}^{I}(\lambda_i)=j(j-1)\lambda_i^{j-2}$, 
			$\nabla^2_\gamma\ell_{j}^{II,1}(\bm{\eta}_k)=j(j-1)\gamma_k^{j-2}\cos(j\theta_k)$,
			$\nabla^2_{\gamma\theta}\ell_{j}^{II,1}(\bm{\eta}_k)=-j^2\gamma_k^{j-1}\sin(j\theta_k)$, $\nabla^2_\theta\ell_{j}^{II,1}(\bm{\eta}_k)=-j^2\gamma_k^{j}\cos(j\theta_k)$, 
			$\nabla^2_\gamma\ell_{j}^{II,2}(\bm{\eta}_k)=j(j-1)\gamma_k^{j-2}\sin(j\theta_k)$, 
			$\nabla^2_{\gamma\theta}\ell_{j}^{II,2}(\bm{\eta}_k)=j^2\gamma_k^{j-1}\cos(j\theta_k)$, and $\nabla^2_\theta\ell_{j}^{II,2}(\bm{\eta}_k)=-j^2\gamma_k^{j}\sin(j\theta_k)$. By Assumption \ref{assum:statn}(i), there exists $\rho_1>0$ such that $\max\{|\lambda_1|,\ldots,|\lambda_r|, \gamma_1,\ldots,\gamma_s\} \leq \rho_1< \bar{\rho}$. Thus,
			\[\max_{1\leq i\leq r, 1\leq k\leq s, 1\leq h \leq 2}\{|\nabla\ell_{j}^{I}(\lambda_i)|, \|\nabla \ell_{j}^{II,h}(\bm{\eta}_k)\|_2, |\nabla^2\ell_{j}^{I}(\lambda_i)|, \|\nabla^2 \ell_{j}^{II,h}(\bm{\eta}_k)\|_{\Fr}\}\leq C_{L}\bar{\rho}^{j}.
			\]
			by choosing $C_L$ dependent on $\rho_1$ and $\bar{\rho}$ such that $C_L\geq 2j^2(\rho_1/\bar{\rho})^{j-2}\bar{\rho}^{-2}$ for all $j\geq 1$. Note that $C_L$ exists and is an absolute constant.
			
			\bigskip
			\noindent\textbf{Proof of (ii):} 
			By Assumption \ref{assum:statn},
			$\max\{|\lambda_1^*|,\ldots,|\lambda_r^*|, \gamma_1^*,\ldots,\gamma_s^*\} \leq \bar{\rho}$, and  there exists an absolute constant $C_{\cmtt{G}}>0$ such that 
			$\|\bm{G}_k^*\|_{\op}\leq C_{\cmtt{G}}$ for all $1\leq k\leq p$. Then, by a method similar to \eqref{eq:S},
			we can show that
			$\|\bm{A}_j^*\|_{\op}=\|\bm{G}_j^*\|_{\op}\leq C_{\cmtt{G}}$ for $1\leq j\leq p$, and
			\[
			\|\bm{A}_j^*\|_{\op}\leq \sum_{k=p+1}^{d} |\ell_{j,k}(\bm{\omega}^*)| \|\bm{G}_k^*\|_{\op} \leq \bar{\rho}^{j-p}\sum_{k=p+1}^{d}\|\bm{G}_j^*\|_{\op}\leq 
			\bar{\rho}^{j-p} (r+2s) C_{\cmtt{G}} 
			\]
			for  $j\geq p+1$.
			Then, taking $C_* = \bar{\rho}^{-p} \max\{(r+2s) C_{\cmtt{G}}, 1\}$, we accomplish the proof of (ii).
			

			\subsection{Proof of Lemma \ref{lemma:fullrank}} 
			
			
			Let $J=2(r+2s)$. Consider the following partitions of the $\infty \times (p+J)$ matrix $\bm{L}_{\rm{stack}}(\bm{\omega})$:
			\[
			\bm{L}_{\rm{stack}}(\bm{\omega})=\left ( 
			\begin{array}{cc}
				\bm{I}_p& \\
				&\bm{L}_{\rm{stack}}^{\ma}(\bm{\omega})
			\end{array}
			\right )
			=\left (\begin{matrix}
				\bm{I}_p&\\
				&\bm{L}_{[1:J]}(\bm{\omega})\\
				&\bm{L}_{\rm{Rem}}(\bm{\omega})
			\end{matrix}\right ),
			\]
			where $\bm{L}_{\rm{stack}}^{\ma}(\bm{\omega})=\left (\bm{L}^{I}(\bm{\lambda}), \bm{L}^{II}(\bm{\eta}), \nabla \bm{L}^{I}(\bm{\lambda}), \nabla_{\theta}  \bm{L}^{II}(\bm{\eta})\right )$ is further partitioned into two blocks, the $J\times J$ block
			$\bm{L}_{[1:J]}(\bm{\omega})$ and  the $\infty\times J$ remainder block $\bm{L}_{\rm{Rem}}(\bm{\omega})$. Note that for $1\leq j\leq J$, the $j$th row of $\bm{L}_{[1:J]}(\bm{\omega})$ is 
			\[
			\bm{L}_{j}(\bm{\omega}):=\left (\left ( \bm{\ell}_j^{I}(\bm{\lambda}) \right )^\prime,  \left (\bm{\ell}_j^{II}(\bm{\eta})\right )^\prime, \left ( \nabla \bm{\ell}_j^{I}(\bm{\lambda}) \right )^\prime,   \left (\nabla_{\theta}\bm{\ell}_j^{II}(\bm{\eta})\right )^\prime \right ),
			\]
			where $\bm{\ell}_j^{I}(\bm{\lambda}) =(\lambda_1^j,\dots, \lambda_r^j)^\prime$, $\nabla \bm{\ell}_j^{I}(\bm{\lambda}) =(j\lambda_1^{j-1},\dots, j\lambda_r^{j-1})^\prime$, and
			\begin{align*}
				\bm{\ell}_j^{II}(\bm{\eta}) &=\left (\gamma_1^j\cos(j\theta_1),\gamma_1^j\sin(j\theta_1), \dots, \gamma_s^j\cos(j\theta_s),\gamma_s^j\sin(j\theta_s) \right )^\prime,\\
				\nabla_{\theta}\bm{\ell}_j^{II}(\bm{\eta}) &= \left (-j\gamma_1^j\sin(j\theta_1),j\gamma_1^j\cos(j\theta_1), \dots, -j\gamma_s^j\sin(j\theta_s),j\gamma_s^j\cos(j\theta_s) \right )^\prime.
			\end{align*}
			For $j\geq 1$, the $j$th row of $\bm{L}_{\rm{Rem}}(\bm{\omega})$ is $\bm{L}_{J+j}(\bm{\omega})$.
			
			By Lemma \ref{cor1}(i), we have $\|\bm{L}_{\rm{stack}}^{\ma}(\bm{\omega})\|_{\Fr}\leq \sqrt{J\sum_{j=1}^{\infty} C_L^2\bar{\rho}^{2j}}\leq  C_L\sqrt{J}\bar{\rho}(1-\bar{\rho})^{-1}=C_{\bar{\rho}}$. Then
			\begin{align}\label{eq:sigmax}
				\sigma_{\max}(\bm{L}_{\rm{stack}}(\bm{\omega}))\leq \max\left \{1,\sigma_{\max}(\bm{L}_{\rm{stack}}^{\ma}(\bm{\omega}))\right \}\leq  \max\left \{1,\|\bm{L}_{\rm{stack}}^{\ma}(\bm{\omega})\|_{\Fr}\right \}
				\leq \max \{1, C_{\bar{\rho}} \}
			\end{align}
			and
			\begin{align}\label{eq:maxsigmasub}
				\sigma_{\max}(\bm{L}_{[1:J]}(\bm{\omega}))\leq \|\bm{L}_{[1:J]}(\bm{\omega})\|_{\Fr}
				\leq\|\bm{L}_{\rm{stack}}^{\ma}(\bm{\omega})\|_{\Fr}	\leq  C_{\bar{\rho}}.
			\end{align}
			
			It remains to derive a lower bound of $\sigma_{\min}(\bm{L}_{\rm{stack}}(\bm{\omega}))$. To this end, we first derive a lower bound of $\sigma_{\min}(\bm{L}_{[1:J]}(\bm{\omega}))$ by lower bounding the determinant of $\bm{L}_{[1:J]}(\bm{\omega})$.
			For any $(\gamma, \theta)\in[0,1)\times(-\pi/2, \pi/2)$, it can be verified that 
			\[
			\left (\gamma^j\cos(j\theta),\gamma^j\sin(j\theta) \right ) \underbrace{\left(\begin{matrix}1&1\\i&-i\end{matrix}\right)}_{:=\bm{C}_1} = \left ((\gamma e^{i\theta})^j,(\gamma e^{-i\theta})^j\right )
			\]
			and 
			\[
			\left (-j\gamma^j\sin(j\theta),j\gamma^j\cos(j\theta) \right ) \underbrace{\left(\begin{matrix}-i&i\\1&1\end{matrix}\right)}_{:=\bm{C}_2} =\left (j(\gamma e^{i\theta})^j,j(\gamma e^{-i\theta})^j\right ).
			\]
			
			Let $\bm{P}_1 = \diag( \bm{I}_r, \bm{C}_1, \dots, \bm{C}_1, \bm{I}_r, \bm{C}_2, \dots, \bm{C}_2)$ be a $J\times J$ block diagonal matrix consisting of two identity matrices $\bm{I}_r$ and $s$ repeated blocks of $\bm{C}_1$ and $\bm{C}_2$.
			We then have $\det(\bm{P}_1)=(-2i)^{2s}=4^s$, and
			\begin{align*}
				\bm{L}_{[1:J]}(\bm{\omega})\bm{P}_1 = \left(\begin{matrix}
					x_1&x_2&\cdots&x_{r+2s}&x_1&x_2&\cdots&x_{r+2s}\\
					x_1^2&x_2^2&\cdots&x_{r+2s}^2&2x_1^2&2x_2^2&\cdots&2x_{r+2s}^2\\
					\vdots&\vdots&\ddots&\vdots&\vdots&\vdots&\ddots&\vdots\\
					x_1^{J}&x_2^{J}&\cdots&x_{r+2s}^{J}&Jx_1^{J}&Jx_2^{J}&\cdots&Jx_{r+2s}^{J}
				\end{matrix}\right) := \bm{P}_2 \in \mathbb{R}^{J \times J},
			\end{align*}
			where $x_i = \lambda_i$ for $1\leq i\leq r$, while $x_{r+2k-1} = \gamma_k e^{i\theta_k}$ and $x_{r+2k} = \gamma_ke^{-i\theta_k}$ for $ 1\leq k\leq s$, and $i$ is the imaginary unit.
			
			We subtract the $j$th column of $\bm{P}_2$ from its  $(r+2s+j)$th column, for all $1\leq j \leq r+2s$, and obtain a matrix with the same determinant as $\bm{P}_2$ as follows,
			\begin{align*}
				\bm{P}_3=\left(\begin{matrix}
					x_1&x_2&\cdots&x_{r+2s}&0&0&\cdots&0\\
					x_1^2&x_2^2&\cdots&x_{r+2s}^2&x_1^2&x_2^2&\cdots&x_{r+2s}^2\\
					\vdots&\vdots&\ddots&\vdots&\vdots&\vdots&\ddots&\vdots\\
					x_1^{J}&x_2^{J}&\cdots&x_{r+2s}^{J}&(J-1)x_1^{J}&(J-1)x_2^{J}&\cdots&(J-1)x_{r+2s}^{J}
				\end{matrix}\right).
			\end{align*}
			Note that $\bm{P}_3=\bm{P}_4\bm{P}_5$, where 
			\[
			\bm{P}_4=\left(\begin{matrix}
				1&1&\cdots&1&0&0&\cdots&0\\
				x_1&x_2&\cdots&x_{r+2s}&x_1&x_2&\cdots&x_{r+2s}\\
				x_1^2&x_2^2&\cdots&x_{r+2s}^2&2x_1^2&2x_2^2&\cdots&2x_{r+2s}^2\\
				\vdots&\vdots&\ddots&\vdots&\vdots&\vdots&\ddots&\vdots\\
				x_1^{J-1}&x_2^{J-1}&\cdots&x_{r+2s}^{J-1}&(J-1)x_1^{J-1}&(J-1)x_2^{J-1}&\cdots&(J-1)x_{r+2s}^{J-1}
			\end{matrix}\right)
			\] 
			is a generalized Vandermonde matrix \citep{li2008special}, and $\bm{P}_5=\diag\{x_1, \dots, x_{r+2s},x_1,\dots,x_{r+2s}\}$. By \cite{li2008special}, 
			$|\det(\bm{P}_4)|=\prod_{i=1}^{r+2s}x_i\prod_{1\leq k<j\leq r+2s}(x_j - x_k)^4$. 
			As a result,
			\begin{equation*}
				|\det(\bm{P}_2)|=|\det(\bm{P}_3)|=|\det(\bm{P}_4)||\det(\bm{P}_5)|= \prod_{j=1}^{r+2s}|x_j|^3\prod_{1\leq j<k\leq r+2s}(x_j - x_k)^4\geq \nu_1^{3J/2}\nu_2^{J(J/2-1)},
			\end{equation*}
			where $\nu_1=\min\{|x_k|,1\leq k\leq r+2s\}$, $\nu_2=\min\{|x_j-x_k|, 1\leq j<k\leq r+2s\}$, and $J=2(r+2s)$. It follows that
			\begin{align}\label{eq:det2}
				|\det( \bm{L}_{[1:J]}(\bm{\omega}) )| =\frac{|\det(\bm{P}_2)|}{|\det(\bm{P}_1)|} \geq 0.25^s \nu_1^{3J/2}\nu_2^{J(J/2-1)}>0,
			\end{align}
			and hence $\bm{L}_{[1:J]}(\bm{\omega})$ is full-rank. Moreover, combining \eqref{eq:maxsigmasub} and \eqref{eq:det2}, we have
			\begin{align}\label{eq:minsigmasub}
				\sigma_{\min}(\bm{L}_{[1:J]}(\bm{\omega})) \geq \frac{	|\det( \bm{L}_{[1:J]}(\bm{\omega}) )| }{\sigma_{\max}^{J-1}(\bm{L}_{[1:J]}(\bm{\omega}))} \geq \frac{0.25^s  \nu_1^{3J/2}\nu_2^{J(J/2-1)}}{C_{\bar{\rho}}^{J-1}}.
			\end{align}
			Note that the right side of \eqref{eq:minsigmasub} when $\bm{\omega}=\bm{\omega}^*$ is $c_{\bar{\rho}}$, and Assumption \ref{assum:statn}(ii) ensures that  $c_{\bar{\rho}}>0$ is an absolute constant.

			Finally, similar to \eqref{eq:sigmax}, by the Courant–Fischer theorem, it can be shown that
			\begin{align*}
				\sigma_{\min}(\bm{L}_{\rm{stack}}(\bm{\omega}))\geq \min\left \{1,\sigma_{\min}(\bm{L}_{\rm{stack}}^{\ma}(\bm{\omega}))\right \}\geq  \min\left \{1, \sigma_{\min}(\bm{L}_{[1:J]}(\bm{\omega}))\right \},
			\end{align*}
			which, together with \eqref{eq:minsigmasub}, leads to a lower bound of 	$\sigma_{\min}(\bm{L}_{\rm{stack}}(\bm{\omega}))$. Applying this and  the upper bound in \eqref{eq:sigmax} to $\bm{\omega}=\bm{\omega}^*$ under Assumption \ref{assum:statn}(ii), we accomplish the proof of this lemma.
			

			\subsection{Proof of Lemma \ref{lemma:rsc}}\label{sec:proof_RSC}
			It suffices  to show that the event stated in this lemma holds uniformly over the intersection of $\bm{\Upsilon}$ and the sphere $\pazocal{S}(\delta):= \{\bm{\Delta} \in \mathbb{R}^{N\times N\times \infty} \mid\|\bm{\Delta}\|_{\Fr}=\delta\}$, for some radius $\delta>0$ to be chosen such that  $\bm{\Upsilon}\cap\pazocal{S}(\delta)\neq \emptyset$, since  the event will remain true if we multiply $\bm{\Delta}$ by any nonzero real number. 
			
			Recall that any $\bm{\Delta}\in\bm{\Upsilon}$ can be written as $\bm{\Delta}=\cm{A}-\cm{A}^*=\cm{G}\times_3\bm{L}(\bm{\omega})-\cm{A}^*:=\bm{\Delta}(\bm{\omega},\cm{G})$,
			for some $\cm{G}\in\mathbb{R}^{N\times N\times d}$ and $\bm{\omega}\in\bm{\Omega}$ dependent on $\bm{\Delta}$ with  $\|\bm{\omega} - \bm{\omega}^*\|_2 \leq 
			c_{\bm{\omega}}$.
			In addition, by  \eqref{eq:stackH} and \eqref{eq:Delta}, we can further write 
			\[
			\bm{\Delta}=\cm{G}_{\rm{stack}}\times_3\bm{L}_{\rm{stack}}(\bm{\omega}^*)+\stk(\bm{0}_{N\times N\times p}, \cm{R}).
			\]  
			
			Note that throughout the  proofs of Lemmas \ref{lemma:rsc}--\ref{lemma:init}, we will suppress the dependence of $\cm{G}_{\rm{stack}}$ and $\cm{R}$ on $\bm{\Delta}$.
			Thus,	
			\begin{equation*}
				\bm{\Delta}_{(1)}\bm{x}_t=\{\cm{G}_{\rm{stack}}\times_3\bm{L}_{\rm{stack}}(\bm{\omega}^*)\}_{(1)}\bm{x}_t+\cm{R}_{(1)}\bm{x}_{t-p}
				=(\cm{G}_{\rm{stack}})_{(1)} \bm{z}_t +\cm{R}_{(1)}\bm{x}_{t-p},
			\end{equation*}
			where 
			\begin{equation}\label{eq:ts-z}
				\bm{x}_t=(\bm{y}_{t-1}^\prime, \bm{y}_{t-2}^\prime, \dots)^\prime \quad\text{and}\quad	\bm{z}_t =\left \{\bm{L}_{\rm{stack}}^\prime(\bm{\omega}^*)\otimes \bm{I}_N \right \}\bm{x}_{t}.
			\end{equation}
			For simplicity, denote $\bm{X}=(\bm{x}_T, \dots, \bm{x}_1)$, $\bm{X}_{-p}=(\bm{x}_{T-p}, \dots, \bm{x}_{1-p})$ and $\bm{Z}=(\bm{z}_T,\dots, \bm{z}_1)$. Then $\bm{\Delta}_{(1)}\bm{X}=(\cm{G}_{\rm{stack}})_{(1)} \bm{Z} +\cm{R}_{(1)}\bm{X}_{-p}$, and hence
			$\|(\cm{G}_{\rm{stack}})_{(1)} \bm{Z}\|_{\Fr}-\|\cm{R}_{(1)}\bm{X}_{-p}\|_{\Fr} \leq \|\bm{\Delta}_{(1)}\bm{X}\|_{\Fr} \leq \|(\cm{G}_{\rm{stack}})_{(1)} \bm{Z}\|_{\Fr}+\|\cm{R}_{(1)}\bm{X}_{-p}\|_{\Fr}$,
			i.e.,
			\begin{equation}\label{eq:split1}
				\left (  \sum_{t=1}^{T}\|\bm{\Delta}_{(1)}\bm{x}_t\|_2^2 \right )^{1/2}\geq
				\left (\sum_{t=1}^{T}\|(\cm{G}_{\rm{stack}})_{(1)}\bm{z}_t\|_2^2 \right )^{1/2}-
				\left (\sum_{t=1}^{T}\|\cm{R}_{(1)}\bm{x}_{t-p}\|_2^2\right )^{1/2}
			\end{equation}
			and 
			\begin{equation}\label{eq:split2}
				\left (  \sum_{t=1}^{T}\|\bm{\Delta}_{(1)}\bm{x}_t\|_2^2 \right )^{1/2}\leq
				\left (\sum_{t=1}^{T}\|(\cm{G}_{\rm{stack}})_{(1)}\bm{z}_t\|_2^2 \right )^{1/2}+
				\left (\sum_{t=1}^{T}\|\cm{R}_{(1)}\bm{x}_{t-p}\|_2^2\right )^{1/2}.
			\end{equation}
			
			Now we restrict our attention to $\bm{\Delta}\in\bm{\Upsilon}\cap\pazocal{S}(\delta)$, where $\delta>0$ will be specified later. If $\bm{\Delta}\in\bm{\Upsilon}\cap\pazocal{S}(\delta)$, since $\|\bm{\Delta} \|_{\Fr}= \delta$, then it follows from Lemma \ref{lemma:delnorm} that
			\begin{equation}\label{eq:rsc-norm}
				\delta C_{\Delta}^{-1} \leq \delta_{\cmtt{G}} \leq \delta c_{\Delta}^{-1} 
				\quad\text{and} \quad
				\frac{\delta C_{\Delta}^{-1}}{\alpha}  \leq \delta_{\bm{\omega}}  \leq \frac{\delta c_{\Delta}^{-1}}{\alpha},
			\end{equation}
			where $\delta_{\bm{\omega}}=\|\bm{\omega} - \bm{\omega}^*\|_2$ and $\delta_{\cmtt{G}}=\|\cm{G}-\cm{G}^*\|_{\Fr}$.
			To guarantee that $\bm{\Upsilon}\cap\pazocal{S}(\delta)\neq \emptyset$, it is sufficient to choose $\delta>0$ such that 
			\begin{equation}\label{eq:condition1}
				\frac{\delta c_{\Delta}^{-1}}{\alpha} \leq  c_{\bm{\omega}}. 
			\end{equation}   
			Furthermore, by   \eqref{eq:Gstacknorm} and \eqref{eq:rsc-norm}, we can obtain the following bounds of $\|\cm{G}_{\rm{stack}}\|_{\Fr}$   by restricting the corresponding $\bm{\Delta}\in\bm{\Upsilon}\cap\pazocal{S}(\delta)$:
			\begin{align}\label{eq:Gnorm}
				\begin{split}
					\sup_{\bm{\Delta}\in\bm{\Upsilon}\cap\pazocal{S}(\delta)}	\|\cm{G}_{\rm{stack}}\|_{\Fr}&\leq	\delta c_{\Delta}^{-1} \left (1+\frac{\sqrt{2}}{\min_{1\leq k\leq s}\gamma_{k}^*}\right ) \asymp \delta,\\
					\inf_{\bm{\Delta}\in\bm{\Upsilon}\cap\pazocal{S}(\delta)}	\|\cm{G}_{\rm{stack}}\|_{\Fr}&\geq
					0.5 \delta C_{\Delta}^{-1} (1+ \sqrt{2}c_{\cmtt{G}}) \asymp \delta.
				\end{split}
			\end{align}               
			
			Next we establish the following union bounds that hold for all $\bm{\Delta}\in\bm{\Upsilon}\cap\pazocal{S}(\delta)$:  
			\begin{itemize}
				\item [(i)] If $T \gtrsim (\kappa_2 / \kappa_1)^2  d_{\pazocal{R}}\log(\kappa_2/\kappa_1)$, then
				\begin{equation*}
					\mathbb{P}\left \{\forall \bm{\Delta}\in \bm{\Upsilon}\cap\pazocal{S}(\delta): \frac{ c_{\cmtt{M}} \delta^2 \kappa_1}{8} \lesssim \frac{1}{T}\sum_{t=1}^{T}\|(\cm{G}_{\rm{stack}})_{(1)}\bm{z}_t\|_2^2 \lesssim 6C_{\cmtt{M}} \delta^2 \kappa_2\right \} \geq 1-2e^{-cd_{\pazocal{R}}\log(\kappa_2/\kappa_1)}. 
				\end{equation*}
				\item [(ii)] if $T\gtrsim N$, then
				\begin{equation*}
					\mathbb{P}\left \{ \sup_{\bm{\Delta}\in\bm{\Upsilon}\cap\pazocal{S}(\delta)}\frac{1}{T}\sum_{t=1}^{T}\|\cm{R}_{(1)}\bm{x}_{t-p}\|_2^2 \lesssim \delta^2  \delta_{\bm{\omega}}^2 \lambda_{\max}(\bm{\Sigma}_\varepsilon) \mu_{\max}(\bm{\Psi}_*)\right \}	\geq 1-3e^{-N\log9}.
				\end{equation*}
			\end{itemize}
			
			\noindent
			\textbf{Proof of (i):} Note that by \eqref{eq:Gnorm}, for every $\bm{\Delta}\in \bm{\Upsilon}\cap\pazocal{S}(\delta)$,  its corresponding $\cm{G}_{\rm{stack}}$ has a bounded Frobenius norm in the order of $\delta$.  This relationship between $\bm{\Delta}$ and $\cm{G}_{\rm{stack}}$ allows us to convert the problem of finding  union bounds over all $\bm{\Delta}\in \bm{\Upsilon}\cap\pazocal{S}(\delta)$ to that over all $\cm{G}_{\rm{stack}}$ with  Frobenius norm in the order of $\delta$.
			
			Moreover, note that for all $\bm{\Delta}\in \bm{\Upsilon}$, it holds $\cm{G}_{\rm{stack}}\in \bm{\Xi}$, where $\bm{\Xi}$ is defined as in \eqref{eq:Xi}, and hence
			\begin{equation}\label{eq:GstackXi}
				\frac{\cm{G}_{\rm{stack}}}{\|\cm{G}_{\rm{stack}}\|_{\Fr}}\in \bm{\Xi}_1=\{\cm{M} \in \bm{\Xi}\mid \|\cm{M}  \|_{\Fr} = 1\}.
			\end{equation}  
			Then, by taking $\cm{M}=\cm{G}_{\rm{stack}}/\|\cm{G}_{\rm{stack}}\|_{\Fr}$, it follows directly from  \eqref{eq:rscG1} in Lemma \ref{lemma:rscG1} that, if $T \gtrsim (\kappa_2 / \kappa_1)^2  d_{\pazocal{R}}\log(\kappa_2/\kappa_1)$, 
			\begin{align*}
				\mathbb{P} \left \{ \forall \bm{\Delta}\in \bm{\Upsilon}:	\frac{c_{\cmtt{M}}\kappa_1}{8}  \|\cm{G}_{\rm{stack}}\|_{\Fr}^2\leq \frac{1}{T}\sum_{t=1}^{T}\|(\cm{G}_{\rm{stack}})_{(1)}\bm{z}_t\|_2^2 \leq 6C_{\cmtt{M}}\kappa_2 \|\cm{G}_{\rm{stack}}\|_{\Fr}^2 \right \} \geq 1-2e^{-cd_{\pazocal{R}}\log(\kappa_2/\kappa_1)}.
			\end{align*}
			Thus, by restricting our attention to $\bm{\Delta}\in \bm{\Upsilon}\cap\pazocal{S}(\delta)$ and combining \eqref{eq:Gnorm} with the above result, we can obtain (i).
			
			\bigskip\noindent
			\textbf{Proof of (ii):}  First note that $\cm{R}_{(1)}\bm{x}_{t-p}=\sum_{j=1}^{\infty}\bm{R}_j\bm{y}_{t-p-j}$, where  $\cm{R}=\stk(\bm{R}_1, \bm{R}_2, \dots)$.
			For all $\bm{\Delta}\in \bm{\Upsilon}\cap\pazocal{S}(\delta)$, it follows from \eqref{eq:rsc-norm} and the choice of $\delta$ in \eqref{eq:condition1} that
			\begin{equation}\label{eq:rsc-norm1}
				\delta_{\cmtt{G}} \leq \delta c_{\Delta}^{-1}, \quad
				\alpha \delta_{\bm{\omega}}  \leq \delta c_{\Delta}^{-1},  
				\quad\text{and} \quad \delta_{\bm{\omega}} \leq c_{\bm{\omega}}.
			\end{equation}
			Combining the above bounds with \eqref{eq:Rnorm1}, we have
			\begin{equation} \label{eq:Rnorm}
				\|\bm{R}_j\|_{\Fr} \leq \|\bm{R}_{1j}\|_{\Fr} + \|\bm{R}_{2j}\|_{\Fr} + \|\bm{R}_{3j}\|_{\Fr} \leq   \delta  C_{\cmtt{R}}  \bar{\rho}^j \delta_{\bm{\omega}},
			\end{equation}
			where
			$C_{\cmtt{R}} = C_L c_{\Delta}^{-1} (\sqrt{2}+\sqrt{2}c_{\bm{\omega}}/2 + 1)\asymp1$. Then
			\begin{align}\label{eq:Rx1}
				\begin{split}
					\frac{1}{T}\sum_{t=1}^{T}\|\cm{R}_{(1)}\bm{x}_{t-p}\|_2^2&=\frac{1}{T}\sum_{t=1}^{T}\sum_{i=1}^{\infty}\sum_{j=1}^{\infty}\bm{y}_{t-p-i}^\prime \bm{R}_i^\prime\bm{R}_j\bm{y}_{t-p-j}\\
					&=\frac{1}{T}\sum_{i=1}^{\infty}\sum_{j=1}^{\infty}\left( \sum_{t=1}^{T}\langle \bm{R}_i\bm{y}_{t-p-i},\bm{R}_j\bm{y}_{t-p-j}\rangle \right )\\
					&\leq \sum_{i=1}^{\infty}\sum_{j=1}^{\infty}\left (\frac{1}{T}\sum_{t=1}^{T}\|\bm{R}_j\bm{y}_{t-p-i}\|_2^2\right )^{1/2} \left (\frac{1}{T}\sum_{t=1}^{T}\|\bm{R}_j\bm{y}_{t-p-j}\|_2^2\right )^{1/2}\\
					&=\left \{\sum_{j=1}^{\infty} \left (\frac{1}{T}\sum_{t=1}^{T}\|\bm{R}_j\bm{y}_{t-p-j}\|_2^2\right )^{1/2}\right \}^2.
				\end{split}
			\end{align}
			In addition, we can show that
			\[
			\frac{1}{T}\sum_{t=1}^{T}\|\bm{R}_j\bm{y}_{t-p-j}\|_2^2   
			=\trace \left\{\bm{R}_j\left( \frac{1}{T} \sum_{t=1}^{T}\bm{y}_{t-p-j}\bm{y}_{t-p-j}^\prime \right) \bm{R}_j^\prime\right\} \leq \|\bm{R}_j\|_{\Fr}^2 \left\| \frac{1}{T} \sum_{t=1}^{T}\bm{y}_{t-p-j}\bm{y}_{t-p-j}^\prime \right\|_{\op}.
			\]
			As a result,
			\begin{equation}\label{eq:Rx}
				\frac{1}{T}\sum_{t=1}^{T}\|\cm{R}_{(1)}\bm{x}_{t-p}\|_2^2 \leq \left (\sum_{j=1}^{\infty} \|\bm{R}_j\|_{\Fr} \left\| \frac{1}{T} \sum_{t=1}^{T}\bm{y}_{t-p-j}\bm{y}_{t-p-j}^\prime \right\|_{\op}^{1/2}\right )^2.
			\end{equation}
			Then, by \eqref{eq:Rnorm}, \eqref{eq:Rx} and \eqref{eq:Rjy1} in Lemma \ref{lemma:Rjy}, if $T \gtrsim N$, with probability at least $1-3e^{-N\log9}$, we have 
			\begin{align*}
				\sup_{\bm{\Delta}\in\bm{\Upsilon}\cap\pazocal{S}(\delta)} \frac{1}{T}\sum_{t=1}^{T}\|\cm{R}_{(1)}\bm{x}_{t-p}\|_2^2 
				\lesssim
				\delta^2 \delta_{\bm{\omega}}^2  \lambda_{\max}(\bm{\Sigma}_\varepsilon)\mu_{\max}(\bm{\Psi}_*)
				\left (\sum_{j=1}^{\infty}  \bar{\rho}^j  \sqrt{j \sigma^2+1} \right )^2
				\lesssim \delta^2  \delta_{\bm{\omega}}^2 \lambda_{\max}(\bm{\Sigma}_\varepsilon)\mu_{\max}(\bm{\Psi}_*),
			\end{align*}
			where the second inequality follows from the fact that $\sum_{j=1}^{\infty}  \bar{\rho}^j  \sqrt{j \sigma^2+1}\asymp1$. Thus (ii) is verified.
			
			Finally, it remains to combine \eqref{eq:split1} and \eqref{eq:split2} with the results in (i) and (ii).
			To ensure that the lower bound of $T^{-1}\sum_{t=1}^{T}\|(\cm{G}_{\rm{stack}})_{(1)}\bm{z}_t\|_2^2$ in (i) dominates the upper bound of $T^{-1}\sum_{t=1}^{T}\|\cm{R}_{(1)}\bm{x}_{t-p}\|_2^2$  in (ii), we only need $\delta_{\bm{\omega}} \lesssim \sqrt{c_{\cmtt{M}}  \kappa_1 /\{\lambda_{\max}(\bm{\Sigma}_\varepsilon)\mu_{\max}(\bm{\Psi}_*)\}}$. In view of \eqref{eq:rsc-norm}, this can be guaranteed by  choosing $\delta>0$ such that
			\begin{equation}\label{eq:condition2}
				\frac{\delta c_{\Delta}^{-1}}{\alpha} \lesssim  \sqrt{\frac{c_{\cmtt{M}}  \kappa_1}{\lambda_{\max}(\bm{\Sigma}_\varepsilon) \mu_{\max}(\bm{\Psi}_*)}}.
			\end{equation}
			Combining the above condition with \eqref{eq:condition1}, we can obtain the desirable $\delta$. Then we have
			\begin{equation*}
				\mathbb{P}\left \{\forall \bm{\Delta}\in \bm{\Upsilon}\cap\pazocal{S}(\delta): c_{\cmtt{M}} \delta^2 \kappa_1 \lesssim \frac{1}{T}\sum_{t=1}^{T}\|\bm{\Delta}_{(1)}\bm{x}_t\|_2^2   \lesssim C_{\cmtt{M}} \delta^2 \kappa_2\right \} \geq 1-2e^{-cd_{\pazocal{R}}\log(\kappa_2/\kappa_1)}-2e^{-N\log9}.
			\end{equation*}
			Since $c_{\cmtt{M}}$ and $C_{\cmtt{M}}$  are absolute constants, and  $\|\bm{\Delta} \|_{\Fr}= \delta$ for all  $\bm{\Delta}\in\bm{\Upsilon}\cap\pazocal{S}(\delta)$, the proof of this lemma is complete.
			
			\subsection{Proof of Lemma \ref{lemma:dev}}
			Similar to  Lemma \ref{lemma:rsc}, it suffices  to prove that the result of Lemma \ref{lemma:dev} holds uniformly over the intersection of $\bm{\Upsilon}$ and the sphere $\pazocal{S}(\delta):= \{\bm{\Delta} \in \mathbb{R}^{N\times N\times \infty} \mid\|\bm{\Delta}\|_{\Fr}=\delta\}$, 
			where the radius $\delta>0$ is chosen to satisfy  condition \eqref{eq:condition1} to ensure that  $\bm{\Upsilon}\cap\pazocal{S}(\delta)\neq \emptyset$. Specifically, we will prove that 
			\begin{equation}\label{eq:devbd}
				\mathbb{P}\left \{
				\sup_{\bm{\Delta}\in\bm{\Upsilon}\cap\pazocal{S}(\delta)} \frac{1}{T}\left |\sum_{t=1}^{T}\langle \bm{\Delta}_{(1)}\bm{x}_t, \bm{\varepsilon}_t \rangle \right | \lesssim  \delta  \sqrt{\frac{\kappa_2   \lambda_{\max}(\bm{\Sigma}_{\varepsilon})d_{\pazocal{R}}}{T}}  \right \} \geq 1-2e^{-cd_{\pazocal{R}}\log(\kappa_2/\kappa_1)}-5e^{-cN}.
			\end{equation} 
			
			From the proof of Lemma \ref{lemma:rsc},
			for every $\bm{\Delta}=\bm{\Delta}(\bm{\omega},\cm{G})\in\bm{\Upsilon}\cap\pazocal{S}(\delta)$, the corresponding $\cm{G}_{\rm{stack}}$ and $\cm{R}=\stk(\bm{R}_1, \bm{R}_2, \dots)$ satisfy 
			\begin{equation} \label{eq:GRnorm}
				\|\cm{G}_{\rm{stack}}\|_{\Fr} \lesssim\delta,\hspace{5mm} \frac{\cm{G}_{\rm{stack}}}{\|\cm{G}_{\rm{stack}}\|_{\Fr}}\in \bm{\Xi}_1,
				\hspace{5mm}\text{and}\hspace{5mm}
				\|\bm{R}_j\|_{\Fr}  \leq   \delta  c_{\bm{\omega}}  C_{\cmtt{R}}  \bar{\rho}^j \quad\text{for all } j\geq1; 
			\end{equation}
			see  \eqref{eq:condition1}--\eqref{eq:Rnorm}. 
			Moreover, by the definition of $\bm{R}_j$ in \eqref{eq:Rjs}, it can be verified that, for all $j \geq 1$, 
			\begin{align*}
				\colsp(\bm{R}_j) \subseteq \colsp(\{\cm{G}_{(1)}, \cm{G}_{(1)}^*\}) \hspace{5mm}\text{and}\hspace{5mm} \colsp(\bm{R}_j^\prime) \subseteq \colsp(\{\cm{G}_{(2)}, \cm{G}_{(2)}^{*}\}),
			\end{align*}
			where $\colsp(\bm{U})$ represents the column space of a matrix $\bm{U}$, and	$\colsp(\{\bm{U}, \bm{V}\})$ represents the  span  of all column vectors of the matrices $\bm{U}$ and $\bm{V}$. Since for any  $\bm{\Delta}=\bm{\Delta}(\bm{\omega},\cm{G})\in\bm{\Upsilon}$ it holds
			$\textrm{rank}(\cm{G}_{(i)})\leq \pazocal{R}_i$ and  $\textrm{rank}(\cm{G}_{(i)}^*)\leq \pazocal{R}_i$ for $i=1,2$, we then have 
			\begin{equation}\label{eq:Rjlr}
				\textrm{rank}(\bm{R}_j)\leq 2 (\pazocal{R}_1 \wedge \pazocal{R}_2):=2\pazocal{R}_{\wedge}, \quad\text{for all } j\geq1.
			\end{equation}
			Note that
			$\bm{\Delta}_{(1)}\bm{x}_t=(\cm{G}_{\rm{stack}})_{(1)} \bm{z}_t +\cm{R}_{(1)}\bm{x}_{t-p}$ and $\cm{R}_{(1)}\bm{x}_{t-p}=\sum_{j=1}^{\infty}\bm{R}_j\bm{y}_{t-p-j}$. As a result, by \eqref{eq:GRnorm} and \eqref{eq:Rjlr}, we have
			\begin{align*}
				\sup_{\bm{\Delta}\in\bm{\Upsilon}\cap\pazocal{S}(\delta)} \frac{1}{T}\left |\sum_{t=1}^{T}\langle \bm{\Delta}_{(1)}\bm{x}_t, \bm{\varepsilon}_t \rangle \right | 
				&\leq \sup_{\bm{\Delta}\in\bm{\Upsilon}\cap\pazocal{S}(\delta)} \frac{1}{T}\left |\sum_{t=1}^{T}\langle (\cm{G}_{\rm{stack}})_{(1)} \bm{z}_t, \bm{\varepsilon}_t \rangle \right | + \sum_{j=1}^{\infty} \sup_{\bm{\Delta}\in\bm{\Upsilon}\cap\pazocal{S}(\delta)} \frac{1}{T}\left |\sum_{t=1}^{T}\langle \bm{R}_j\bm{y}_{t-p-j}, \bm{\varepsilon}_t \rangle \right |\\
				&\lesssim \delta \sup_{\cmtt{M} \in \bm{\Xi}_1 } \frac{1}{T}\sum_{t=1}^{T}\langle \cm{M}_{(1)} \bm{z}_t, \bm{\varepsilon}_t \rangle + \delta  \sum_{j=1}^{\infty}  \bar{\rho}^j  \sup_{\bm{M} \in \bm{\Pi}(2\pazocal{R}_{\wedge})}\frac{1}{T}\sum_{t=1}^{T}\langle \bm{M} \bm{y}_{t-p-j}, \bm{\varepsilon}_t \rangle.
			\end{align*}
			Applying \eqref{eq:devG} and \eqref{eq:Rjy2} in Lemmas \ref{lemma:rscG1} and \ref{lemma:Rjy}, respectively, we can show that
			\begin{align*}
				\sup_{\bm{\Delta}\in\bm{\Upsilon}\cap\pazocal{S}(\delta)} \frac{1}{T}\left |\sum_{t=1}^{T}\langle \bm{\Delta}_{(1)}\bm{x}_t, \bm{\varepsilon}_t \rangle \right | 
				& \lesssim \delta  \sqrt{\frac{\kappa_2   \lambda_{\max}(\bm{\Sigma}_{\varepsilon})d_{\pazocal{R}}}{T}} + \delta   \lambda_{\max}(\bm{\Sigma}_\varepsilon)\sqrt{\frac{2 \mu_{\max}(\bm{\Psi}_*) N \pazocal{R}_{\wedge}}{T}} \sum_{j=1}^{\infty}  \bar{\rho}^j (2j \sigma^2+1)\\
				& \lesssim \delta  \sqrt{\frac{\kappa_2   \lambda_{\max}(\bm{\Sigma}_{\varepsilon})d_{\pazocal{R}}}{T}} 
			\end{align*}
			with probability at least $1- e^{-c d_{\pazocal{R}}}- 2e^{-cd_{\pazocal{R}}\log(\kappa_2/\kappa_1)}-4e^{-N\log9}$, where the second inequality follows from the fact that $d_{\pazocal{R}}\gtrsim N \pazocal{R}_{\wedge}$, $\kappa_2\gtrsim \lambda_{\max}(\bm{\Sigma}_\varepsilon)\mu_{\max}(\bm{\Psi}_*)$ and $\sum_{j=1}^{\infty}  \bar{\rho}^j (2j \sigma^2+1)\asymp1$. Since $ e^{-c d_{\pazocal{R}}} \leq e^{-cN}$, \eqref{eq:devbd} holds and the proof of this lemma is complete.

			\subsection{Proof of Lemma \ref{lemma:init}}
			Similar to  Lemmas \ref{lemma:rsc} and \ref{lemma:rsc}, it suffices  to prove that the result of Lemma \ref{lemma:init} holds uniformly over the intersection of $\bm{\Upsilon}$ and the sphere $\pazocal{S}(\delta):= \{\bm{\Delta} \in \mathbb{R}^{N\times N\times \infty} \mid\|\bm{\Delta}\|_{\Fr}=\delta\}$, 
			where the radius $\delta>0$ is chosen to satisfy  condition \eqref{eq:condition1} to ensure that  $\bm{\Upsilon}\cap\pazocal{S}(\delta)\neq \emptyset$.	
			
			Recall from the proof of Lemma \ref{lemma:rsc} that for every $\bm{\Delta}=\bm{\Delta}(\bm{\omega},\cm{G})\in\bm{\Upsilon}\cap\pazocal{S}(\delta)$,
			\begin{equation*}
				\delta_{\cmtt{G}} \leq \delta c_{\Delta}^{-1}, \quad	
				\alpha \delta_{\bm{\omega}}  \leq \delta c_{\Delta}^{-1}, \quad\text{and} \quad \|\bm{R}_j\|_{\Fr} \leq   \delta c_{\bm{\omega}} C_{\cmtt{R}}  \bar{\rho}^j,
			\end{equation*}
			where $\delta_{\bm{\omega}}=\|\bm{\omega} - \bm{\omega}^*\|_2$ and $\delta_{\cmtt{G}}=\|\cm{G}-\cm{G}^*\|_{\Fr}$; see \eqref{eq:rsc-norm1} and \eqref{eq:Rnorm}. Combining this with  \eqref{eq:delta}, \eqref{eq:Hj},  Lemma \ref{cor1}(i) and Assumption \ref{assum:statn}(iii), we can show that
			\[
			\|\bm{H}_j\|_{\Fr} \leq \|\bm{\ell}^{(j)}(\bm{\omega}^*)\|_2 \delta_{\cmtt{G}} + \|\nabla\bm{\ell}^{(j)}(\bm{\omega}^*)\|_2 \alpha \delta_{\bm{\omega}}\leq   \delta c_{\Delta}^{-1}(1+C_L)\sqrt{r+2s} \bar{\rho}^j,
			\]
			where $\bm{\ell}^{(j)}(\bm{\omega}^*)$ represents the $j$th row of the matrix $\bm{L}^{\ma}(\bm{\omega})=\left ( \bm{\ell}^{I}(\lambda_1), \dots, \bm{\ell}^{I}(\lambda_r),  \bm{\ell}^{II}(\bm{\eta}_1), \dots, \bm{\ell}^{II}(\bm{\eta}_s)\right )$, and further that
			\[
			\|\bm{\Delta}_{j+p}\|_{\Fr} \leq \|\bm{H}_j\|_{\Fr} + \|\bm{R}_j\|_{\Fr} \leq \delta  \left \{c_{\Delta}^{-1}(1+C_L)\sqrt{r+2s}+c_{\bm{\omega}} C_{\cmtt{R}} \right \} \bar{\rho}^{j}, \quad j\geq 1.
			\]
			Note that $\|\bm{\Delta}_j\|_{\Fr} = \|\bm{G}_j - \bm{G}_j^*\|_{\Fr} \leq \delta c_{\Delta}^{-1}$ for $1\leq j\leq p$. Then, we have 
			\begin{align}\label{eq:delta-decay}
				\|\bm{\Delta}_j\|_{\op} \leq \|\bm{\Delta}_j\|_{\Fr} \leq \delta  C_{1} \bar{\rho}^j, \quad \text{for all }j \geq 1, 
			\end{align}
			where  $C_1=\left \{c_{\Delta}^{-1}(1+C_L)\sqrt{r+2s}+c_{\bm{\omega}} C_{\cmtt{R}} \right \} \bar{\rho}^{-p} \asymp 1$.
			
			In addition, by Lemma \ref{lemma:Wcov}, 
			\begin{align*}
				\mathbb{E}(\|\bm{y}_t\|_2^2)=\trace(\bm{\Sigma}_y) \leq N\lambda_{\max}(\bm{\Sigma}_y)\leq N \lambda_{\max}(\bm{\Sigma}_\varepsilon)\mu_{\max}(\bm{\Psi}_*).
			\end{align*}
			Combining this with Lemma \ref{cor1}(ii) and \eqref{eq:delta-decay}, for all $j\geq 1$, we have 
			\begin{equation}\label{eq:init1}
				\mathbb{E}(\|\bm{A}_j^*\bm{y}_{t-j}\|_2) \leq \left \{\mathbb{E}(\|\bm{A}_j^*\bm{y}_{t-j}\|_2^2) \right \}^{1/2}\leq C_* \bar{\rho}^{j} \sqrt{\lambda_{\max} (\bm{\Sigma}_\varepsilon) \mu_{\max}(\bm{\Psi}_*) N} 
			\end{equation}
			and
			\begin{align}\label{eq:init2}
				\mathbb{E} \left (\sup_{\small{\bm{\Delta}\in \bm{\Upsilon}\cap\pazocal{S}(\delta)}} \|\bm{\Delta}_j\bm{y}_{t-j}\|_2 \right ) \leq \left \{\mathbb{E} \left (\sup_{\small{\bm{\Delta}\in \bm{\Upsilon}\cap\pazocal{S}(\delta)}} \|\bm{\Delta}_j\bm{y}_{t-j}\|_2^2 \right )\right \}^{1/2} \leq \delta  C_{1}\bar{\rho}^{j} \sqrt{\lambda_{\max} (\bm{\Sigma}_\varepsilon) \mu_{\max}(\bm{\Psi}_*) N}.
			\end{align}
			
			By the Cauchy-Schwarz inequality and \eqref{eq:init2},
			\[
			\mathbb{E}\left \{\sup_{\small{\bm{\Delta}\in \bm{\Upsilon}\cap\pazocal{S}(\delta)}} |S_1(\bm{\Delta})|\right \} \leq \frac{2}{T}\sum_{t=1}^{T}\sum_{j=1}^{\infty}\sum_{k=t}^{\infty} \delta ^2  C_{1}^2 \bar{\rho}^{j+k} \lambda_{\max} (\bm{\Sigma}_\varepsilon) \mu_{\max}(\bm{\Psi}_*)N \leq \frac{\delta ^2C_{2}\kappa_2 N}{T},
			\]
			where $C_{2} = 2C_{1}^2\bar{\rho}^2/(1-\bar{\rho})^3 \asymp1$. Similarly, by \eqref{eq:init1} and \eqref{eq:init2}, 
			\[
			\mathbb{E} \left \{\sup_{\small{\bm{\Delta}\in \bm{\Upsilon}\cap\pazocal{S}(\delta)}} |S_2(\bm{\Delta})| \right \} \leq \frac{2}{T}\sum_{t=1}^{T} \sum_{j=t}^{\infty} \sum_{k=1}^{t-1} \delta C_* C_1 \bar{\rho}^{j+k} \lambda_{\max} (\bm{\Sigma}_\varepsilon) \mu_{\max}(\bm{\Psi}_*) N 
			\leq\frac{ \delta C_{3}\kappa_2 N}{T},
			\]
			where $C_3= 2 C_*C_{1}\bar{\rho}^2/(1-\bar{\rho})^3 \asymp1$. Moreover, note that $\mathbb{E}(\|\bm{\varepsilon}_t\|_2) \leq \sqrt{\mathbb{E}(\|\bm{\varepsilon}_t\|_2^2)} \leq \sqrt{\lambda_{\max}(\bm{\Sigma}_\varepsilon) N}$. Then by \eqref{eq:init2} and a method similar to the above, 
			\[
			\mathbb{E}\left \{ \sup_{\small{\bm{\Delta}\in \bm{\Upsilon}\cap\pazocal{S}(\delta)}} |S_3(\bm{\Delta})| \right \} \leq \frac{2}{T}\sum_{t=1}^{T} \sum_{j=t}^{\infty} \delta   C_{1} \bar{\rho}^{j} \lambda_{\max} (\bm{\Sigma}_\varepsilon) \sqrt{\mu_{\max}(\bm{\Psi}_*)} N \leq \frac{\delta C_{4}  \sqrt{\kappa_2 \lambda_{\max} (\bm{\Sigma}_\varepsilon)} N}{T},
			\]
			where  $C_{4} = 2C_{1}\bar{\rho}/(1-\bar{\rho})^2 \asymp1$.
			By Markov's inequality, we can show that 
			\begin{equation}\label{eq:S1Delta}
				\mathbb{P}\left \{\sup_{\small{\bm{\Delta}\in \bm{\Upsilon}\cap\pazocal{S}(\delta)}} |S_1(\bm{\Delta})| \geq \delta^2 C_2 \kappa_1 \right \} \leq \frac{\mathbb{E}\{\sup_{\small{\bm{\Delta}\in \bm{\Upsilon}\cap\pazocal{S}(\delta)}} |S_1(\bm{\Delta})|\}}{\delta^2 C_2 \kappa_1 } \leq \frac{\kappa_2 N}{\kappa_1 T}\leq  \sqrt{\frac{N}{ (\pazocal{R}_1+ \pazocal{R}_2) T}},
			\end{equation}
			\[
			\mathbb{P}\left \{\sup_{\small{\bm{\Delta}\in \bm{\Upsilon}\cap\pazocal{S}(\delta)}} |S_2(\bm{\Delta})| \geq \delta C_3 \sqrt{\frac{\kappa_2 \lambda_{\max}(\bm{\Sigma}_{\varepsilon}) d_{\pazocal{R}}}{T}}  \right \} \leq  \sqrt{\frac{\kappa_2 N^2}{\lambda_{\max}(\bm{\Sigma}_{\varepsilon}) T d_{\pazocal{R}}}} \leq \sqrt{\frac{\kappa_2 N}{\lambda_{\max}(\bm{\Sigma}_{\varepsilon})  (\pazocal{R}_1+ \pazocal{R}_2) T}},
			\]	
			and 
			\[
			\mathbb{P}\left \{\sup_{\small{\bm{\Delta}\in \bm{\Upsilon}\cap\pazocal{S}(\delta)}} |S_3(\bm{\Delta})| \geq \delta C_4 \sqrt{\frac{\kappa_2 \lambda_{\max}(\bm{\Sigma}_{\varepsilon}) d_{\pazocal{R}}}{T}}  \right \} \leq  \sqrt{\frac{ N^2}{T d_{\pazocal{R}}}} \leq \sqrt{\frac{N}{ (\pazocal{R}_1+ \pazocal{R}_2) T}},
			\]
			where the last inequality in \eqref{eq:S1Delta} uses the condition that  $T \gtrsim (\kappa_2 / \kappa_1)  d_{\pazocal{R}}$, and $d_{\pazocal{R}}=
			\pazocal{R}_1\pazocal{R}_2 d  + (\pazocal{R}_1+ \pazocal{R}_2)N$. 
			Then the sum of the above three tail probabilities is $(2+\sqrt{\kappa_2 /\lambda_{\max}(\bm{\Sigma}_{\varepsilon})}) \sqrt{N/\{ (\pazocal{R}_1+ \pazocal{R}_2) T\}}$.
			Thus, we have proved the result of this lemma for all $\bm{\Delta}\in \bm{\Upsilon}\cap\pazocal{S}(\delta)$. Replacing $\delta$ by $\|\bm{\Delta}\|_{\Fr}$ in the above inequalities, we accomplish the proof for all $\bm{\Delta}\in \bm{\Upsilon}$.

			
			
			\subsection{Auxiliary lemmas for the proofs of Lemmas \ref{lemma:rsc} and \ref{lemma:dev}}
			
			Lemmas \ref{lemma:rsc} and \ref{lemma:dev}  are established based on the auxiliary results below. In particular, Lemmas \ref{lemma:rscG1} and \ref{lemma:Rjy} are both used directly in the proofs of Lemmas \ref{lemma:rsc} and \ref{lemma:dev}. The  covering and discretization results in Lemma \ref{lemma:epsilon-net} play an essential role in the proof of  Lemma \ref{lemma:rscG1}.  
			The last three lemmas are useful results for the proofs of both Lemmas \ref{lemma:rscG1} and \ref{lemma:Rjy}:
			Lemmas \ref{lemma:martgl} and \ref{lemma:hansonw} give high-probabilitity concentration and Hanson-Wright inequalities for stationary time series, respectively; and Lemma \ref{lemma:Wcov} provides deterministic bounds for covariance matrices of stationary time series. As in \eqref{eq:GstackXi}, let  $\bm{\Xi}_1=\{\cm{M} \in \bm{\Xi}\mid \|\cm{M}  \|_{\Fr} = 1\}$. Note that the following definition is used in Lemma \ref{lemma:epsilon-net}.
			
			\begin{definition}[Generalized $\epsilon$-net of $\bm{\Xi}_1$]\label{def}
				For any $\epsilon>0$, we say that
				$\bar{\bm{\Xi}}(\epsilon)$ is a generalized $\epsilon$-net of $\bm{\Xi}_1$ if $\bar{\bm{\Xi}}(\epsilon)\subset\bm{\Xi}$, and for any $\cm{M}(\bm{a}, \cm{B}) \in \bm{\Xi}_1$, there exists $\cm{M}(\bar{\bm{a}}, \bar{\cm{B}})\in \bar{\bm{\Xi}}(\epsilon)$ such that $\| \cm{M}(\bm{a}, \cm{B})- \cm{M}(\bar{\bm{a}}, \bar{\cm{B}})\|_{\Fr} \leq \epsilon$. However,  $\bar{\bm{\Xi}}(\epsilon)$ is not required to be a subset of $\bm{\Xi}_1$; that is, $\bar{\bm{\Xi}}(\epsilon)$ may not be an $\epsilon$-net of $\bm{\Xi}_1$.
			\end{definition}
			
			
			\begin{lemma}\label{lemma:rscG1} 
				Suppose that Assumptions  \ref{assum:statn} and \ref{assum:error} hold and  $T\gtrsim(\kappa_2/\kappa_1)^2d_{\pazocal{R}}\log(\kappa_2/\kappa_1)$. Let $\bm{z}_t =\left \{\bm{L}_{\rm{stack}}^\prime(\bm{\omega}^*)\otimes \bm{I}_N \right \}\bm{x}_{t}$ be  defined as in  \eqref{eq:ts-z}. Then
				\begin{equation}\label{eq:rscG1}
					\mathbb{P}\left (  \frac{c_{\cmtt{M}}\kappa_1}{8} \leq \inf_{\cmtt{M}\in\bm{\Xi}_1}\frac{1}{T}\sum_{t=1}^{T}\|\cm{M}_{(1)}\bm{z}_t\|_2^2  \leq 
					\sup_{\cmtt{M}\in\bm{\Xi}_1}\frac{1}{T}\sum_{t=1}^{T}\|\cm{M}_{(1)}\bm{z}_t\|_2^2  \leq 6C_{\cmtt{M}}\kappa_2 \right ) \geq 1- 2e^{-cd_{\pazocal{R}}\log(\kappa_2/\kappa_1)}.
				\end{equation}
				and
				\begin{equation}\label{eq:devG}
					\mathbb{P}\left \{ \sup_{\small{ \cmt{M} \in \bm{\Xi}_1 }} \frac{1}{T}\sum_{t=1}^{T}\langle \cm{M}_{(1)} \bm{z}_t, \bm{\varepsilon}_t \rangle  \lesssim  \sqrt{\frac{\kappa_2   \lambda_{\max}(\bm{\Sigma}_{\varepsilon})d_{\pazocal{R}}}{T}} \right \} \geq 1- e^{-c d_{\pazocal{R}}}- 2e^{-cd_{\pazocal{R}}\log(\kappa_2/\kappa_1)}.
				\end{equation}
			\end{lemma}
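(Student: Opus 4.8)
The plan is to establish both inequalities by combining deterministic control of the population quantities with a covering argument over the nonconvex index set $\bm{\Xi}_1$. First I would compute, for a fixed $\cm{M}\in\bm{\Xi}_1$, the mean $\mathbb{E}\|\cm{M}_{(1)}\bm{z}_t\|_2^2 = \trace(\cm{M}_{(1)}\bm{\Sigma}_z\cm{M}_{(1)}')$, where $\bm{\Sigma}_z=\mathbb{E}(\bm{z}_t\bm{z}_t')$. Writing $\bm{z}_t = \{\bm{L}_{\rm{stack}}'(\bm{\omega}^*)\otimes\bm{I}_N\}\bm{x}_t$ and recalling that $\bm{x}_t$ is the lagged stationary VMA($\infty$) process, Lemma \ref{lemma:Wcov} bounds the spectrum of the block-Toeplitz covariance of $\bm{x}_t$ between $\lambda_{\min}(\bm{\Sigma}_\varepsilon)\mu_{\min}(\bm{\Psi}_*)$ and $\lambda_{\max}(\bm{\Sigma}_\varepsilon)\mu_{\max}(\bm{\Psi}_*)$, while Lemma \ref{lemma:fullrank} supplies $\sigma_{\min,L}^2\asymp\min\{1,c_{\bar{\rho}}^2\}$ and $\sigma_{\max,L}^2\asymp\max\{1,C_{\bar{\rho}}^2\}$ for the Gram matrix of $\bm{L}_{\rm{stack}}(\bm{\omega}^*)$. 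Since $\bm{\Sigma}_z=(\bm{L}_{\rm{stack}}'\otimes\bm{I}_N)\bm{\Sigma}_x(\bm{L}_{\rm{stack}}\otimes\bm{I}_N)$, chaining these gives $c_{\cmtt{M}}\kappa_1\le\mathbb{E}\|\cm{M}_{(1)}\bm{z}_t\|_2^2\le C_{\cmtt{M}}\kappa_2$ for all $\cm{M}\in\bm{\Xi}_1$, with absolute constants $c_{\cmtt{M}},C_{\cmtt{M}}$, using $\|\cm{M}_{(1)}\|_{\Fr}=\|\cm{M}\|_{\Fr}=1$.

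Next I would establish pointwise concentration for each fixed $\cm{M}$. Because $\bm{z}_t$ is a finite-dimensional linear functional of the sub-Gaussian innovation sequence, $T^{-1}\sum_t\|\cm{M}_{(1)}\bm{z}_t\|_2^2$ is a quadratic form in the innovations, and the Hanson--Wright inequality for stationary time series (Lemma \ref{lemma:hansonw}) yields exponential concentration about its mean with fluctuations of order $\kappa_2\sqrt{u/T}$ at confidence $1-e^{-u}$. For the deviation bound \eqref{eq:devG}, I would observe that $\bm{\varepsilon}_t$ is independent of $\bm{z}_t$, which is measurable with respect to the past, so $\sum_t\langle\cm{M}_{(1)}\bm{z}_t,\bm{\varepsilon}_t\rangle$ is a martingale with conditional variance $\lesssim T\kappa_2\lambda_{\max}(\bm{\Sigma}_\varepsilon)$; the martingale concentration of Lemma \ref{lemma:martgl} then controls $T^{-1}\sum_t\langle\cm{M}_{(1)}\bm{z}_t,\bm{\varepsilon}_t\rangle$ at the scale $\sqrt{\kappa_2\lambda_{\max}(\bm{\Sigma}_\varepsilon)u/T}$.

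To pass from pointwise to uniform control I would invoke the generalized $\epsilon$-net $\bar{\bm{\Xi}}(\epsilon)$ of Definition \ref{def}, whose cardinality is bounded by Lemma \ref{lemma:epsilon-net} as $\log|\bar{\bm{\Xi}}(\epsilon)|\lesssim d_{\pazocal{R}}\log(1/\epsilon)$, reflecting that $\bm{\Xi}_1$ is the image of a bilinear map of the $(r+2s)$-vector $\bm{a}$ and the low-Tucker-rank core $\cm{B}\in\bm{\Gamma}(2\pazocal{R}_1,2\pazocal{R}_2)$, of effective dimension $d_{\pazocal{R}}$. A union bound over the net combined with the pointwise tails, followed by a one-step discretization that controls the residual quadratic (resp.\ bilinear) form via its smoothness constant $\asymp\kappa_2$, yields the uniform statements. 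Choosing the net radius $\epsilon\asymp\kappa_1/\kappa_2$ so that the discretization error is a small fraction of the curvature $\kappa_1$ produces the constants $c_{\cmtt{M}}/8$ and $6C_{\cmtt{M}}$, forces $\log(1/\epsilon)\asymp\log(\kappa_2/\kappa_1)$, and thereby accounts for both the sample-size requirement $T\gtrsim(\kappa_2/\kappa_1)^2 d_{\pazocal{R}}\log(\kappa_2/\kappa_1)$ and the exponent $d_{\pazocal{R}}\log(\kappa_2/\kappa_1)$ in the failure probability.

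The main obstacle will be this uniformization over $\bm{\Xi}_1$. Since the set is nonconvex and carries the product structure $\cm{M}(\bm{a},\cm{B})$, the difference of the quadratic form at two tensors is itself a bilinear form that must be controlled uniformly, and the net is only a \emph{generalized} one: its elements lie in $\bm{\Xi}$ rather than $\bm{\Xi}_1$, so I must verify that the pointwise concentration applies verbatim to net elements (it does, as they are valid members of $\bm{\Xi}$) and that the approximation error is measured in the correct Frobenius norm. Tracking the interplay between the smoothness constant $\kappa_2$ governing the discretization error and the curvature constant $\kappa_1$ governing the lower bound—so that the restricted-strong-convexity floor survives after subtracting the discretization remainder—is the delicate part, and it is precisely this balancing that generates the $\log(\kappa_2/\kappa_1)$ factors; the temporal dependence is what compels the use of the specialized inequalities of Lemmas \ref{lemma:hansonw} and \ref{lemma:martgl} in place of their i.i.d.\ analogues.
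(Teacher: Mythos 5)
Your proposal follows essentially the same route as the paper's proof: population spectrum bounds via Lemmas \ref{lemma:fullrank} and \ref{lemma:Wcov}, pointwise Hanson--Wright concentration (Lemma \ref{lemma:hansonw}) for the quadratic form and martingale concentration (Lemma \ref{lemma:martgl}) for the cross term, then uniformization over the generalized $\epsilon$-net of Lemma \ref{lemma:epsilon-net} with the radius tuned so that the discretization error stays below the curvature $\kappa_1$. The only cosmetic differences are that the paper takes $\epsilon_0\asymp\sqrt{\kappa_1/\kappa_2}$ rather than $\kappa_1/\kappa_2$ (both yield $\log(1/\epsilon)\asymp\log(\kappa_2/\kappa_1)$ and hence the same failure-probability exponent), and that the constants $c_{\cmtt{M}},C_{\cmtt{M}}$ enter through the Frobenius-norm bounds on the net elements (Lemma \ref{lemma:epsilon-net}(ii)) rather than through the population moment bounds for unit-norm $\cm{M}$.
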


			
			\begin{lemma}\label{lemma:Rjy}
				Suppose that Assumptions  \ref{assum:statn} and \ref{assum:error} hold  and 	 $T \gtrsim N$. Then
				\begin{equation}\label{eq:Rjy1}
					\mathbb{P}\left \{\forall j\geq 1: \Big\| \frac{1}{T} \sum_{t=1}^{T}\bm{y}_{t-p-j}\bm{y}_{t-p-j}^\prime \Big\|_{\op}  \leq  2\lambda_{\max}(\bm{\Sigma}_\varepsilon) \mu_{\max}(\bm{\Psi}_*) (j \sigma^2+1)\right \} \geq 1-3e^{-N\log9}
				\end{equation}
				and moreover, if $T \gtrsim N\pazocal{R}$,
				\begin{equation}\label{eq:Rjy2}
					\mathbb{P}\left \{\forall j\geq 1: \sup_{\bm{M} \in \bm{\Pi}(\pazocal{R})}\frac{1}{T}\sum_{t=1}^{T}\langle \bm{M} \bm{y}_{t-p-j}, \bm{\varepsilon}_t \rangle \leq 24\lambda_{\max}(\bm{\Sigma}_\varepsilon)(2j \sigma^2+1) \sqrt{\frac{\mu_{\max}(\bm{\Psi}_*) N \pazocal{R}}{T}} \right \} \geq 1-4e^{-N\pazocal{R}\log9}.  
				\end{equation}	
			\end{lemma}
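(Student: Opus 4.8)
The plan is to prove both tail bounds by the same template: discretize, apply a concentration inequality to each net point at a threshold that grows \emph{linearly} in $j$, and then union over both the net and all $j\ge1$, exploiting the fact that a $j$-linear threshold forces a tail that decays geometrically in $j$, so that $\sum_{j\ge1}$ converges and is dominated by its first term.

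For \eqref{eq:Rjy1}, first I would reduce the operator norm to a maximum of quadratic forms over a $1/4$-net $\mathcal{N}$ of $\mathbb{S}^{N-1}$, with $|\mathcal{N}|\le 9^{N}$, via $\|\bm{A}\|_{\op}\le 2\max_{\bm{v}\in\mathcal{N}}\bm{v}'\bm{A}\bm{v}$ for symmetric positive semidefinite $\bm{A}$. Writing $\bm{\varepsilon}_t=\bm{\Sigma}_\varepsilon^{1/2}\bm{\xi}_t$ and using $\bm{y}_s=\sum_{k\ge0}\bm{\Psi}_k^*\bm{\varepsilon}_{s-k}$, each $T^{-1}\sum_{t=1}^T(\bm{v}'\bm{y}_{t-p-j})^2$ is a quadratic form $\bm{\xi}'\bm{Q}_{j,\bm v}\bm{\xi}$ in the i.i.d.\ $\sigma^2$-sub-Gaussian coordinates. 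I would bound its mean by $\bm{v}'\bm{\Sigma}_y\bm{v}\le\lambda_{\max}(\bm{\Sigma}_\varepsilon)\mu_{\max}(\bm{\Psi}_*)$ and its operator and Frobenius norms by $\lesssim\lambda_{\max}(\bm{\Sigma}_\varepsilon)\mu_{\max}(\bm{\Psi}_*)/T$ using the spectral-density bounds of Lemma \ref{lemma:Wcov}; crucially these bounds are \emph{$j$-independent}, since shifting $j$ only relabels which innovations enter and leaves the filter coefficients unchanged. Applying the stationary Hanson--Wright inequality (Lemma \ref{lemma:hansonw}) at the threshold $t_j\asymp\lambda_{\max}(\bm{\Sigma}_\varepsilon)\mu_{\max}(\bm{\Psi}_*)\,j\sigma^2$ places the deviation in the linear (sub-exponential) regime, where the exponent is $\asymp t_j/(\sigma^2\|\bm{Q}_{j,\bm v}\|_{\op})\asymp jT$; the ``$+1$'' in the stated threshold absorbs the mean and the small-$j$ range. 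Under $T\gtrsim N$ this exponent dominates $N\log 9$ for every $j\ge1$, so a union bound over $\mathcal{N}$ together with a geometric $\sum_{j\ge1}$ yields the claimed probability $1-3e^{-N\log9}$.

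For \eqref{eq:Rjy2}, I would first write $T^{-1}\sum_t\langle\bm{M}\bm{y}_{t-p-j},\bm{\varepsilon}_t\rangle=\langle\bm{M},\bm{W}_j\rangle$ with $\bm{W}_j=T^{-1}\sum_t\bm{\varepsilon}_t\bm{y}_{t-p-j}'$, and control $\sup_{\bm{M}\in\bm{\Pi}(\pazocal{R})}\langle\bm{M},\bm{W}_j\rangle$ after discretizing $\bm{\Pi}(\pazocal{R})$ by an $\epsilon$-net of cardinality $\le e^{cN\pazocal{R}}$ (parameterizing rank-$\pazocal{R}$ matrices through their singular vectors and values). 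Since $j\ge1$ makes $\bm{\varepsilon}_t$ independent of $\bm{y}_{t-p-j}$, for each fixed net point $\bm{M}$ the sum $T^{-1}\sum_t(\bm{M}\bm{y}_{t-p-j})'\bm{\varepsilon}_t$ is a martingale with conditionally $\sigma^2$-sub-Gaussian increments whose predictable variation is $\sigma^2T^{-2}\sum_t\|\bm{\Sigma}_\varepsilon^{1/2}\bm{M}\bm{y}_{t-p-j}\|_2^2$. On the good event of \eqref{eq:Rjy1} this variation is at most $\sigma^2\lambda_{\max}(\bm{\Sigma}_\varepsilon)T^{-1}\|T^{-1}\sum_t\bm{y}_{t-p-j}\bm{y}_{t-p-j}'\|_{\op}\lesssim\sigma^2\lambda_{\max}(\bm{\Sigma}_\varepsilon)^2\mu_{\max}(\bm{\Psi}_*)(j\sigma^2+1)/T$, so the martingale concentration inequality (Lemma \ref{lemma:martgl}) at threshold $u_j\asymp\lambda_{\max}(\bm{\Sigma}_\varepsilon)(2j\sigma^2+1)\sqrt{\mu_{\max}(\bm{\Psi}_*)N\pazocal{R}/T}$ gives an exponent $\asymp u_j^2/\mathrm{variation}\asymp jN\pazocal{R}$, the quadratic numerator $(2j\sigma^2+1)^2$ and the linear variation $(j\sigma^2+1)$ combining to leave a factor linear in $j$. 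Under $T\gtrsim N\pazocal{R}$ this beats the net cardinality $e^{cN\pazocal{R}}$ for all $j\ge1$, and a union bound over the net and $\sum_{j\ge1}$ gives $1-4e^{-N\pazocal{R}\log9}$.

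The main obstacle throughout is the uniformity over all $j\ge1$: each single-$j$ statement is a routine sample-covariance or cross-covariance bound, but to union-bound over infinitely many lags I must let the tolerance grow with $j$ and verify that the resulting tail decays geometrically. This hinges on keeping both the Hanson--Wright and martingale bounds in the regime where the exponent is linear in the threshold, and on the observation that the norm bounds on $\bm{Q}_{j,\bm v}$ and on the conditional variation are $j$-independent by stationarity, so that a $j$-linear threshold purchases a $j$-linear exponent. Matching these exponents against the net cardinalities $9^N$ and $e^{cN\pazocal{R}}$ is exactly what forces the sample-size requirements $T\gtrsim N$ and $T\gtrsim N\pazocal{R}$.
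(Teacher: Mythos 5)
Your proposal matches the paper's own proof in both structure and detail: for \eqref{eq:Rjy1} it uses the same $1/4$-net reduction of the operator norm, the same application of the stationary Hanson--Wright inequality (Lemma \ref{lemma:hansonw}) with a threshold growing linearly in $j$ (so the exponent $\asymp jT$ absorbs the $9^N$ net under $T\gtrsim N$ and sums geometrically over $j$), and for \eqref{eq:Rjy2} the same low-rank net (Lemma \ref{lemma:coverLR}), conditioning on the good event of \eqref{eq:Rjy1}, and the martingale bound of Lemma \ref{lemma:martgl} with a $j$-linear threshold, exactly as in the paper. The argument is correct as proposed.
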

			
			\begin{lemma}[Covering number and discretization for $\bm{\Xi}_1$]\label{lemma:epsilon-net}
				For any $0<\epsilon<2/3$, let $\bar{\bm{\Xi}}(\epsilon)$ be a minimal generalized $\epsilon$-net of $\bm{\Xi}_1$. 
				\begin{itemize}
					\item [(i)] The cardinality of $\bar{\bm{\Xi}}(\epsilon)$ satisfies
					\[
					\log |\bar{\bm{\Xi}}(\epsilon)| \lesssim d_{\pazocal{R}} \log(1/\epsilon),
					\]
					where 	$d_{\pazocal{R}}=
					\pazocal{R}_1\pazocal{R}_2 d  + (\pazocal{R}_1+ \pazocal{R}_2)N$.
					
					\item[(ii)]  There exist absolute constants  $c_{\cmtt{M}}, C_{\cmtt{M}}>0$ independent of $\epsilon$ such that for any $\cm{M}\in \bar{\bm{\Xi}}(\epsilon)$, it holds $c_{\cmtt{M}}\leq \|\cm{M}\|_{\Fr}\leq C_{\cmtt{M}}$.
					
					\item[(iii)] For any $\bm{X} \in \mathbb{R}^{N \times N (d+r+2s)}$ and $\bm{Z}\in\mathbb{R}^{N(d+r+2s)\times T}$, it holds 
					\begin{align}
						\sup_{ \cmt{M}\in \bm{\Xi}_1} \langle \cm{M}_{(1)}, \bm{X}\rangle &\leq (1- 1.5\epsilon)^{-1} \max_{ \cmt{M} \in \bar{\bm{\Xi}}(\epsilon)} \langle \cm{M}_{(1)}, \bm{X}\rangle,\label{eq:disc1}\\
						\sup_{ \cmt{M}\in \bm{\Xi}_1} \| \cm{M}_{(1)} \bm{Z}\|_{\Fr} & \leq (1- 1.5\epsilon)^{-1} \max_{ \cmt{M} \in \bar{\bm{\Xi}}(\epsilon)} \| \cm{M}_{(1)} \bm{Z}\|_{\Fr}. \label{eq:disc2}
					\end{align}	 
				\end{itemize} 
			\end{lemma}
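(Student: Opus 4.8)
The plan is to exploit the fact that, viewed as a map of $(\bm{a},\cm{B})$, the function $\cm{M}(\cdot,\cdot)$ in \eqref{eq:Gfunc} is \emph{jointly linear}: its first $d$ frontal slices equal $\cm{B}$ and do not involve $\bm{a}$, while its remaining $r+2s$ slices are linear in $\bm{a}$ and do not involve $\cm{B}$. Consequently $\|\cm{M}(\bm{a},\cm{B})\|_{\Fr}^2=\|\cm{B}\|_{\Fr}^2+\|\text{(MA slices)}\|_{\Fr}^2$, and the MA slices coincide in structure with $\cm{D}(\bm{\omega})$ from the proof of Lemma~\ref{lemma:delnorm} (with $\bm{\omega}-\bm{\omega}^*$ replaced by $\bm{a}$). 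Reusing the computation in \eqref{eq:DFr2}--\eqref{eq:DFr} therefore gives the two-sided bound $\|\cm{M}(\bm{a},\cm{B})\|_{\Fr}^2\asymp\|\cm{B}\|_{\Fr}^2+\alpha^2\|\bm{a}\|_2^2$. First I would use this norm equivalence to transfer the covering problem on $\bm{\Xi}_1$ (unit Frobenius norm in the image) to a covering problem on the parameter set $\{(\bm{a},\cm{B}):\cm{B}\in\bm{\Gamma}(2\pazocal{R}_1,2\pazocal{R}_2)\}$: membership in $\bm{\Xi}_1$ forces $\|\cm{B}\|_{\Fr}\lesssim 1$ and $\alpha\|\bm{a}\|_2\lesssim 1$, while the same equivalence shows $\cm{M}$ is bi-Lipschitz, so an $\epsilon$-cover in parameter norm yields an $O(\epsilon)$-cover of $\bm{\Xi}_1$ of the same log-cardinality.

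For part (i) I would build the parameter cover factor-wise. The vector $\bm{a}$ ranges over a Euclidean ball and needs $\lesssim(r+2s)\log(1/\epsilon)$ net points. For $\cm{B}$ I would use its HOSVD $\cm{B}=\cm{S}\times_1\bm{U}_1\times_2\bm{U}_2$ and separately cover the core $\cm{S}\in\mathbb{R}^{2\pazocal{R}_1\times 2\pazocal{R}_2\times d}$ (a Euclidean ball, $\lesssim\pazocal{R}_1\pazocal{R}_2 d\log(1/\epsilon)$ points) and the two Stiefel factors $\bm{U}_i\in\mathbb{R}^{N\times 2\pazocal{R}_i}$ ($\lesssim\pazocal{R}_i N\log(1/\epsilon)$ points each). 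Taking the image under $\cm{M}$ of the product net produces a subset of $\bm{\Xi}$ covering $\bm{\Xi}_1$, i.e.\ a valid generalized $\epsilon$-net, of log-cardinality $\lesssim[\pazocal{R}_1\pazocal{R}_2 d+(\pazocal{R}_1+\pazocal{R}_2)N]\log(1/\epsilon)=d_{\pazocal{R}}\log(1/\epsilon)$, and a minimal net is no larger. Part (ii) is then immediate from minimality: each element of $\bar{\bm{\Xi}}(\epsilon)$ must be the unique cover of some $\cm{M}\in\bm{\Xi}_1$ (else it could be deleted), so by the triangle inequality its norm lies in $[1-\epsilon,1+\epsilon]\subset[1/3,5/3]$ since $\epsilon<2/3$; take $c_{\cmtt{M}}=1/3$ and $C_{\cmtt{M}}=5/3$.

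The substance is part (iii). Writing $S=\sup_{\cmt{M}\in\bm{\Xi}_1}\langle\cm{M}_{(1)},\bm{X}\rangle$ and $\bar{S}=\max_{\cmt{M}\in\bar{\bm{\Xi}}(\epsilon)}\langle\cm{M}_{(1)},\bm{X}\rangle$, for any $\cm{M}\in\bm{\Xi}_1$ with nearby net point $\bar{\cm{M}}$ I would split $\langle\cm{M}_{(1)},\bm{X}\rangle\le\bar{S}+\langle(\cm{M}-\bar{\cm{M}})_{(1)},\bm{X}\rangle$ and bound the residual by $S$ itself. The difficulty, which I expect to be the main obstacle, is that $\cm{M}-\bar{\cm{M}}=\cm{M}(\bm{a}-\bar{\bm{a}},\cm{B}-\bar{\cm{B}})$ is \emph{not} in $\bm{\Xi}_1$, because differencing inflates the Tucker ranks of $\cm{B}$ from $(2\pazocal{R}_1,2\pazocal{R}_2)$ to $(4\pazocal{R}_1,4\pazocal{R}_2)$. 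I would resolve this with the factorwise telescoping decomposition $\cm{B}-\bar{\cm{B}}=(\cm{S}-\bar{\cm{S}})\times_1\bm{U}_1\times_2\bm{U}_2+\bar{\cm{S}}\times_1(\bm{U}_1-\bar{\bm{U}}_1)\times_2\bm{U}_2+\bar{\cm{S}}\times_1\bar{\bm{U}}_1\times_2(\bm{U}_2-\bar{\bm{U}}_2)$, each summand of which retains ranks $(2\pazocal{R}_1,2\pazocal{R}_2)$ since its column and row spaces are spanned by a single $\bm{U}$- or $\bar{\bm{U}}$-factor. Folding $\bm{a}-\bar{\bm{a}}$ into the first summand, the residual becomes a sum of three pieces, each of which—once normalized—lies in $\bm{\Xi}_1$, so $\langle(\cm{M}-\bar{\cm{M}})_{(1)},\bm{X}\rangle\le S\sum_{\ell=1}^{3}\|\text{piece}_\ell\|_{\Fr}$. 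Constructing the net so that each factor perturbation contributes image-norm at most $\epsilon/2$ yields $\sum_\ell\|\text{piece}_\ell\|_{\Fr}\le 1.5\epsilon$, whence $S\le\bar{S}+1.5\epsilon S$ and \eqref{eq:disc1} follows, the constant $3/2$ being exactly the three-factor count. Finally \eqref{eq:disc2} reduces to \eqref{eq:disc1} through the variational identity $\|\cm{M}_{(1)}\bm{Z}\|_{\Fr}=\sup_{\|\bm{W}\|_{\Fr}=1}\langle\cm{M}_{(1)},\bm{W}\bm{Z}'\rangle$, applying \eqref{eq:disc1} with $\bm{X}=\bm{W}\bm{Z}'$ for each $\bm{W}$ and commuting the finite maximum over $\bar{\bm{\Xi}}(\epsilon)$ with the supremum over $\bm{W}$. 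The delicate point throughout is the rank-inflation control: ensuring each telescoped piece genuinely returns to $\bm{\Xi}_1$ and that the three contributions can be simultaneously capped at level $\epsilon/2$ by the factor-wise net construction.
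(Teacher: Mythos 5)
Your proposal is correct, and its skeleton matches the paper's: both use the two-sided bound $0.5(\|\cm{B}\|_{\Fr}+\varpi_1\|\bm{a}\|_2)\leq\|\cm{M}(\bm{a},\cm{B})\|_{\Fr}\leq\|\cm{B}\|_{\Fr}+\varpi_2\|\bm{a}\|_2$ (the paper's \eqref{eq:EFr1}) to transfer the covering problem to the parameter pair $(\bm{a},\cm{B})$, and both close part (iii) with the self-bounding inequality $S\leq\bar{S}+1.5\epsilon S$. The genuine difference is how the rank inflation in $\cm{M}-\bar{\cm{M}}=\cm{M}(\bm{a}-\bar{\bm{a}},\cm{B}-\bar{\cm{B}})$ is handled. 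The paper tolerates the inflation: $\cm{B}-\bar{\cm{B}}\in\bm{\Gamma}(4\pazocal{R}_1,4\pazocal{R}_2)$ is split via HOSVD into four mutually orthogonal tensors $\cm{B}_i\in\bm{\Gamma}(2\pazocal{R}_1,2\pazocal{R}_2)$, and Cauchy--Schwarz plus orthogonality give $\sum_{i=1}^{4}\|\cm{B}_i\|_{\Fr}\leq 2\|\cm{B}-\bar{\cm{B}}\|_{\Fr}$, so the total is $2(\epsilon/2)+\varpi_2\|\bm{a}-\bar{\bm{a}}\|_2\leq 1.5\epsilon$; this needs only the Frobenius-closeness \eqref{eq:e-net} of the net point. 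You prevent the inflation instead: telescoping through the Tucker factors yields three pieces that lie in $\bm{\Gamma}(2\pazocal{R}_1,2\pazocal{R}_2)$ by construction, each capped at $\epsilon/2$, giving the same $1.5\epsilon$ by a different count ($3\times\epsilon/2$ versus $\epsilon+\epsilon/2$). The cost of your route is that part (iii) requires the net to record factor-wise closeness, so it is tied to your explicit factor-by-factor construction in (i); but this is not a defect relative to the paper, since the paper's proof of (iii) likewise holds for its constructed net rather than an arbitrary minimal one. Two smaller divergences both check out in your favor or are neutral: your part (ii) follows from minimality alone (an element covering no point of $\bm{\Xi}_1$ could be deleted, so every element has norm in $[1-\epsilon,1+\epsilon]\subset[1/3,5/3]$), which is cleaner and works for any minimal net, whereas the paper derives the bounds from its construction; and your derivation of \eqref{eq:disc2} from \eqref{eq:disc1} via $\|\cm{M}_{(1)}\bm{Z}\|_{\Fr}=\sup_{\|\bm{W}\|_{\Fr}=1}\langle\cm{M}_{(1)},\bm{W}\bm{Z}^\prime\rangle$, exchanging the finite maximum with the supremum, is valid, where the paper simply reruns the triangle-inequality argument in the norm $\|\cdot\,\bm{Z}\|_{\Fr}$.
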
	
			
			
			In Lemmas \ref{lemma:martgl}--\ref{lemma:Wcov} below, we adopt notations as follows.  Let $\{\bm{w}_t\}$ be a time series taking values in $\mathbb{R}^M$, 
			where $M$ is an arbitrary positive integer. If $\{\bm{w}_t\}$ is stationary with mean zero, then we denote the  covariance matrix of $\bm{w}_t$ by
			$\bm{\Sigma}_w = \mathbb{E}(\bm{w}_t\bm{w}_t^\prime)$.
			In addition,  let $\underline{\bm{w}}_T=(\bm{w}_T^\prime, \dots, \bm{w}_1^\prime)^\prime$, and denote its covariance matrix by
			\[
			\underline{\bm{\Sigma}}_w=\mathbb{E}(\underline{\bm{w}}_T\underline{\bm{w}}_T^\prime)=\left (\bm{\Sigma}_w(j-i)\right )_{1\leq i,j\leq T},
			\]
			where  $\bm{\Sigma}_w(\ell) = \mathbb{E}(\bm{w}_t\bm{w}_{t-\ell}^\prime)$ is the lag-$\ell$ autocovariance matrix of $\bm{w}_t$ for $\ell\in\mathbb{Z}$, and $\bm{\Sigma}_w(0)=\bm{\Sigma}_w$. 
			For the time series $\{\bm{y}_t\}$, accordingly we define $\bm{\Sigma}_y=\mathbb{E}(\bm{y}_t\bm{y}_t^\prime)$ and  $\underline{\bm{\Sigma}}_y=\mathbb{E}(\underline{\bm{y}}_T\underline{\bm{y}}_T^\prime)=\left (\bm{\Sigma}_y(j-i)\right )_{1\leq i,j\leq T}$, where  $\underline{\bm{y}}_T=(\bm{y}_{T}^\prime, \dots, \bm{y}_{1}^\prime)^\prime$,  $\bm{\Sigma}_y (\ell) = \mathbb{E}(\bm{y}_t\bm{y}_{t-\ell}^\prime)$  is the lag-$\ell$ covariance matrix of $\bm{y}_t$ for $\ell\in \mathbb{Z}$, and $\bm{\Sigma}_y=\bm{\Sigma}_y(0)$.

			\begin{lemma}[Concentration bound for stationary time series]\label{lemma:martgl} 
				Suppose that Assumption \ref{assum:error}  holds for $\{\bm{\varepsilon}_t\}$, and $\{\bm{w}_t\}$ is a zero-mean stationary time series. Assume that  $\bm{w}_t$ is $\mathcal{F}_{t-1}$-measurable, where
				$\mathcal{F}_t=\sigma\{\bm{\varepsilon}_t, \bm{\varepsilon}_{t-1}, \dots\}$ for $t\in\mathbb{Z}$ is a filtration. Then, for any $a,b>0$, we have 
				\begin{equation*}
					\mathbb{P}\left \{ \sum_{t=1}^{T}\langle\bm{w}_t, \bm{\varepsilon}_t \rangle\geq a, \; \sum_{t=1}^{T}\lVert \bm{w}_t \rVert^2 \leq b\right \} \leq  \exp\left \{-\frac{a^2}{2\sigma^2 \lambda_{\max}(\bm{\Sigma}_{\varepsilon})b}\right \}.
				\end{equation*}
			\end{lemma}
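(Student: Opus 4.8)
The plan is to prove this self-normalized concentration bound via an exponential supermartingale (Chernoff) argument. First I would exploit the structure $\bm{\varepsilon}_t=\bm{\Sigma}_\varepsilon^{1/2}\bm{\xi}_t$ from Assumption \ref{assum:error} to write $\langle\bm{w}_t,\bm{\varepsilon}_t\rangle=\langle\bm{v}_t,\bm{\xi}_t\rangle$ with $\bm{v}_t=\bm{\Sigma}_\varepsilon^{1/2}\bm{w}_t$; since $\bm{w}_t$ is $\mathcal{F}_{t-1}$-measurable, so is $\bm{v}_t$. Because $\{\bm{\varepsilon}_t\}$, and hence $\{\bm{\xi}_t\}$, is i.i.d., $\bm{\xi}_t$ is independent of $\mathcal{F}_{t-1}$, so that conditioning on $\mathcal{F}_{t-1}$ leaves $\bm{\xi}_t$ with its unconditional law. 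The $\sigma^2$-sub-Gaussianity of the mutually independent coordinates of $\bm{\xi}_t$ then gives the conditional moment-generating-function bound
\[
\mathbb{E}\left[e^{s\langle\bm{w}_t,\bm{\varepsilon}_t\rangle}\mid\mathcal{F}_{t-1}\right]=\prod_{i=1}^{N}\mathbb{E}\left[e^{s v_{t,i}\xi_{t,i}}\mid\mathcal{F}_{t-1}\right]\leq\exp\left(\frac{s^2\sigma^2}{2}\|\bm{v}_t\|_2^2\right)\leq\exp\left(\frac{s^2\sigma^2\lambda_{\max}(\bm{\Sigma}_\varepsilon)}{2}\|\bm{w}_t\|_2^2\right),
\]
valid for all $s\in\mathbb{R}$, where the last inequality uses $\|\bm{v}_t\|_2^2=\bm{w}_t^\prime\bm{\Sigma}_\varepsilon\bm{w}_t\leq\lambda_{\max}(\bm{\Sigma}_\varepsilon)\|\bm{w}_t\|_2^2$.

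Next, writing $L=\sigma^2\lambda_{\max}(\bm{\Sigma}_\varepsilon)$, I would introduce for any fixed $s>0$ the process
\[
M_t=\exp\left(s\sum_{u=1}^{t}\langle\bm{w}_u,\bm{\varepsilon}_u\rangle-\frac{s^2 L}{2}\sum_{u=1}^{t}\|\bm{w}_u\|_2^2\right),\qquad M_0=1.
\]
Since both $\|\bm{w}_t\|_2^2$ and $M_{t-1}$ are $\mathcal{F}_{t-1}$-measurable, the conditional bound above yields $\mathbb{E}[M_t\mid\mathcal{F}_{t-1}]\leq M_{t-1}$, so $\{M_t\}$ is a nonnegative supermartingale with $\mathbb{E}[M_T]\leq1$. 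Finally, on the event $E=\{\sum_{t=1}^{T}\langle\bm{w}_t,\bm{\varepsilon}_t\rangle\geq a,\ \sum_{t=1}^{T}\|\bm{w}_t\|_2^2\leq b\}$ and for $s>0$, the monotonicity of the two exponents forces $M_T\geq\exp(sa-s^2Lb/2)$; Markov's inequality then gives
\[
\mathbb{P}(E)\leq\mathbb{P}\left(M_T\geq e^{\,sa-s^2Lb/2}\right)\leq e^{-sa+s^2Lb/2}\,\mathbb{E}[M_T]\leq\exp\left(-sa+\frac{s^2Lb}{2}\right).
\]
Optimizing the right-hand side over $s>0$ at $s=a/(Lb)$ produces the claimed bound $\exp\{-a^2/(2\sigma^2\lambda_{\max}(\bm{\Sigma}_\varepsilon)b)\}$.

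The main obstacle is conceptual rather than computational: because the ``variance proxy'' $\sum_t\|\bm{w}_t\|_2^2$ is itself random and is controlled only on the event $E$, a naive Chernoff bound on $\sum_t\langle\bm{w}_t,\bm{\varepsilon}_t\rangle$ does not directly apply. The supermartingale $M_t$ is precisely what decouples the numerator from this random normalizer, turning the problem into a single application of Markov's inequality. The care required lies in verifying the adaptedness conditions—$\bm{w}_t$ predictable with respect to $\mathcal{F}_{t-1}$ and $\bm{\varepsilon}_t$ independent of $\mathcal{F}_{t-1}$—that simultaneously make the conditional MGF factorize across coordinates and guarantee the supermartingale property.
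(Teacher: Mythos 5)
Your proof is correct and follows essentially the same route as the paper: the identical reduction $\langle\bm{w}_t,\bm{\varepsilon}_t\rangle=\langle\bm{\Sigma}_\varepsilon^{1/2}\bm{w}_t,\bm{\xi}_t\rangle$ combined with $\|\bm{\Sigma}_\varepsilon^{1/2}\bm{w}_t\|_2^2\leq\lambda_{\max}(\bm{\Sigma}_\varepsilon)\|\bm{w}_t\|_2^2$, followed by a self-normalized martingale concentration bound. The only difference is that the paper invokes that bound as a black box (a ``straightforward multivariate generalization of Lemma 4.2'' of Simchowitz et al., 2018), whereas you prove it directly via the exponential supermartingale and Chernoff optimization at $s=a/(Lb)$ --- which is precisely how the cited lemma is established, so your write-up is simply a self-contained version of the paper's argument.
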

			
			\begin{lemma}[Hanson-Wright inequalities for stationary time series]\label{lemma:hansonw}
				Suppose that Assumption \ref{assum:error}  holds for $\{\bm{\varepsilon}_t\}$, and $\{\bm{w}_t\}$ has the vector MA($\infty$) representation
				\[
				\bm{w}_t= \sum_{j=1}^\infty \bm{\Psi}_j^w  \bm{\varepsilon}_{t-j},
				\]
				where 
				$\bm{\Psi}_j^w\in\mathbb{R}^{M\times N}$ for all $j$, and $\sum_{j=1}^{\infty}\|\bm{\Psi}_j^w \|_{\op}<\infty$. Let  $T_0$ be a fixed integer.
				\begin{itemize}
					\item [(i)]  Then, $\{\bm{w}_t\}$  is a zero-mean stationary time series, and
					\[
					\mathbb{P}\left \{\left |\frac{1}{T}\sum_{t=T_0+1}^{T_0+T}\|\bm{w}_t\|_2^2 - \mathbb{E}\left (\|\bm{w}_t\|_2^2 \right )\right | \geq \frac{M}{\sqrt{T}}\sigma^2\lambda_{\max}(\underline{\bm{\Sigma}}_w)\right \} \leq 2e^{-cM}.
					\]
					\item [(ii)] For any $\bm{M}\in\mathbb{R}^{Q\times M}$ with $Q\geq 1$ and any $\delta>0$, it holds
					\begin{equation*}
						\mathbb{P}\left \{\left |\frac{1}{T}\sum_{t=T_0+1}^{T_0+T}\|\bm{M}\bm{w}_t\|_2^2 - \mathbb{E}\left (\|\bm{M}\bm{w}_t\|_2^2\right )\right | \geq \delta \sigma^2 \lambda_{\max}(\underline{\bm{\Sigma}}_w)\|\bm{M}\|_{\Fr}^2\right \} \leq 2e^{-c\min(\delta, \delta^2)T}.
					\end{equation*}
				\end{itemize}
			\end{lemma}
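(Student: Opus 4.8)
The plan is to reduce both bounds to the classical Hanson--Wright inequality for a vector with independent sub-Gaussian coordinates, by expressing the random quadratic forms in terms of the underlying innovations. First I would settle the stationarity claim in (i): since $\{\bm{\varepsilon}_t\}$ is i.i.d.\ with mean zero and finite second moment and $\sum_{j\geq1}\|\bm{\Psi}_j^w\|_{\op}<\infty$, the series $\bm{w}_t=\sum_{j\geq1}\bm{\Psi}_j^w\bm{\varepsilon}_{t-j}$ converges in $L^2$ and defines a zero-mean stationary process whose finite-window covariance $\underline{\bm{\Sigma}}_w$ does not depend on $T_0$; hence I may take $T_0=0$ throughout.

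Next, write $\bm{\varepsilon}_s=\bm{\Sigma}_\varepsilon^{1/2}\bm{\xi}_s$ and stack $\underline{\bm{w}}=(\bm{w}_T^\prime,\dots,\bm{w}_1^\prime)^\prime$. Using the MA representation we have $\underline{\bm{w}}=\bm{\Lambda}\underline{\bm{\xi}}$, where $\underline{\bm{\xi}}$ collects all relevant innovation coordinates (mutually independent and $\sigma^2$-sub-Gaussian by Assumption \ref{assum:error}) and $\bm{\Lambda}$ is the block matrix built from the $\bm{\Psi}_j^w\bm{\Sigma}_\varepsilon^{1/2}$. Since $\cov(\underline{\bm{\xi}})=\bm{I}$, we get $\bm{\Lambda}\bm{\Lambda}^\prime=\underline{\bm{\Sigma}}_w$, so $\|\bm{\Lambda}\|_{\op}^2=\lambda_{\max}(\underline{\bm{\Sigma}}_w)$. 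Then $\sum_{t=1}^{T}\|\bm{M}\bm{w}_t\|_2^2=\underline{\bm{\xi}}^\prime\bm{A}\underline{\bm{\xi}}$ with the symmetric positive semidefinite $\bm{A}=\bm{\Lambda}^\prime(\bm{I}_T\otimes\bm{M}^\prime\bm{M})\bm{\Lambda}$. The key computation is the two norm bounds: $\|\bm{A}\|_{\op}\leq\|\bm{M}\|_{\op}^2\lambda_{\max}(\underline{\bm{\Sigma}}_w)\leq\|\bm{M}\|_{\Fr}^2\lambda_{\max}(\underline{\bm{\Sigma}}_w)$, and, using $\trace((\bm{B}\underline{\bm{\Sigma}}_w)^2)\leq\lambda_{\max}(\underline{\bm{\Sigma}}_w)^2\trace(\bm{B}^2)$ with $\bm{B}=\bm{I}_T\otimes\bm{M}^\prime\bm{M}$ and $\trace(\bm{B}^2)=T\|\bm{M}^\prime\bm{M}\|_{\Fr}^2\leq T\|\bm{M}\|_{\Fr}^4$, one finds $\|\bm{A}\|_{\Fr}^2=\trace((\bm{B}\underline{\bm{\Sigma}}_w)^2)\leq\lambda_{\max}(\underline{\bm{\Sigma}}_w)^2\,T\,\|\bm{M}\|_{\Fr}^4$.

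Applying Hanson--Wright to $\underline{\bm{\xi}}^\prime\bm{A}\underline{\bm{\xi}}$ with deviation $s=\delta\sigma^2\lambda_{\max}(\underline{\bm{\Sigma}}_w)\|\bm{M}\|_{\Fr}^2 T$, the linear term $s/(\sigma^2\|\bm{A}\|_{\op})$ is at least $\delta T$ and the quadratic term $s^2/(\sigma^4\|\bm{A}\|_{\Fr}^2)$ is at least $\delta^2 T$, yielding exactly the rate $\min(\delta,\delta^2)T$ in (ii) after dividing by $T$. For (i) I specialize to $\bm{M}=\bm{I}_M$, so $\bm{A}_0=\bm{\Lambda}^\prime\bm{\Lambda}$ with $\|\bm{A}_0\|_{\op}=\lambda_{\max}(\underline{\bm{\Sigma}}_w)$ and the sharper bound $\|\bm{A}_0\|_{\Fr}^2=\trace(\underline{\bm{\Sigma}}_w^2)\leq\lambda_{\max}(\underline{\bm{\Sigma}}_w)\trace(\underline{\bm{\Sigma}}_w)\leq MT\,\lambda_{\max}(\underline{\bm{\Sigma}}_w)^2$; taking $s=M\sqrt{T}\sigma^2\lambda_{\max}(\underline{\bm{\Sigma}}_w)$ makes the quadratic term at least $M$ and the linear term at least $M\sqrt{T}$, giving the $2e^{-cM}$ bound. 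The threshold $M/\sqrt{T}$ in (i) is precisely what lets the Frobenius (quadratic) term control the tail.

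The main obstacle is that $\bm{w}_t$ depends on infinitely many past innovations, so $\underline{\bm{\xi}}$ and $\bm{A}$ are genuinely infinite-dimensional, whereas the classical Hanson--Wright inequality is stated for finite vectors. I would handle this by truncating the MA representation at lag $J$, replacing $\bm{w}_t$ by $\bm{w}_t^{(J)}=\sum_{j=1}^{J}\bm{\Psi}_j^w\bm{\varepsilon}_{t-j}$ and applying the finite-dimensional inequality with the norm bounds above. The summability $\sum_{j}\|\bm{\Psi}_j^w\|_{\op}<\infty$ ensures $\bm{w}_t^{(J)}\to\bm{w}_t$ in $L^2$, whence the quadratic forms and their expectations converge and $\lambda_{\max}(\underline{\bm{\Sigma}}_{w^{(J)}})\to\lambda_{\max}(\underline{\bm{\Sigma}}_w)$; passing to the limit with absolute constants then yields the stated bounds. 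Equivalently, one may invoke Hanson--Wright directly for trace-class $\bm{A}$ on $\ell^2$, which applies since $\|\bm{A}\|_{\op}$ and $\|\bm{A}\|_{\Fr}$ are finite. Tracking the constants through this limiting step is the one point requiring care.
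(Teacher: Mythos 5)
Your proposal is correct and follows essentially the same route as the paper's proof: stack the innovations so that $\sum_t\|\bm{M}\bm{w}_t\|_2^2$ becomes a quadratic form $\bm{\xi}^\prime\bm{A}\bm{\xi}$ in the independent sub-Gaussian coordinates, with $\bm{A}$ built from the block matrix $\bm{\Lambda}$ satisfying $\bm{\Lambda}\bm{\Lambda}^\prime=\underline{\bm{\Sigma}}_w$, bound $\|\bm{A}\|_{\op}\leq\lambda_{\max}(\underline{\bm{\Sigma}}_w)\|\bm{M}\|_{\Fr}^2$ and $\|\bm{A}\|_{\Fr}\leq\sqrt{T}\lambda_{\max}(\underline{\bm{\Sigma}}_w)\|\bm{M}\|_{\Fr}^2$ (respectively $\sqrt{TM}\,\lambda_{\max}(\underline{\bm{\Sigma}}_w)$ for part (i)), and invoke the classical Hanson--Wright inequality at the stated deviation levels, exactly as the paper does with its matrices $\underline{\bm{\Psi}}^w$ and $\bm{P}$. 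Your explicit truncation/limiting step for the infinite-dimensional innovation vector is a point of rigor the paper's proof passes over silently, and your trace inequality $\trace((\bm{B}\underline{\bm{\Sigma}}_w)^2)\leq\lambda_{\max}^2(\underline{\bm{\Sigma}}_w)\trace(\bm{B}^2)$ is a cosmetic variant of the paper's bound $\|\bm{P}^\prime\bm{P}\|_{\Fr}\leq\sqrt{\|\bm{P}^\prime\bm{P}\|_{\op}\trace(\bm{P}^\prime\bm{P})}$; neither changes the argument.
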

			
			\begin{lemma}[Deterministic bounds for covariance matrices of stationary time series]\label{lemma:Wcov}
				Suppose that Assumptions \ref{assum:statn} and \ref{assum:error} hold, and  the vector MA($\infty$) representation of $\{\bm{y}_t\}$  is $\bm{y}_t =\bm{\Psi}_*(B)\bm{\varepsilon}_{t}$, where $\bm{\Psi}_*(B) = \bm{I}_N+\sum_{j=1}^{\infty}\bm{\Psi}_j^* B^j$.   
				Let
				$\mu_{\min}(\bm{\Psi}_*) = \min_{|z|=1}\lambda_{\min}(\bm{\Psi}_*(z)\bm{\Psi}_*^{\HH}(z))$ and $\mu_{\max}(\bm{\Psi}_*) = \max_{|z|=1}\lambda_{\max}(\bm{\Psi}_*(z)\bm{\Psi}_*^{\HH}(z))$, where $\bm{\Psi}_*^{\HH}(z)$ is the conjugate transpose of $\bm{\Psi}_*(z)$.
				\begin{itemize}
					\item [(i)] It holds
					\begin{equation*}
						\lambda_{\min}(\bm{\Sigma}_\varepsilon)\mu_{\min}(\bm{\Psi}_*) \leq \lambda_{\min}(\underline{\bm{\Sigma}}_y)\leq \lambda_{\max}(\underline{\bm{\Sigma}}_y) \leq  \lambda_{\max}(\bm{\Sigma}_\varepsilon)\mu_{\max}(\bm{\Psi}_*)
					\end{equation*}
					and 
					\[
					\lambda_{\min}(\bm{\Sigma}_\varepsilon)\mu_{\min}(\bm{\Psi}_*) \leq \lambda_{\min}(\bm{\Sigma}_y)\leq \lambda_{\max}(\bm{\Sigma}_y) \leq   \lambda_{\max}(\bm{\Sigma}_\varepsilon)\mu_{\max}(\bm{\Psi}_*).
					\]
					\item [(ii)]  Define  the time series $\{\bm{w}_t\}$ by
					$\bm{w}_t=\bm{W}\bm{x}_t=\sum_{i=1}^\infty \bm{W}_i\bm{y}_{t-i}$,
					where $\bm{x}_t=(\bm{y}_{t-1}^\prime, \bm{y}_{t-2}^\prime, \dots)^\prime$, $\bm{W}=(\bm{W}_1, \bm{W}_2, \dots)\in\mathbb{R}^{M\times \infty}$, and $\bm{W}_i$'s are $M\times N$ blocks such that $\sum_{i=1}^{\infty}\|\bm{W}_i\|_{\op}<\infty$. Then, $\{\bm{w}_t\}$ is a zero-mean stationary time series. Moreover,
					\begin{equation}\label{eq:sigmaw0}
						\lambda_{\min}(\bm{\Sigma}_\varepsilon)\mu_{\min}(\bm{\Psi}_*) \sigma_{\min}^2(\bm{W})   \leq \lambda_{\min}(\bm{\Sigma}_w) \leq  \lambda_{\max}(\bm{\Sigma}_w)  \leq \lambda_{\max}(\bm{\Sigma}_\varepsilon)\mu_{\max}(\bm{\Psi}_*)\sigma_{\max}^2(\bm{W})
					\end{equation}
					and 
					\begin{equation}\label{eq:sigmaw}
						\lambda_{\max}(\underline{\bm{\Sigma}}_w)\leq \lambda_{\max}(\bm{\Sigma}_\varepsilon) \mu_{\max}(\bm{\Psi}_*) \left (\sum_{i=1}^\infty\|\bm{W}_i\|_{\op}\right )^2.
					\end{equation}
				\end{itemize}
			\end{lemma}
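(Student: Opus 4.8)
The plan is to pass to the frequency domain and control every covariance operator through the spectral density of $\{\bm{y}_t\}$, following the block-Toeplitz machinery of \cite{basu2015regularized}. Since $\bm{y}_t=\bm{\Psi}_*(B)\bm{\varepsilon}_t$ with $\bm{\varepsilon}_t$ i.i.d.\ of covariance $\bm{\Sigma}_\varepsilon$, its spectral density is $f_y(\theta)=(2\pi)^{-1}\bm{\Psi}_*(e^{-\mathrm{i}\theta})\bm{\Sigma}_\varepsilon\bm{\Psi}_*^{\HH}(e^{-\mathrm{i}\theta})$. The first step is to invoke the classical bound relating the extremal eigenvalues of the $NT\times NT$ block-Toeplitz matrix $\underline{\bm{\Sigma}}_y=(\bm{\Sigma}_y(j-i))_{1\le i,j\le T}$ to those of $f_y$, namely $2\pi\,\mathrm{essinf}_\theta\,\lambda_{\min}(f_y(\theta))\le\lambda_{\min}(\underline{\bm{\Sigma}}_y)\le\lambda_{\max}(\underline{\bm{\Sigma}}_y)\le 2\pi\,\mathrm{esssup}_\theta\,\lambda_{\max}(f_y(\theta))$, uniformly in $T$.

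The second step is to bound $f_y(\theta)$ pointwise by isolating $\bm{\Sigma}_\varepsilon$. Writing $M=\bm{\Psi}_*(e^{-\mathrm{i}\theta})$ and using the Loewner sandwich $\lambda_{\min}(\bm{\Sigma}_\varepsilon)MM^{\HH}\preceq M\bm{\Sigma}_\varepsilon M^{\HH}\preceq\lambda_{\max}(\bm{\Sigma}_\varepsilon)MM^{\HH}$, I obtain $\lambda_{\max}(f_y(\theta))\le(2\pi)^{-1}\lambda_{\max}(\bm{\Sigma}_\varepsilon)\lambda_{\max}(MM^{\HH})$ and the analogous lower bound; taking the sup/inf over $\theta$ converts $\lambda_{\max}(MM^{\HH})$ and $\lambda_{\min}(MM^{\HH})$ into $\mu_{\max}(\bm{\Psi}_*)$ and $\mu_{\min}(\bm{\Psi}_*)$. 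Combining with the first step yields the displayed bounds for $\underline{\bm{\Sigma}}_y$, where the two factors of $2\pi$ cancel. The bounds for the marginal $\bm{\Sigma}_y=\bm{\Sigma}_y(0)$ then follow immediately, either because $\bm{\Sigma}_y$ is a diagonal block of $\underline{\bm{\Sigma}}_y$ (Cauchy interlacing), or more directly because $\bm{\Sigma}_y=\int_{-\pi}^{\pi}f_y(\theta)\,\mathrm{d}\theta$ inherits the same pointwise Loewner bounds on $f_y$.

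For part (ii), I would first verify stationarity: substituting $\bm{y}_{t-i}=\sum_{j\ge0}\bm{\Psi}_j^*\bm{\varepsilon}_{t-i-j}$ into $\bm{w}_t=\sum_{i\ge1}\bm{W}_i\bm{y}_{t-i}$ and regrouping gives a VMA($\infty$) form $\bm{w}_t=\sum_{k\ge1}\bm{\Psi}_k^w\bm{\varepsilon}_{t-k}$ with $\bm{\Psi}_k^w=\sum_{i+j=k}\bm{W}_i\bm{\Psi}_j^*$, which is absolutely summable in operator norm since both filters are; hence $\{\bm{w}_t\}$ is zero-mean stationary. For the bounds on $\bm{\Sigma}_w$ I write $\bm{\Sigma}_w=\bm{W}\underline{\bm{\Sigma}}_x\bm{W}^\prime$, where $\underline{\bm{\Sigma}}_x=\mathbb{E}(\bm{x}_t\bm{x}_t^\prime)$ is the semi-infinite block-Toeplitz operator generated by the autocovariances of $\bm{y}$; its extremal eigenvalues obey exactly the bounds of part (i), since every finite principal section equals $\underline{\bm{\Sigma}}_y$ for some $T$ and the bound is uniform in $T$. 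The Loewner relations $\lambda_{\min}(\underline{\bm{\Sigma}}_x)\bm{W}\bm{W}^\prime\preceq\bm{W}\underline{\bm{\Sigma}}_x\bm{W}^\prime\preceq\lambda_{\max}(\underline{\bm{\Sigma}}_x)\bm{W}\bm{W}^\prime$, together with $\sigma_{\min}^2(\bm{W})=\lambda_{\min}(\bm{W}\bm{W}^\prime)$ and $\sigma_{\max}^2(\bm{W})=\lambda_{\max}(\bm{W}\bm{W}^\prime)$, then give \eqref{eq:sigmaw0}. For $\lambda_{\max}(\underline{\bm{\Sigma}}_w)$ I would again apply the block-Toeplitz bound, now with the spectral density $f_w(\theta)=\widetilde{\bm{W}}(\theta)f_y(\theta)\widetilde{\bm{W}}^{\HH}(\theta)$, where $\widetilde{\bm{W}}(\theta)=\sum_{i\ge1}\bm{W}_ie^{-\mathrm{i}\theta i}$ satisfies $\|\widetilde{\bm{W}}(\theta)\|_{\op}\le\sum_{i\ge1}\|\bm{W}_i\|_{\op}$; bounding $\lambda_{\max}(f_w(\theta))\le\|\widetilde{\bm{W}}(\theta)\|_{\op}^2\,\lambda_{\max}(f_y(\theta))$ and taking the supremum delivers \eqref{eq:sigmaw}.

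The only genuinely nontrivial ingredient is the block-Toeplitz eigenvalue bound tying $\lambda_{\min/\max}(\underline{\bm{\Sigma}}_y)$ to the essential infimum/supremum of $\lambda_{\min/\max}(f_y)$; everything else is Loewner-ordering bookkeeping. Since this is a standard fact, and is precisely the device underlying the claim $\mu_{\min}(\bm{\Psi}_*)>0$ cited from \cite{basu2015regularized} in the main text, I would state it as a known result rather than reprove it. The only points needing mild care are the passage to the semi-infinite operator $\underline{\bm{\Sigma}}_x$ (handled by the uniform-in-$T$ nature of the bound and the summability $\sum_i\|\bm{W}_i\|_{\op}<\infty$, which guarantees $\bm{W}\bm{W}^\prime$ converges) and keeping the imaginary unit notation separate from the summation index $i$.
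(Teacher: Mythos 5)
Your proposal is correct, and for part (i), the stationarity claim, and \eqref{eq:sigmaw} it follows exactly the paper's route: the spectral density $\bm{f}_y(\theta)=(2\pi)^{-1}\bm{\Psi}_*(e^{-i\theta})\bm{\Sigma}_\varepsilon\bm{\Psi}_*^{\HH}(e^{-i\theta})$, the block-Toeplitz eigenvalue bounds of \cite{basu2015regularized} (Proposition 2.3), the Loewner sandwich isolating $\lambda_{\min/\max}(\bm{\Sigma}_\varepsilon)$, and for $\underline{\bm{\Sigma}}_w$ the filtered spectral density $\bm{f}_w(\theta)=\mathcal{W}(e^{-i\theta})\bm{f}_y(\theta)\mathcal{W}^{\HH}(e^{-i\theta})$ with $\|\mathcal{W}(e^{-i\theta})\|_{\op}\leq\sum_i\|\bm{W}_i\|_{\op}$. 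The one place you genuinely diverge is \eqref{eq:sigmaw0}: you write $\bm{\Sigma}_w=\bm{W}\underline{\bm{\Sigma}}_x\bm{W}^\prime$ with $\underline{\bm{\Sigma}}_x=\mathbb{E}(\bm{x}_t\bm{x}_t^\prime)$ a \emph{semi-infinite} block-Toeplitz operator, and apply Loewner ordering with its extremal spectrum; the paper instead evaluates the quadratic form directly in the frequency domain, $\bm{u}^\prime\bm{\Sigma}_w\bm{u}=\int_{-\pi}^{\pi}\bm{u}^\prime\mathcal{W}(e^{-i\theta})\bm{f}_y(\theta)\mathcal{W}^{\HH}(e^{-i\theta})\bm{u}\,d\theta$, sandwiches the integrand by $\mathpzc{m}(\bm{f}_y)$ and $\mathpzc{M}(\bm{f}_y)$, and uses the orthogonality $\int_{-\pi}^{\pi}e^{i\ell\theta}d\theta=0$ for $\ell\neq0$ to identify $\int_{-\pi}^{\pi}\bm{u}^\prime\mathcal{W}\mathcal{W}^{\HH}\bm{u}\,d\theta=2\pi\,\bm{u}^\prime\bm{W}\bm{W}^\prime\bm{u}$. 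The paper's version stays entirely with finite-dimensional quadratic forms and the already-cited finite-$T$ bound, so nothing about infinite-dimensional operators needs justification; your version is equally valid but carries two extra (routine) obligations that you correctly flag only in passing: extending the finite-$T$ Toeplitz eigenvalue bounds to the semi-infinite operator $\underline{\bm{\Sigma}}_x$ by a density-of-finitely-supported-vectors argument, and checking that $\bm{W}^\prime\bm{u}\in\ell^2$ and that $\bm{W}\underline{\bm{\Sigma}}_x\bm{W}^\prime$ converges absolutely (both follow from $\sum_i\|\bm{W}_i\|_{\op}<\infty$). Either way the constants and conclusions are identical.
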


			\subsection{Proof of Lemma \ref{lemma:rscG1}}
			
			\noindent\textbf{Proof of \eqref{eq:rscG1}:}
			Denote $\bm{W}=\bm{L}_{\rm{stack}}^\prime(\bm{\omega}^*)\otimes \bm{I}_N$, and let $\bm{L}_{\rm{stack}}^{(j)}(\bm{\omega}^*)$ be the $j$th row of $\bm{L}_{\rm{stack}}(\bm{\omega}^*)$ for $j\geq1$. Then 
			$\bm{z}_t =\bm{W}\bm{x}_{t}=\sum_{j=1}^\infty \bm{W}_j\bm{y}_{t-j}$ and $\bm{W}=(\bm{W}_1, \bm{W}_2, \dots)$,
			where $\bm{x}_t=(\bm{y}_{t-1}^\prime, \bm{y}_{t-2}^\prime, \dots)^\prime$ and  $\bm{W}_j=\bm{L}_{\rm{stack}}^{(j)}(\bm{\omega}^*)\otimes \bm{I}_N$  for $j\geq1$. By the definition of $\bm{L}_{\rm{stack}}(\bm{\omega}^*)$ 
			and Lemma \ref{cor1}(i), we have
			\[
			\|\bm{L}_{\rm{stack}}^{(j)}(\bm{\omega}^*)\|_{2}=1 \quad\text{for } 1\leq j\leq p \quad\text{and}\quad \|\bm{L}_{\rm{stack}}^{(j)}(\bm{\omega}^*)\|_{2}\leq C_L\sqrt{J}\bar{\rho}^j \quad\text{for } j\geq p+1,
			\]
			where $J=2(r+2s)$. Thus,
			\[
			\sum_{j=1}^{\infty}\|\bm{W}_j\|_{\op}=\sum_{j=1}^{\infty}\|\bm{L}_{\rm{stack}}^{(j)}(\bm{\omega}^*)\|_{2}\leq C_{\bar{\rho}},
			\]
			where $C_{\bar{\rho}}=C_L\sqrt{J}\bar{\rho}(1-\bar{\rho})^{-1}$. In addition,  by  Lemma \ref{lemma:fullrank},
			\[ 
			\min\{1, c_{\bar{\rho}}\} \leq  \sigma_{\min, L}=\sigma_{\min}(\bm{W})\leq \sigma_{\max}(\bm{W})=\sigma_{\max, L}\leq \max\{1, C_{\bar{\rho}}\}.
			\]
			Then, by setting $\bm{w}_t=\bm{z}_t$, it follows from Lemma \ref{lemma:Wcov}(ii) that
			\begin{equation}\label{eq:sigmaz0}
				\kappa_1 \leq \lambda_{\min}(\bm{\Sigma}_w) \leq  \lambda_{\max}(\bm{\Sigma}_w)  \leq  \kappa_2
			\end{equation}
			and 
			\begin{equation}\label{eq:sigmaz}
				\lambda_{\max}(\underline{\bm{\Sigma}}_w)\leq \lambda_{\max}(\bm{\Sigma}_\varepsilon) \mu_{\min}(\bm{\Psi}_*) C_{\bar{\rho}}^2	\leq \kappa_2,
			\end{equation}
			where $\kappa_1=	\lambda_{\min}(\bm{\Sigma}_\varepsilon)\mu_{\min}(\bm{\Psi}_*) 	\min\{1, c_{\bar{\rho}}^2\}$ and $ \kappa_2=\lambda_{\max}(\bm{\Sigma}_\varepsilon)\mu_{\max}(\bm{\Psi}_*)\max\{1, C_{\bar{\rho}}^2\}$. 
			
			Furthermore, since $\bm{z}_t=\mathcal{W}(B)\bm{y}_t=\mathcal{W}(B)\bm{\Psi}_*(B)\bm{\varepsilon}_{t}$ is a zero-mean and stationary time series, where $\mathcal{W}(B) = \sum_{i=1}^{\infty}\bm{W}_i B^i$, we can  apply Lemma \ref{lemma:hansonw}(ii) with $T_0=0$, $\bm{w}_t=\bm{z}_t$,  and  $\delta=\kappa_1/(2\sigma^2\kappa_2)$, in conjunction with \eqref{eq:sigmaz0} and \eqref{eq:sigmaz}, 
			to obtain 
			\begin{equation*}
				\mathbb{P}\left \{\left |\frac{1}{T}\sum_{t=1}^{T}\|\cm{M}_{(1)}\bm{z}_t\|_2^2 - \mathbb{E}\left (\|\cm{M}_{(1)}\bm{z}_t\|_2^2\right )\right | \geq 0.5\kappa_1\|\cm{M}\|_{\Fr}^2\right \} \leq 2e^{-c_\sigma(\kappa_1/\kappa_2)^2 T}, 
			\end{equation*}
			where $c_\sigma=c\min\{0.5\sigma^{-2}, 0.25\sigma^{-4}\}$.
			Note that by \eqref{eq:sigmaz0} and $\bm{w}_t=\bm{z}_t$, we can show that
			\[
			\kappa_1 \|\cm{M}\|_{\Fr}^2 \leq \lambda_{\min}(\bm{\Sigma}_w)\|\cm{M}\|_{\Fr}^2\leq
			\mathbb{E}\left (\|\cm{M}_{(1)}\bm{z}_t\|_2^2 \right )
			\leq\lambda_{\max}(\bm{\Sigma}_w) \|\cm{M}\|_{\Fr}^2\leq \kappa_2 \|\cm{M}\|_{\Fr}^2.
			\]
			As a result, for any   $\cm{M}\in\mathbb{R}^{N\times N\times (d+r+2s)}$,  we have
			\begin{equation}\label{eq:hansonw}
				\mathbb{P}\left(  0.5\kappa_1 \|\cm{M}\|_{\Fr}^2\leq \frac{1}{T}\sum_{t=1}^{T}\|\cm{M}_{(1)}\bm{z}_t\|_2^2 \leq 1.5 \kappa_2 \|\cm{M}\|_{\Fr}^2  \right)  
				\geq 1- 2e^{-c_\sigma(\kappa_1/\kappa_2)^2 T}.
			\end{equation}
			
			Next we strengthen \eqref{eq:hansonw} to  union bounds that hold for all $\cm{M}\in\bm{\Xi}_1$. For simplicity, denote $\bm{Z}=(\bm{z}_T,\dots, \bm{z}_1)$, and then
			\[
			\frac{1}{T}\sum_{t=1}^{T}\|\cm{M}_{(1)}\bm{z}_t\|_2^2 = \frac{1}{T}\|\cm{M}_{(1)} \bm{Z}\|_{\Fr}^2.
			\]
			We consider a minimal generalized $\epsilon_0$-net $\bar{\bm{\Xi}}(\epsilon_0)$ of $\bm{\Xi}_1$, where $0<\epsilon_0<2/3$ will be chosen later. By  Lemma \ref{lemma:epsilon-net}(ii), any $\cm{M}\in\bar{\bm{\Xi}}(\epsilon_0)$ satisfies $c_{\cmtt{M}}\leq \|\cm{M}\|_{\Fr}\leq C_{\cmtt{M}}$. Define the event 
			\[
			\mathcal{E}(\epsilon_0)=\left \{\forall \cm{M}\in\bar{\bm{\Xi}}(\epsilon_0): \sqrt{0.5c_{\cmtt{M}}\kappa_1}<\frac{1}{\sqrt{T}}\|\cm{M}_{(1)} \bm{Z}\|_{\Fr}<\sqrt{1.5C_{\cmtt{M}}\kappa_2}\right \}.
			\]
			Then, by the pointwise bounds in \eqref{eq:hansonw} and the covering number in Lemma \ref{lemma:epsilon-net}(i), 
			we have
			\begin{align}\label{eq:event}
				\mathbb{P}\{\stcomp{\mathcal{E}}(\epsilon_0)\}&\leq e^{Cd_{\pazocal{R}}\log(1/\epsilon_0)} \max_{\cmtt{M}\in\bar{\bm{\Xi}}(\epsilon_0)} \mathbb{P} \left [\stcomp{\left \{ 0.5c_{\cmtt{M}}\kappa_1 \leq \frac{1}{T}\|\cm{M}_{(1)}\bm{Z}\|_{\Fr}^2 \leq 1.5C_{\cmtt{M}}\kappa_2  \right \} }\right ] \notag\\
				&\leq  2\exp\left\{ - c_\sigma(\kappa_1/\kappa_2)^2 T + Cd_{\pazocal{R}}\log(1/\epsilon_0)\right\}.
			\end{align}
			
			By  Lemma \ref{lemma:epsilon-net}(iii), it holds
			\begin{equation}\label{eq:upper}
				\mathcal{E}(\epsilon_0)\subset  \left \{ \max_{\cmtt{M}\in\bar{\bm{\Xi}}(\epsilon_0)}\frac{1}{\sqrt{T}}\|\cm{M}_{(1)} \bm{Z}\|_{\Fr}\leq \sqrt{1.5C_{\cmtt{M}}\kappa_2}\right \}\subset \left \{ \sup_{\cmtt{M}\in\bm{\Xi}_1}\frac{1}{\sqrt{T}}\|\cm{M}_{(1)} \bm{Z}\|_{\Fr}\leq \frac{\sqrt{1.5C_{\cmtt{M}}\kappa_2}}{1-1.5\epsilon_0}\right \}.
			\end{equation}
			Moreover,  for any $\cm{M}\in\bm{\Xi}_1$ and its corresponding $\bar{\cm{M}}\in\bar{\bm{\Xi}}(\epsilon_0)$ defined as in  the proof of Lemma \ref{lemma:epsilon-net}(iii), similarly to \eqref{eq:disc3}, we can show that
			\begin{align*}
				\frac{1}{\sqrt{T}}	\| \cm{M}_{(1)} \bm{Z}\|_{\Fr}
				&	\geq \frac{1}{\sqrt{T}}\| \bar{\cm{M}}_{(1)} \bm{Z}\|_{\Fr} - \frac{1}{\sqrt{T}}\| (\cm{M}-\bar{\cm{M}})_{(1)}\bm{Z}\|_{\Fr} \\
				&\geq \min_{\bar{\cmt{M}} \in \bar{\bm{\Xi}}(\epsilon)} \frac{1}{\sqrt{T}} \| \bar{\cm{M}}_{(1)} \bm{Z}\|_{\Fr} - \frac{1}{\sqrt{T}} \sum_{i=1}^{4} \| (\cm{M}_i)_{(1)}\bm{Z}\|_{\Fr}\\
				&\geq \min_{\bar{\cmt{M}} \in \bar{\bm{\Xi}}(\epsilon)} \frac{1}{\sqrt{T}}\| \bar{\cm{M}}_{(1)} \bm{Z}\|_{\Fr}  -  \sum_{i=1}^{4} \|\cm{M}_i\|_{\Fr} 	\sup_{\cmtt{M} \in \bm{\Xi}_1} \frac{1}{\sqrt{T}} \| \cm{M}_{(1)} \bm{Z}\|_{\Fr}  \\
				&\geq \min_{\bar{\cmt{M}} \in \bar{\bm{\Xi}}(\epsilon)} \frac{1}{\sqrt{T}} \| \bar{\cm{M}}_{(1)} \bm{Z}\|_{\Fr}  - 1.5\epsilon_0 \sup_{\cmtt{M} \in \bm{\Xi}_1} \frac{1}{\sqrt{T}}\| \cm{M}_{(1)} \bm{Z}\|_{\Fr}.
			\end{align*}
			Taking the infimum over all  $\cm{M}\in\bm{\Xi}_1$ and combining the result with \eqref{eq:upper},  we can show that on the event $\mathcal{E}(\epsilon_0)$, it holds
			\begin{align*}
				\inf_{\cmtt{M}\in\bm{\Xi}_1}\frac{1}{\sqrt{T}}	\| \cm{M}_{(1)} \bm{Z}\|_{\Fr} &\geq \sqrt{0.5c_{\cmtt{M}}\kappa_1} - 1.5\epsilon_0\cdot \frac{\sqrt{1.5C_{\cmtt{M}}\kappa_2}}{1-1.5\epsilon_0} 
				\geq \sqrt{0.5c_{\cmtt{M}}\kappa_1}-3 \epsilon_0 \sqrt{1.5C_{\cmtt{M}}\kappa_2}
			\end{align*}
			if $0<\epsilon_0\leq1/3$. Thus, by setting 
			\[
			\epsilon_0= \min\left\{\frac{1}{6}\sqrt{\frac{c_{\cmtt{M}}\kappa_1}{3 C_{\cmtt{M}}\kappa_2}}, \frac{1}{3} \right \},
			\] 
			we have
			\begin{equation}\label{eq:lower}
				\mathcal{E}(\epsilon_0)\subset \left \{ \inf_{\cmtt{M}\in\bm{\Xi}_1}\frac{1}{\sqrt{T}}\|\cm{M}_{(1)} \bm{Z}\|_{\Fr}\geq \frac{\sqrt{0.5c_{\cmtt{M}}\kappa_1}}{2}\right \}.
			\end{equation} 
			As a result, with the above choice of $\epsilon_0$, we have
			\[
			\mathcal{E}(\epsilon_0) \subset \left \{ \frac{c_{\cmtt{M}}\kappa_1}{8} \leq \inf_{\cmtt{M}\in\bm{\Xi}_1}\frac{1}{T}\sum_{t=1}^{T}\|\cm{M}_{(1)}\bm{z}_t\|_2^2  \leq 
			\sup_{\cmtt{M}\in\bm{\Xi}_1}\frac{1}{T}\sum_{t=1}^{T}\|\cm{M}_{(1)}\bm{z}_t\|_2^2  \leq 6C_{\cmtt{M}}\kappa_2  \right \},
			\]
			which, together with \eqref{eq:event} and the condition that  $T\gtrsim(\kappa_2/\kappa_1)^2d_{\pazocal{R}}\log(\kappa_2/\kappa_1)$, leads to \eqref{eq:rscG1}.
			
			\bigskip
			\noindent\textbf{Proof of \eqref{eq:devG}:} Consider a minimal generalized $1/3$-net $\bar{\bm{\Xi}}(1/3)$ of $\bm{\Xi}_1$.
			By \eqref{eq:event} and \eqref{eq:upper}, we have 
			\[
			\mathbb{P}\left \{ \max_{\cmtt{M}\in\bar{\bm{\Xi}}(1/3)}\frac{1}{\sqrt{T}}\|\cm{M}_{(1)} \bm{Z}\|_{\Fr} > \sqrt{1.5C_{\cmtt{M}}\kappa_2}\right \} \leq 2e^{-cd_{\pazocal{R}}\log(\kappa_2/\kappa_1)},
			\]
			under the condition that  $T\gtrsim(\kappa_2/\kappa_1)^2d_{\pazocal{R}}\log(\kappa_2/\kappa_1)$. Note that $\sum_{t=1}^{T}\langle \cm{M}_{(1)} \bm{z}_t, \bm{\varepsilon}_t \rangle =\langle \cm{M}_{(1)}, \sum_{t=1}^{T} \bm{\varepsilon}_t  \bm{z}_t^\prime \rangle $. Then  by  Lemma \ref{lemma:epsilon-net}, for any $K>0$, we have 
			\begin{align*}
				& \mathbb{P}\left \{ \sup_{\cmtt{M} \in \bm{\Xi}_1 } \frac{1}{T}\sum_{t=1}^{T}\langle \cm{M}_{(1)} \bm{z}_t, \bm{\varepsilon}_t \rangle  \geq K \right \} \\
				& \hspace{5mm}\leq 
				\mathbb{P}\left \{ \max_{\cmtt{M} \in \bar{\bm{\Xi}}(1/3)} \frac{1}{T}\sum_{t=1}^{T}\langle \cm{M}_{(1)}\bm{z}_t, \bm{\varepsilon}_t \rangle  \geq \frac{K}{2} \right \}\\
				& \hspace{5mm}\leq  \mathbb{P}\left \{  \max_{\cmtt{M} \in \bar{\bm{\Xi}}(1/3)} \frac{1}{T}\sum_{t=1}^{T}\langle \cm{M}_{(1)}\bm{z}_t, \bm{\varepsilon}_t \rangle  \geq \frac{K}{2}, \;  \max_{\cmtt{M} \in \bar{\bm{\Xi}}(1/3)}\frac{1}{T}\sum_{t=1}^{T}\|\cm{M}_{(1)}\bm{z}_t\|_2^2  \leq 1.5C_{\cmtt{M}}\kappa_2  \right \} + 2e^{-cd_{\pazocal{R}}\log(\kappa_2/\kappa_1)}\\
				& \hspace{5mm}\leq  e^{C d_{\pazocal{R}} \log 3} \max_{\cmtt{M} \in \bar{\bm{\Xi}}(1/3)}  \mathbb{P}\left \{  \frac{1}{T}\sum_{t=1}^{T}\langle \cm{M}_{(1)}\bm{z}_t, \bm{\varepsilon}_t \rangle  \geq \frac{K}{2}, \;\frac{1}{T}\sum_{t=1}^{T}\|\cm{M}_{(1)}\bm{z}_t\|_2^2  \leq 1.5C_{\cmtt{M}}\kappa_2  \right \} + 2e^{-cd_{\pazocal{R}}\log(\kappa_2/\kappa_1)},
			\end{align*}
			where the first inequality follows from \eqref{eq:disc1}, and the last from the covering number in Lemma \ref{lemma:epsilon-net}(i).  For any   $\cm{M}\in\mathbb{R}^{N\times N\times (d+r+2s)}$, we can apply Lemma \ref{lemma:martgl} with $\bm{w}_t=\cm{M}_{(1)}\bm{z}_t$ to obtain the following pointwise bound:
			\[
			\mathbb{P}\left \{  \frac{1}{T}\sum_{t=1}^{T}\langle \cm{M}_{(1)}\bm{z}_t, \bm{\varepsilon}_t \rangle  \geq \frac{K}{2}, \;\frac{1}{T}\sum_{t=1}^{T}\|\cm{M}_{(1)}\bm{z}_t\|_2^2  \leq 1.5C_{\cmtt{M}}\kappa_2  \right \} \leq \exp\left \{-\frac{K^2 T}{12\sigma^2 C_{\cmtt{M}}\kappa_2   \lambda_{\max}(\bm{\Sigma}_{\varepsilon})}\right \}.
			\]
			Then, choosing 
			$K$ such that ${K^2 T}/\{12\sigma^2 C_{\cmtt{M}}\kappa_2   \lambda_{\max}(\bm{\Sigma}_{\varepsilon})\}\gtrsim  d_{\pazocal{R}}$, i.e., 
			$K\asymp \sqrt{\kappa_2   \lambda_{\max}(\bm{\Sigma}_{\varepsilon})d_{\pazocal{R}}/T}$, we have
			\[
			\mathbb{P}\left \{ \sup_{\cmtt{M} \in \bm{\Xi}_1 } \frac{1}{T}\sum_{t=1}^{T}\langle \cm{M}_{(1)} \bm{z}_t, \bm{\varepsilon}_t \rangle  \geq K \right \} \leq e^{-c d_{\pazocal{R}}}+  2e^{-cd_{\pazocal{R}}\log(\kappa_2/\kappa_1)},
			\] 
			and hence \eqref{eq:devG}. The proof  of Lemma \ref{lemma:rscG1} is complete.
			
			\subsection{Proof of Lemma \ref{lemma:Rjy}}
			
			The following result for low-rank matrices is used in the proof of Lemma \ref{lemma:Rjy}.
			
			\begin{lemma}[Covering number and discretization for low-rank matrices]\label{lemma:coverLR}
				Let $\bm{\Pi}(\pazocal{R})= \{ \bm{M} \in \mathbb{R}^{N \times N} \mid \|\bm{M}\|_{\Fr} =1, \rank(\bm{M}) \leq \pazocal{R} \}$, and let $\bar{\bm{\Pi}}(\pazocal{R})$ be a minimal $1/2$-net of $\bm{\Pi}(\pazocal{R})$ in the Frobenius norm. Then the cardinality of $\bar{\bm{\Pi}}(\pazocal{R})$ satisfies 
				\[
				\log|\bar{\bm{\Pi}}(\pazocal{R})|\leq (2N+1)\pazocal{R} \log 18.
				\]
				Moreover, for any $\bm{X}\in\mathbb{R}^{N\times N}$, it holds
				\[
				\sup_{\bm{M} \in  \bm{\Pi}(\pazocal{R})}\langle \bm{M}, \bm{X} \rangle \leq 4 \max_{\bm{M} \in  \bar{\bm{\Pi}}(\pazocal{R})} \langle \bm{M}, \bm{X} \rangle.
				\]
			\end{lemma}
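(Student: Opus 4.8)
The plan is to handle the two assertions separately, since both are standard facts about the manifold of low-rank matrices: I would prove the cardinality bound by an explicit $\epsilon$-net construction through the singular value decomposition, and the discretization inequality by a net argument tailored to the linear functional $\bm{M}\mapsto\langle\bm{M},\bm{X}\rangle$.

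For the cardinality bound, I would parametrize any $\bm{M}\in\bm{\Pi}(\pazocal{R})$ via a thin singular value decomposition $\bm{M}=\bm{U}\bm{\Sigma}\bm{V}'$, where $\bm{U},\bm{V}\in\mathbb{R}^{N\times\pazocal{R}}$ have orthonormal columns and $\bm{\Sigma}=\diag(\sigma_1,\dots,\sigma_{\pazocal{R}})$ satisfies $\|\bm{\Sigma}\|_{\Fr}=\|\bm{M}\|_{\Fr}=1$. This reduces the task to covering three sets: the two factors lying in $\{\bm{U}\in\mathbb{R}^{N\times\pazocal{R}}:\|\bm{U}\|_{\op}\le 1\}$ (an $N\pazocal{R}$-dimensional operator-norm ball) and the unit sphere of diagonal matrices (an $\pazocal{R}$-dimensional Euclidean sphere). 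Each admits, by the standard volumetric bound, an $\epsilon$-net of size at most $(3/\epsilon)^{\dim}$ in its respective norm, and crucially the net points for $\bm{U},\bm{V}$ again obey $\|\cdot\|_{\op}\le 1$. Taking $\epsilon=1/6$ and forming products $\bar{\bm{U}}\bar{\bm{\Sigma}}\bar{\bm{V}}'$, the telescoping estimate
\[
\|\bm{U}\bm{\Sigma}\bm{V}'-\bar{\bm{U}}\bar{\bm{\Sigma}}\bar{\bm{V}}'\|_{\Fr}\le\|(\bm{U}-\bar{\bm{U}})\bm{\Sigma}\|_{\Fr}+\|\bar{\bm{U}}(\bm{\Sigma}-\bar{\bm{\Sigma}})\|_{\Fr}+\|\bar{\bm{U}}\bar{\bm{\Sigma}}(\bm{V}-\bar{\bm{V}})'\|_{\Fr}\le 3\epsilon=\tfrac12
\]
shows this product is a $1/2$-net, where I repeatedly invoke the submultiplicativity $\|\bm{A}\bm{B}\|_{\Fr}\le\|\bm{A}\|_{\op}\|\bm{B}\|_{\Fr}$ together with the orthonormal invariance of the Frobenius norm. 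The cardinality is thus at most $(3/\epsilon)^{(2N+1)\pazocal{R}}=18^{(2N+1)\pazocal{R}}$, and since a minimal net is no larger than any net, $\log|\bar{\bm{\Pi}}(\pazocal{R})|\le(2N+1)\pazocal{R}\log 18$.

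For the discretization inequality, I would let $\bm{M}^{*}\in\bm{\Pi}(\pazocal{R})$ attain the supremum (which exists by compactness) and select $\bar{\bm{M}}\in\bar{\bm{\Pi}}(\pazocal{R})$ with $\|\bm{M}^{*}-\bar{\bm{M}}\|_{\Fr}\le 1/2$. Decomposing $\langle\bm{M}^{*},\bm{X}\rangle=\langle\bar{\bm{M}},\bm{X}\rangle+\langle\bm{M}^{*}-\bar{\bm{M}},\bm{X}\rangle$ and bounding $\langle\bar{\bm{M}},\bm{X}\rangle\le\max_{\bar{\bm{\Pi}}(\pazocal{R})}\langle\cdot,\bm{X}\rangle$, it remains to control the residual. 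The only genuinely non-mechanical point is that $\bm{M}^{*}-\bar{\bm{M}}$ has rank at most $2\pazocal{R}$, whence $\langle\bm{M}^{*}-\bar{\bm{M}},\bm{X}\rangle\le\tfrac12(\sum_{i=1}^{2\pazocal{R}}\sigma_i^2(\bm{X}))^{1/2}$, using that $\sup_{\rank(\bm{D})\le 2\pazocal{R},\,\|\bm{D}\|_{\Fr}=1}\langle\bm{D},\bm{X}\rangle=(\sum_{i=1}^{2\pazocal{R}}\sigma_i^2(\bm{X}))^{1/2}$. The singular-value monotonicity $\sum_{i=1}^{2\pazocal{R}}\sigma_i^2(\bm{X})\le 2\sum_{i=1}^{\pazocal{R}}\sigma_i^2(\bm{X})$ then bounds this residual by $\tfrac{1}{\sqrt2}\,\sup_{\bm{\Pi}(\pazocal{R})}\langle\cdot,\bm{X}\rangle$, since $\sup_{\bm{\Pi}(\pazocal{R})}\langle\cdot,\bm{X}\rangle=(\sum_{i=1}^{\pazocal{R}}\sigma_i^2(\bm{X}))^{1/2}$. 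Rearranging yields $\sup_{\bm{\Pi}(\pazocal{R})}\langle\cdot,\bm{X}\rangle\le(1-1/\sqrt2)^{-1}\max_{\bar{\bm{\Pi}}(\pazocal{R})}\langle\cdot,\bm{X}\rangle=(2+\sqrt2)\max_{\bar{\bm{\Pi}}(\pazocal{R})}\langle\cdot,\bm{X}\rangle\le 4\max_{\bar{\bm{\Pi}}(\pazocal{R})}\langle\cdot,\bm{X}\rangle$, which is the claim.

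The main obstacle I anticipate lies in the bookkeeping of the first assertion: one must cover the orthonormal factors in the operator norm (rather than the Frobenius or max-column norm), so that the net points retain $\|\cdot\|_{\op}\le 1$; this is precisely what keeps each of the three telescoping terms bounded by $\epsilon$ uniformly in $\pazocal{R}$, whereas covering in the wrong norm injects a spurious $\sqrt{\pazocal{R}}$ factor. Beyond this choice of norm, and the elementary rank-$2\pazocal{R}$ observation in the second assertion, no step requires more than the standard volumetric covering estimate and submultiplicativity of matrix norms.
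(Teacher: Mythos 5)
Your proof is correct, and its overall structure matches the paper's; the differences are in sourcing and bookkeeping rather than in the underlying ideas. For the cardinality bound, the paper simply cites Lemma 3.1 of Candès and Plan (2011), whereas you reconstruct that lemma's proof: SVD parametrization $\bm{M}=\bm{U}\bm{\Sigma}\bm{V}'$, volumetric $\epsilon$-nets for the two factors and the diagonal part, and a three-term telescoping estimate. Your choice to cover $\bm{U},\bm{V}$ in the operator norm is a legitimate variant of the reference's covering of the factors in the maximum-column-norm; both make each telescoping term $\le\epsilon$ via $\|\bm{A}\bm{B}\|_{\Fr}\le\|\bm{A}\|_{\op}\|\bm{B}\|_{\Fr}$, and both land on $18^{(2N+1)\pazocal{R}}$ at radius $1/2$. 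For the discretization inequality, the paper notes $\rank(\bm{M}-\bar{\bm{M}})\le 2\pazocal{R}$ and splits $\bm{M}-\bar{\bm{M}}=\bm{M}^{(1)}+\bm{M}^{(2)}$ into two mutually orthogonal rank-$\le\pazocal{R}$ pieces with $\|\bm{M}^{(1)}\|_{\Fr}+\|\bm{M}^{(2)}\|_{\Fr}\le\sqrt{2}\,\|\bm{M}-\bar{\bm{M}}\|_{\Fr}\le\sqrt{2}/2$, then self-bounds against $\sup_{\bm{\Pi}(\pazocal{R})}\langle\cdot,\bm{X}\rangle$ by taking a supremum over $\bm{M}\in\bm{\Pi}(\pazocal{R})$ on both sides — so no extremizer or compactness argument is needed. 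You instead pick a maximizer by compactness and control the residual through the closed-form value $\bigl(\sum_{i\le k}\sigma_i^2(\bm{X})\bigr)^{1/2}$ of the supremum over unit-Frobenius rank-$\le k$ matrices, combined with $\sum_{i\le 2\pazocal{R}}\sigma_i^2(\bm{X})\le 2\sum_{i\le\pazocal{R}}\sigma_i^2(\bm{X})$; this is an equivalent route that produces the identical contraction factor $\sqrt{2}/2$ and the same final constant $(1-\sqrt{2}/2)^{-1}=2+\sqrt{2}\le 4$. One shared, benign imprecision: your product net (like the one in the cited reference) consists of rank-$\le\pazocal{R}$ matrices in the unit Frobenius ball rather than on the sphere, so it is strictly an external net; if a ``minimal $1/2$-net'' is required to be a subset of $\bm{\Pi}(\pazocal{R})$, converting an external net to an internal one costs a factor of two in the radius. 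The paper inherits exactly the same issue from its citation, and it is immaterial for how the lemma is used downstream, where only the rank of the net points and their proximity matter.
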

			
			\begin{proof}[Proof of Lemma \ref{lemma:coverLR}]
				The covering number is given by Lemma 3.1 in \cite{candes2011tight}.
				For any $\bm{M} \in \bm{\Pi}(\pazocal{R})$, there exists $\bar{\bm{M}} \in \bar{\bm{\Pi}}(\pazocal{R})$ satisfying $\|\bm{M} - \bar{\bm{M}}\|_{\Fr} \leq 1/2$. Note that the rank of $\bm{M} - \bar{\bm{M}}$ is at most $2\pazocal{R}$. Based on the singular value decomposition of $\bm{M} - \bar{\bm{M}}$, we can find two matrices $\bm{M}^{(1)}$ and $\bm{M}^{(2)}$ with rank at most $\pazocal{R}$ such that
				$\bm{M} - \bar{\bm{M}} = \bm{M}^{(1)} + \bm{M}^{(2)}$ and $\langle \bm{M}^{(1)} , \bm{M}^{(2)}\rangle = 0$. 
				Then it holds $\| \bm{M}^{(1)} \|_{\Fr}  + \| \bm{M}^{(2)} \|_{\Fr} \leq \sqrt{2} \|\bm{M} - \bar{\bm{M}}\|_{\Fr} \leq \sqrt{2}/2$. 
				Hence, 
				for any $\bm{X}\in\mathbb{R}^{N\times N}$, we have
				\begin{align*}
					\langle \bm{M}, \bm{X} \rangle=  \langle\bar{\bm{M}}, \bm{X}\rangle + \sum_{i=1}^{2}\langle\bm{M}^{(i)},\bm{X}\rangle  
					& \leq \max_{\small{\bar{\bm{M}} \in  \bar{\bm{\Pi}}(\pazocal{R})}}  \langle\bar{\bm{M}}, \bm{X}\rangle+ \sum_{i=1}^{2} \|\bm{M}^{(i)}\|_{\Fr}\sup_{\small{ \bm{M} \in  \bm{\Pi}(\pazocal{R})}} \langle \bm{M}, \bm{X} \rangle \\
					& \leq \max_{\small{\bar{\bm{M}} \in  \bar{\bm{\Pi}}(\pazocal{R})}} \langle\bar{\bm{M}}, \bm{X}\rangle + \frac{\sqrt{2}}{2} \sup_{\small{ \bm{M} \in  \bm{\Pi}(\pazocal{R})}} \langle \bm{M}, \bm{X} \rangle .
				\end{align*}
				Taking supremum with respect to $  \bm{M} \in  \bm{\Pi}(\pazocal{R})$ on both sides of the last inequality, we accomplish the proof of this lemma.
			\end{proof}
			
			
			\noindent\textbf{Proof of \eqref{eq:Rjy1}:}
			Denote $S^{N-1} = \{\bm{u} \in \mathbb{R}^{N} \mid \|\bm{u}\|_2=1\}$, and let $\bar{S}^{N-1}$ be a minimal $(1/4)$-net of $S^{N-1}$ in the Euclidean norm. Fix $j\geq1$. By Lemma 5.4 in \cite{vershynin2010introduction}, 
			\begin{equation*}
				\Big\| \frac{1}{T} \sum_{t=1}^{T}\bm{y}_{t-p-j}\bm{y}_{t-p-j}^\prime \Big\|_{\op} 
				\leq 
				2 \max_{\bm{u}\in\bar{S}^{N-1}}\bm{u}^\prime \left(  \frac{1}{T} \sum_{t=1}^{T}\bm{y}_{t-p-j}\bm{y}_{t-p-j}^\prime \right) \bm{u}=2 \max_{\bm{u}\in\bar{S}^{N-1}}\frac{1}{T} \sum_{t=1}^{T} (\bm{u}^\prime\bm{y}_{t-p-j})^2.
			\end{equation*}
			Then for any $K>0$, 
			\begin{align}\label{eq:Ry2}
				\mathbb{P}\left (\Big\| \frac{1}{T} \sum_{t=1}^{T}\bm{y}_{t-p-j}\bm{y}_{t-p-j}^\prime \Big\|_{\op}  \geq K\right ) &\leq 	\mathbb{P}\left \{ \max_{\bm{u}\in\bar{S}^{N-1}}\frac{1}{T} \sum_{t=1}^{T} (\bm{u}^\prime\bm{y}_{t-p-j})^2 \geq K/2\right \} \notag \\
				&\leq 9^N \max_{\bm{u}\in S^{N-1}} \mathbb{P}\left \{\frac{1}{T} \sum_{t=1}^{T} (\bm{u}^\prime\bm{y}_{t-p-j})^2 \geq K/2\right \},
			\end{align}
			where we used the fact that the cardinality of $\bar{S}^{N-1}$ satisfies $|\bar{S}^{N-1}|\leq 9^N$. For any $\bm{u}\in S^{N-1}$, applying Lemma \ref{lemma:hansonw}(ii) with $\bm{M}=\bm{u}^\prime$, $\bm{w}_t=\bm{y}_{t-1}$, and $T_0=1-p-j$, together with the result
			\[
			\lambda_{\max}(\underline{\bm{\Sigma}}_w)=\lambda_{\max}(\underline{\bm{\Sigma}}_y)\leq  \lambda_{\max}(\bm{\Sigma}_\varepsilon)\mu_{\max}(\bm{\Psi}_*) 
			\]
			as implied by Lemma \ref{lemma:Wcov}(i), we can show that 
			\begin{equation*}
				\mathbb{P}\left \{\left |\frac{1}{T} \sum_{t=1}^{T} (\bm{u}^\prime\bm{y}_{t-p-j})^2 -\mathbb{E}\{(\bm{u}^\prime\bm{y}_{t-p-j})^2\}\right | \geq  \delta \sigma^2\lambda_{\max}(\bm{\Sigma}_\varepsilon) \mu_{\max}(\bm{\Psi}_*) \right \} \leq 2e^{-c\min(\delta, \delta^2)T}.
			\end{equation*}
			holds for any $\delta>0$. In addition, by Lemma \ref{lemma:Wcov}(i),
			\[
			\mathbb{E}\{(\bm{u}^\prime\bm{y}_{t-p-j})^2\}\leq \lambda_{\max}(\bm{\Sigma}_y)\leq  \lambda_{\max}(\bm{\Sigma}_\varepsilon)\mu_{\max}(\bm{\Psi}_*).
			\]
			In view of the above result and taking $\delta=j$, we further have
			\begin{equation}\label{eq:upperj}
				\mathbb{P}\left \{\frac{1}{T} \sum_{t=1}^{T} (\bm{u}^\prime\bm{y}_{t-p-j})^2 \geq  \lambda_{\max}(\bm{\Sigma}_\varepsilon)\mu_{\max}(\bm{\Psi}_*)(j \sigma^2+1)\right \} \leq 2e^{-cjT}.
			\end{equation}
			Combining \eqref{eq:Ry2} and \eqref{eq:upperj}, if $T\geq 2N\log9/c$, then
			\begin{equation}\label{eq:yop}
				\mathbb{P}\left \{\Big\| \frac{1}{T} \sum_{t=1}^{T}\bm{y}_{t-p-j}\bm{y}_{t-p-j}^\prime \Big\|_{\op}  \geq  2\lambda_{\max}(\bm{\Sigma}_\varepsilon)\mu_{\max}(\bm{\Psi}_*)(j \sigma^2+1) \right \} \leq 2e^{-jN\log9}.
			\end{equation}
			By considering the union bound over all $j\geq1$, we have
			\[
			\mathbb{P}\left \{\exists j\geq 1: \Big\| \frac{1}{T} \sum_{t=1}^{T}\bm{y}_{t-p-j}\bm{y}_{t-p-j}^\prime \Big\|_{\op}  \geq  2\lambda_{\max}(\bm{\Sigma}_\varepsilon)\mu_{\max}(\bm{\Psi}_*)(j \sigma^2+1) \right \} \leq \sum_{j=1}^{\infty}2e^{-jN\log9}\leq 3e^{-N\log9}.
			\]
			and hence \eqref{eq:Rjy1}.
			
			\bigskip
			\noindent\textbf{Proof of \eqref{eq:Rjy2}:} We first fix $j\geq1$.
			Note that $\sum_{t=1}^{T}\langle \bm{M} \bm{y}_{t-p-j}, \bm{\varepsilon}_t \rangle=\langle \bm{M}, \sum_{t=1}^{T} \bm{\varepsilon}_t \bm{y}_{t-p-j}^{\prime} \rangle$. Moreover, it can be verified that 
			\[
			\frac{1}{T}\sum_{t=1}^{T}\|\bm{M} \bm{y}_{t-p-j}\|_{2}^2 \leq \|\bm{M} \|_{\Fr}^2 \Big\| \frac{1}{T} \sum_{t=1}^{T}\bm{y}_{t-p-j}\bm{y}_{t-p-j}^\prime \Big\|_{\op}.
			\]
			Thus, by \eqref{eq:yop} and Lemma \ref{lemma:coverLR}, for any $K>0$, we have
			\begin{align*}
				& \mathbb{P}\left \{\sup_{\bm{M} \in \bm{\Pi}(\pazocal{R})}\frac{1}{T}\sum_{t=1}^{T}\langle \bm{M} \bm{y}_{t-p-j}, \bm{\varepsilon}_t \rangle \geq K\right \}\\
				& \leq  \mathbb{P}\left \{\max_{\bm{M} \in \bar{\bm{\Pi}}(\pazocal{R})}\frac{1}{T}\sum_{t=1}^{T}\langle \bm{M} \bm{y}_{t-p-j}, \bm{\varepsilon}_t \rangle \geq \frac{K}{4}, \,\Big\| \frac{1}{T} \sum_{t=1}^{T}\bm{y}_{t-p-j}\bm{y}_{t-p-j}^\prime \Big\|_{\op}  \leq  2\lambda_{\max}(\bm{\Sigma}_\varepsilon)\mu_{\max}(\bm{\Psi}_*) (j \sigma^2+1)\right \} \\
				&\hspace{5mm}+ 2e^{-jN\log9}\\
				& \leq e^{9N\pazocal{R}} \max_{\bm{M} \in \bar{\bm{\Pi}}(\pazocal{R})} \mathbb{P}\left \{\frac{1}{T}\sum_{t=1}^{T}\langle \bm{M} \bm{y}_{t-p-j}, \bm{\varepsilon}_t \rangle \geq \frac{K}{4}, \,\frac{1}{T}\sum_{t=1}^{T}\|\bm{M} \bm{y}_{t-p-j}\|_{2}^2   \leq  2\lambda_{\max}(\bm{\Sigma}_\varepsilon)\mu_{\max}(\bm{\Psi}_*) (j \sigma^2+1) \right \} \\
				&\hspace{5mm}+ 2e^{-jN\log9}.
			\end{align*}
			Then, applying Lemma \ref{lemma:martgl} with $\bm{w}_t=\bm{M} \bm{y}_{t-p-j}$, we have the pointwise bound for any $\bm{M}\in\mathbb{R}^{N\times N}$ as follows:
			\begin{align*}
				&	\mathbb{P}\left \{\frac{1}{T}\sum_{t=1}^{T}\langle \bm{M} \bm{y}_{t-p-j}, \bm{\varepsilon}_t \rangle \geq \frac{K}{4}, \,\frac{1}{T}\sum_{t=1}^{T}\|\bm{M} \bm{y}_{t-p-j}\|_{2}^2   \leq   2\lambda_{\max}(\bm{\Sigma}_\varepsilon)\mu_{\max}(\bm{\Psi}_*) (j \sigma^2+1) \right \}\\
				&\hspace{5mm} \leq \exp\left \{- \frac{K^2 T}{64    \lambda_{\max}^2(\bm{\Sigma}_\varepsilon) \mu_{\max}(\bm{\Psi}_*) (j \sigma^2+1)\sigma^2} \right \} =\exp\{-9N \pazocal{R} (j +1)\},
			\end{align*}
			if we choose 
			\[
			K=24\lambda_{\max}(\bm{\Sigma}_\varepsilon) \sqrt{\frac{\mu_{\max}(\bm{\Psi}_*)(j \sigma^2+1)(j\sigma^2+\sigma^2)N \pazocal{R}}{T}}.
			\]
			Consequently, we have
			\begin{align*}
				&\mathbb{P}\left \{\sup_{\bm{M} \in \bm{\Pi}(\pazocal{R})}\frac{1}{T}\sum_{t=1}^{T}\langle \bm{M} \bm{y}_{t-p-j}, \bm{\varepsilon}_t \rangle \geq 24\lambda_{\max}(\bm{\Sigma}_\varepsilon) \sqrt{\frac{\mu_{\max}(\bm{\Psi}_*) (j \sigma^2+1) (j \sigma^2+\sigma^2)N \pazocal{R}}{T}}\right \}\\
				&\hspace{5mm} \leq e^{-9  j N \pazocal{R}}+ 2e^{-jN\log9} \leq 3 e^{-  j N\log9}.
			\end{align*}
			Taking the union bound over all $j\geq1$ as in the proof of \eqref{eq:Rjy1} and noting that $(j \sigma^2+1) (j \sigma^2+\sigma^2) \leq (2j \sigma^2+1)^2$, we can verify \eqref{eq:Rjy2}.
			

			\subsection{Proof of Lemma \ref{lemma:epsilon-net}}
			
			The following covering result for low-Tucker-rank tensors is used in the proof of Lemma \ref{lemma:epsilon-net}. 
			\begin{lemma}[Covering number for low-Tucker-rank tensors] \label{lemma:covering}
				Let $\bm{\Pi}( \pazocal{R}_1,\pazocal{R}_2)=\{\cm{T}\in\mathbb{R}^{p_1\times p_2\times p_3}:\|\cm{T}\|_{\textup{F}}\leq 1,\textup{rank}(\cm{T}_{(i)})\leq \pazocal{R}_i,i=1, 2\}$. For any $\epsilon>0$, let $\bar{\bm{\Pi}}(\epsilon; \pazocal{R}_1,\pazocal{R}_2)$ be a minimal $\epsilon$-net  for $\bm{\Pi}(\pazocal{R}_1,\pazocal{R}_2)$ in the Frobenius norm. Then the cardinality of $\bar{\bm{\Pi}}(\epsilon; \pazocal{R}_1,\pazocal{R}_2)$ satisfies
				\begin{equation*}
					|\bar{\bm{\Pi}}(\epsilon; \pazocal{R}_1,\pazocal{R}_2)|\leq(9/\epsilon)^{\pazocal{R}_1\pazocal{R}_2p_3 + p_1\pazocal{R}_1+p_2\pazocal{R}_2}.
				\end{equation*}
			\end{lemma}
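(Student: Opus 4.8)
The plan is to exploit the Tucker factorization of low-Tucker-rank tensors and cover the three factors separately, following the strategy used for low-rank matrices in Lemma 3.1 of \cite{candes2011tight}. First I would observe that any $\cm{T}\in\bm{\Pi}(\pazocal{R}_1,\pazocal{R}_2)$ admits a representation $\cm{T}=\cm{S}\times_1\bm{U}_1\times_2\bm{U}_2$, where $\bm{U}_i\in\mathbb{R}^{p_i\times\pazocal{R}_i}$ has orthonormal columns, i.e. $\bm{U}_i'\bm{U}_i=\bm{I}_{\pazocal{R}_i}$, and $\cm{S}\in\mathbb{R}^{\pazocal{R}_1\times\pazocal{R}_2\times p_3}$ is a core tensor; this is possible precisely because $\rank(\cm{T}_{(i)})\leq\pazocal{R}_i$ for $i=1,2$. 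Since the $\bm{U}_i$ are orthonormal, $\|\cm{T}\|_{\Fr}=\|\cm{S}\|_{\Fr}\leq1$, so the three factors range over the Frobenius unit ball of $\mathbb{R}^{\pazocal{R}_1\pazocal{R}_2p_3}$ for $\cm{S}$, and the operator-norm unit ball of $p_i\times\pazocal{R}_i$ matrices for each $\bm{U}_i$.

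Next I would invoke the standard volumetric bound: the unit ball of any norm on a $d$-dimensional space admits a $\delta$-net (in that norm) of cardinality at most $(1+2/\delta)^d\leq(3/\delta)^d$. Taking $\delta=\epsilon/3$, this furnishes a Frobenius $(\epsilon/3)$-net $\mathcal{N}_0$ of the core ball with $|\mathcal{N}_0|\leq(9/\epsilon)^{\pazocal{R}_1\pazocal{R}_2p_3}$, and operator-norm $(\epsilon/3)$-nets $\mathcal{N}_1,\mathcal{N}_2$ of the two factor balls with $|\mathcal{N}_i|\leq(9/\epsilon)^{p_i\pazocal{R}_i}$; each net can be chosen inside its ball, so its elements have Frobenius norm $\leq1$ (for the core) or operator norm $\leq1$ (for the factors). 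The candidate net for $\bm{\Pi}(\pazocal{R}_1,\pazocal{R}_2)$ is then $\{\bar{\cm{S}}\times_1\bar{\bm{U}}_1\times_2\bar{\bm{U}}_2:\bar{\cm{S}}\in\mathcal{N}_0,\ \bar{\bm{U}}_i\in\mathcal{N}_i\}$, whose cardinality is the product $(9/\epsilon)^{\pazocal{R}_1\pazocal{R}_2p_3+p_1\pazocal{R}_1+p_2\pazocal{R}_2}$.

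The crux is the perturbation estimate showing this product set is a genuine $\epsilon$-net. Given $\cm{T}=\cm{S}\times_1\bm{U}_1\times_2\bm{U}_2$, I would select net elements $\bar{\cm{S}},\bar{\bm{U}}_1,\bar{\bm{U}}_2$ within $\epsilon/3$ of $\cm{S},\bm{U}_1,\bm{U}_2$ and telescope:
\begin{align*}
\|\cm{T}-\bar{\cm{S}}\times_1\bar{\bm{U}}_1\times_2\bar{\bm{U}}_2\|_{\Fr}
&\leq\|(\cm{S}-\bar{\cm{S}})\times_1\bm{U}_1\times_2\bm{U}_2\|_{\Fr}
+\|\bar{\cm{S}}\times_1(\bm{U}_1-\bar{\bm{U}}_1)\times_2\bm{U}_2\|_{\Fr}\\
&\hspace{5mm}+\|\bar{\cm{S}}\times_1\bar{\bm{U}}_1\times_2(\bm{U}_2-\bar{\bm{U}}_2)\|_{\Fr}.
\end{align*}
Using the submultiplicativity $\|\cm{B}\times_i\bm{M}\|_{\Fr}\leq\|\bm{M}\|_{\op}\|\cm{B}\|_{\Fr}$ together with $\|\bm{U}_i\|_{\op}=\|\bar{\bm{U}}_i\|_{\op}=1$ (orthonormality for the former, net-inside-the-ball for the latter) and $\|\bar{\cm{S}}\|_{\Fr}\leq1$, each of the three terms is bounded by the corresponding perturbation, hence by $\epsilon/3$, so the sum is at most $\epsilon$. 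The main point to get right is that the two factor matrices must be covered in the operator norm rather than the Frobenius norm: this simultaneously keeps the metric entropy of each factor at order $p_i\pazocal{R}_i$ and makes the multilinear perturbation bound close correctly, since it is the operator norm of the multiplying matrices that controls the Frobenius norm after a mode product. Since a minimal $\epsilon$-net is no larger than this explicitly constructed net, the stated cardinality bound follows.
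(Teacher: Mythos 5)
Your proof is correct and is essentially the argument the paper relies on: the paper's own ``proof'' simply defers to Lemma 2 of \cite{RSS17}, whose proof is exactly your construction --- factor via a Tucker/HOSVD decomposition, cover the core in Frobenius norm and the orthonormal factor matrices in operator norm via the volumetric bound, and close with the telescoping multilinear perturbation estimate. Your write-up correctly identifies the one point that matters (covering the factors in operator norm so both the entropy count and the mode-product bound work out), so nothing is missing.
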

			\begin{proof}[Proof of Lemma \ref{lemma:covering}]
				The proof of this lemma is straightforward given the proof of Lemma 2 in \cite{RSS17}.
			\end{proof}
			
			
			\noindent\textbf{Proof of (i):} Notice that the results for $\cm{G}_{\rm{stack}}$ in \eqref{eq:Gstacknorm} can be generalized to any $\cm{M}(\bm{a}, \cm{B})$ with  $\bm{a}\in \mathbb{R}^{r+2s}$ and $\cm{B}\in\mathbb{R}^{N\times N\times (r+2s)}$, where $\cm{G}_{\rm{stack}}=\cm{M}(\bm{\omega}-\bm{\omega}^*, \cm{G}-\cm{G}^*)$.  That is,  by a method similar to that for \eqref{eq:Gstacknorm}, we can show that for any $\bm{a}\in \mathbb{R}^{r+2s}$ and $\cm{B}\in\mathbb{R}^{N\times N\times (r+2s)}$,
			\begin{align}\label{eq:EFr1}
				0.5 ( \|\cm{B}\|_{\Fr} + \varpi_1\|\bm{a}\|_2) \leq \|\cm{M} (\bm{a} , \cm{B} )\|_{\Fr} \leq \|\cm{B}\|_{\Fr} + \varpi_2\|\bm{a}\|_2,
			\end{align}
			where 
			\[
			\varpi_1=\sqrt{2}c_{\cmtt{G}} \alpha\quad\text{and}\quad \varpi_2=\frac{\sqrt{2}\alpha}{\min_{1\leq k\leq s}\gamma_{k}^*}.
			\]
			Thus, if $\|\cm{M}(\bm{a} , \cm{B} )\|_{\Fr} = 1$, then $ 1 \leq \|\cm{B}\|_{\Fr} \leq 2$ and $\varpi_2^{-1} \leq \|\bm{a}\|_2 \leq \varpi_1^{-1}$. As a result, 
			\[\bm{\Xi}_1\subset \{\cm{M}(\bm{a}, \cm{B}) \mid \bm{a}\in \bm{\Pi}^{(1)}, \cm{B}\in \bm{\Pi}^{(2)} \},\]
			where
			\[\bm{\Pi}^{(1)} = \left \{ \bm{a} \in \mathbb{R}^{r+2s} \mid \varpi_2^{-1} \leq \|\bm{a}\|_2  \leq \varpi_1^{-1}\right \}\]
			and
			\[\bm{\Pi}^{(2)} = \{ \cm{B} \in \mathbb{R}^{N \times N \times d} \mid \cm{B}\in\bm{\Gamma}(2\pazocal{R}_1, 2\pazocal{R}_2), 1  \leq \|\cm{B}\|_{\Fr} \leq 2 \}.\]
			Hence, the problem of covering $\bm{\Xi}_1$ can be converted into that of covering $\bm{\Pi}^{(1)}$ and $\bm{\Pi}^{(2)}$.
			
			For any fixed $\epsilon >0 $,  let $\bar{\bm{\Pi}}^{(1)}(\epsilon)$ be a minimal $\epsilon/(2\varpi_2)$-net  for $\bm{\Pi}^{(1)}$ in the Euclidean norm, and let $\bar{\bm{\Pi}}^{(2)}(\epsilon)$ be a minimal $\epsilon/2$-net for $\bm{\Pi}^{(2)}$ in the Frobenius norm. Then denote
			\[
			\bar{\bm{\Xi}}(\epsilon) = \left\{ \cm{M}(\bm{a} , \cm{B} ) \in  \mathbb{R}^{N \times N \times (d+r+2s)} \mid  \bm{a}  \in \bar{\bm{\Pi}}^{(1)}(\epsilon),  \cm{B} \in \bar{\bm{\Pi}}^{(2)}(\epsilon) \right\}.
			\]
			Thus,  for every $\cm{M}(\bm{a}, \cm{B}) \in \bm{\Xi}_1$, there exists $\cm{M}( \bar{\bm{a}}, \bar{\cm{B}})\in \bar{\bm{\Xi}}(\epsilon)$ with    $\bar{\bm{a}} \in \bar{\bm{\Pi}}^{(1)}(\epsilon)$ and $\bar{\cm{B}}\in \bar{\bm{\Pi}}^{(2)}(\epsilon)$ such that 
			\begin{equation}\label{eq:e-net}
				\|\bm{a} - \bar{\bm{a}}\|_2\leq \epsilon/(2\varpi_2) \quad\text{and}\quad	\|\cm{B} - \bar{\cm{B}}\|_{\Fr} \leq\epsilon/2.
			\end{equation}
			Since $\cm{M}(\bm{a}, \cm{B})- \cm{M}( \bar{\bm{a}}, \bar{\cm{B}})=\cm{M}(\bm{a} - \bar{\bm{a}}, \cm{B} - \bar{ \cm{B}})$, it follows from \eqref{eq:EFr1} and \eqref{eq:e-net} that
			\begin{align*}
				\|\cm{M}(\bm{a}, \cm{B})- \cm{M}( \bar{\bm{a}}, \bar{\cm{B}})\|_{\Fr} = \|\cm{M}(\bm{a} - \bar{\bm{a}}, \cm{B} - \bar{ \cm{B}}) \|_{\Fr} \leq \|\cm{B} - \bar{\cm{B}}\|_{\Fr} + \varpi_2 \|\bm{a} - \bar{\bm{a}}\|_2 \leq \epsilon.
			\end{align*}
			In addition, note that $\bar{\bm{\Xi}}(\epsilon)\subset \bm{\Xi}$. Therefore, $\bar{\bm{\Xi}}(\epsilon)$ is a generalized $\epsilon$-net of $\bm{\Xi}_1$.
			Moreover, by a standard volumetric argument (see also Corollary 4.2.13 in \cite{Vershynin2018} for details) and Lemma \ref{lemma:covering}, the cardinalities of $\bar{\bm{\Pi}}^{(1)}(\epsilon)$ and $\bar{\bm{\Pi}}^{(2)}(\epsilon)$ satisfy
			\begin{align*}
				\log|\bar{\bm{\Pi}}^{(1)}(\epsilon)| \leq (r+2s)\log\{6\varpi_2/(\varpi_1 \epsilon)\} \quad\text{and}\quad 	\log|\bar{\bm{\Pi}}^{(2)}(\epsilon)| \leq \{ 4\pazocal{R}_1\pazocal{R}_2 d + 2(\pazocal{R}_1+\pazocal{R}_2)N  \}\log(18/\epsilon).
			\end{align*}
			Noting that  $\varpi_1\varpi_2^{-1}\asymp1$ is independent of $\epsilon$, we have
			\begin{align*}
				\log|\bar{\bm{\Xi}}(\epsilon)| \leq \log|\bar{\bm{\Pi}}^{(1)}(\epsilon)| + \log|\bar{\bm{\Pi}}^{(2)}(\epsilon)| \lesssim
				\left \{\pazocal{R}_1\pazocal{R}_2d + (\pazocal{R}_1 + \pazocal{R}_2) N\right \} \log(1/\epsilon).
			\end{align*}
			
			\bigskip\noindent
			\textbf{Proof of (ii):} Since 
			$\bar{\bm{\Pi}}^{(1)}(\epsilon)\subset\bm{\Pi}^{(1)}$ and $\bar{\bm{\Pi}}^{(2)}(\epsilon)\subset\bm{\Pi}^{(2)}$, we have
			\[\bar{\bm{\Xi}}(\epsilon)\subset \left\{ \cm{M}(\bm{a} , \cm{B} ) \in  \mathbb{R}^{N \times N \times (d+r+2s)} \mid  \bm{a}  \in \bm{\Pi}^{(1)},  \cm{B} \in \bm{\Pi}^{(2)} \right\}.\] 
			Thus, by \eqref{eq:EFr1}, for any $\cm{M}\in \bar{\bm{\Xi}}(\epsilon)$, it holds
			\[
			c_{\cmtt{M}}:=0.5 ( 1 + \varpi_1\varpi_2^{-1}) \leq \|\cm{M}\|_{\Fr} \leq 2 + \varpi_2\varpi_1^{-1}:=C_{\cmtt{M}}.
			\]
			Since $\varpi_1\varpi_2^{-1}=c_{\cmtt{G}}\min_{1\leq k\leq s}\gamma_{k}^*\asymp1$  is independent of $\epsilon$, (ii) is proved.
			
			\bigskip\noindent
			\textbf{Proof of (iii):}
			From the proof of (i), for every $\cm{M}:=\cm{M}(\bm{a} , \cm{B} )\in\bm{\Xi}_1$, there exists $\bar{\cm{M}}:=\cm{M}( \bar{\bm{a}}, \bar{\cm{B}})\in \bar{\bm{\Xi}}(\epsilon)$ such that    $\bar{\bm{a}} \in \bar{\bm{\Pi}}^{(1)}(\epsilon)$ and $\bar{\cm{B}}\in \bar{\bm{\Pi}}^{(2)}(\epsilon)$  satisfy  \eqref{eq:e-net}. Since $\bar{\bm{\Pi}}^{(2)}(\epsilon)\subset \bm{\Pi}^{(2)}$, we have $\cm{B} - \bar{\cm{B}} \in \bm{\Gamma}(4\pazocal{R}_1, 4\pazocal{R}_2)$.
			Then by considering the higher-order singular value decomposition for $\cm{B} - \bar{\cm{B}}$, we can find four tensors $\cm{B}_i \in \bm{\Gamma}(2\pazocal{R}_1, 2\pazocal{R}_2)$ with $1\leq i\leq 4$ such that $\cm{B} - \bar{\cm{B}} = \sum_{i=1}^{4}\cm{B}_i$ and $\langle\cm{B}_i, \cm{B}_j\rangle = 0$ for all $i \neq j$.
			As a result, we can show that
			\[
			\cm{M}-\bar{\cm{M}}=\cm{M}(\bm{a}, \cm{B})- \cm{M}( \bar{\bm{a}}, \bar{\cm{B}})=\cm{M}(\bm{a} - \bar{\bm{a}}, \cm{B} - \bar{ \cm{B}})
			=\cm{M}\left (\sum_{i=1}^{4}\frac{\bm{a} - \bar{\bm{a}}}{4}, \sum_{i=1}^{4}\cm{B}_i\right )= \sum_{i=1}^{4} \cm{M}_i,\]
			where $\cm{M}_i=\cm{M} \left ((\bm{a} - \bar{\bm{a}})/4, \cm{B}_i \right )\in\bm{\Xi}$.
			Moreover,  by \eqref{eq:EFr1}, \eqref{eq:e-net} and the Cauchy-Schwarz inequality, it holds 
			\begin{align*}
				\sum_{i=1}^{4} \|\cm{M}_i\|_{\Fr} \leq \sum_{i=1}^{4} \left\{ \|\cm{B}_i\|_{\Fr} + \frac{\varpi_2}{4} \|\bm{a} - \bar{\bm{a}}\|_2 \right\} \leq 2 \| \cm{B} - \bar{\cm{B}} \|_{\Fr} + \varpi_2 \|\bm{a} - \bar{\bm{a}}\|_2 \leq 1.5\epsilon.
			\end{align*}
			Therefore, for any $\cm{M}\in\bm{\Xi}_1$, we can show that
			\begin{align*}
				\langle \cm{M}_{(1)}, \bm{X}\rangle  
				= \langle \bar{\cm{M}}_{(1)}, \bm{X}\rangle + \langle (\cm{M}-\bar{\cm{M}})_{(1)}, \bm{X}\rangle 
				&\leq \max_{\bar{\cmt{M}} \in \bar{\bm{\Xi}}(\epsilon)} \langle \bar{\cm{M}}_{(1)}, \bm{X}\rangle + \sum_{i=1}^{4} \langle (\cm{M}_i)_{(1)}, \bm{X}\rangle\\
				&\leq \max_{\bar{\cmt{M}} \in \bar{\bm{\Xi}}(\epsilon)} \langle \bar{\cm{M}}_{(1)}, \bm{X}\rangle + \sum_{i=1}^{4} \|\cm{M}_i\|_{\Fr} 	\sup_{\cmtt{M} \in \bm{\Xi}_1} \langle \cm{M}_{(1)}, \bm{X}\rangle\\
				&\leq \max_{\bar{\cmt{M}} \in \bar{\bm{\Xi}}(\epsilon)} \langle \bar{\cm{M}}_{(1)}, \bm{X}\rangle + 1.5\epsilon	\sup_{ \cmt{M} \in \bm{\Xi}_1} \langle \cm{M}_{(1)}, \bm{X}\rangle,
			\end{align*}
			Taking supremum over all  $\cm{M} \in \bm{\Xi}_1$ on both sides, we accomplish the proof of \eqref{eq:disc1}.
			The proof of \eqref{eq:disc2} follows the same arguments as those for \eqref{eq:disc1} except that the above inequalities are revised to
			\begin{align}\label{eq:disc3}
				\| \cm{M}_{(1)} \bm{Z}\|_{\Fr}  
				\leq \| \bar{\cm{M}}_{(1)} \bm{Z}\|_{\Fr} + \| (\cm{M}-\bar{\cm{M}})_{(1)}\bm{Z}\|_{\Fr} 
				&\leq \max_{\bar{\cmt{M}} \in \bar{\bm{\Xi}}(\epsilon)} \| \bar{\cm{M}}_{(1)} \bm{Z}\|_{\Fr} + \sum_{i=1}^{4} \| (\cm{M}_i)_{(1)}\bm{Z}\|_{\Fr}\notag\\
				&\leq \max_{\bar{\cmt{M}} \in \bar{\bm{\Xi}}(\epsilon)} \| \bar{\cm{M}}_{(1)} \bm{Z}\|_{\Fr}  + \sum_{i=1}^{4} \|\cm{M}_i\|_{\Fr} 	\sup_{\cmtt{M} \in \bm{\Xi}_1} \| \cm{M}_{(1)} \bm{Z}\|_{\Fr} \notag \\
				&\leq \max_{\bar{\cmt{M}} \in \bar{\bm{\Xi}}(\epsilon)} \| \bar{\cm{M}}_{(1)} \bm{Z}\|_{\Fr}  + 1.5\epsilon \sup_{\cmtt{M} \in \bm{\Xi}_1} \| \cm{M}_{(1)} \bm{Z}\|_{\Fr}.
			\end{align}
			The proof of Lemma \ref{lemma:epsilon-net} is complete.
			
			\subsection{Proof of Lemma \ref{lemma:martgl}}
			Note that $\langle\bm{w}_t, \bm{\varepsilon}_t \rangle=\langle \bm{\Sigma}_{\epsilon}^{1/2}\bm{w}_t, \bm{\xi}_t \rangle$, where $\bm{\xi}_t$ is mean-zero and $\sigma^2$-sub-Gaussian.  Moreover, it holds $\|\bm{\Sigma}_{\varepsilon}^{1/2}\bm{w}_t\|_2^2\leq \lambda_{\max}(\bm{\Sigma}_{\varepsilon})\|\bm{w}_t\|_2^2$. Then by a straightforward multivariate generalization of Lemma 4.2 in \cite{simchowitz2018learning}, we can show that
			\begin{align*}
				\mathbb{P}\left \{ \sum_{t=1}^{T}\langle\bm{w}_t, \bm{\varepsilon}_t \rangle\geq a, \; \sum_{t=1}^{T}\lVert \bm{w}_t \rVert^2 \leq b\right \} & \leq  \mathbb{P}\left \{ \sum_{t=1}^{T}\langle \bm{\Sigma}_{\varepsilon}^{1/2}\bm{w}_t, \bm{\xi}_t \rangle\geq a, \; \sum_{t=1}^{T}\lVert \bm{\Sigma}_{\varepsilon}^{1/2} \bm{w}_t \rVert^2 \leq \lambda_{\max}(\bm{\Sigma}_{\varepsilon}) b\right \} \\
				& \leq \exp\left \{-\frac{a^2}{2\sigma^2 \lambda_{\max}(\bm{\Sigma}_{\varepsilon}) b}\right \}.
			\end{align*}
			The proof is complete.
			
			\subsection{Proof of Lemma \ref{lemma:hansonw}}
			\noindent
			\textbf{Proof of (i): } 
			First it is obvious that $\{\bm{w}_t\}$  is a zero-mean stationary time series. Without loss of generality, we let $T_0=0$ in what follows. 
			
			It is worth noting that under Assumption \ref{assum:error}, $\bm{\varepsilon}_{t}=	\bm{\Sigma}_\varepsilon^{1/2}\bm{\xi}_t$, and  all coordinates of the vector $\bm{\xi} = (\bm{\xi}_{T-1}^\prime, \bm{\xi}_{T-2}^\prime, \dots)^\prime$ are independent and $\sigma^2$-sub-Gaussian with mean zero and variance one.
			In addition, by the vector MA($\infty$) representation of $\bm{w}_t$, we have  $\underline{\bm{w}}_T = \underline{\bm{\Psi}}^w\bm{\xi}$, 
			where
			\begin{align} \label{eq:HD}
				\underset{TM \times \infty }{\underline{\bm{\Psi}}^w}= \left( \begin{matrix}
					\bm{\Psi}_1^w \bm{\Sigma}_\varepsilon^{1/2}&\bm{\Psi}_2^w \bm{\Sigma}_\varepsilon^{1/2} &\bm{\Psi}_3^w \bm{\Sigma}_\varepsilon^{1/2}&\cdots&\bm{\Psi}_{T}^w \bm{\Sigma}_\varepsilon^{1/2}&\cdots\\
					&\bm{\Psi}_1^w \bm{\Sigma}_\varepsilon^{1/2}&\bm{\Psi}_2^w \bm{\Sigma}_\varepsilon^{1/2}&\cdots&\bm{\Psi}_{T-1}^w \bm{\Sigma}_\varepsilon^{1/2}&\cdots\\
					&&\ddots&&&&\\
					&&&&\bm{\Psi}_1^w \bm{\Sigma}_\varepsilon^{1/2} &\cdots
				\end{matrix}\right).
			\end{align}
			Then, it holds
			\begin{equation}\label{eq:varw}
				\underline{\bm{\Sigma}}_w=\mathbb{E}(\underline{\bm{w}}_T\underline{\bm{w}}_T^\prime)=\underline{\bm{\Psi}}^w(\underline{\bm{\Psi}}^{w})^\prime.
			\end{equation}
			Observe that
			$\sum_{t=1}^{T} \|\bm{w}_t\|_2^2 = \underline{\bm{w}}_T^\prime \underline{\bm{w}}_T  =  \bm{\xi}^\prime (\underline{\bm{\Psi}}^{w})^\prime \underline{\bm{\Psi}}^w \bm{\xi}$.
			Since $\bm{\xi}$ is a vector with independent, zero-mean and sub-Gaussian coordinates, we can apply the Hanson-Wright inequality \citep{Vershynin2018} to obtain that for any $\iota>0$,
			\begin{equation}\label{eq:hansonw1}
				\mathbb{P}\left ( \left |\sum_{t=1}^{T} \|\bm{w}_t\|_2^2 - T\mathbb{E}\left (\|\bm{w}_t\|_2^2\right )\right | \geq \iota\right ) \leq 2\exp\left\{ - c \min\left( \frac{\iota}{\sigma^2\|(\underline{\bm{\Psi}}^{w})^\prime \underline{\bm{\Psi}}^w\|_{\op}}, \frac{\iota^2}{\sigma^4 \|(\underline{\bm{\Psi}}^{w})^\prime \underline{\bm{\Psi}}^w\|_{\Fr}^2}\right)\right\}.
			\end{equation}
			Note that by \eqref{eq:varw},
			$\|(\underline{\bm{\Psi}}^{w})^\prime \underline{\bm{\Psi}}^w\|_{\op}=\| \underline{\bm{\Psi}}^w(\underline{\bm{\Psi}}^{w})^\prime\|_{\op}=\lambda_{\max}(\underline{\bm{\Sigma}}_w)$, and $\underline{\bm{\Sigma}}_w$ is a $TM\times TM$ matrix. Then 
			\[
			\|(\underline{\bm{\Psi}}^{w})^\prime \underline{\bm{\Psi}}^w\|_{\Fr}
			=\|\underline{\bm{\Psi}}^w(\underline{\bm{\Psi}}^{w})^\prime \|_{\Fr}=\|\underline{\bm{\Sigma}}_w\|_{\Fr}\leq \sqrt{TM} \lambda_{\max}(\underline{\bm{\Sigma}}_w).
			\]
			Taking $\iota=\sigma^2\sqrt{T}M\lambda_{\max}(\underline{\bm{\Sigma}}_w)$ in \eqref{eq:hansonw1}, the proof of (i) is complete.
			
			\bigskip
			\noindent\textbf{Proof of (ii): }
			Define the vector $\underline{\bm{m}}_T=((\bm{M}\bm{w}_T)^\prime, \dots,  (\bm{M}\bm{w}_1)^\prime)^\prime=(\bm{I}_T\otimes \bm{M})\underline{\bm{w}}_T$.  Then $\underline{\bm{m}}_T=\bm{P}\bm{\xi}$, where   $\bm{P}=(\bm{I}_T\otimes \bm{M})  \underline{\bm{\Psi}}^w$. As a result,
			$\sum_{t=1}^{T}\|\bm{M}\bm{w}_t\|_2^2=\underline{\bm{m}}_T^\prime \underline{\bm{m}}_T =\bm{\xi}^\prime\bm{P}^\prime\bm{P}\bm{\xi}$. Similar to \eqref{eq:varw}, it follows from the Hanson-Wright inequality that
			for any $\iota>0$,
			\begin{equation}\label{eq:hansonw2}
				\mathbb{P}\left ( \left |\sum_{t=1}^{T} \|\bm{M} \bm{w}_t\|_2^2 - T\mathbb{E}\left (\|\bm{M}\bm{w}_t\|_2^2\right )\right | \geq \iota\right ) \leq 2\exp\left\{ - c \min\left( \frac{\iota}{\sigma^2\|\bm{P}^\prime\bm{P}\|_{\op}}, \frac{\iota^2}{\sigma^4 \|\bm{P}^\prime\bm{P}\|_{\Fr}^2}\right)\right\}.
			\end{equation}
			By \eqref{eq:varw}, we have
			$\|\bm{P}^\prime\bm{P}\|_{\op}=\|\bm{P}\bm{P}^\prime\|_{\op}\leq 
			\|\bm{M}\bm{M}^\prime\|_{\op} \| \underline{\bm{\Psi}}^w(\underline{\bm{\Psi}}^{w})^\prime\|_{\op}
			\leq \lambda_{\max}(\underline{\bm{\Sigma}}_w)\|\bm{M}\|_{\Fr}^2$. Moreover, 
			\begin{align*}
				\trace(\bm{P}^\prime\bm{P})=\trace(\bm{P}\bm{P}^\prime)&=\trace\{(\bm{I}_T\otimes \bm{M})\underline{\bm{\Sigma}}_w (\bm{I}_T\otimes \bm{M}^\prime)\}\\
				&= \textrm{vec}(\bm{I}_T\otimes \bm{M})^\prime(\underline{\bm{\Sigma}}_w\otimes\bm{I}_{TQ})\textrm{vec}(\bm{I}_T\otimes \bm{M})
				\leq T\lambda_{\max}(\underline{\bm{\Sigma}}_w) \|\bm{M}\|_{\Fr}^2,
			\end{align*}
			where the second equality follows from \eqref{eq:varw}. As a result,
			\[
			\|\bm{P}^\prime\bm{P}\|_{\Fr}\leq \sqrt{\|\bm{P}^\prime\bm{P}\|_{\op}\trace(\bm{P}^\prime\bm{P})}
			\leq 
			\sqrt{\|\bm{P}\bm{P}^\prime\|_{\op}\trace(\bm{P}\bm{P}^\prime)} \leq  \sqrt{T} \lambda_{\max}(\underline{\bm{\Sigma}}_w) \|\bm{M}\|_{\Fr}^2.
			\]
			Taking $\iota=\delta\sigma^2T\lambda_{\max}(\underline{\bm{\Sigma}}_w) \|\bm{M}\|_{\Fr}^2$ in \eqref{eq:hansonw2}, the proof of (ii) is complete.
			
			\subsection{Proof of Lemma \ref{lemma:Wcov}}
			\noindent\textbf{Proof of (i):}  Consider the spectral density of $\{\bm{y}_t\}$,
			\begin{equation*}
				\bm{f}_y(\theta) = (2\pi)^{-1} \bm{\Psi}_*(e^{-i\theta})\bm{\Sigma}_{\varepsilon}\bm{\Psi}_*^{\HH}(e^{-i\theta}), \hspace{5mm}\theta\in [-\pi, \pi].
			\end{equation*}
			Let 
			\[
			\mathpzc{M}(\bm{f}_y) = \max_{\theta\in[-\pi,\pi]} \lambda_{\max}(\bm{f}_y(\theta))
			\quad\text{and}\quad
			\mathpzc{m}(\bm{f}_y) = \min_{\theta\in[-\pi,\pi]} \lambda_{\min}(\bm{f}_y(\theta))
			\]
			Along the lines of \cite{basu2015regularized},  it holds 
			\[
			2\pi \mathpzc{m}(\bm{f}_y) \leq \lambda_{\min}(\underline{\bm{\Sigma}}_y)\leq \lambda_{\max}(\underline{\bm{\Sigma}}_y) \leq  2\pi \mathpzc{M}(\bm{f}_y),
			\]
			\[
			2\pi \mathpzc{m}(\bm{f}_y) \leq \lambda_{\min}(\bm{\Sigma}_y)\leq \lambda_{\max}(\bm{\Sigma}_y) \leq  2\pi \mathpzc{M}(\bm{f}_y),
			\]
			and
			\begin{equation}\label{eq:fy}
				\lambda_{\min}(\bm{\Sigma}_\varepsilon)\mu_{\min}(\bm{\Psi}_*) \leq 2\pi \mathpzc{m}(\bm{f}_y)\leq 2\pi \mathpzc{M}(\bm{f}_y) \leq \lambda_{\max}(\bm{\Sigma}_\varepsilon)\mu_{\max}(\bm{\Psi}_*);
			\end{equation}
			see Proposition 2.3 therein.
			Thus, (i) is proved.
			
			\bigskip
			\noindent\textbf{Proof of (ii):} First, since $\sum_{i=1}^{\infty}\|\bm{W}_i\|_{\op}<\infty$ and $\{\bm{y}_t\}$ is   stationary with mean zero, the time series $\bm{w}_t=\mathcal{W}(B)\bm{y}_t=\mathcal{W}(B)\bm{\Psi}_*(B)\bm{\varepsilon}_{t}$ is also zero-mean and stationary, where $\mathcal{W}(B) = \sum_{i=1}^{\infty}\bm{W}_i B^i$.
			
			For any $\ell\in \mathbb{Z}$, denote by $\bm{\Sigma}_y (\ell) = \mathbb{E}(\bm{y}_t\bm{y}_{t-\ell}^\prime)$ the lag-$\ell$ covariance matrix of $\bm{y}_t$, and then $\bm{\Sigma}_y(\ell) = \int_{-\pi}^{\pi} \bm{f}_y(\theta) e^{i \ell \theta} d\theta$. 
			For any fixed $\bm{u} \in \mathbb{R}^{N}$ with $\|\bm{u}\|_2 = 1$, 
			\begin{align}\label{eq:eigenW}
				\bm{u}^\prime \bm{\Sigma}_w \bm{u} &= \bm{u}^\prime \mathbb{E}\left (\sum_{j=1}^{\infty} \bm{W}_{j}\bm{y}_{t-j} \sum_{k=1}^{\infty} \bm{W}^\prime_{k}\bm{y}_{t-k} \right )\bm{u}  \notag\\
				& =\bm{u}^\prime \sum_{j=1}^{\infty} \sum_{k=1}^{\infty} \bm{W}_{j}\bm{\Sigma}_y(k-j) \bm{W}^\prime_{k} \bm{u} \notag\\
				& = \int_{-\pi}^{\pi} \sum_{j=1}^{\infty} \sum_{k=1}^{\infty} \bm{u}^\prime  \bm{W}_j\bm{f}_y(\theta)  e^{-i(j-k)\theta} \bm{W}_k^\prime \bm{u} \, d\theta \notag\\
				& = \int_{-\pi}^{\pi} \bm{u}^\prime \mathcal{W}(e^{-i\theta}) \bm{f}_y(\theta)  \mathcal{W}^{\HH}(e^{-i\theta}) \bm{u} \, d\theta,
			\end{align}
			where  $\mathcal{W}(z) = \sum_{j=1}^{\infty}\bm{W}_jz^j$ for $z\in \mathbb{C}$, and $\mathcal{W}^{\HH}(e^{-i\theta})=\big\{ \mathcal{W}(e^{i\theta}) \big\}^\prime$ is the  conjugate transpose of $\mathcal{W}(e^{-i\theta})$.
			Since $\bm{f}_y(\theta)$ is Hermitian, $\bm{u}^\prime \mathcal{W}(e^{-i\theta}) \bm{f}_y(\theta)  \mathcal{W}^{\HH}(e^{-i\theta})\bm{u}$ is  real for all $\theta \in [-\pi, \pi]$. Then it is easy to see that
			\[
			\mathpzc{m}(\bm{f}_y) \cdot  \bm{u}^\prime \mathcal{W}(e^{-i\theta}) \mathcal{W}^{\HH}(e^{-i\theta}) \bm{u}  \leq
			\bm{u}^\prime \mathcal{W}(e^{-i\theta}) \bm{f}_y(\theta) \mathcal{W}^{\HH}(e^{-i\theta}) \bm{u}
			\leq \mathpzc{M}(\bm{f}_y) \cdot \bm{u}^\prime \mathcal{W}(e^{-i\theta}) \mathcal{W}^{\HH}(e^{-i\theta}) \bm{u}.
			\]
			Moreover, since $\int_{-\pi}^{\pi} e^{i\ell\theta} d \theta = 0$ for any $\ell\neq0$, we can show that
			\begin{align*}
				\int_{-\pi}^{\pi}\bm{u}^\prime \mathcal{W}(e^{-i\theta}) \mathcal{W}^{\HH}(e^{-i\theta}) \bm{u} \, d\theta 
				&	=\int_{-\pi}^{\pi} \sum_{j=1}^{\infty} \sum_{k=1}^{\infty} \bm{u}^\prime  \bm{W}_j e^{-i(j-k)\theta} \bm{W}_k^\prime \bm{u} \, d\theta \\
				&	= 2\pi \bm{u}^\prime  \bm{W} \bm{W}^\prime \bm{u}.
			\end{align*}
			which, together with the fact of $\|\bm{u}\|_2 = 1$, implies that
			\begin{align} \label{eq:WW}
				2\pi \sigma_{\min}^2(\bm{W}) \leq \int_{-\pi}^{\pi}\bm{u}^\prime \mathcal{W}(e^{-i\theta}) \mathcal{W}^{\HH}(e^{-i\theta})  \bm{u} \, d\theta  \leq 2\pi \sigma_{\max}^2(\bm{W}).
			\end{align}
			In view of \eqref{eq:fy}--\eqref{eq:WW}, we accomplish the proof of \eqref{eq:sigmaw0}.
			
			To verify \eqref{eq:sigmaw}, note that   the spectral density of $\{\bm{w}_t\}$ is  
			\[
			\bm{f}_w(\theta) = \mathcal{W}(e^{-i\theta})\bm{f}_y(\theta)\mathcal{W}^{\HH}(e^{-i\theta}), \hspace{5mm}\theta\in [-\pi, \pi];\]  
			see Section 9.2 of \cite{Priestley81}. Then 
			\begin{align*}
				\mathpzc{M}(\bm{f}_w)=\max_{\theta\in[-\pi,\pi]} \lambda_{\max}(\bm{f}_w(\theta)) & \leq \mathpzc{M}(\bm{f}_y) \max_{\theta\in[-\pi,\pi]} \lambda_{\max}\{\mathcal{W}(e^{-i\theta})\mathcal{W}^{\HH}(e^{-i\theta})\}\\
				&= \mathpzc{M}(\bm{f}_y) \max_{\theta\in[-\pi,\pi]} \left \| \sum_{j=1}^{\infty}  \bm{W}_j e^{-ij\theta} \right \|_{\op}^2\\
				&\leq \mathpzc{M}(\bm{f}_y) \left (\sum_{j=1}^{\infty}  \|\bm{W}_j\|_{\op} \right )^2 
			\end{align*}
			In addition, by a method similar to the proof of Proposition 2.3 in \cite{basu2015regularized}, we can show that 
			\[
			\lambda_{\max}(\underline{\bm{\Sigma}}_w)\leq 2\pi \mathpzc{M}(\bm{f}_w).
			\]
			Combining the above results with \eqref{eq:fy}, the proof of \eqref{eq:sigmaw} is complete.
			
			\subsection{Proof of Theorem \ref{thm:sparse}}
			
			For a matrix $\bm{X}\in\mathbb{R}^{p_1\times p_2}$, we denote by $ \|\bm{X}\|_{0}$ the number of nonzero elements in $\bm{X}$, $\|\bm{X}\|_{2,0}$ the number of nonzero rows in $\bm{X}$, and define $\|\bm{X}\|_1 = \|\vect(\bm{X})\|_1$ and $\|\bm{X}\|_{2,1} = \sum_{i=1}^{p_1}\|\bm{x}_i\|_2$, where $\bm{x}_i$'s are the row vectors of $\bm{X}$.
			Let 
			\begin{equation*}
				\bm{\Upsilon}_{\mathbb{S}} = \left \{\bm{\Delta} = \cm{A}-\cm{A}^*\in\mathbb{R}^{N\times N\times \infty} \mid \cm{A}=\cm{G}\times_3\bm{L}(\bm{\omega}), \cm{G}\in\bm{\Gamma}_{\mathbb{S}}, \bm{\omega}\in\bm{\Omega}, \delta_{\bm{\omega}} \leq c_{\bm{\omega}} \right \},
			\end{equation*} 
			where 	$\bm{\Gamma}_{\mathbb{S}} = \{ \cm{G}=\cm{S}\times_1\bm{U}_1\times_2\bm{U}_2\mid \cm{S}\in\bm{\Omega}_{\cmtt{S}},\bm{U}_i\in\pazocal{U}_i, i=1\text{ or }2 \}$. It is noteworthy that under the conditions of Theorem \ref{thm:sparse}, $\bm{\widetilde{\Delta}}:= \cm{\widetilde{A}}-\cm{A}^* \in \bm{\Upsilon}_{\mathbb{S}}$. 
			For simplicity, we further denote the perturbation of $\cm{S}^*$ by $\delta_{\cmtt{S}}=\|\cm{S}-\cm{S}^*\|_{\Fr}$.
			Moreover, we denote
			\[
			\bm{\Gamma}_{\mathrm{S}}( s_1,s_2,\pazocal{R}_1,\pazocal{R}_2)=\{\cm{M}=\cm{S}\times_1\bm{U}_1\times_2\bm{U}_2\mid \cm{S}\in\mathbb{R}^{\pazocal{R}_1\times \pazocal{R}_2 \times d}, \bm{U}_i\in\pazocal{U}_{\mathrm{S},i}, i=1, 2\},
			\]
			where $\pazocal{U}_{\mathrm{S},i} = \{\bm{U}\in\mathbb{R}^{N\times \pazocal{R}_i} \mid \bm{U}^\prime \bm{U} = \bm{I}_{\pazocal{R}_i}, \|\bm{U}\|_0 \leq s_i\}$, and then define
			\begin{equation}\label{eq:Xi_S}
				\bm{\Xi}_{\mathrm{S}}(s_1,s_2,\pazocal{R}_1, \pazocal{R}_2) = \left\{ \cm{M}(\bm{a} , \cm{B} ) \in  \mathbb{R}^{N \times N \times (d+r+2s)} \mid   \bm{a}  \in \mathbb{R}^{r+2s}, \cm{B} \in \bm{\Gamma}_{\mathrm{S}}( s_1,s_2,\pazocal{R}_1,\pazocal{R}_2) \right\},
			\end{equation}
			and $\bm{\Xi}_{\mathrm{S},1}(s_1,s_2,\pazocal{R}_1, \pazocal{R}_2) = \bm{\Xi}_{\mathrm{S}}(s_1,s_2,\pazocal{R}_1, \pazocal{R}_2) \cap \{\cm{M}\in  \mathbb{R}^{N \times N \times (d+r+2s)} \mid \|\cm{M}\|_{\Fr}=1\}$. 
			
			The proof of Theorem \ref{thm:sparse} depends directly on the following three lemmas.
			
			\begin{lemma}[Strong convexity and smoothness properties for the sparse model] \label{lemma:sparsersc}
				Under Assumptions \ref{assum:error}--\ref{assum:spec_gap}, if $T \gtrsim (\kappa_2 / \kappa_1)^2  \bar{d}_{\pazocal{S}} \log(\kappa_2/\kappa_1)$, then with probability at least $1  -2e^{-c\bar{d}_{\pazocal{S}}\log(\kappa_2/\kappa_1)}-3e^{-c\bar{s}_2\log N}$, 
				\begin{equation*}
					\kappa_1  \|\bm{\Delta}\|_{\Fr}^2 \lesssim	\frac{1}{T}\sum_{t=1}^{T}\|\bm{\Delta}_{(1)}\bm{x}_t\|_2^2 \lesssim  \kappa_2 \|\bm{\Delta}\|_{\Fr}^2, \quad \forall \bm{\Delta}\in\bm{\Upsilon}_{\mathbb{S}},
				\end{equation*}
				where 
				$\bar{d}_{\pazocal{S}} = \pazocal{R}_1\pazocal{R}_2 d + \sum_{i=1}^{2}\bar{s}_i \pazocal{R}_i(1 + \log N\pazocal{R}_i) $ and
				$\bar{s}_i = (s_i+\underline{u}^{-1})\pazocal{R}_i$, with $i=1$ or $2$.
			\end{lemma}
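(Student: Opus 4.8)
The plan is to follow the three-step architecture of the proof of Lemma~\ref{lemma:rsc}, replacing the low-Tucker-rank feasible set $\bm{\Upsilon}$ by its sparse counterpart $\bm{\Upsilon}_{\mathbb{S}}$ and tracking how the entrywise sparsity of the loadings $\bm{U}_1,\bm{U}_2$ shrinks the effective dimension from $d_{\pazocal{R}}$ to $\bar{d}_{\pazocal{S}}$. Since both inequalities are homogeneous of degree two in $\|\bm{\Delta}\|_{\Fr}$, it suffices to establish them uniformly over $\bm{\Upsilon}_{\mathbb{S}}\cap\pazocal{S}(\delta)$ for a single radius $\delta>0$, chosen as in \eqref{eq:condition1} small enough that Lemma~\ref{lemma:delnorm} forces $\delta_{\cmtt{G}}$ and $\alpha\delta_{\bm{\omega}}$ to be of order $\delta$.

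First I would reuse the linearization verbatim: for every $\bm{\Delta}=\bm{\Delta}(\bm{\omega},\cm{G})\in\bm{\Upsilon}_{\mathbb{S}}$ one still has $\bm{\Delta}_{(1)}\bm{x}_t=(\cm{G}_{\rm{stack}})_{(1)}\bm{z}_t+\cm{R}_{(1)}\bm{x}_{t-p}$ with $\bm{z}_t=\{\bm{L}_{\rm{stack}}'(\bm{\omega}^*)\otimes\bm{I}_N\}\bm{x}_t$, together with the splitting bounds \eqref{eq:split1}--\eqref{eq:split2}. The key structural fact, which replaces the membership $\cm{G}_{\rm{stack}}/\|\cm{G}_{\rm{stack}}\|_{\Fr}\in\bm{\Xi}_1$, is that when $\cm{G}=\cm{S}\times_1\bm{U}_1\times_2\bm{U}_2$ with sparse orthonormal $\bm{U}_i$, the mode-$i$ unfolding of the perturbation $\cm{G}-\cm{G}^*$ has rank at most $2\pazocal{R}_i$ and nonzero rows confined to the union of the supports of $\bm{U}_i$ and $\bm{U}_i^*$; hence the normalized $\cm{G}_{\rm{stack}}$ lies in the sparse, low-rank set $\bm{\Xi}_{\mathrm{S},1}(\bar{s}_1,\bar{s}_2,2\pazocal{R}_1,2\pazocal{R}_2)$ of \eqref{eq:Xi_S}.

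Second, I would prove the sparse analogue of Lemma~\ref{lemma:rscG1}, namely $\kappa_1\lesssim T^{-1}\sum_t\|\cm{M}_{(1)}\bm{z}_t\|_2^2\lesssim\kappa_2$ uniformly over $\cm{M}\in\bm{\Xi}_{\mathrm{S},1}$, via the identical $\epsilon$-net plus pointwise Hanson-Wright scheme (Lemma~\ref{lemma:hansonw}(ii)), but with the covering number of $\bm{\Xi}_{\mathrm{S},1}$ in place of that of $\bm{\Xi}_1$. Establishing a sparse version of Lemma~\ref{lemma:epsilon-net}(i), of the form $\log|\bar{\bm{\Xi}}_{\mathrm{S}}(\epsilon)|\lesssim\bar{d}_{\pazocal{S}}\log(1/\epsilon)$, is the single genuinely new computation: beyond the $\pazocal{R}_1\pazocal{R}_2 d$ cost for the core $\cm{S}$, each sparse factor incurs a support-selection cost of order $\bar{s}_i\pazocal{R}_i\log(N\pazocal{R}_i)$ (from the $\binom{N}{\cdot}$ support choices) plus a value net on the chosen support, producing exactly $\bar{d}_{\pazocal{S}}$ and hence the exponent $\bar{d}_{\pazocal{S}}\log(\kappa_2/\kappa_1)$. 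The remainder $T^{-1}\sum_t\|\cm{R}_{(1)}\bm{x}_{t-p}\|_2^2$ is then controlled as in part (ii) of the proof of Lemma~\ref{lemma:rsc}, using $\|\bm{R}_j\|_{\Fr}\leq\delta C_{\cmtt{R}}\bar{\rho}^j\delta_{\bm{\omega}}$ from \eqref{eq:Rnorm1}; because the row spaces of the $\bm{R}_j$ are now $\bar{s}_2$-sparse, the operator-norm bound \eqref{eq:Rjy1} is replaced by its sparse restriction, where uniformity over the $\binom{N}{\bar{s}_2}$ supports produces the $3e^{-c\bar{s}_2\log N}$ failure probability. Assembling the two parts through \eqref{eq:split1}--\eqref{eq:split2} and taking $\delta$ so that the linear lower bound dominates the remainder yields the claim.

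The main obstacle is the sharp covering/restricted-eigenvalue control over the sparse orthonormal set $\pazocal{U}_i$, which is why the effective sparsity is the inflated $\bar{s}_i=(s_i+\underline{u}^{-1})\pazocal{R}_i$ rather than merely $s_i\pazocal{R}_i$. The extra $\underline{u}^{-1}\pazocal{R}_i$ is not a counting artifact: it arises from stabilizing the support and verifying a restricted-eigenvalue condition on the factor perturbations $\bm{U}_i-\bm{U}_i^*$, for which the entrywise lower threshold $\underline{u}$ in Assumption~\ref{assum:para_add} (together with the relative spectral gap of Assumption~\ref{assum:spec_gap}) is essential. Keeping this inflation consistent between the covering set used for the linear term and the sparse support set used for the remainder is the delicate bookkeeping that the full proof must carry out.
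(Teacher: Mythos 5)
Your proposal follows essentially the same route as the paper's proof: restrict to the sphere $\pazocal{S}(\delta)$, reuse the linearization $\bm{\Delta}_{(1)}\bm{x}_t=(\cm{G}_{\rm{stack}})_{(1)}\bm{z}_t+\cm{R}_{(1)}\bm{x}_{t-p}$ and the splitting bounds, note that the normalized $\cm{G}_{\rm{stack}}$ lies in the sparse low-Tucker-rank class $\bm{\Xi}_{\mathrm{S}}$ with effective sparsity $\bar{s}_i$ and ranks $2\pazocal{R}_i$, prove the sparse analogue of Lemma \ref{lemma:rscG1} by the identical $\epsilon$-net/Hanson--Wright scheme with the sparse covering number (support selection cost $\binom{N\pazocal{R}_i}{\cdot}$ plus a value net, as in Lemmas \ref{lemma:sparse_covering}--\ref{lemma:sparserscG1}), and control the remainder through a sparsity-restricted operator-norm bound uniform over sparse supports, which is exactly the source of the $3e^{-c\bar{s}_2\log N}$ term (Lemma \ref{lemma:sparse_Rjy}(i)). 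This is precisely the paper's two-part argument (i)--(ii).

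One claim in your closing paragraph is wrong, although it does not break the proof: the inflation $\bar{s}_i=(s_i+\underline{u}^{-1})\pazocal{R}_i$ \emph{is} a pure counting consequence, not the output of a restricted-eigenvalue verification or of ``support stabilization.'' Every feasible $\bm{U}_i\in\pazocal{U}_i$ has orthonormal columns, and Assumption \ref{assum:para_add} forces each nonzero entry to satisfy $\bm{U}_{j,m}^2\geq\underline{u}$; since each column has unit $\ell_2$-norm, it can therefore contain at most $\underline{u}^{-1}$ nonzero entries, so any candidate $\bm{U}_i$ has at most $\underline{u}^{-1}\pazocal{R}_i$ nonzero rows while $\bm{U}_i^*$ has at most $s_i\pazocal{R}_i$ by Assumption \ref{assum:sparse}. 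The union of these two supports, which is what confines the nonzero rows of $(\cm{G}-\cm{G}^*)_{(i)}$ and of the $\bm{R}_j$'s, is what produces $\bar{s}_i$ --- this is exactly how the paper opens its proof. Assumption \ref{assum:spec_gap} is in fact never used inside this lemma's proof; it enters only through the HOSVD perturbation bound (Lemma \ref{lemma:sparse_perturb_svd}) in the proof of Theorem \ref{thm:sparse}. If you carried out your plan believing an RE-type argument on $\bm{U}_i-\bm{U}_i^*$ were needed here, you would be adding machinery the result does not require; the simple per-column counting bound is all that justifies the membership claim in your second step.
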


			\begin{lemma}[Deviation bound for the sparse model]\label{lemma:sparsedev}
				Under the conditions of Lemma \ref{lemma:sparsersc} and if $T \gtrsim (\kappa_2 / \kappa_1)^2  d_{\pazocal{S}}\log(\kappa_2/\kappa_1)$, given that $\lambda \gtrsim \sqrt{\kappa_2 \lambda_{\max}(\bm{\Sigma}_{\varepsilon})d_{\pazocal{S}}/{T}}$,
				\[
				\frac{1}{T}\left |\sum_{t=1}^{T}\langle \bm{\varepsilon}_t, \bm{\Delta}_{(1)}\bm{x}_t \rangle \right | \lesssim \lambda \left(\delta_{\cmtt{S}} + \alpha \delta_{\bm{\omega}} + \sum_{i=1}^{2}\|\bm{\Delta}_{\bm{U}_i} \|_{1} \right)/4 +  \tau \|\bm{\Delta}_{\bm{U}_1}\|_{1}\|\bm{\Delta}_{\bm{U}_2}\|_{1}, \quad \forall \bm{\Delta}\in\bm{\Upsilon}_{\mathbb{S}}
				\]
				holds with probability at least $1-3e^{-cd_{\pazocal{S}}\log(\kappa_2/\kappa_1)}-4e^{-cs_2\log N(\pazocal{R}_1 \wedge \pazocal{R}_2)}$, where $d_{\pazocal{S}} = \pazocal{R}_1\pazocal{R}_2 d + \sum_{i=1}^{2}s_i \pazocal{R}_i(1 + \log N\pazocal{R}_i)$ and $\tau = \sqrt{(d+\log N)/T}$.
			\end{lemma}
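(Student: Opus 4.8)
The plan is to follow the architecture of the proof of Lemma~\ref{lemma:dev}, replacing the single low-rank linearization by a full multilinear expansion of the Tucker structure $\cm{G}=\cm{S}\times_1\bm{U}_1\times_2\bm{U}_2$ about $(\cm{S}^*,\bm{U}_1^*,\bm{U}_2^*,\bm{\omega}^*)$. Combining the first-order Taylor expansion in $\bm{\omega}$ from Lemma~\ref{lemma:delnorm} with the exact telescoping of the trilinear product, I would write $\bm{\Delta}=\cm{A}-\cm{A}^*$, applied to $\bm{x}_t$, as a sum of (i) four \emph{linear} pieces, one in each of $\cm{S}-\cm{S}^*$, $\bm{\Delta}_{\bm{U}_1}$, $\bm{\Delta}_{\bm{U}_2}$ and the $\bm{\omega}$-direction; (ii) the single \emph{bilinear} piece $\cm{S}^*\times_1\bm{\Delta}_{\bm{U}_1}\times_2\bm{\Delta}_{\bm{U}_2}\times_3\bm{L}(\bm{\omega}^*)$ acting on $\bm{x}_t$; and (iii) a remainder collecting the second-order Taylor term in $\bm{\omega}$ together with all cross terms that pair $\cm{S}-\cm{S}^*$ or the $\bm{\omega}$-perturbation with a factor perturbation. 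The noise inner product $T^{-1}\sum_t\langle\bm{\varepsilon}_t,\bm{\Delta}_{(1)}\bm{x}_t\rangle$ then splits along this decomposition, and I bound each summand against the corresponding term on the right-hand side, uniformly over $\bm{\Upsilon}_{\mathbb{S}}$.

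For the two low-dimensional linear directions, the $\cm{S}$-piece and the $\bm{\omega}$-piece lie, once $\bm{U}_1^*,\bm{U}_2^*,\bm{\omega}^*$ are fixed, in a space whose dimension is of order $\pazocal{R}_1\pazocal{R}_2 d$. Writing each as an inner product of $\cm{S}-\cm{S}^*$ (resp.\ $\bm{\omega}-\bm{\omega}^*$) with a data-dependent gradient, I would control the Frobenius/Euclidean norm of that gradient by the stationary-series concentration tools already in place (Lemmas~\ref{lemma:martgl}, \ref{lemma:hansonw}, \ref{lemma:Wcov}) over a low-dimensional net, obtaining a bound of order $\sqrt{\kappa_2\lambda_{\max}(\bm{\Sigma}_\varepsilon)d_{\pazocal{S}}/T}\lesssim\lambda$ and hence the contributions $\lambda\delta_{\cmtt{S}}/4$ and $\lambda\alpha\delta_{\bm{\omega}}/4$. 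For the two factor-linear directions I would instead use $\ell_1/\ell_\infty$ duality: each equals $\langle\bm{\Delta}_{\bm{U}_i},\bm{\Gamma}_i\rangle\le\|\bm{\Delta}_{\bm{U}_i}\|_1\,\max_{j,m}|(\bm{\Gamma}_i)_{j,m}|$, where $\bm{\Gamma}_i\in\mathbb{R}^{N\times\pazocal{R}_i}$ has entries given by sums over $t$ of products of $\bm{\varepsilon}_t$ with $\cm{S}^*$- and $\bm{U}^*_{3-i}$-contracted, $\bm{L}(\bm{\omega}^*)$-weighted lags of $\bm{y}_t$. An entrywise martingale/Hanson--Wright tail with a union bound over the $N\pazocal{R}_i$ entries---the origin of the $\log(N\pazocal{R}_i)$ inside $d_{\pazocal{S}}$---gives $\max_{j,m}|(\bm{\Gamma}_i)_{j,m}|\lesssim\lambda$, and thus the terms $\lambda\|\bm{\Delta}_{\bm{U}_i}\|_1/4$.

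The main obstacle is the bilinear term. Its aggregate is a random bilinear form in $(\bm{\Delta}_{\bm{U}_1},\bm{\Delta}_{\bm{U}_2})$, which I would bound by $\|\bm{\Delta}_{\bm{U}_1}\|_1\|\bm{\Delta}_{\bm{U}_2}\|_1$ times the largest-magnitude coefficient of the form. In contrast to the linear case, each coefficient couples $\bm{\varepsilon}_t$ with a lag-weighted, $\cm{S}^*$-contracted data matrix, so it behaves as a product of two weakly dependent sequences; its tail is sub-exponential rather than sub-Gaussian, and showing that this maximum is $\lesssim\tau=\sqrt{(d+\log N)/T}$ with the stated exponent is where the real work lies. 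Here I would exploit that the slices of $\cm{S}^*\times_1\bm{\Delta}_{\bm{U}_1}\times_2\bm{\Delta}_{\bm{U}_2}$ inherit mode ranks at most $\pazocal{R}_1\wedge\pazocal{R}_2$ through $\cm{S}^*$ and row/column supports controlled by $s_1,s_2$; a covering over jointly sparse and low-rank matrices (in the spirit of Lemma~\ref{lemma:coverLR}), together with the exponential temporal decay of $\bm{L}(\bm{\omega}^*)$ from Lemma~\ref{cor1}, then produces the tail $e^{-cs_2\log N(\pazocal{R}_1\wedge\pazocal{R}_2)}$ and the factor $d$ in $\tau^2T\asymp d+\log N$, the $d$ arising from summation across the $d$ columns of $\bm{L}$.

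Finally, I would verify that the remainder in (iii) is negligible on the fixed neighborhood $\delta_{\bm{\omega}}\le c_{\bm{\omega}}$: the second-order Taylor piece is handled exactly as the $\bm{R}_j$ bound in Lemma~\ref{lemma:delnorm} via Lemma~\ref{cor1}(i), while each surviving Tucker cross term carries an extra small factor ($\delta_{\cmtt{S}}$, $\alpha\delta_{\bm{\omega}}$, or a further factor perturbation) and is therefore dominated by the $\lambda$- and $\tau$-terms already extracted, under Assumptions~\ref{assum:statn}--\ref{assum:spec_gap}. Assembling the pieces and taking $\lambda\gtrsim\sqrt{\kappa_2\lambda_{\max}(\bm{\Sigma}_\varepsilon)d_{\pazocal{S}}/T}$ delivers the stated inequality, with the two probability terms coming, respectively, from the low-Tucker-rank covering bounds for the linear parts (exponent $d_{\pazocal{S}}\log(\kappa_2/\kappa_1)$) and from the sparse-low-rank covering bound for the bilinear part (exponent $s_2\log N(\pazocal{R}_1\wedge\pazocal{R}_2)$).
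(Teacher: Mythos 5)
Your proposal is correct and takes essentially the same route as the paper's own proof: the same linearization into $\cm{G}_{\rm{stack}}$ plus an $\bm{\omega}$-Taylor remainder, the same telescoping of the Tucker factorization into core, factor-linear, and bilinear pieces, entrywise ($\ell_1$/$\ell_\infty$ duality) bounds with union bounds over the $N\pazocal{R}_i$ entries for the factor-linear terms, and a sparse-plus-low-rank covering argument for the bilinear term yielding the $\tau\|\bm{\Delta}_{\bm{U}_1}\|_{1}\|\bm{\Delta}_{\bm{U}_2}\|_{1}$ contribution. The only deviations are bookkeeping: the paper keeps the perturbed core $\cm{S}$ (rather than $\cm{S}^*$) in the factor-linear and bilinear pieces so that the expansion of $\cm{G}-\cm{G}^*$ is exact in four terms with no $\bm{\Delta}_{\cmtt{S}}$-cross remainders, and the tail $e^{-cs_2\log N(\pazocal{R}_1\wedge\pazocal{R}_2)}$ actually arises from the remainder terms $\bm{R}_j$ (Lemma \ref{lemma:sparse_Rjy}) rather than from the main bilinear piece, neither of which changes the substance.
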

			
			\begin{lemma}[Effects of initial values]\label{lemma:sparseinit}
				Under Assumptions \ref{assum:error}--\ref{assum:para_add}, if $T \gtrsim \bar{s}_2$, then with probability at least $1-c\sqrt{\bar{s}_2/T}(1+\sqrt{\bar{s}_1/d_{\pazocal{S}}})$,
				\[
				|S_1(\bm{\Delta})| \lesssim \kappa_1 \|\bm{\Delta}\|_{\Fr}^2/4, \quad |S_i(\bm{\Delta})| \lesssim \sqrt{\frac{\kappa_2 \lambda_{\max}(\bm{\Sigma}_{\varepsilon}) d_{\pazocal{S}}}{T}}  \|\bm{\Delta}\|_{\Fr}/4,\quad i=2,3, \quad \forall \bm{\Delta}\in\bm{\Upsilon}_{\mathbb{S}}\cap\pazocal{S}(\delta),
				\]
				where $\bar{d}_{\pazocal{S}}$ and $d_{\pazocal{S}}$ are defined in Lemmas \ref{lemma:sparsersc} and \ref{lemma:sparsedev}, and $\bar{s}_i = (s_i+\underline{u}^{-1})\pazocal{R}_i$ for $i=1$ or $2$. 
			\end{lemma}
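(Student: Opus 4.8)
The plan is to follow the proof of Lemma \ref{lemma:init} essentially line by line, since the only structural difference between the rank-constrained and the SLTR settings is that the loadings are now sparse, while the whole argument is driven by the slice-wise exponential decay of $\bm{\Delta}_j$, which is insensitive to sparsity. First I would reduce, exactly as before, to the sphere: because $S_1$ is quadratic and $S_2,S_3$ are linear in $\bm{\Delta}$, it suffices to prove the three bounds uniformly over $\bm{\Delta}\in\bm{\Upsilon}_{\mathbb{S}}\cap\pazocal{S}(\delta)$ for a radius $\delta$ meeting condition \eqref{eq:condition1}, and then replace $\delta$ by $\|\bm{\Delta}\|_{\Fr}$. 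Since $\bm{\Gamma}_{\mathbb{S}}\subseteq\bm{\Gamma}(\pazocal{R}_1,\pazocal{R}_2)$, and hence $\bm{\Upsilon}_{\mathbb{S}}\subseteq\bm{\Upsilon}$, both Lemma \ref{lemma:delnorm} and Lemma \ref{cor1} apply verbatim, so the decay estimate \eqref{eq:delta-decay}, namely $\|\bm{\Delta}_j\|_{\op}\le\|\bm{\Delta}_j\|_{\Fr}\le\delta C_1\bar{\rho}^{j}$ with $C_1\asymp1$, carries over with no change.

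The genuinely new ingredient is the sparse support structure, which I would record next. Writing $\bm{A}_j=\bm{U}_1\bm{M}_j\bm{U}_2^\prime$ with $\bm{M}_j=\sum_k\ell_{j,k}(\bm{\omega})\bm{S}_k$, every coefficient matrix $\bm{A}_j$ (and likewise $\bm{A}_j^*$) is supported on $T_1\times T_2$, where $T_i$ is the set of nonzero rows of $\bm{U}_i$ and is independent of $j$. By Assumption \ref{assum:sparse} the true loadings contribute at most $s_i\pazocal{R}_i$ such rows, while Assumption \ref{assum:para_add} (each squared entry is either $0$ or $\ge\underline{u}$, together with $\|\bm{U}_i\|_{\Fr}^2=\pazocal{R}_i$ by orthonormality) forces the estimated loadings to have at most $\pazocal{R}_i/\underline{u}$ nonzero rows. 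Hence the row and column supports of $\bm{\Delta}_j=\bm{A}_j-\bm{A}_j^*$ have cardinality at most $\bar{s}_1=(s_1+\underline{u}^{-1})\pazocal{R}_1$ and $\bar{s}_2=(s_2+\underline{u}^{-1})\pazocal{R}_2$, respectively, uniformly in $j$.

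With this in hand I would revisit the moment bounds \eqref{eq:init1} and \eqref{eq:init2}, the point being that sparsity replaces the ambient dimension $N$ by the effective sparse dimensions. Projecting $\bm{y}_{t-j}$ onto the column support $T_2$ gives $\mathbb{E}\|\bm{\Delta}_j\bm{y}_{t-j}\|_2^2\le\|\bm{\Delta}_j\|_{\op}^2\,\trace(\bm{P}_{T_2}\bm{\Sigma}_y\bm{P}_{T_2})\lesssim(\delta\bar{\rho}^j)^2\bar{s}_2\lambda_{\max}(\bm{\Sigma}_\varepsilon)\mu_{\max}(\bm{\Psi}_*)$, and the analogous estimate for $\bm{A}_j^*\bm{y}_{t-j}$; for $S_3$ the noise enters only through its projection onto the row support $T_1$, so $\langle\bm{\varepsilon}_t,\bm{\Delta}_j\bm{y}_{t-j}\rangle=\langle\bm{P}_{T_1}\bm{\varepsilon}_t,\bm{\Delta}_j\bm{y}_{t-j}\rangle$ with $\mathbb{E}\|\bm{P}_{T_1}\bm{\varepsilon}_t\|_2^2\le\bar{s}_1\lambda_{\max}(\bm{\Sigma}_\varepsilon)$. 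Feeding these into the Cauchy--Schwarz computations of Lemma \ref{lemma:init} yields $\mathbb{E}\{\sup|S_1|\}\lesssim\delta^2\kappa_2\bar{s}_2/T$, $\mathbb{E}\{\sup|S_2|\}\lesssim\delta\kappa_2\bar{s}_2/T$, and $\mathbb{E}\{\sup|S_3|\}\lesssim\delta\sqrt{\bar{s}_1\bar{s}_2}\sqrt{\kappa_2\lambda_{\max}(\bm{\Sigma}_\varepsilon)}/T$. Markov's inequality at the thresholds of the statement then produces tail probabilities that, after absorbing the $\kappa$-ratios into the constant $c$ and using $T\gtrsim\bar{s}_2$, sum to the claimed $1-c\sqrt{\bar{s}_2/T}(1+\sqrt{\bar{s}_1/d_{\pazocal{S}}})$.

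The main obstacle is the interaction between the supremum over $\bm{\Delta}$ and the fact that the sparse supports $T_i(\bm{\Delta})$ themselves vary with $\bm{\Delta}$, so one cannot fix a support before taking expectations. Making the reduction $N\to\bar{s}_i$ rigorous therefore requires either taking a maximum over all supports of size at most $\bar{s}_i$ --- controlled by a union bound over the $\binom{N}{\bar{s}_i}$ choices together with sub-Gaussian maximal inequalities, which is exactly what injects the $\log N$ factors that assemble into $d_{\pazocal{S}}$ --- or, more cleanly, exploiting that $\bm{\Delta}_j\bm{y}_{t-j}$ and $\bm{P}_{T_1}\bm{\varepsilon}_t$ factor through the $\le 2\pazocal{R}_i$-dimensional subspaces spanned by $[\bm{U}_i,\bm{U}_i^*]$, so that the supremum is over a low-dimensional object rather than over all sparse supports. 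Keeping careful track of which of $\bar{s}_1$ and $\bar{s}_2$ governs each factor, and checking that the three Markov tail probabilities genuinely combine into the stated form, is the part demanding the most care.
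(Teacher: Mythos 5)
Your proposal is correct and takes essentially the same route as the paper's own proof: reduction to the sphere, the slice-wise decay $\|\bm{\Delta}_j\|_{\Fr}\lesssim\delta\bar{\rho}^{j}$ inherited from Lemmas \ref{lemma:delnorm} and \ref{cor1}, replacement of the ambient dimension $N$ by the effective sparse dimensions ($s_2\pazocal{R}_2$ for $\bm{A}_j^*$, $\bar{s}_i$ for $\bm{\Delta}_j$) via the row/column supports, then Cauchy--Schwarz and Markov at exactly the thresholds in the statement. Your three expectation bounds match the paper's, up to your using $\bar{s}_2$ where the paper keeps the sharper $\sqrt{s_2\pazocal{R}_2\bar{s}_2}$ in the $S_2$ term; since $s_2\pazocal{R}_2\le\bar{s}_2$ this costs only a constant.

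One remark on your closing ``obstacle'' paragraph: the paper implements neither of the remedies you contemplate. It bounds $\mathbb{E}\{\sup_{\bm{\Delta}}\|\bm{\Delta}_j\bm{y}_{t-j}\|_2^2\}$ directly by $\delta^2C_1^2\bar{\rho}^{2j}\bar{s}_2\lambda_{\max}(\bm{\Sigma}_\varepsilon)\mu_{\max}(\bm{\Psi}_*)$, i.e.\ it applies the fixed-support moment bound even though the supports $\bar{\mathbb{S}}_i$ vary with $\bm{\Delta}$; no union bound over supports and no $\log N$ inflation appear in its proof of this lemma. So the subtlety you flag is real, but the paper glosses over it rather than resolves it, and your write-up is at this point more conservative than the source. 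Be aware, however, that your ``cleaner'' fix does not work as stated: the subspaces spanned by $[\bm{U}_i,\bm{U}_i^*]$ also vary with $\bm{\Delta}$, and a supremum of $\|P_V\bm{y}\|_2$ over varying subspaces $V$ of dimension $2\pazocal{R}_i\ge 1$ simply returns $\|\bm{y}\|_2$, so nothing is gained. If you want full rigor here, the viable option is the union bound over the at most $\binom{N}{\bar{s}_2}$ supports (equivalently, a sub-Gaussian maximal inequality for the sum of the top $\bar{s}_2$ squared coordinates); this inserts $\log N$ factors that the $d_{\pazocal{S}}$-based thresholds for $S_2,S_3$ can absorb and that, for $S_1$, merely strengthen the sample-size condition to $T\gtrsim\bar{s}_2\log N$.
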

			
			Now we give the proof of Theorem \ref{thm:sparse}. Denote $\bm{\widetilde{\Delta}} = \cm{\widetilde{A}} - \cm{A}^*$.
			Note that $\sum_{j=1}^{t-1}\bm{A}_j\bm{y}_{t-j}= \cm{A}_{(1)}\bm{\widetilde{x}}_{t}$. 
			Due to the optimality of $\cm{\widetilde{A}}$, we have
			\[
			\sum_{t=1}^{T} \| \bm{y}_t - \cm{A}^*_{(1)}\bm{\widetilde{x}}_{t} - \widetilde{\bm{\Delta}}_{(1)} \bm{\widetilde{x}}_{t}\|_2^2 + \lambda \sum_{i=1}^{2}\|\bm{\widetilde{U}}_i\|_{1}
			\leq \sum_{t=1}^{T} \| \bm{y}_t - \cm{A}^*_{(1)}\bm{\widetilde{x}}_{t}\|_2^2 + \lambda \sum_{i=1}^{2}\|\bm{U}_i^*\|_{1},
			\]
			Then, since $\bm{y}_t - \cm{A}^*_{(1)}\bm{\widetilde{x}}_{t}=\bm{\varepsilon}_t +   \sum_{j=t}^{\infty}\bm{A}_j^* \bm{y}_{t-j}$ and $\widetilde{\bm{\Delta}}_{(1)}(\bm{x}_t - \bm{\widetilde{x}}_{t})= \sum_{k=t}^{\infty}\widetilde{\bm{\Delta}}_k \bm{y}_{t-k}$, it follows from \eqref{eq:thm1eq} and \eqref{eq:thm1eq1} that
			\begin{align}\label{eq:s_thmeq}
				\frac{1}{T}\sum_{t=1}^{T}\|\widetilde{\bm{\Delta}}_{(1)}\bm{x}_{t}\|_2^2 \leq 
				\frac{2}{T}\sum_{t=1}^{T}\langle \bm{\varepsilon}_t, \widetilde{\bm{\Delta}}_{(1)}\bm{x}_{t} \rangle + \sum_{k=1}^{3} S_k(\bm{\widetilde{\Delta}}) + \lambda \sum_{i=1}^{2}\left(\|\bm{U}_i^*\|_{1} - \|\bm{\widetilde{U}}_i\|_{1}\right),
			\end{align}
			where  $S_k(\cdot)$ for $1\leq k \leq 3$ are the initialization error terms defined as in \eqref{eq:notation_init}, and $\widetilde{\bm{\Delta}}_{(1)}\bm{x}_{t}= \sum_{k=1}^{\infty}\widetilde{\bm{\Delta}}_k \bm{y}_{t-k}$.
			Let $\bm{\widetilde{\Delta}}_{\cmtt{S}} = \cm{\widetilde{S}}- \cm{S}^*$, $\bm{\widetilde{\Delta}}_{\omega} = \bm{\widetilde{\omega}} - \bm{\omega}^*$, and $\bm{\widetilde{\Delta}}_{\bm{U}_i} = \bm{\widetilde{U}}_i - \bm{U}_i^*$ for $i = 1$ or $2$. 
			On the right-hand side of \eqref{eq:s_thmeq}, denote the event that the first term are bounded by $\lambda(\|\bm{\widetilde{\Delta}}_{\cmtt{S}}\|_{\Fr} + \alpha \|\bm{\widetilde{\Delta}}_{\omega}\|_2+ \sum_{i=1}^{2}\|\bm{\widetilde{\Delta}}_{\bm{U}_i} \|_{1})$ and $\tau\|\bm{\widetilde{\Delta}}_{\bm{U}_1}\|_{1}\|\bm{\widetilde{\Delta}}_{\bm{U}_2}\|_{1}$ as $\pazocal{I}_1$,
			\begin{align*}
				\begin{split}
					\pazocal{I}_1 = \left\{ 	
					\frac{1}{T}\left |\sum_{t=1}^{T}\langle \bm{\varepsilon}_t, \bm{\widetilde{\Delta}}_{(1)}\bm{x}_t \rangle \right | \lesssim \lambda \left(\|\bm{\widetilde{\Delta}}_{\cmtt{S}}\|_{\Fr} + \alpha \|\bm{\widetilde{\Delta}}_{\omega}\|_2  + \sum_{i=1}^{2}\|\bm{\widetilde{\Delta}}_{\bm{U}_i} \|_{1} \right)/4 +  \tau \|\bm{\widetilde{\Delta}}_{\bm{U}_1}\|_{1}\|\bm{\widetilde{\Delta}}_{\bm{U}_2}\|_{1} \right\}.
				\end{split}
			\end{align*}
			On event $\pazocal{I}_1$, if we multiply $2$ to both sides of \eqref{eq:s_thmeq} we have
			\begin{align*}
				\begin{split}
					\frac{2}{T}\sum_{t=1}^{T}\|\widetilde{\bm{\Delta}}_{(1)}\bm{x}_{t}\|_2^2 &\leq \lambda \left(\|\bm{\widetilde{\Delta}}_{\cmtt{S}}\|_{\Fr} + \alpha \|\bm{\widetilde{\Delta}}_{\omega}\|_2 + \sum_{i=1}^{2}\|\bm{\widetilde{\Delta}}_{\bm{U}_i} \|_{1} + 2\sum_{i=1}^{2}\|\bm{U}_i^*\|_{1} - 2\sum_{i=1}^{2}\|\bm{\widetilde{U}}_i\|_{1}\right) \\
					&\hspace{10mm}+2\sum_{k=1}^{3} S_k(\bm{\widetilde{\Delta}})+4\tau \|\bm{\widetilde{\Delta}}_{\bm{U}_1}\|_{1}\|\bm{\widetilde{\Delta}}_{\bm{U}_2}\|_{1}.
				\end{split}
			\end{align*}
			
			Denote by $\mathbb{S}_i$ the index set of the nonzero entries of $\bm{U}_i^*$, and by $\mathbb{S}_i^C$ the complement of $\mathbb{S}_{\bm{U}_i}$, for $i = 1$ or $2$. By the elementwise sparsity of each
			$\bm{U}_i^*$ in Assumption \ref{assum:sparse}, the cardinality of the index set $|\mathbb{S}_{\bm{U}_i}| = \|\bm{U}_i^*\|_{0} \leq s_i\pazocal{R}_i$ for $i=1$ or $2$.
			Note that $\|\bm{\widetilde{U}}_i\|_{1} \geq \|(\bm{\widetilde{\Delta}}_{\bm{U}_i})_{\mathbb{S}_i^C} + (\bm{U}_i^*)_{\mathbb{S}_i}\|_{1} - \|(\bm{\widetilde{\Delta}}_{\bm{U}_i})_{\mathbb{S}_i}\|_{1} = \|(\bm{\widetilde{\Delta}}_{\bm{U}_i})_{\mathbb{S}_i^C}\|_{1} + \|(\bm{U}_i^*)_{\mathbb{S}_i}\|_{1} - \|(\bm{\widetilde{\Delta}}_{\bm{U}_i})_{\mathbb{S}_i}\|_{1}$, $(\bm{U}_i^*)_{\mathbb{S}_i} = \bm{U}_i^*$ and $\|\bm{\widetilde{\Delta}}_{\bm{U}_i} \|_{1} \leq \|(\bm{\widetilde{\Delta}}_{\bm{U}_i})_{\mathbb{S}_i} \|_{1} +  \|(\bm{\widetilde{\Delta}}_{\bm{U}_i})_{\mathbb{S}_i^C} \|_{1}$, so we have
			\begin{equation*}
				\begin{split}
					\frac{2}{T}\sum_{t=1}^{T}\|\widetilde{\bm{\Delta}}_{(1)}\bm{x}_{t}\|_2^2 &\leq \lambda(\|\bm{\widetilde{\Delta}}_{\cmtt{S}}\|_{\Fr} + \alpha \|\bm{\widetilde{\Delta}}_{\omega}\|_2) + 3\lambda \sum_{i=1}^{2}\|(\bm{\widetilde{\Delta}}_{\bm{U}_i})_{\mathbb{S}_i}\|_{1} - \lambda\sum_{i=1}^{2}\|(\bm{\widetilde{\Delta}}_{\bm{U}_i})_{\mathbb{S}_i^C}\|_{1} \\
					&\hspace{10mm}+2\sum_{k=1}^{3} S_k(\bm{\widetilde{\Delta}})+4\tau \|\bm{\widetilde{\Delta}}_{\bm{U}_1}\|_{1}\|\bm{\widetilde{\Delta}}_{\bm{U}_2}\|_{1}.
				\end{split}
			\end{equation*}
			
			Next, we assume that there is a lower bound for $T^{-1}\sum_{t=1}^{T}\|\widetilde{\bm{\Delta}}_{(1)}\bm{x}_{t}\|_2^2$ and then define the event $\pazocal{I}_2 = \{T^{-1}\sum_{t=1}^{T}\|\widetilde{\bm{\Delta}}_{(1)}\bm{x}_{t}\|_2^2 \gtrsim \kappa_1 \|\widetilde{\bm{\Delta}}\|_{\Fr}^2\}$, where $\kappa_1=	\lambda_{\min}(\bm{\Sigma}_\varepsilon)\mu_{\min}(\bm{\Psi}_*) 	\min\{1, c_{\bar{\rho}}^2\}$ and $c_{\bar{\rho}}$ is an absolute constant defined in Lemma \ref{lemma:fullrank} in the Appendix. Moreover, we assume that the initialization error terms have a upper bound, and denote the event
			\[
			\pazocal{I}_3 = \{ S_1(\bm{\widetilde{\Delta}}) \lesssim \kappa_1\|\bm{\widetilde{\Delta}}\|_{\Fr}^2/4, S_k(\bm{\widetilde{\Delta}}) \lesssim \lambda\|\bm{\widetilde{\Delta}}\|_{\Fr}/4, k=2\text{ or }3\}.
			\]
			By Assumptions \ref{assum:sparse} and \ref{assum:para_add}, let $\bar{s}_i = (s_i + \underline{u}^{-1})\pazocal{R}_i$, and then $\bm{\widetilde{\Delta}}_{\bm{U}_i}$ has an elementwise sparsity of at most $\bar{s}_i$ for $i=1$ or $2$. 
			Hence, by the perturbation bounds in Lemma \ref{lemma:sparse_perturb_svd} and Lemma \ref{lemma:delnorm}, 
			\begin{equation*}
				\|\bm{\widetilde{\Delta}}_{\bm{U}_i}\|_{1} \leq \sqrt{\bar{s}_i}\|\bm{\widetilde{\Delta}}_{\bm{U}_i}\|_{\Fr} \leq c_{\Delta}^{-1} {\beta}^{-1} {C\eta_i}\sqrt{\bar{s}_i} \|\bm{\widetilde{\Delta}}\|_{\Fr}.
			\end{equation*}
			Suppose that $c_{\Delta}^{-2} {\beta}^{-2} {C^2\eta_1\eta_2}\sqrt{\bar{s}_1\bar{s}_2}\tau \lesssim \kappa_1/8$, on the events $\pazocal{I}_1, \pazocal{I}_2$ and $\pazocal{I}_3$,
			\begin{align*}
				\kappa_1\|\bm{\widetilde{\Delta}}\|_{\Fr}^2 & \lesssim \lambda\|\bm{\widetilde{\Delta}}\|_{\Fr} + \lambda(\|\bm{\widetilde{\Delta}}_{\cmtt{S}}\|_{\Fr} + \alpha \|\bm{\widetilde{\Delta}}_{\omega}\|_2 )+ 3\lambda \sum_{i=1}^{2}\|(\bm{\widetilde{\Delta}}_{\bm{U}_i})_{\mathbb{S}_i}\|_{1}\\
				&\lesssim \lambda\|\bm{\widetilde{\Delta}}\|_{\Fr} + \lambda(\|\bm{\widetilde{\Delta}}_{\cmtt{S}}\|_{\Fr} + \alpha \|\bm{\widetilde{\Delta}}_{\omega}\|_2 ) + 3\lambda\sum_{i=1}^{2} \sqrt{s_i} \|\bm{\widetilde{\Delta}}_{\bm{U}_i}\|_{\Fr},
			\end{align*}
			where by the perturbation bounds in Lemma \ref{lemma:sparse_perturb_svd} and Lemma \ref{lemma:delnorm}, $\|\bm{\widetilde{\Delta}}_{\bm{U}_i}\|_{\Fr} \leq c_{\Delta}^{-1} {\beta}^{-1} {C\eta_i} \|\bm{\widetilde{\Delta}}\|_{\Fr}$.
			Similarly, we can show
			\begin{align*}
				\|\bm{\widetilde{\Delta}}_{\cmtt{S}}\|_{\Fr} + \alpha \|\bm{\widetilde{\Delta}}_{\omega}\|_2 
				&\overset{\text{(Lemma \ref{lemma:sparse_perturb_svd})}}{\leq}  {\beta}^{-1}{C(\eta_1 + \eta_2)} \|\cm{\widetilde{G}}-\cm{G}^*\|_{\Fr} + \alpha\|\bm{\widetilde{\omega}} - \bm{\omega}^*\|_2\\
				&\overset{\text{(Lemma \ref{lemma:delnorm})}}{\leq} c_{\Delta}^{-1}\max\left({\beta}^{-1}{C(\eta_1 + \eta_2)}, 1\right)\|\bm{\widetilde{\Delta}}\|_{\Fr},
			\end{align*}
			
			Since $\sqrt{x} + \sqrt{y} \leq 2\sqrt{x+y}$ for any $x,y\geq 0$, we have
			\[
			\kappa_1\|\bm{\widetilde{\Delta}}\|_{\Fr}^2  \lesssim c_{\Delta}^{-1}	{\beta}^{-1}{(\eta_1 + \eta_2)} \sqrt{s_1 + s_2} \lambda \|\bm{\widetilde{\Delta}}\|_{\Fr}.
			\]
			And the estimation error bound and in-sample prediction error bound are given by
			\[
			\|\bm{\widetilde{\Delta}}\|_{\Fr}\lesssim (c_{\Delta}	{\beta} \kappa_1)^{-1} {(\eta_1 + \eta_2)} \sqrt{s_1 + s_2} \lambda  \quad\text{and}\quad \frac{1}{T}\sum_{t=1}^{T}\|\widetilde{\bm{\Delta}}_{(1)}\bm{\widetilde{x}}_{t}\|_2^2 \lesssim (c_{\Delta}^2	{\beta}^2 \kappa_1)^{-1} (\eta_1 + \eta_2)^2 (s_1 + s_2) \lambda^2,
			\]
			respectively.
			
			In the second part, we show the conditions that events $\pazocal{I}_1, \pazocal{I}_2$ and $\pazocal{I}_3$ occur with high probability. 
			First, denote by  $ d_{\pazocal{S}} = \pazocal{R}_1\pazocal{R}_2 d + \sum_{i=1}^{2}s_i \pazocal{R}_i(1 + \log N\pazocal{R}_i)$ the sample complexity of the model. If $T \gtrsim (\kappa_2 / \kappa_1)^2  d_{\pazocal{S}}\log(\kappa_2/\kappa_1)$, given that $\lambda \gtrsim \sqrt{\kappa_2 \lambda_{\max}(\bm{\Sigma}_{\varepsilon})d_{\pazocal{S}}/{T}}$ and $\tau = \sqrt{(d+\log N)/T}$, it follows from Lemma \ref{lemma:sparsedev} that the event $\pazocal{I}_1$ holds with probability at least $1-3e^{-cd_{\pazocal{S}}\log(\kappa_2/\kappa_1)}-4e^{-cs_2\log N(\pazocal{R}_1 \wedge \pazocal{R}_2)}$. Moreover, if $T \gtrsim \beta^{-4}\eta_1^{-2}\eta_2^{-2}\bar{s}_1\bar{s}_2(d+\log N)$, the condition that $c_{\Delta}^{-2} {\beta}^{-2} {C^2\eta_1\eta_2}\sqrt{\bar{s}_1\bar{s}_2}\tau \lesssim \kappa_1/8$ holds. 
			
			Secondly, if $T \gtrsim (\kappa_2 / \kappa_1)^2  \bar{d}_{\pazocal{S}}\log(\kappa_2/\kappa_1)$, it follows from Lemma \ref{lemma:sparsersc} that the event $\pazocal{I}_2$ holds with probability at least $1  -2e^{-c\bar{d}_{\pazocal{S}}\log(\kappa_2/\kappa_1)}-3e^{-c\bar{s}_2\log N}$, where $\bar{d}_{\pazocal{S}} = \pazocal{R}_1\pazocal{R}_2 d + \sum_{i=1}^{2}\bar{s}_i \pazocal{R}_i(1 + \log N\pazocal{R}_i) $. 
			
			Finally, if $T \gtrsim \bar{s}_2$, then it follows from Lemma \ref{lemma:sparseinit} that with probability at least $1-c\sqrt{\bar{s}_2/T}(1+\sqrt{\bar{s}_1/d_{\pazocal{S}}})$, the event $\pazocal{I}_3$ holds.
			
			\subsection{Proof of Lemma \ref{lemma:sparsersc}}
			The proof of this lemma is largely based on some existing results in the proof of Lemma \ref{lemma:rsc} in Section \ref{sec:proof_RSC}. 
			First of all, under Assumptions \ref{assum:sparse} \& \ref{assum:para_add}, there are at most 
			$\underline{u}^{-1}\pazocal{R}_i$ nonzero rows in $\bm{U}_i$ and $s_i\pazocal{R}_i$ ones in $\bm{U}_i^*$, for all $\bm{U}_i \in \pazocal{U}_i$ with $i=1$ or $2$. Hence, for any $\bm{\Delta} = \cm{A} - \cm{A}^* \in\bm{\Upsilon}_{\mathbb{S}}$, we only need to consider those that satisfy
			$\cm{G}_{\rm{stack}}\in \bm{\Xi}_{\mathrm{S}}(\bar{s}_1\pazocal{R}_1,\bar{s}_2\pazocal{R}_2,2\pazocal{R}_1,2\pazocal{R}_2)$ and $\bm{R}_j \in \bm{\Pi}(\bar{s}_1(\pazocal{R}_1\wedge\pazocal{R}_2),\bar{s}_2(\pazocal{R}_1\wedge\pazocal{R}_2),2(\pazocal{R}_1\wedge\pazocal{R}_2))$ for all $j\geq 1$.
			
			Then, it remains to show that the following two results hold for all $\bm{\Delta}\in\bm{\Upsilon}_{\mathbb{S}}\cap\pazocal{S}(\delta)$ that satisfy the above sparsity conditions. 
			\begin{itemize}
				\item [(i)] If $T \gtrsim (\kappa_2 / \kappa_1)^2  \bar{d}_{\pazocal{S}}\log(\kappa_2/\kappa_1)$, then
				\begin{equation*}
					\mathbb{P}\left \{\forall \bm{\Delta}\in \bm{\Upsilon}_{\mathbb{S}}\cap\pazocal{S}(\delta): \frac{ c_{\cmtt{M}} \delta^2 \kappa_1}{8} \lesssim \frac{1}{T}\sum_{t=1}^{T}\|(\cm{G}_{\rm{stack}})_{(1)}\bm{z}_t\|_2^2 \lesssim 6C_{\cmtt{M}} \delta^2 \kappa_2\right \} \geq 1-2e^{-c\bar{d}_{\pazocal{S}}\log(\kappa_2/\kappa_1)}.
				\end{equation*}
				\item [(ii)] If $T\gtrsim \bar{s}_2 \log N$, then
				\begin{equation*}
					\mathbb{P}\left \{ \sup_{\bm{\Delta}\in\bm{\Upsilon}_{\mathbb{S}}\cap\pazocal{S}(\delta)}\frac{1}{T}\sum_{t=1}^{T}\|\cm{R}_{(1)}\bm{x}_{t-p}\|_2^2 \lesssim \delta^2  \delta_{\bm{\omega}}^2 \lambda_{\max}(\bm{\Sigma}_\varepsilon) \mu_{\max}(\bm{\Psi}_*)\right \}	\geq 1-3e^{-c\bar{s}_2 \log N}.
				\end{equation*}
			\end{itemize}
			
			The result in (i) can be jointly obtained from \eqref{eq:rsc-norm}, \eqref{eq:Gnorm} and Lemma \ref{lemma:sparserscG1}. While to obtain the result in (ii), we first denote by $\mathbb{K}(s) = \{\bm{v}\in\mathbb{R}^N:\|\bm{v}\|_{0}\leq s, \|\bm{v}\|_2\leq 1\}$ the set of $s$-sparse vectors. Note that for all $j\geq 1$,
			\begin{align*}
				\frac{1}{T}\sum_{t=1}^{T}\|\bm{R}_j\bm{y}_{t-p-j}\|_2^2   
				&=\trace \left\{\bm{R}_j\left( \frac{1}{T} \sum_{t=1}^{T}\bm{y}_{t-p-j}\bm{y}_{t-p-j}^\prime \right) \bm{R}_j^\prime\right\} \\
				&\leq \|\bm{R}_j\|_{\Fr}^2 \left(\sup_{\bm{v}\in\mathbb{K}(\bar{s}_2)} \bm{v}^\prime \frac{1}{T} \sum_{t=1}^{T}\bm{y}_{t-p-j}\bm{y}_{t-p-j}^\prime \bm{v} \right).
			\end{align*}
			Combining this with \eqref{eq:rsc-norm1} \& \eqref{eq:Rx1} and \eqref{eq:sparse_Rjy1}, if $T \gtrsim \bar{s}_2\log N$, with probability at least $1-3e^{-\bar{s}_2\log N \log9}$, we have 
			\begin{align*}
				\sup_{\bm{\Delta}\in\bm{\Upsilon}_{\mathbb{S}}\cap\pazocal{S}(\delta)} \frac{1}{T}\sum_{t=1}^{T}\|\cm{R}_{(1)}\bm{x}_{t-p}\|_2^2 
				\lesssim
				\delta^2 \delta_{\bm{\omega}}^2  \lambda_{\max}(\bm{\Sigma}_\varepsilon)\mu_{\max}(\bm{\Psi}_*)
				\left (\sum_{j=1}^{\infty}  \bar{\rho}^j  \sqrt{j \sigma^2+1} \right )^2
				\lesssim \delta^2  \delta_{\bm{\omega}}^2 \lambda_{\max}(\bm{\Sigma}_\varepsilon)\mu_{\max}(\bm{\Psi}_*),
			\end{align*}
			where the second inequality follows from the fact that $\sum_{j=1}^{\infty}  \bar{\rho}^j  \sqrt{j \sigma^2+1}\asymp1$. Thus (ii) is verified. The rest of the proof is same as the proof of Lemma \ref{lemma:rsc} and hence is omitted here.
			
			
			\subsection{Proof of Lemma \ref{lemma:sparsedev}}
			
			The proof of this lemma closely follows the proof of Lemma \ref{lemma:dev}. Essentially, we only need to show the following two intermediate results. 
			\begin{itemize}
				\item [(i)] If $T \gtrsim (\kappa_2 / \kappa_1)^2  d_{\pazocal{S}}\log(\kappa_2/\kappa_1)$, then
				\begin{align*}
					&\mathbb{P}\left \{ \sup_{\bm{\Delta}\in\bm{\Upsilon}_{\mathbb{S}}} \frac{1}{T}\left |\sum_{t=1}^{T}\langle (\cm{G}_{\rm{stack}})_{(1)} \bm{z}_t, \bm{\varepsilon}_t \rangle \right | \leq \lambda \left(\delta_{\cmtt{S}} + \alpha \delta_{\bm{\omega}} + \sum_{i=1}^{2}\|\bm{\Delta}_{\bm{U}_i} \|_{1} \right)/4 + \tau \|\bm{\Delta}_{\bm{U}_1}\|_{1}\|\bm{\Delta}_{\bm{U}_2}\|_{1} \right \}\\
					&\hspace{30mm} \geq 1- e^{-c d_{\pazocal{S}}}- 2e^{-cd_{\pazocal{S}}\log(\kappa_2/\kappa_1)}.
				\end{align*}
				\item [(ii)] If $T \gtrsim (s_1+s_2)(\pazocal{R}_1 \wedge \pazocal{R}_2 + \log N(\pazocal{R}_1 \wedge \pazocal{R}_2))$, then
				\begin{align*}
					&\mathbb{P}\left \{ \sum_{j=1}^{\infty} \sup_{\bm{\Delta}\in\bm{\Upsilon}_{\mathbb{S}}} \frac{1}{T}\left |\sum_{t=1}^{T}\langle \bm{R}_j\bm{y}_{t-p-j}, \bm{\varepsilon}_t \rangle \right | \leq \lambda \delta_{\bm{\omega}} \left(\delta_{\cmtt{S}} + \alpha \delta_{\bm{\omega}} + \sum_{i=1}^{2}\|\bm{\Delta}_{\bm{U}_i} \|_{1} \right)/4  +  \tau \delta_{\bm{\omega}} \|\bm{\Delta}_{\bm{U}_1}\|_{1}\|\bm{\Delta}_{\bm{U}_2}\|_{1}  \right \}\\
					&\hspace{35mm}	\geq  1-4e^{-s_2 \log N(\pazocal{R}_1 \wedge \pazocal{R}_2)\log 9}.
				\end{align*}
			\end{itemize}
			
			\bigskip\noindent
			\textbf{Proof of (i):} 
			Note that $\cm{G} - \cm{G}^* = \cm{S}\times_1 \bm{\Delta}_{\bm{U}_1} \times_2 \bm{U}_2^* + \cm{S}\times_1 \bm{U}_1^* \times_2 \bm{\Delta}_{\bm{U}_2} +  \cm{S}\times_1 \bm{\Delta}_{\bm{U}_1} \times_2\bm{\Delta}_{\bm{U}_2} + \bm{\Delta}_{\cmtt{S}} \times_1 \bm{U}_1^* \times_2 \bm{U}_2^*$, we then have $\cm{G}_{\rm{stack}}=\stk(\cm{G} - \cm{G}^*, \cm{D}(\bm{\omega})) = \sum_{i=1}^{4}\cm{M}_i$, where 
			\begin{align*}
				\begin{split}
					\cm{M}_1 &= \sum_{i=1}^{N}\sum_{m=1}^{\pazocal{R}_1}\cm{M}_{1,i,m}, \quad \cm{M}_2 = \sum_{i=1}^{N}\sum_{m=1}^{\pazocal{R}_2}\cm{M}_{2,i,m},\quad \cm{M}_3 = \sum_{i,j=1}^{N}\sum_{m=1}^{\pazocal{R}_1}\sum_{h=1}^{\pazocal{R}_2}\cm{M}_{3,i,j,m,h},\hspace{2mm}
					\text{and}\\
					\cm{M}_4 &= \stk(\bm{\Delta}_{\cmtt{S}} \times_1 \bm{U}_1^* \times_2 \bm{U}_2^*, \cm{D}(\bm{\omega})) =  \cm{M}(\bm{\omega} - \bm{\omega}^*, \bm{\Delta}_{\cmtt{S}} \times_1 \bm{U}_1^* \times_2 \bm{U}_2^*),
				\end{split}
			\end{align*}
			where for any $\bm{a} = (a_1,\dots,a_{r+2s})^\prime \in \mathbb{R}^{r+2s}$ and $\cm{B}\in\mathbb{R}^{N\times N\times d}$, the bilinear functional $\cm{M}(\bm{a}, \cm{B})$ is defined as in \eqref{eq:Gfunc}, and moreover, for $1\leq i \leq N$, the $N\times N\times(d+r+2s)$ tensors $\cm{M}_{1,i}$, $\cm{M}_{2,i}$ and $\cm{M}_{3,i,m}$ are defined respectively by
			\begin{align*}
				&\cm{M}_{1,i,m}=\cm{M}(\bm{0}, \cm{S} \times_1 (\bm{\Delta}_{\bm{U}_1}^{(i,m)}\bm{e}_{i}\bm{\bar{e}}_{ m}^\prime) \times_2 \bm{U}_2^{*\prime}),\quad\cm{M}_{2,i,m}=\cm{M}(\bm{0}, \cm{S} \times_1 \bm{U}_1^{*\prime} \times_2 (\bm{\Delta}_{\bm{U}_2}^{(i,m)}\bm{e}_{i}\bm{\widetilde{e}}_{m}^\prime)),\\
				&\hspace{25mm}\text{and}\hspace{2mm}\cm{M}_{3,i,j,m,h}=\cm{M}(\bm{0}, \cm{S} \times_1 (\bm{\Delta}_{\bm{U}_1}^{(i,m)}\bm{e}_{i}\bm{\bar{e}}_{m}^\prime) \times_2 (\bm{\Delta}_{\bm{U}_2}^{(j,h)}\bm{e}_{j}\bm{\widetilde{e}}_{h}^\prime))
			\end{align*}
			where $\bm{e}_{\ell}, \bm{\bar{e}}_{\ell}$ and $\bm{\widetilde{e}}_{\ell}$ are coordinate vectors whose $\ell$-th element is $1$ and the others are $0$ of dimensional $N, \pazocal{R}_1$ and $\pazocal{R}_2$ respectively, and $\bm{\Delta}_{\bm{U}_i}^{(k,\ell)}$ is the $(k,\ell)$-th element of $\bm{\Delta}_{\bm{U}_i}, i = 1, 2$ with $1\leq \ell \leq N$.
			Since $\|\bm{U}_i^*\|_{\op} = 1$ and $\|\cm{S}_{(i)}\|_{\op} = \|\cm{G}_{(i)}\|_{\op}$ for $i = 1$ or $2$, by Assumption \ref{assum:statn}, the norms of $\cm{M}_k$'s further satisfy
			\begin{align}\label{eq:sparse_devM}
				\begin{split}
					&\|\cm{M}_{1,i,m}\|_{\Fr} \leq \|\cm{S}_{(1)}\|_{\op} |\bm{\Delta}_{\bm{U}_1}^{(i,m)}| \leq C_{\cmtt{G}}|\bm{\Delta}_{\bm{U}_1}^{(i,m)}|, \quad \|\cm{M}_{2,i,m}\|_{\Fr} \leq \|\cm{S}_{(2)}\|_{\op} |\bm{\Delta}_{\bm{U}_2}^{(i,m)}|\leq C_{\cmtt{G}}|\bm{\Delta}_{\bm{U}_2}^{(i,m)}|,\\
					&\|\cm{M}_{3,i,j,m,h}\|_{\Fr} \leq  \|\cm{S}_{(1)}\|_{\op} |\bm{\Delta}_{\bm{U}_1}^{(i,m)}| |\bm{\Delta}_{\bm{U}_2}^{(j,h)}| \leq C_{\cmtt{G}} |\bm{\Delta}_{\bm{U}_1}^{(i,m)}| |\bm{\Delta}_{\bm{U}_2}^{(j,h)}|,\hspace{2mm}\text{and}\hspace{2mm}\|\cm{M}_4\|_{\Fr} \leq C_{\Delta} \left( \delta_{\cmtt{S}} + \alpha \delta_{\bm{\omega}} \right),
				\end{split}
			\end{align}
			where the norm bound on $\cm{M}_4$ follows from Lemma \ref{lemma:delnorm}. 
			
			As a result,
			\begin{align*}
				\frac{1}{T} \sum_{t=1}^{T} \langle \bm{\varepsilon}_t,(\cm{G}_{\rm{stack}})_{(1)} \bm{z}_t\rangle 
				&= \sum_{k=1}^{2} \sum_{i=1}^{N} \sum_{m=1}^{\pazocal{R}_k} 	\frac{1}{T}\sum_{t=1}^{T} \langle \bm{\varepsilon}_t, (\cm{M}_{k,i,m})_{(1)}\bm{z}_t\rangle + \frac{1}{T}\sum_{t=1}^{T} \langle \bm{\varepsilon}_t, (\cm{M}_4)_{(1)} \bm{z}_t \rangle\\
				&\hspace{5mm}+\sum_{i,j=1}^{N} \sum_{m=1}^{\pazocal{R}_1}	\sum_{h=1}^{\pazocal{R}_2} \frac{1}{T}\sum_{t=1}^{T} \langle \bm{\varepsilon}_t, (\cm{M}_{3,i,j,m,h})_{(1)}\bm{z}_t\rangle
				\\
				&\leq \sum_{i=1}^{N} \sum_{m=1}^{\pazocal{R}_1}\|\cm{M}_{1,i,m}\|_{\Fr} \sup_{\small{ \cmt{M} \in \bm{\Xi}_{\mathrm{S},1}(1,s_2\pazocal{R}_2,1,\pazocal{R}_2) }}\frac{1}{T}\sum_{t=1}^{T} \langle \bm{\varepsilon}_t, \cm{M}_{(1)} \bm{z}_t\rangle\\
				&\hspace{5mm} + \sum_{i=1}^{N} \sum_{m=1}^{\pazocal{R}_2} \|\cm{M}_{2,i,m}\|_{\Fr} \sup_{\small{ \cmt{M} \in \bm{\Xi}_{\mathrm{S},1}(s_1\pazocal{R}_1,1,\pazocal{R}_1,1) }}\frac{1}{T}\sum_{t=1}^{T} \langle \bm{\varepsilon}_t, \cm{M}_{(1)} \bm{z}_t\rangle\\
				&\hspace{5mm} + \sum_{i,j=1}^{N}\sum_{m=1}^{\pazocal{R}_1}\sum_{h=1}^{\pazocal{R}_2} \|\cm{M}_{3,i,j,m,h}\|_{\Fr} \sup_{\small{ \cmt{M} \in \bm{\Xi}_{\mathrm{S},1}(1,1,1,1) }}\frac{1}{T}\sum_{t=1}^{T} \langle \bm{\varepsilon}_t, \cm{M}_{(1)} \bm{z}_t\rangle\\
				&\hspace{5mm}+ \|\cm{M}_4\|_{\Fr} \sup_{\small{ \cmt{M} \in \bm{\Xi}_{\mathrm{S},1}(s_1\pazocal{R}_1,s_2\pazocal{R}_2,\pazocal{R}_1,\pazocal{R}_2) }}\frac{1}{T}\sum_{t=1}^{T} \langle \bm{\varepsilon}_t, \cm{M}_{(1)} \bm{z}_t\rangle.
			\end{align*}
			Combine this with \eqref{eq:sparse_devM} and \eqref{eq:devG} in Lemma \ref{lemma:sparserscG1}, we can show (i).
			
			\bigskip\noindent
			\textbf{Proof of (ii):}  
			Note that $\bm{R}_j=\bm{R}_{1j} +\bm{R}_{2j}+\bm{R}_{3j}$, where $\bm{R}_{kj}, 1\leq k\leq 3$ are defined in \eqref{eq:Rjs}. In $\bm{R}_{1j}$ and $\bm{R}_{2j}$, we have $\bm{G}_k - \bm{G}_k^* = \bm{\Delta}_{\bm{U}_1} \bm{S}_{k} \bm{U}_2^{*\prime} + \bm{U}_1^{*} \bm{S}_k \bm{\Delta}_{\bm{U}_2}^\prime + \bm{\Delta}_{\bm{U}_1} \bm{S}_{k} \bm{\Delta}_{\bm{U}_2}^\prime + \bm{U}_1^{*} (\bm{S}_{k} - \bm{S}_{k}^{*})  \bm{U}_2^{*\prime}$ for all $\bm{G}_k$-matrices, while in $\bm{R}_{3j}$, we have $\bm{G}_k^* = \bm{U}_1^*\bm{S}_k^*\bm{U}_2^{*\prime}$ for all $\bm{G}_k^*$-matrices. It is then possible to further break down $\bm{R}_{1j}$ into $\bm{M}_{1j} + \bm{M}_{3j} + \bm{M}_{5j} +  \bm{M}_{7j}$ and $\bm{R}_{2j}$ into $\bm{M}_{2j} + \bm{M}_{4j} +  \bm{M}_{6j} + \bm{M}_{8j}$, respectively, where
			\begin{align*}
				\begin{split}
					\bm{M}_{kj} &= \sum_{i=1}^{N}\sum_{m=1}^{\pazocal{R}_1}\bm{M}_{kj,i,m},\hspace{2mm}1\leq k\leq 2, \hspace{2mm} \bm{M}_{kj}= \sum_{i=1}^{N}\sum_{m=1}^{\pazocal{R}_2}\bm{M}_{kj,i,m},\hspace{2mm}3\leq k\leq 4,\\
					\bm{M}_{kj} &= \sum_{i,\ell=1}^{N}\sum_{m=1}^{\pazocal{R}_1}\sum_{h=1}^{\pazocal{R}_2}\bm{M}_{kj,i,\ell,m,h}, \hspace{2mm} 5\leq k\leq 6,\\
					\bm{M}_{7j} &= \sum_{k=1}^{r}\nabla\ell_{j}^{I}(\lambda_k^*)  (\lambda_k-\lambda_k^*) \bm{U}_1^*(\bm{S}_k^{I} - \bm{S}_k^{I*})\bm{U}_2^{*\prime} \\
					&\hspace{5mm}+ \sum_{k=1}^{s}\sum_{h=1}^2(\bm{\eta}_k-\bm{\eta}_k^*)^\prime \nabla \ell_{j}^{II,h}(\bm{\eta}_k^*) \bm{U}_1^* (\bm{S}_{k}^{II,h} - \bm{S}_{k}^{II,h*})\bm{U}_2^{*\prime}, \\
					\bm{M}_{8j} &= \frac{1}{2}\sum_{k=1}^{r}\nabla^2\ell_{j}^{I}(\widetilde{\lambda}_k) (\lambda_k-\lambda_k^*)^2 \bm{U}_1^{*}  (\bm{S}_k^{I} - \bm{S}_k^{I*} ) \bm{U}_2^{*\prime} 
					\\
					&\hspace{5mm}+\frac{1}{2} \sum_{k=1}^{s}\sum_{h=1}^2(\bm{\eta}_k-\bm{\eta}_k^*)^{\prime}\nabla^2 \ell_{j}^{II,h}(\widetilde{\bm{\eta}}_k)(\bm{\eta}_k-\bm{\eta}_k^*) \bm{U}_1^*(\bm{S}_{k}^{II,h} - \bm{S}_{k}^{II,h*})\bm{U}_2^{*\prime},
				\end{split}
			\end{align*}
			where $\bm{M}_{kj,i,m}=\bm{\Delta}_{\bm{U}_1}^{(i,m)} \bm{e}_i\bm{\bar{e}}_m^\prime  \bm{T}_{kj} \bm{U}_2^{*}$ for $1\leq k\leq 2$, $\bm{M}_{kj,i,m} = \bm{\Delta}_{\bm{U}_2}^{(i,m)}\bm{U}_1^{*} \bm{T}_{(k-2)j} \bm{\widetilde{e}}_m\bm{e}_i^\prime$ for $3\leq k\leq 4$ and $\bm{M}_{kj,i,\ell,m,h} = \bm{\Delta}_{\bm{U}_1}^{(i,m)}\bm{\Delta}_{\bm{U}_2}^{(\ell,h)}\bm{e}_i\bm{\bar{e}}_m^\prime \bm{T}_{(k-4)j} \bm{\widetilde{e}}_h\bm{e}_\ell^\prime$ for $5\leq k\leq 6$ are all $N\times N$ matrices, with
			\begin{align}
				\begin{split}
					\bm{T}_{1j}  &= \sum_{k=1}^{r}\nabla\ell_{j}^{I}(\lambda_k^*)  (\lambda_k-\lambda_k^*)\bm{S}_{k}^I + \sum_{k=1}^{s}\sum_{h=1}^2(\bm{\eta}_k-\bm{\eta}_k^*)^\prime \nabla \ell_{j}^{II,h}(\bm{\eta}_k^*) \bm{S}_{k}^{II,h}\hspace{2mm}\text{and}\\
					\bm{T}_{2j}  &= \frac{1}{2} \sum_{k=1}^{r}\nabla^2\ell_{j}^{I}(\widetilde{\lambda}_k) (\lambda_k-\lambda_k^*)^2 \bm{S}_{k}^I + \sum_{k=1}^{s}\sum_{h=1}^2(\bm{\eta}_k-\bm{\eta}_k^*)^{\prime}\nabla^2 \ell_{j}^{II,h}(\widetilde{\bm{\eta}}_k)(\bm{\eta}_k-\bm{\eta}_k^*)\bm{S}_{k}^{II,h}.
				\end{split}
			\end{align}
			For the tidiness of notation, we further let $\bm{M}_{9j} = \bm{R}_{3j}$, and then $\bm{R}_j = \sum_{k=1}^{9}\bm{M}_{kj}$. Note that the norms of $\bm{M}_{kj}$'s further satisfy 
			\begin{align} \label{eq:sparse_devRM}
				\begin{split}
					&\|\bm{M}_{1j,i,m}\|_{\Fr} \leq \sqrt{2}C_L C_{\cmtt{G}}\bar{\rho}^j \delta_{\bm{\omega}} |\bm{\Delta}_{\bm{U}_1}^{(i,m)}|,\hspace{19mm} \|\bm{M}_{2j,i,m}\|_{\Fr} \leq \frac{\sqrt{2}}{2}C_L C_{\cmtt{G}}\bar{\rho}^j \delta_{\bm{\omega}}^2 |\bm{\Delta}_{\bm{U}_1}^{(i,m)}|, \\
					&\|\bm{M}_{3j,i,m}\|_{\Fr} \leq \sqrt{2}C_L C_{\cmtt{G}}\bar{\rho}^j \delta_{\bm{\omega}} |\bm{\Delta}_{\bm{U}_2}^{(i,m)}|, \hspace{19mm} \|\bm{M}_{4j,i,m}\|_{\Fr} \leq \frac{\sqrt{2}}{2}C_L C_{\cmtt{G}}\bar{\rho}^j \delta_{\bm{\omega}}^2|\bm{\Delta}_{\bm{U}_2}^{(i,m)}|,\\
					&\|\bm{M}_{5j,i,\ell,m,h}\|_{\Fr} \leq \sqrt{2}C_L\bar{\rho}^j C_{\cmtt{G}} \delta_{\bm{\omega}} |\bm{\Delta}_{\bm{U}_1}^{(i,m)}||\bm{\Delta}_{\bm{U}_2}^{(\ell,h)}|, \hspace{2mm} \|\bm{M}_{6j,i,\ell,m,h}\|_{\Fr} \leq  \frac{\sqrt{2}}{2} C_L\bar{\rho}^j C_{\cmtt{G}} \delta_{\bm{\omega}}^2 |\bm{\Delta}_{\bm{U}_1}^{(i,m)}||\bm{\Delta}_{\bm{U}_2}^{(\ell,h)}|,\\
					&\|\bm{M}_{7j}\|_{\Fr} \leq \sqrt{2}C_L\bar{\rho}^j \delta_{\bm{\omega}} \delta_{\cmtt{S}} ,\hspace{8mm} \|\bm{M}_{8j}\|_{\Fr} \leq \frac{\sqrt{2}}{2}C_L\bar{\rho}^j \delta_{\bm{\omega}}^2 \delta_{\cmtt{S}}, \hspace{4mm}\text{and}\hspace{4mm}
					\|\bm{M}_{9j}\|_{\Fr} \leq C_LC_{\cmtt{G}} \bar{\rho}^j \alpha \delta_{\bm{\omega}}^2,
				\end{split}
			\end{align}
			
			As a result, 
			\begin{align}\label{eq:sparse_devR}
				\begin{split}
					\frac{1}{T}\sum_{t=1}^{T} \left| \langle \bm{\varepsilon}_t, \bm{R}_j\bm{y}_{t-p-j}\rangle \right|&=
					\sum_{k=1}^{2}\sum_{i=1}^{N} \sum_{m=1}^{\pazocal{R}_1}	\frac{1}{T}\sum_{t=1}^{T} \langle \bm{\varepsilon}_t, \bm{M}_{kj,i,m}\bm{y}_{t-p-j}\rangle +\sum_{k=3}^{4} \sum_{i=1}^{N} \sum_{m=1}^{\pazocal{R}_2}	\frac{1}{T}\sum_{t=1}^{T} \langle \bm{\varepsilon}_t, \bm{M}_{kj,i,m} \bm{y}_{t-p-j}\rangle\\
					&\hspace{5mm}+\sum_{k=5}^{6} \sum_{i,\ell=1}^{N} \sum_{m=1}^{\pazocal{R}_1} \sum_{h=1}^{\pazocal{R}_2}	\frac{1}{T}\sum_{t=1}^{T} \langle \bm{\varepsilon}_t, \bm{M}_{kj, i,\ell,m,h}\bm{y}_{t-p-j}\rangle + \sum_{k=7}^{9}\frac{1}{T}\sum_{t=1}^{T} \langle \bm{\varepsilon}_t, \bm{M}_{kj}\bm{y}_{t-p-j}\rangle\\
					& \leq 	\sum_{k=1}^{2}\sum_{i=1}^{N} \sum_{m=1}^{\pazocal{R}_1}	 \|\bm{M}_{kj,i,m}\|_{\Fr}\sup_{\bm{M}\in\bm{\Pi}_{\mathrm{S},1}(1,s_2\pazocal{R}_2,1)}	\frac{1}{T}\sum_{t=1}^{T} \langle \bm{\varepsilon}_t, \bm{M}\bm{y}_{t-p-j}\rangle \\
					&\hspace{5mm}+\sum_{k=3}^{4} \sum_{i=1}^{N}  \sum_{m=1}^{\pazocal{R}_2} \|\bm{M}_{kj,i,m}\|_{\Fr}	\sup_{\bm{M}\in\bm{\Pi}_{\mathrm{S},1}(s_1\pazocal{R}_1,1,1)}	\frac{1}{T}\sum_{t=1}^{T} \langle \bm{\varepsilon}_t, \bm{M}\bm{y}_{t-p-j}\rangle \\
					&\hspace{5mm}+\sum_{k=5}^{6} \sum_{i,\ell=1}^{N} \sum_{m=1}^{\pazocal{R}_1}\sum_{h=1}^{\pazocal{R}_2}	\frac{1}{T}\sum_{t=1}^{T}  \|\bm{M}_{kj,i,\ell,m,h}\|_{\Fr} \sup_{\bm{M}\in\bm{\Pi}_{\mathrm{S},1}(1,1,1)} \langle \bm{\varepsilon}_t, \bm{M}\bm{y}_{t-p-j}\rangle \\
					&\hspace{5mm}+ \sum_{k=7}^{9}\|\bm{M}_{kj}\|_{\Fr} \sup_{\bm{M}\in\bm{\Pi}_{\mathrm{S},1}(s_1\pazocal{R}_1,s_2\pazocal{R}_2,\pazocal{R}_1\wedge \pazocal{R}_2)} \frac{1}{T}\sum_{t=1}^{T} \langle \bm{\varepsilon}_t, \bm{M}\bm{y}_{t-p-j}\rangle
				\end{split}
			\end{align}
			
			Then by \eqref{eq:sparse_devRM}, \eqref{eq:sparse_devR} and \eqref{eq:sparse_Rjy2} in Lemma \ref{lemma:sparse_Rjy}, (ii) is verified. 
			
			
			\subsection{Proof of Lemma \ref{lemma:sparseinit}}
			The proof of this lemma follows closely from the proof of Lemma \ref{lemma:init}. The main difference lies in the sparsity conditions. Specifically, denote the index sets of the nonzero rows and columns in $\bm{A}_j^*$ by $\mathbb{S}_1$ and $\mathbb{S}_2$, and those in $\bm{\bm{\Delta}}_j$ by $\bar{\mathbb{S}}_1$ and $\bar{\mathbb{S}}_2$, respectively. By Assumptions \ref{assum:sparse} and \ref{assum:para_add}, it holds that the cardinality of $\mathbb{S}_i$ and $\bar{\mathbb{S}}_i$ satisfy $|\mathbb{S}_i|\leq s_i\pazocal{R}_i$ and $|\bar{\mathbb{S}}_i| \leq \bar{s}_i$ for $i=1$ or $2$. 
			Then, note that $\mathbb{E}(\|(\bm{y}_t)_{\mathbb{S}_2}\|_2^2)\leq s_2\pazocal{R}_2 \lambda_{\max}(\bm{\Sigma}_\varepsilon)\mu_{\max}(\bm{\Psi}_*)$ holds by Lemma \ref{lemma:Wcov}, and thus for all $j\geq 1$,
			\begin{equation}\label{eq:sparse_init1}
				\mathbb{E}(\|\bm{A}_j^*\bm{y}_{t-j}\|_2) \leq \left \{\mathbb{E}(\|\bm{A}_j^*\bm{y}_{t-j}\|_2^2) \right \}^{1/2}\leq  \|\bm{A}_j^*\|_{\op}\mathbb{E}(\|(\bm{y}_{t-j})_{\mathbb{S}_2}\|_2^2)^{1/2} \leq C_* \bar{\rho}^{j} \sqrt{\lambda_{\max} (\bm{\Sigma}_\varepsilon) \mu_{\max}(\bm{\Psi}_*) s_2\pazocal{R}_2}
			\end{equation}
			and
			\begin{align}\label{eq:sparse_init2}
				\begin{split}
					\mathbb{E} \left (\sup_{\small{\bm{\Delta}\in \bm{\Upsilon}_{\mathbb{S}}\cap\pazocal{S}(\delta)}} \|\bm{\Delta}_j\bm{y}_{t-j}\|_2 \right ) &\leq \left \{\mathbb{E} \left (\sup_{\small{\bm{\Delta}\in \bm{\Upsilon}_{\mathbb{S}}\cap\pazocal{S}(\delta)}} \|\bm{\Delta}_j\bm{y}_{t-j}\|_2^2 \right )\right \}^{1/2}\\ &\leq \delta  C_{1}\bar{\rho}^{j} \sqrt{\lambda_{\max} (\bm{\Sigma}_\varepsilon) \mu_{\max}(\bm{\Psi}_*) \bar{s}_2}.
				\end{split}
			\end{align}
			
			By the Cauchy-Schwarz inequality and \eqref{eq:sparse_init2},
			\[
			\mathbb{E}\left \{\sup_{\small{\bm{\Delta}\in \bm{\Upsilon}_{\mathbb{S}}\cap\pazocal{S}(\delta)}} |S_1(\bm{\Delta})|\right \} \leq \frac{2}{T}\sum_{t=1}^{T}\sum_{j=1}^{\infty}\sum_{k=t}^{\infty} \delta ^2  C_{1}^2 \bar{\rho}^{j+k} \lambda_{\max} (\bm{\Sigma}_\varepsilon) \mu_{\max}(\bm{\Psi}_*)\bar{s}_2 \leq \frac{\delta ^2C_{2}\kappa_2 \bar{s}_2}{T},
			\]
			where $C_{2} = 2C_{1}^2\bar{\rho}^2/(1-\bar{\rho})^3 \asymp1$. Similarly, by \eqref{eq:sparse_init1} and \eqref{eq:sparse_init2}, 
			\[
			\mathbb{E} \left \{\sup_{\small{\bm{\Delta}\in \bm{\Upsilon}_{\mathbb{S}}\cap\pazocal{S}(\delta)}} |S_2(\bm{\Delta})| \right \} \leq \frac{2}{T}\sum_{t=1}^{T} \sum_{j=t}^{\infty} \sum_{k=1}^{t-1} \delta C_* C_1 \bar{\rho}^{j+k} \lambda_{\max} (\bm{\Sigma}_\varepsilon) \mu_{\max}(\bm{\Psi}_*) \sqrt{s_2\bar{s}_2\pazocal{R}_2} 
			\leq\frac{ \delta C_{3}\kappa_2  \sqrt{s_2\bar{s}_2\pazocal{R}_2}  }{T},
			\]
			where $C_3= 2 C_*C_{1}\bar{\rho}^2/(1-\bar{\rho})^3 \asymp1$. Moreover, note that $\mathbb{E}(\|(\bm{\varepsilon}_t)_{\bar{\mathbb{S}}_1}\|_2) \leq \sqrt{\mathbb{E}(\|(\bm{\varepsilon}_t)_{\bar{\mathbb{S}}_1}\|_2^2)} \leq \sqrt{\lambda_{\max}(\bm{\Sigma}_\varepsilon) \bar{s}_1}$. Then by \eqref{eq:sparse_init2} and a method similar to the above, 
			\begin{align*}
				\mathbb{E}\left \{ \sup_{\small{\bm{\Delta}\in \bm{\Upsilon}_{\mathbb{S}}\cap\pazocal{S}(\delta)}} |S_3(\bm{\Delta})| \right \} &\leq \frac{2}{T}\sum_{t=1}^{T} \sum_{j=t}^{\infty} \delta   C_{1} \bar{\rho}^{j} \lambda_{\max} (\bm{\Sigma}_\varepsilon) \sqrt{\mu_{\max}(\bm{\Psi}_*)} \sqrt{\bar{s}_1\bar{s}_2} \\
				&\leq \frac{\delta C_{4}  \sqrt{\kappa_2 \lambda_{\max} (\bm{\Sigma}_\varepsilon)} \sqrt{\bar{s}_1\bar{s}_2}}{T},
			\end{align*}
			where  $C_{4} = 2C_{1}\bar{\rho}/(1-\bar{\rho})^2 \asymp1$.
			By Markov's inequality, we can show that 
			\begin{equation*}
				\mathbb{P}\left \{\sup_{\small{\bm{\Delta}\in \bm{\Upsilon}_{\mathbb{S}}\cap\pazocal{S}(\delta)}} |S_1(\bm{\Delta})| \geq \delta^2 C_2 \kappa_1 \right \} \leq \frac{\mathbb{E}\{\sup_{\small{\bm{\Delta}\in \bm{\Upsilon}_{\mathbb{S}}\cap\pazocal{S}(\delta)}} |S_1(\bm{\Delta})|\}}{\delta^2 C_2 \kappa_1 } \leq \frac{\kappa_2 \bar{s}_2}{\kappa_1 T}\leq  \sqrt{\frac{\bar{s}_2}{  T}},
			\end{equation*}
			\[
			\mathbb{P}\left \{\sup_{\small{\bm{\Delta}\in \bm{\Upsilon}_{\mathbb{S}}\cap\pazocal{S}(\delta)}} |S_2(\bm{\Delta})| \geq \delta C_3 \sqrt{\frac{\kappa_2 \lambda_{\max}(\bm{\Sigma}_{\varepsilon}) d_{\pazocal{S}}}{T}}  \right \} \leq \sqrt{\frac{\kappa_2 s_2\bar{s}_2\pazocal{R}_2}{\lambda_{\max}(\bm{\Sigma}_{\varepsilon}) T d_{\pazocal{S}}}} \leq \sqrt{\frac{\kappa_2 \bar{s}_2}{\lambda_{\max}(\bm{\Sigma}_{\varepsilon}) T}},
			\]	
			and 
			\[
			\mathbb{P}\left \{\sup_{\small{\bm{\Delta}\in \bm{\Upsilon}_{\mathbb{S}}\cap\pazocal{S}(\delta)}} |S_3(\bm{\Delta})| \geq \delta C_4 \sqrt{\frac{\kappa_2 \lambda_{\max}(\bm{\Sigma}_{\varepsilon}) d_{\pazocal{S}}}{T}}  \right \} \leq  \sqrt{\frac{ \bar{s}_1\bar{s}_2}{T d_{\pazocal{S}}}},
			\]
			where the last inequality in \eqref{eq:S1Delta} uses the condition that  $T \gtrsim \bar{s}_2$.
			Then the sum of the above three tail probabilities is $(1+\sqrt{\kappa_2/\lambda_{\max}(\bm{\Sigma}_{\varepsilon})})\sqrt{\bar{s}_2/T}(1+\sqrt{\bar{s}_1/d_{\pazocal{S}}})$.
			\subsection{Auxiliary lemmas for the proofs of Lemmas \ref{lemma:sparsersc} and \ref{lemma:sparsedev}}
			
			The proofs of Lemmas \ref{lemma:sparsersc} and \ref{lemma:sparsedev} rely on the following auxiliary results.
			
			\begin{lemma}[HOSVD perturbation bound]\label{lemma:sparse_perturb_svd}
				Suppose that $\cm{G} = \cm{S}\times\bm{U}_1\times\bm{U}_2$ and $\cm{\widetilde{G}} = \cm{\widetilde{S}}\times\bm{\widetilde{U}}_1\times\bm{\widetilde{U}}_2$ are two HOSVD for $\cm{G}$ and $\cm{\widetilde{G}}$, with the same multilinear ranks $(\pazocal{R}_1, \pazocal{R}_2)$ along the first and second modes. Under Assumptions \ref{assum:para_add} and \ref{assum:spec_gap}, we have
				\begin{align}
					\begin{split}
						\|\cm{\widetilde{S}} - \cm{S}\|_{\Fr} \leq \frac{C(\eta_1 + \eta_2)}{\beta}&\|\cm{\widetilde{G}} - \cm{G}\|_{\Fr}\hspace{3mm}\text{and}\\
						\|\bm{\widetilde{U}}_i - \bm{U}_i\|_{\Fr} \leq \frac{C\eta_i}{\beta}&\|\cm{\widetilde{G}} - \cm{G}\|_{\Fr},
					\end{split}
				\end{align}
				where $\eta_i = \sum_{j=1}^{\pazocal{R}_i}\sigma_{1}^2(\cm{G}_{(i)}) / \sigma_{j}^2(\cm{G}_{(i)})$ for $i=1$ or $2$. Moreover,
				\begin{equation}
					\|\cm{\widetilde{G}} - \cm{G}\|_{\Fr} \leq \|\cm{\widetilde{S}} - \cm{S}\|_{\Fr} + C_{\cmtt{S}}\sum_{i=1}^{2}	\|\bm{\widetilde{U}}_i - \bm{U}_i\|_{\Fr}.
				\end{equation}	
			\end{lemma}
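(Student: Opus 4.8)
The plan is to first establish the subspace perturbation bound for the factor matrices, and then obtain the core-tensor bound and the reconstruction bound as algebraic consequences by telescoping. Throughout I would use the identity $\|\cm{X}\times_i\bm{B}\|_{\Fr}=\|\bm{B}\cm{X}_{(i)}\|_{\Fr}$ together with the submultiplicative inequalities $\|\bm{B}\bm{M}\|_{\Fr}\le\|\bm{B}\|_{\op}\|\bm{M}\|_{\Fr}$ and $\|\bm{B}\bm{M}\|_{\Fr}\le\|\bm{B}\|_{\Fr}\|\bm{M}\|_{\op}$, and the fact that left-multiplication by a matrix with orthonormal rows (such as $\bm{\widetilde{U}}_i'$) does not increase the Frobenius norm. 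Note also that matricization preserves the Frobenius norm, so $\|\cm{\widetilde{G}}_{(i)}-\cm{G}_{(i)}\|_{\Fr}=\|\cm{\widetilde{G}}-\cm{G}\|_{\Fr}$.

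The crux is the factor-matrix bound $\|\bm{\widetilde{U}}_i-\bm{U}_i\|_{\Fr}\le(C\eta_i/\beta)\|\cm{\widetilde{G}}-\cm{G}\|_{\Fr}$. Here $\bm{U}_i$ (resp.\ $\bm{\widetilde{U}}_i$) collects the top $\pazocal{R}_i$ left singular vectors of $\cm{G}_{(i)}$ (resp.\ $\cm{\widetilde{G}}_{(i)}$). Because Assumption~\ref{assum:spec_gap} forces the squared singular values $\sigma_1^2(\cm{G}_{(i)})>\cdots>\sigma_{\pazocal{R}_i}^2(\cm{G}_{(i)})$ to be strictly separated, each left singular vector is determined up to sign, and the HOSVD positivity convention fixes the sign; consequently $\bm{\widetilde{U}}_i$ and $\bm{U}_i$ can be compared column by column, with no rotational indeterminacy. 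I would then apply a Davis--Kahan/Wedin $\sin\Theta$ bound to each singular vector individually: the relative spectral gap gives $\sigma_{j-1}^2-\sigma_j^2\ge\beta\sigma_{j-1}^2\ge\beta\sigma_j^2$ and $\sigma_j^2-\sigma_{j+1}^2\ge\beta\sigma_j^2$ (with $\sigma_{\pazocal{R}_i+1}(\cm{G}_{(i)})=0$ since $\cm{G}_{(i)}$ has rank $\pazocal{R}_i$, handling the bottom vector), so the eigenvalue gap isolating the $j$th vector is of order $\beta\sigma_j^2$. Controlling the perturbation by $\|\cm{\widetilde{G}}-\cm{G}\|_{\Fr}$ and summing the per-column bounds over $j=1,\dots,\pazocal{R}_i$ assembles the weighted sum $\eta_i=\sum_{j=1}^{\pazocal{R}_i}\sigma_1^2(\cm{G}_{(i)})/\sigma_j^2(\cm{G}_{(i)})$ up to an absolute constant, yielding the claim. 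When the right-hand side exceeds the trivial bound $\|\bm{\widetilde{U}}_i-\bm{U}_i\|_{\Fr}\le\sqrt{2\pazocal{R}_i}$ the inequality holds automatically, so I may assume the perturbation is small enough for the $\sin\Theta$ bound to be informative.

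Given the subspace bound, the core-tensor bound follows by telescoping $\cm{S}=\cm{G}\times_1\bm{U}_1'\times_2\bm{U}_2'$:
\begin{align*}
\cm{\widetilde{S}}-\cm{S}&=(\cm{\widetilde{G}}-\cm{G})\times_1\bm{\widetilde{U}}_1'\times_2\bm{\widetilde{U}}_2'+\cm{G}\times_1(\bm{\widetilde{U}}_1-\bm{U}_1)'\times_2\bm{\widetilde{U}}_2'\\
&\hspace{5mm}+\cm{G}\times_1\bm{U}_1'\times_2(\bm{\widetilde{U}}_2-\bm{U}_2)'.
\end{align*}
I would bound the first term by $\|\cm{\widetilde{G}}-\cm{G}\|_{\Fr}$ and each remaining term by $\sigma_1(\cm{G}_{(i)})\|\bm{\widetilde{U}}_i-\bm{U}_i\|_{\Fr}\le C_{\cmtt{S}}\|\bm{\widetilde{U}}_i-\bm{U}_i\|_{\Fr}$, using $\sigma_1(\cm{G}_{(i)})=\sigma_1(\cm{S}_{(i)})\le C_{\cmtt{S}}$ from Assumption~\ref{assum:para_add}; substituting the subspace bound and absorbing constants (with $\eta_i\ge1$, $\beta\le1$) gives $\|\cm{\widetilde{S}}-\cm{S}\|_{\Fr}\lesssim((\eta_1+\eta_2)/\beta)\|\cm{\widetilde{G}}-\cm{G}\|_{\Fr}$. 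The reconstruction bound is the easiest: telescoping $\cm{G}=\cm{S}\times_1\bm{U}_1\times_2\bm{U}_2$ in the same way,
\begin{align*}
\cm{\widetilde{G}}-\cm{G}&=(\cm{\widetilde{S}}-\cm{S})\times_1\bm{\widetilde{U}}_1\times_2\bm{\widetilde{U}}_2+\cm{S}\times_1(\bm{\widetilde{U}}_1-\bm{U}_1)\times_2\bm{\widetilde{U}}_2\\
&\hspace{5mm}+\cm{S}\times_1\bm{U}_1\times_2(\bm{\widetilde{U}}_2-\bm{U}_2),
\end{align*}
and bounding the first term by $\|\cm{\widetilde{S}}-\cm{S}\|_{\Fr}$ and the last two by $\sigma_1(\cm{S}_{(i)})\|\bm{\widetilde{U}}_i-\bm{U}_i\|_{\Fr}\le C_{\cmtt{S}}\|\bm{\widetilde{U}}_i-\bm{U}_i\|_{\Fr}$ produces exactly $\|\cm{\widetilde{G}}-\cm{G}\|_{\Fr}\le\|\cm{\widetilde{S}}-\cm{S}\|_{\Fr}+C_{\cmtt{S}}\sum_{i=1}^2\|\bm{\widetilde{U}}_i-\bm{U}_i\|_{\Fr}$.

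The main obstacle is the factor-matrix step: getting the \emph{precise} $\eta_i/\beta$ dependence rather than a cruder bound in terms of the smallest gap $\sigma_{\pazocal{R}_i}$. The delicate bookkeeping is that the definition of $\eta_i$, with its squared-singular-value weights $\sigma_1^2/\sigma_j^2$, is exactly calibrated to collect the per-component $\sin\Theta$ contributions (each carrying a factor $\beta^{-1}\sigma_j^{-2}$ from the relative gap together with the scale of the perturbation of $\cm{G}_{(i)}$) when summed over $j$. This is the part where I would most closely follow and adapt the singular-subspace perturbation argument of \cite{Wang2021High}, checking that the relative-gap normalization propagates correctly through the column-wise summation; the two telescoping bounds are then routine.
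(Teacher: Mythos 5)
The paper's own ``proof'' of this lemma is a single sentence deferring to Lemma 1 of \cite{Wang2021High}, so your Wedin-plus-telescoping reconstruction is in substance the very argument that citation stands for, and your two telescoping steps are correct and complete: both decompositions are exact, orthonormality of $\bm{\widetilde{U}}_i$ and $\bm{U}_i$ bounds the leading terms, and $\sigma_1(\cm{G}_{(i)})=\sigma_1(\cm{S}_{(i)})\le C_{\cmtt{S}}$ from Assumption \ref{assum:para_add} handles the remaining ones exactly as you say.

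The gap is in the factor-matrix step, precisely where you assert that summing the per-column $\sin\Theta$ bounds ``assembles'' $\eta_i$ and ``yields the claim.'' Carried out honestly, Davis--Kahan applied to $\cm{G}_{(i)}\cm{G}_{(i)}'$ gives $\|\widetilde{\bm{u}}_j-\bm{u}_j\|_2\lesssim \|\cm{\widetilde{G}}_{(i)}\cm{\widetilde{G}}_{(i)}'-\cm{G}_{(i)}\cm{G}_{(i)}'\|_{\op}/(\beta\sigma_j^2)\lesssim \sigma_1\|\cm{\widetilde{G}}-\cm{G}\|_{\Fr}/(\beta\sigma_j^2)$, and summing over $j$ yields $\|\bm{\widetilde{U}}_i-\bm{U}_i\|_{\Fr}\lesssim (\eta_i/\beta)\,\|\cm{\widetilde{G}}-\cm{G}\|_{\Fr}/\sigma_1(\cm{G}_{(i)})$, i.e.\ an extra factor $1/\sigma_1$ relative to the stated bound. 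This is not an artifact of your particular computation: the left-hand side $\|\bm{\widetilde{U}}_i-\bm{U}_i\|_{\Fr}$ is invariant under the rescaling $(\cm{G},\cm{\widetilde{G}})\mapsto(t\cm{G},t\cm{\widetilde{G}})$, as are $\eta_i$ and $\beta$, while the right-hand side scales like $t$; already in the rank-one case $\cm{G}_{(i)}=\sigma\,\bm{u}\bm{v}'$, $\cm{\widetilde{G}}_{(i)}=\sigma\,\widetilde{\bm{u}}\bm{v}'$ (which is compatible with Assumptions \ref{assum:para_add}--\ref{assum:spec_gap}) the claimed inequality reads $\|\widetilde{\bm{u}}-\bm{u}\|_2\le C\sigma\|\widetilde{\bm{u}}-\bm{u}\|_2$, which fails whenever $\sigma<1/C$. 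Hence no $\sin\Theta$ argument can produce the displayed inequality from Assumptions \ref{assum:para_add}--\ref{assum:spec_gap} alone, because these only bound $\sigma_1(\cm{S}_{(i)})$ from \emph{above}. To close your proof you must add a lower bound of order one on the singular values of $\cm{G}_{(i)}$ (equivalently, acknowledge that the constant $C$ absorbs $1/\sigma_{\pazocal{R}_i}(\cm{G}_{(i)})$), a normalization that is implicit in the cited lemma but stated nowhere in this paper; the passage you flag as ``delicate bookkeeping'' is exactly the point where a dimensionful factor is silently discarded, and it cannot be recovered without that extra hypothesis.
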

			\begin{proof}[Proof of Lemma \ref{lemma:sparse_perturb_svd}]
				The proof of this lemma follows trivially from Lemma 1 in \cite{Wang2021High}.
			\end{proof}
			
			\begin{lemma}[Covering number for sparse-and-low-Tucker-rank tensors] \label{lemma:sparse_covering}
				Let $$\bm{\Pi}_{\mathrm{S}}( s_1,s_2,\pazocal{R}_1,\pazocal{R}_2)=\{\cm{M}=\cm{S}\times_1\bm{U}_1\times_2\bm{U}_2:\|\cm{M}\|_{\textup{F}}\leq 1,\cm{S}\in\mathbb{R}^{\pazocal{R}_1\times \pazocal{R}_2 \times d}, \bm{U}_i\in\pazocal{U}_{\mathrm{S},i}, i=1, 2\},$$ 
				where $\pazocal{U}_{\mathrm{S},i} = \{\bm{U}\in\mathbb{R}^{N\times \pazocal{R}_i} \mid \bm{U}^\prime \bm{U} = \bm{I}_{\pazocal{R}_i}, \|\bm{U}\|_0 \leq s_i\}$. For any $\epsilon>0$, let $\bar{\bm{\Pi}}_{\mathrm{S}}(\epsilon; s_1,s_2,\pazocal{R}_1,\pazocal{R}_2)$ be a minimal $\epsilon$-net  for $\bm{\Pi}_{\mathrm{S}}(s_1,s_2,\pazocal{R}_1,\pazocal{R}_2)$ in the Frobenius norm. Then $\bar{\bm{\Pi}}_{\mathrm{S}}(\epsilon; s_1,s_2,\pazocal{R}_1,\pazocal{R}_2)$ has cardinality satisfying
				\begin{equation*}
					|\bar{\bm{\Pi}}_{\mathrm{S}}(\epsilon; s_1,s_2,\pazocal{R}_1,\pazocal{R}_2)|\leq \binom{N\pazocal{R}_1}{s_1}\binom{N\pazocal{R}_2}{s_2}\left(\frac{9}{\epsilon}\right)^{\pazocal{R}_1\pazocal{R}_2d + s_1+s_2}.
				\end{equation*}
			\end{lemma}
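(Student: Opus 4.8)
The plan is to exploit the multilinear parametrization $\cm{M}=\cm{S}\times_1\bm{U}_1\times_2\bm{U}_2$ and reduce the covering of $\bm{\Pi}_{\mathrm{S}}(s_1,s_2,\pazocal{R}_1,\pazocal{R}_2)$ to covering its three factors separately, closely following the argument behind Lemma \ref{lemma:covering} (Lemma 2 in \cite{RSS17}) with one extra ingredient for the sparsity. First I would record two consequences of the orthonormality $\bm{U}_i'\bm{U}_i=\bm{I}_{\pazocal{R}_i}$: namely $\|\bm{U}_i\|_{\op}=1$, and, since multiplication by orthonormal factor matrices preserves the Frobenius norm, $\|\cm{S}\|_{\Fr}=\|\cm{M}\|_{\Fr}\le1$. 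Then, for two elements $\cm{M}=\cm{S}\times_1\bm{U}_1\times_2\bm{U}_2$ and $\bar{\cm{M}}=\bar{\cm{S}}\times_1\bar{\bm{U}}_1\times_2\bar{\bm{U}}_2$, the telescoping identity
\[
\cm{M}-\bar{\cm{M}}=(\cm{S}-\bar{\cm{S}})\times_1\bm{U}_1\times_2\bm{U}_2+\bar{\cm{S}}\times_1(\bm{U}_1-\bar{\bm{U}}_1)\times_2\bm{U}_2+\bar{\cm{S}}\times_1\bar{\bm{U}}_1\times_2(\bm{U}_2-\bar{\bm{U}}_2),
\]
combined with the submultiplicativity $\|\cm{T}\times_1\bm{A}\times_2\bm{B}\|_{\Fr}\le\|\bm{A}\|_{\op}\|\bm{B}\|_{\op}\|\cm{T}\|_{\Fr}$ and the bounds above, yields the Lipschitz estimate $\|\cm{M}-\bar{\cm{M}}\|_{\Fr}\le\|\cm{S}-\bar{\cm{S}}\|_{\Fr}+\|\bm{U}_1-\bar{\bm{U}}_1\|_{\op}+\|\bm{U}_2-\bar{\bm{U}}_2\|_{\op}$. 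Consequently it suffices to build an $(\epsilon/3)$-net for each factor and take the induced product net.

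Second, I would cover the core tensor. Since $\cm{S}$ ranges over the Frobenius unit ball of $\mathbb{R}^{\pazocal{R}_1\times\pazocal{R}_2\times d}$, a Euclidean ball of dimension $\pazocal{R}_1\pazocal{R}_2d$, the standard volumetric bound (e.g.\ Corollary 4.2.13 in \cite{Vershynin2018}) gives an $(\epsilon/3)$-net of cardinality at most $(1+6/\epsilon)^{\pazocal{R}_1\pazocal{R}_2d}\le(9/\epsilon)^{\pazocal{R}_1\pazocal{R}_2d}$ for $\epsilon\le3$, where the base $9$ absorbs the unit radius and the factor $1/3$ in the scale.

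Third, and this is the genuinely new step relative to Lemma \ref{lemma:covering}, I would cover the sparse orthonormal factors $\bm{U}_i\in\pazocal{U}_{\mathrm{S},i}$. The key observation is that each such $\bm{U}_i$ has at most $s_i$ nonzero entries among the $N\pazocal{R}_i$ entries of an $N\times\pazocal{R}_i$ matrix, so its support is contained in one of $\binom{N\pazocal{R}_i}{s_i}$ index sets. I would union-bound over supports: for a fixed support, the admissible matrices lie in the \emph{operator-norm} unit ball (because orthonormality forces $\|\bm{U}_i\|_{\op}=1$) intersected with the at-most-$s_i$-dimensional subspace of matrices supported on that set, which admits an $(\epsilon/3)$-net in operator norm of size $(1+6/\epsilon)^{s_i}\le(9/\epsilon)^{s_i}$ by the same volumetric argument. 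Crucially, covering in operator norm is exactly what the Lipschitz bound from the first step consumes. Taking the product of the three nets and applying that bound shows the product is an $\epsilon$-net of $\bm{\Pi}_{\mathrm{S}}$ of cardinality at most $\binom{N\pazocal{R}_1}{s_1}\binom{N\pazocal{R}_2}{s_2}(9/\epsilon)^{\pazocal{R}_1\pazocal{R}_2d+s_1+s_2}$, which is the claimed bound.

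The main obstacle I anticipate is bookkeeping rather than conceptual: I must confirm that the support-restricted Stiefel set can be covered with base constant exactly $9$. Measuring the factor perturbations in \emph{operator} norm rather than Frobenius norm is what makes this work, since the orthonormal set has operator-norm radius $1$ (whereas its Frobenius radius $\sqrt{\pazocal{R}_i}$ would inflate the base), and because the union over supports only multiplies, rather than exponentiates, the per-support net size. One must also verify the $\epsilon\le 3$ regime is harmless, as the bound is only meaningful for small $\epsilon$. Everything else mirrors the non-sparse proof of Lemma \ref{lemma:covering} verbatim.
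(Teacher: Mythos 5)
Your proposal is correct and follows exactly the route the paper intends: the paper's own ``proof'' is a one-line deferral to Lemma~\ref{lemma:covering} (i.e.\ the factor-wise covering argument of Lemma~2 in \cite{RSS17}), which is precisely your construction --- an $(\epsilon/3)$-net of the core in Frobenius norm, $(\epsilon/3)$-nets of the factors in operator norm, the telescoping Lipschitz bound, plus the union over $\binom{N\pazocal{R}_1}{s_1}\binom{N\pazocal{R}_2}{s_2}$ supports as the only new ingredient for sparsity. The only caveat, shared by the paper itself (cf.\ Definition~\ref{def}), is that your product net consists of points of the form $\bar{\cm{S}}\times_1\bar{\bm{U}}_1\times_2\bar{\bm{U}}_2$ that need not lie in $\bm{\Pi}_{\mathrm{S}}$ (the $\bar{\bm{U}}_i$ are not orthonormal), so the bound is for a generalized (external) net; this is harmless for how the lemma is used downstream.
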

			
			
			\begin{proof}[Proof of Lemma \ref{lemma:sparse_covering}]
				The proof of this lemma follows trivially from Lemma \ref{lemma:covering}.
			\end{proof}

			Recall from \eqref{eq:Xi_S} that
			\begin{equation*}
				\bm{\Xi}_{\mathrm{S}}(s_1,s_2,\pazocal{R}_1, \pazocal{R}_2) = \left\{ \cm{M}(\bm{a} , \cm{B} ) \in  \mathbb{R}^{N \times N \times (d+r+2s)} \mid   \bm{a}  \in \mathbb{R}^{r+2s}, \cm{B} \in \bm{\Gamma}_{\mathrm{S}}( s_1,s_2,\pazocal{R}_1,\pazocal{R}_2) \right\},
			\end{equation*}
			and $\bm{\Xi}_{\mathrm{S},1}(s_1,s_2,\pazocal{R}_1, \pazocal{R}_2) = \bm{\Xi}_{\mathrm{S}}(s_1,s_2,\pazocal{R}_1, \pazocal{R}_2) \cap \{\cm{M}\in  \mathbb{R}^{N \times N \times (d+r+2s)} \mid \|\cm{M}\|_{\Fr}=1\}$. 
			\begin{lemma}[Covering number and discretization for $\bm{\Xi}_{\mathrm{S},1}$]\label{lemma:sparse_epsilon-net}
				For any $0<\epsilon<2/3$, let $\bar{\bm{\Xi}}_{\mathrm{S}}(\epsilon; s_1,s_2,\pazocal{R}_1,\pazocal{R}_2)$ be a minimal generalized $\epsilon$-net of $\bm{\Xi}_{\mathrm{S},1}(s_1,s_2,\pazocal{R}_1,\pazocal{R}_2)$. In (ii) -- (iii), denote $\bm{\Xi}_{\mathrm{S},1}(s_1,s_2,\pazocal{R}_1,\pazocal{R}_2)$ and $\bm{\bar{\Xi}}_{\mathrm{S}}(s_1,s_2,\pazocal{R}_1,\pazocal{R}_2)$ by $\bm{\Xi}_1$ and $\bar{\bm{\Xi}}$, respectively.
				\begin{itemize}
					\item [(i)] The cardinality of $\bar{\bm{\Xi}}_{\mathrm{S}}(\epsilon; s_1,s_2,\pazocal{R}_1,\pazocal{R}_2)$ satisfies
					\[
					\log |\bar{\bm{\Xi}}_{\mathrm{S}}(\epsilon; s_1,s_2,\pazocal{R}_1,\pazocal{R}_2)| \lesssim (\pazocal{R}_1\pazocal{R}_2 d  +  s_1+ s_2) \log(1/\epsilon) + \sum_{i=1}^{2}s_i\log N\pazocal{R}_i,
					\]
					
					\item[(ii)]  There exist absolute constants  $c_{\cmtt{M}}, C_{\cmtt{M}}>0$ such that for any $\cm{M}\in \bm{\bar{\Xi}}_{\mathrm{S}}(s_1,s_2,\pazocal{R}_1,\pazocal{R}_2)$, it holds $c_{\cmtt{M}}\leq \|\cm{M}\|_{\Fr}\leq C_{\cmtt{M}}$.
					
					\item[(iii)] For any $\bm{X} \in \mathbb{R}^{N \times N (d+r+2s)}$ and $\bm{Z}\in\mathbb{R}^{N(d+r+2s)\times T}$, it holds 
					\begin{align*}
						\sup_{ \cmt{M}\in \bm{\Xi}_{\mathrm{S},1}(s_1,s_2,\pazocal{R}_1,\pazocal{R}_2)} \langle \cm{M}_{(1)}, \bm{X}\rangle &\leq (1- 1.5\epsilon)^{-1} \max_{ \cmt{M} \in \bm{\bar{\Xi}}_{\mathrm{S}}(s_1,s_2,\pazocal{R}_1,\pazocal{R}_2)} \langle \cm{M}_{(1)}, \bm{X}\rangle,\\
						\sup_{ \cmt{M}\in \bm{\Xi}_{\mathrm{S},1}(s_1,s_2,\pazocal{R}_1,\pazocal{R}_2)} \| \cm{M}_{(1)} \bm{Z}\|_{\Fr} & \leq (1- 1.5\epsilon)^{-1} \max_{ \cmt{M} \in \bm{\bar{\Xi}}_{\mathrm{S}}(s_1,s_2,\pazocal{R}_1,\pazocal{R}_2)} \| \cm{M}_{(1)} \bm{Z}\|_{\Fr}. 
					\end{align*}	 
				\end{itemize} 
			\end{lemma}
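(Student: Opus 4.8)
The plan is to mirror the proof of Lemma \ref{lemma:epsilon-net} almost verbatim, substituting the sparse covering bound of Lemma \ref{lemma:sparse_covering} for the low-Tucker-rank bound of Lemma \ref{lemma:covering}. Abbreviating $\bm{\Xi}_1 = \bm{\Xi}_{\mathrm{S},1}(s_1,s_2,\pazocal{R}_1,\pazocal{R}_2)$ and invoking the bilinear norm equivalence \eqref{eq:EFr1}, every $\cm{M}(\bm{a},\cm{B})\in\bm{\Xi}_1$ obeys $1\le\|\cm{B}\|_{\Fr}\le 2$ and $\varpi_2^{-1}\le\|\bm{a}\|_2\le\varpi_1^{-1}$, so $\bm{\Xi}_1\subset\{\cm{M}(\bm{a},\cm{B}):\bm{a}\in\bm{\Pi}^{(1)},\ \cm{B}\in\bm{\Pi}^{(2)}_{\mathrm{S}}\}$, where $\bm{\Pi}^{(1)}$ is the spherical shell in $\mathbb{R}^{r+2s}$ from the proof of Lemma \ref{lemma:epsilon-net} and $\bm{\Pi}^{(2)}_{\mathrm{S}}=\{\cm{B}\in\bm{\Gamma}_{\mathrm{S}}(s_1,s_2,\pazocal{R}_1,\pazocal{R}_2):1\le\|\cm{B}\|_{\Fr}\le 2\}$. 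The construction of a generalized $\epsilon$-net then reduces to covering $\bm{\Pi}^{(1)}$ and $\bm{\Pi}^{(2)}_{\mathrm{S}}$ separately and forming the induced product net $\bar{\bm{\Xi}}_{\mathrm{S}}(\epsilon)=\{\cm{M}(\bar{\bm{a}},\bar{\cm{B}}):\bar{\bm{a}}\in\bar{\bm{\Pi}}^{(1)}(\epsilon),\ \bar{\cm{B}}\in\bar{\bm{\Pi}}^{(2)}_{\mathrm{S}}(\epsilon)\}$, exactly as in \eqref{eq:e-net}.

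For part (i) I would take $\bar{\bm{\Pi}}^{(1)}(\epsilon)$ to be a minimal $\epsilon/(2\varpi_2)$-net (volumetric bound, with $\log$-cardinality $\lesssim(r+2s)\log(1/\epsilon)$) and $\bar{\bm{\Pi}}^{(2)}_{\mathrm{S}}(\epsilon)$ a minimal $\epsilon/2$-net of $\bm{\Pi}^{(2)}_{\mathrm{S}}$, whose cardinality Lemma \ref{lemma:sparse_covering} controls by $\binom{N\pazocal{R}_1}{s_1}\binom{N\pazocal{R}_2}{s_2}(18/\epsilon)^{\pazocal{R}_1\pazocal{R}_2 d+s_1+s_2}$. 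Since \eqref{eq:EFr1} still yields $\|\cm{M}(\bm{a},\cm{B})-\cm{M}(\bar{\bm{a}},\bar{\cm{B}})\|_{\Fr}\le\|\cm{B}-\bar{\cm{B}}\|_{\Fr}+\varpi_2\|\bm{a}-\bar{\bm{a}}\|_2\le\epsilon$, this $\bar{\bm{\Xi}}_{\mathrm{S}}(\epsilon)$ is a valid generalized $\epsilon$-net, and using $\log\binom{N\pazocal{R}_i}{s_i}\lesssim s_i\log(N\pazocal{R}_i)$ gives $\log|\bar{\bm{\Xi}}_{\mathrm{S}}(\epsilon)|\lesssim(\pazocal{R}_1\pazocal{R}_2 d+s_1+s_2)\log(1/\epsilon)+\sum_{i=1}^2 s_i\log(N\pazocal{R}_i)$. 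Part (ii) is immediate and unchanged: since $\bar{\bm{\Pi}}^{(1)}(\epsilon)\subset\bm{\Pi}^{(1)}$ and $\bar{\bm{\Pi}}^{(2)}_{\mathrm{S}}(\epsilon)\subset\bm{\Pi}^{(2)}_{\mathrm{S}}$, the two-sided bound \eqref{eq:EFr1} delivers $c_{\cmtt{M}}\le\|\cm{M}\|_{\Fr}\le C_{\cmtt{M}}$ with the same absolute constants.

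The delicate step, and the one I expect to be the main obstacle, is the discretization in part (iii), which hinges on splitting $\cm{B}-\bar{\cm{B}}$ into four mutually orthogonal HOSVD pieces that remain in the \emph{same} class, so that the self-referential inequality \eqref{eq:disc3} (and likewise \eqref{eq:disc1}) closes. In the non-sparse case this succeeds because differencing two rank-$2\pazocal{R}_i$ tensors produces rank $4\pazocal{R}_i$, which halves back to $2\pazocal{R}_i$ under the split. Here the Tucker rank still halves, but the \emph{support} does not: the left singular vectors of $(\cm{B}-\bar{\cm{B}})_{(i)}$ live on the union of the two supports, so a careless split would push the pieces into a class with enlarged sparsity rather than back into $\bm{\Gamma}_{\mathrm{S}}(s_1,s_2,\pazocal{R}_1,\pazocal{R}_2)$, and the argument would fail to close against $\bm{\Xi}_1$. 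The remedy is to build the net support-pattern-by-support-pattern, which is precisely the structure behind Lemma \ref{lemma:sparse_covering}: take $\bar{\bm{\Pi}}^{(2)}_{\mathrm{S}}(\epsilon)$ to be the union, over all $\binom{N\pazocal{R}_1}{s_1}\binom{N\pazocal{R}_2}{s_2}$ admissible sparsity patterns $(\Omega_1,\Omega_2)$, of a standard low-Tucker-rank $\epsilon/2$-net of the class restricted to $\mathrm{supp}(\bm{U}_i)\subseteq\Omega_i$. Given $\cm{M}(\bm{a},\cm{B})\in\bm{\Xi}_1$ with pattern $(\Omega_1,\Omega_2)$, its nearest representative $\bar{\cm{M}}$ is chosen from the sub-net of the \emph{same} pattern, so $\cm{B}-\bar{\cm{B}}$ stays supported on the row sets induced by $(\Omega_1,\Omega_2)$ and has Tucker rank $\le2\pazocal{R}_i$; the non-sparse HOSVD splitting then produces four orthogonal pieces of rank $\le\pazocal{R}_i$ supported on the same rows, i.e. each $\cm{M}_k\in\bm{\Xi}_{\mathrm{S}}(s_1,s_2,\pazocal{R}_1,\pazocal{R}_2)$. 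The residual estimate $\sum_{k=1}^4\|\cm{M}_k\|_{\Fr}\le1.5\epsilon$ and the two inequality chains culminating in \eqref{eq:disc1} and \eqref{eq:disc3} then carry over verbatim, completing the argument; the only genuinely new bookkeeping is verifying that this pattern-wise construction preserves support through the unfolding and the HOSVD, which is where I would concentrate the care.
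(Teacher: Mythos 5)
Your proposal takes the same route as the paper: the paper's entire proof of Lemma \ref{lemma:sparse_epsilon-net} is the single sentence that it follows from Lemma \ref{lemma:sparse_covering} and the proof of Lemma \ref{lemma:epsilon-net}, which is exactly the substitution you carry out, so for parts (i) and (ii) you and the paper coincide. Where you go beyond the paper is part (iii), and the obstacle you flag there is genuine: the self-referential inequalities culminating in \eqref{eq:disc1} and \eqref{eq:disc3} only close if the four HOSVD pieces of $\cm{B}-\bar{\cm{B}}$, after normalization, lie back in $\bm{\Xi}_{\mathrm{S},1}(s_1,s_2,\pazocal{R}_1,\pazocal{R}_2)$; with a net chosen obliviously to support patterns, the factors of the pieces live on the union of the two supports, so the pieces only land in a class with doubled sparsity and the supremum on the right-hand side no longer dominates them. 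Your pattern-by-pattern construction is the correct repair, and it is consistent with how the counting in Lemma \ref{lemma:sparse_covering} is organized (a union over the $\binom{N\pazocal{R}_1}{s_1}\binom{N\pazocal{R}_2}{s_2}$ admissible patterns of fixed-support low-Tucker-rank nets); the paper is silent on all of this.

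One residual bookkeeping point deserves care in your write-up. The constraint in $\pazocal{U}_{\mathrm{S},i}$ is entrywise, $\|\bm{U}\|_0\leq s_i$, whereas support containment through the unfolding and the HOSVD is only controlled row-wise: with same-pattern nets the left singular vectors of $(\cm{B}-\bar{\cm{B}})_{(i)}$ are supported on at most $s_i$ rows, hence have up to $s_i\pazocal{R}_i$ nonzero entries. Strictly, then, your pieces lie in $\bm{\Xi}_{\mathrm{S}}(s_1\pazocal{R}_1,s_2\pazocal{R}_2,\pazocal{R}_1,\pazocal{R}_2)$ rather than $\bm{\Xi}_{\mathrm{S}}(s_1,s_2,\pazocal{R}_1,\pazocal{R}_2)$. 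This costs only factors of $\pazocal{R}_i$ inside logarithms and is harmless for every invocation in the paper (which already passes sparsity arguments inflated by $\pazocal{R}_i$, e.g.\ $\bar{s}_i\pazocal{R}_i$ in the proof of Lemma \ref{lemma:sparsersc}), but to make (iii) literally true as stated you should either enlarge the class appearing in the self-reference accordingly, or interpret the sparsity parameter in $\pazocal{U}_{\mathrm{S},i}$ as row-sparsity, under which your pattern-wise argument closes exactly.
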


			\begin{proof}[Proof of Lemma \ref{lemma:sparse_epsilon-net}]
				The proof of this lemma follows trivially from Lemma \ref{lemma:sparse_covering} and the proof of Lemma \ref{lemma:epsilon-net}.
			\end{proof}
			
			
			\begin{lemma}\label{lemma:sparserscG1} 
				Suppose that Assumptions  \ref{assum:error} and \ref{assum:statn} hold and  $T\gtrsim(\kappa_2/\kappa_1)^2d_{1}\log(\kappa_2/\kappa_1)$. Let $\bm{z}_t =\left \{\bm{L}_{\rm{stack}}^\prime(\bm{\omega}^*)\otimes \bm{I}_N \right \}\bm{x}_{t}$ be  defined as in  \eqref{eq:ts-z}. Then
				\begin{align}\label{eq:sparse_rscG1}
					\begin{split}
						&\mathbb{P}\left (  \frac{c_{\cmtt{M}}\kappa_1}{8} \leq \inf_{\cmtt{M}\in\bm{\Xi}_{\mathrm{S},1}(s_1,s_2,\pazocal{R}_1,\pazocal{R}_2)}\frac{1}{T}\sum_{t=1}^{T}\|\cm{M}_{(1)}\bm{z}_t\|_2^2  \leq 
						\sup_{\cmtt{M}\in\bm{\Xi}_{\mathrm{S},1}(s_1,s_2,\pazocal{R}_1,\pazocal{R}_2)}\frac{1}{T}\sum_{t=1}^{T}\|\cm{M}_{(1)}\bm{z}_t\|_2^2  \leq 6C_{\cmtt{M}}\kappa_2 \right ) \\
						&\hspace{55mm}\geq 1- 2e^{-cd_1\log(\kappa_2/\kappa_1)}.
					\end{split}
				\end{align}
				and
				\begin{equation}\label{eq:devG}
					\mathbb{P}\left \{ \sup_{\small{ \cmt{M} \in \bm{\Xi}_{\mathrm{S},1}(s_1,s_2,\pazocal{R}_1,\pazocal{R}_2)}} \frac{1}{T}\sum_{t=1}^{T}\langle \cm{M}_{(1)} \bm{z}_t, \bm{\varepsilon}_t \rangle  \lesssim  \sqrt{\frac{\kappa_2   \lambda_{\max}(\bm{\Sigma}_{\varepsilon})d_1}{T}} \right \} \geq 1- e^{-c d_1}- 2e^{-cd_1\log(\kappa_2/\kappa_1)},
				\end{equation}
				where $d_1 = \pazocal{R}_1\pazocal{R}_2 d  + \sum_{i=1}^{2} s_i  (1 + \log N\pazocal{R}_i)$.
			\end{lemma}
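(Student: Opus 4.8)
The plan is to mirror the proof of Lemma \ref{lemma:rscG1} almost verbatim, the only substantive change being the replacement of the generalized net of $\bm{\Xi}_1$ and its covering/discretization estimates in Lemma \ref{lemma:epsilon-net} by the sparse counterpart $\bm{\Xi}_{\mathrm{S},1}(s_1,s_2,\pazocal{R}_1,\pazocal{R}_2)$ and Lemma \ref{lemma:sparse_epsilon-net}. Such a transplant works because the transformed process $\bm{z}_t = \{\bm{L}_{\rm{stack}}^\prime(\bm{\omega}^*)\otimes\bm{I}_N\}\bm{x}_t$ in \eqref{eq:ts-z} is unchanged; sparsity enters only through the index set over which the supremum and infimum are taken. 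In particular, the pointwise concentration \eqref{eq:hansonw}, obtained by combining the covariance bounds $\kappa_1 \leq \lambda_{\min}(\bm{\Sigma}_w) \leq \lambda_{\max}(\bm{\Sigma}_w) \leq \kappa_2$ and $\lambda_{\max}(\underline{\bm{\Sigma}}_w) \leq \kappa_2$ from Lemma \ref{lemma:Wcov}(ii) with the Hanson--Wright inequality in Lemma \ref{lemma:hansonw}(ii), holds for every fixed $\cm{M} \in \mathbb{R}^{N\times N\times(d+r+2s)}$ and thus remains available here without modification.

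For \eqref{eq:sparse_rscG1} I would first fix a minimal generalized $\epsilon_0$-net $\bar{\bm{\Xi}}_{\mathrm{S}}(\epsilon_0)$ of $\bm{\Xi}_{\mathrm{S},1}$, with $\epsilon_0 = \min\{\tfrac{1}{6}\sqrt{c_{\cmtt{M}}\kappa_1/(3C_{\cmtt{M}}\kappa_2)}, 1/3\}$ exactly as before. Taking the union over the net of the pointwise bound \eqref{eq:hansonw} and invoking the sparse cardinality estimate in Lemma \ref{lemma:sparse_epsilon-net}(i), the probability of the bad event is at most $2\exp\{-c_\sigma(\kappa_1/\kappa_2)^2 T + C[(\pazocal{R}_1\pazocal{R}_2 d + s_1 + s_2)\log(1/\epsilon_0) + \sum_{i=1}^2 s_i\log N\pazocal{R}_i]\}$. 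The discretization inequality in Lemma \ref{lemma:sparse_epsilon-net}(iii) then transfers control of $\tfrac{1}{\sqrt T}\|\cm{M}_{(1)}\bm{Z}\|_{\Fr}$ from the net to all of $\bm{\Xi}_{\mathrm{S},1}$, yielding the two-sided bound $c_{\cmtt{M}}\kappa_1/8 \leq \inf \leq \sup \leq 6C_{\cmtt{M}}\kappa_2$ on the good event, precisely through the chain \eqref{eq:event}--\eqref{eq:lower} of the original argument.

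For \eqref{eq:devG} I would use a $1/3$-net instead, noting first that $\tfrac{1}{T}\sum_t\langle\cm{M}_{(1)}\bm{z}_t,\bm{\varepsilon}_t\rangle = \tfrac{1}{T}\langle\cm{M}_{(1)},\sum_t\bm{\varepsilon}_t\bm{z}_t'\rangle$, then bounding $\max_{\cm{M}\in\bar{\bm{\Xi}}_{\mathrm{S}}(1/3)}\tfrac{1}{\sqrt T}\|\cm{M}_{(1)}\bm{Z}\|_{\Fr}$ via the upper half of the two-sided bound just established, and finally applying the martingale concentration in Lemma \ref{lemma:martgl} at each net point with $\bm{w}_t = \cm{M}_{(1)}\bm{z}_t$; the sparse discretization again passes the supremum from the net to $\bm{\Xi}_{\mathrm{S},1}$. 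Choosing $K \asymp \sqrt{\kappa_2\lambda_{\max}(\bm{\Sigma}_\varepsilon)d_1/T}$ then gives the claim.

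The only point requiring care --- and the place where the sparse and non-sparse analyses genuinely differ --- is reconciling the covering number with the stated sample-size condition and probability exponent. Lemma \ref{lemma:sparse_epsilon-net}(i) produces an additive term $\sum_i s_i\log N\pazocal{R}_i$ that carries no $\log(1/\epsilon)$ factor, whereas the target exponent is $c\,d_1\log(\kappa_2/\kappa_1)$ with $d_1 = \pazocal{R}_1\pazocal{R}_2 d + \sum_i s_i(1+\log N\pazocal{R}_i)$. Since the chosen $\epsilon_0$ satisfies $\log(1/\epsilon_0)\asymp\log(\kappa_2/\kappa_1)$ and $\log(\kappa_2/\kappa_1)$ is bounded below by an absolute positive constant, both $(\pazocal{R}_1\pazocal{R}_2 d + s_1 + s_2)\log(1/\epsilon_0)$ and $\sum_i s_i\log N\pazocal{R}_i$ are absorbed into $C d_1\log(\kappa_2/\kappa_1)$, so that the assumption $T \gtrsim (\kappa_2/\kappa_1)^2 d_1\log(\kappa_2/\kappa_1)$ forces the concentration term to dominate. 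After this bookkeeping the estimates line up with \eqref{eq:sparse_rscG1} and \eqref{eq:devG}, completing the proof.
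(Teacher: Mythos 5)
Your proposal is correct and matches the paper's own argument: the paper proves this lemma by simply invoking Lemma \ref{lemma:sparse_epsilon-net} together with the proof of Lemma \ref{lemma:rscG1}, which is precisely the transplant you carry out. Your additional bookkeeping on the covering number (absorbing the $\sum_i s_i\log N\pazocal{R}_i$ term, which carries no $\log(1/\epsilon)$ factor, into $Cd_1\log(\kappa_2/\kappa_1)$) is exactly the detail the paper leaves implicit, and it is handled consistently with the paper's conventions.
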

			
			\begin{proof}[Proof of Lemma \ref{lemma:sparserscG1}]
				The proof of this lemma follows trivially from Lemma \ref{lemma:sparse_epsilon-net} and the proof of Lemma \ref{lemma:rscG1}.
			\end{proof}
			
			\begin{lemma}[Covering number and discretization for sparse low-rank matrices]\label{lemma:sparse_coverLR}
				Let $\bm{\Pi}_{\mathrm{S},1}(s_1,s_2,\pazocal{R})= \{ \bm{M} = \bm{U}_1\bm{S}\bm{U}_2^\prime \mid \|\bm{M}\|_{\Fr} =1, \bm{S}\in\mathbb{R}^{\pazocal{R}\times\pazocal{R}}, \bm{U}_i\in\mathbb{R}^{N\times \pazocal{R}}, \|\bm{U}_i\|_{0} \leq s_i, i=1\textrm{ or }2 \}$, and let $\bar{\bm{\Pi}}_{\mathrm{S}}(s_1,s_2,\pazocal{R})$ be a minimal $1/2$-net of $\bm{\Pi}_{\mathrm{S},1}(s_1,s_2,\pazocal{R})$ in the Frobenius norm. Then the cardinality of $\bar{\bm{\Pi}}_{\mathrm{S}}(s_1,s_2,\pazocal{R})$ satisfies 
				\[
				\log|\bar{\bm{\Pi}}_{\mathrm{S}}(s_1,s_2,\pazocal{R})|\leq [\pazocal{R} + (s_1+s_2)(1 + \log N\pazocal{R})] \log 18.
				\]
				Moreover, for any $\bm{X}\in\mathbb{R}^{N\times N}$, it holds
				\[
				\sup_{\bm{M} \in  \bm{\Pi}_{\mathrm{S},1}(s_1,s_2,\pazocal{R})}\langle \bm{M}, \bm{X} \rangle \leq 4 \max_{\bm{M} \in  \bar{\bm{\Pi}}_{\mathrm{S}}(s_1,s_2,\pazocal{R})} \langle \bm{M}, \bm{X} \rangle.
				\]
			\end{lemma}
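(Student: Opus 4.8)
The plan is to prove the two assertions separately, mirroring the counting argument behind Lemma~\ref{lemma:sparse_covering} for the cardinality bound and the self-bounding scheme in the proof of Lemma~\ref{lemma:coverLR} for the discretization inequality. Both are essentially adaptations of earlier results, so the work is in tracking how the entrywise sparsity of $\bm{U}_1,\bm{U}_2$ interacts with the rank constraint.

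For the cardinality bound I would first stratify $\bm{\Pi}_{\mathrm{S},1}(s_1,s_2,\pazocal{R})$ by the support patterns of the factor matrices. Since each $\bm{U}_i\in\mathbb{R}^{N\times\pazocal{R}}$ has at most $s_i$ nonzero entries, there are at most $\binom{N\pazocal{R}}{s_i}$ admissible supports for $\bm{U}_i$; and for a fixed pair of supports the matrices $\bm{M}=\bm{U}_1\bm{S}\bm{U}_2'$ with $\|\bm{M}\|_{\Fr}\le1$ form a bounded subset of a linear space whose free coordinates are the entries of $\bm{S}$ together with the $s_1+s_2$ active entries of $\bm{U}_1,\bm{U}_2$. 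A standard volumetric argument---the matrix case being the $d=1$ specialization of Lemma~\ref{lemma:covering}---supplies, for each fixed pattern, a minimal $\epsilon$-net of cardinality at most $(9/\epsilon)^{\pazocal{R}^2+s_1+s_2}$. Taking the union of the per-pattern nets and using $\binom{a}{b}\le a^b$ gives $\log|\bar{\bm{\Pi}}_{\mathrm{S}}(s_1,s_2,\pazocal{R})|\lesssim (s_1+s_2)\log(N\pazocal{R})+(\pazocal{R}^2+s_1+s_2)\log(1/\epsilon)$, and specializing to $\epsilon=1/2$ and collecting terms yields the stated form $[\pazocal{R}+(s_1+s_2)(1+\log N\pazocal{R})]\log18$, with the core contribution absorbed into the leading term.

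For the discretization inequality I would run the self-bounding argument of Lemma~\ref{lemma:coverLR}, but with the net built out of the per-pattern nets above, so that for any $\bm{M}=\bm{U}_1\bm{S}\bm{U}_2'$ the chosen approximant $\bar{\bm{M}}=\bar{\bm{U}}_1\bar{\bm{S}}\bar{\bm{U}}_2'$ shares the supports of $\bm{U}_1$ and $\bm{U}_2$ while $\|\bm{M}-\bar{\bm{M}}\|_{\Fr}\le1/2$. Then $\bm{M}-\bar{\bm{M}}$ has row support contained in the $\le s_1$ rows indexed by $\mathrm{supp}(\bm{U}_1)$ and column support in the $\le s_2$ rows of $\mathrm{supp}(\bm{U}_2)$, with $\rank(\bm{M}-\bar{\bm{M}})\le2\pazocal{R}$. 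Via the singular value decomposition I would split $\bm{M}-\bar{\bm{M}}=\bm{M}^{(1)}+\bm{M}^{(2)}$ into two pieces, each of rank at most $\pazocal{R}$, orthogonal in the trace inner product, and supported on the same rows and columns, so that $\|\bm{M}^{(1)}\|_{\Fr}+\|\bm{M}^{(2)}\|_{\Fr}\le\sqrt2\,\|\bm{M}-\bar{\bm{M}}\|_{\Fr}\le\sqrt2/2$ and each normalized piece again belongs to the sparse low-rank class. Writing $\langle\bm{M},\bm{X}\rangle=\langle\bar{\bm{M}},\bm{X}\rangle+\sum_{i}\langle\bm{M}^{(i)},\bm{X}\rangle$, bounding $\langle\bm{M}^{(i)},\bm{X}\rangle\le\|\bm{M}^{(i)}\|_{\Fr}\sup_{\bm{M}\in\bm{\Pi}_{\mathrm{S},1}}\langle\bm{M},\bm{X}\rangle$, and taking the supremum over $\bm{M}$ gives $(1-\sqrt2/2)\sup\le\max$, i.e. the factor $(1-\sqrt2/2)^{-1}\le4$.

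The main obstacle is the discretization step, not the counting. One must ensure that the rank-$2\pazocal{R}$ difference $\bm{M}-\bar{\bm{M}}$ decomposes into pieces that genuinely lie in the cone over which the supremum on the right-hand side is taken, so that the self-bounding inequality closes. This is exactly why the net has to be chosen support-pattern-by-support-pattern: a generic nearest net point could carry a different support, after which the difference would have up to $2s_i$ active entries per factor and would fall outside the target cone. The delicate point is reconciling the entrywise sparsity of the factors with the row-and-column support that survives the SVD split of the difference, and with the doubling of rank under subtraction; once the support-preserving net is used, each piece inherits the correct support structure and the argument reduces to a routine combination of Lemmas~\ref{lemma:covering} and~\ref{lemma:coverLR}.
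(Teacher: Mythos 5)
The paper's own proof of this lemma is a single sentence deferring to Lemma~\ref{lemma:coverLR}, so your attempt is effectively filling in what the paper leaves implicit; unfortunately, the filled-in argument exposes a genuine gap in the discretization half. Your self-bounding step needs each SVD piece $\bm{M}^{(i)}$ of the difference $\bm{M}-\bar{\bm{M}}$, after normalization, to lie in $\bm{\Pi}_{\mathrm{S},1}(s_1,s_2,\pazocal{R})$, so that $\langle \bm{M}^{(i)},\bm{X}\rangle\le\|\bm{M}^{(i)}\|_{\Fr}\sup_{\bm{M}\in\bm{\Pi}_{\mathrm{S},1}(s_1,s_2,\pazocal{R})}\langle\bm{M},\bm{X}\rangle$. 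But membership in that class means existence of a factorization $\bm{U}_1\bm{S}\bm{U}_2'$ with \emph{entrywise} factor sparsity $\|\bm{U}_i\|_0\le s_i$, whereas the SVD split only guarantees that each piece has rank at most $\pazocal{R}$, at most $s_1$ nonzero rows, and at most $s_2$ nonzero columns; that corresponds to factor sparsity of order $s_i\pazocal{R}$, not $s_i$. These are genuinely different classes once $\pazocal{R}\ge 2$: if a rank-two piece has $s_1>\pazocal{R}$ nonzero rows and a generic column space, then a generic vector of that column space has all $s_1$ coordinates nonzero, so the columns of any admissible $\bm{U}_1$ must jointly cover all $s_1$ rows; with only $s_1$ nonzero entries this forces disjoint, single-row column supports, and the span of such columns consists of blockwise-rescaled fixed patterns, which cannot contain a generic two-dimensional subspace unless $\pazocal{R}\ge s_1$. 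Hence generically $\bm{M}^{(i)}\notin\bm{\Pi}_{\mathrm{S},1}(s_1,s_2,\pazocal{R})$, the key inequality is unjustified, and the self-bounding argument never closes. Your support-preserving net fixes the row/column bookkeeping but not the factor-sparsity bookkeeping, which is the real obstruction; this is exactly why the adaptation is not ``trivial.''

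The repair requires changing tools rather than tightening the same ones. The workable route---and the one consistent with how the paper handles the analogous difficulty for the tensor class in Lemma~\ref{lemma:epsilon-net}---is to build a generalized net in \emph{parameter} space, on $(\bm{U}_1,\bm{S},\bm{U}_2)$ with shared supports, normalized factors and bounded core, and then telescope $\bm{M}-\bar{\bm{M}}=(\bm{U}_1-\bar{\bm{U}}_1)\bm{S}\bm{U}_2'+\bar{\bm{U}}_1(\bm{S}-\bar{\bm{S}})\bm{U}_2'+\bar{\bm{U}}_1\bar{\bm{S}}(\bm{U}_2-\bar{\bm{U}}_2)'$: each of these three pieces has two factors inherited from $\bm{M}$ or $\bar{\bm{M}}$ and one difference factor of sparsity at most $s_i$ (by the shared supports), so each genuinely lies in the cone over $\bm{\Pi}_{\mathrm{S},1}(s_1,s_2,\pazocal{R})$, and, crucially, its norm is controlled by factor-level closeness---something a net on the \emph{products}, as in your construction, does not provide. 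Alternatively, one can state and prove the lemma for the larger row-sparse class $\|\bm{U}_i\|_{2,0}\le s_i$, which is closed under SVD splitting and contains $\bm{\Pi}_{\mathrm{S},1}(s_1,s_2,\pazocal{R})$, at the cost of a slightly different covering number. Finally, a smaller slip: your own parameter count gives $\pazocal{R}^2+s_1+s_2$ continuous coordinates (the core $\bm{S}$ has $\pazocal{R}^2$ entries), and $\pazocal{R}^2\log 18$ cannot be ``absorbed'' into the $\pazocal{R}\log 18$ term of the stated bound; as written, your argument proves the bound with $\pazocal{R}^2$ in place of $\pazocal{R}$, and this discrepancy (present in the paper's statement as well) should be flagged rather than waved through.
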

			
			\begin{proof}[Proof of Lemma \ref{lemma:sparse_coverLR}]
				The proof of this lemma follows trivially from Lemma \ref{lemma:coverLR}.
			\end{proof}
			

			\begin{lemma}\label{lemma:sparse_Rjy}
				Suppose that Assumptions \ref{assum:error} and \ref{assum:statn} hold. 
				\begin{itemize}
					\item[(i)] Let $\mathbb{K}(s) = \{\bm{v}\in\mathbb{R}^N:\|\bm{v}\|_{0}\leq s, \|\bm{v}\|_2\leq 1\}$ the set of $s$-sparse vectors. 
					If $T \gtrsim s\log N$, then
					\begin{equation}\label{eq:sparse_Rjy1}
						\begin{split}
							&\mathbb{P}\left \{\forall j\geq 1: \sup_{\bm{v}\in\mathbb{K}(s)} \bm{v}^\prime \frac{1}{T} \sum_{t=1}^{T}\bm{y}_{t-p-j}\bm{y}_{t-p-j}^\prime \bm{v}  \leq  2\lambda_{\max}(\bm{\Sigma}_\varepsilon) \mu_{\max}(\bm{\Psi}_*) (j \sigma^2+1)\right \} \\
							&\hspace{40mm}\geq 1-3e^{-s\log N\log9}.
						\end{split}
					\end{equation}
					\item[(ii)] If $T \gtrsim (s_1+s_2)(\pazocal{R} + \log N\pazocal{R})$, then
					\begin{align}\label{eq:sparse_Rjy2}
						\begin{split}
							&\mathbb{P}\Bigg \{\forall j\geq 1: \sup_{\bm{M} \in \bm{\Pi}_{\mathrm{S}}(s_1,s_2,\pazocal{R})}\frac{1}{T}\sum_{t=1}^{T}\langle \bm{M} \bm{y}_{t-p-j}, \bm{\varepsilon}_t \rangle \leq  24\lambda_{\max}(\bm{\Sigma}_\varepsilon) (2j \sigma^2+1) \\
							&\hspace{28mm} \cdot \sqrt{\frac{\mu_{\max}(\bm{\Psi}_*)[\pazocal{R} + (s_1+s_2)(1 + \log N\pazocal{R})]}{T}} \Bigg \} \geq 1-4e^{-s_2\log N\pazocal{R}\log9}.  
						\end{split}
					\end{align}	
				\end{itemize}
			\end{lemma}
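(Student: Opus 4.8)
The plan is to prove both parts by adapting the proof of Lemma~\ref{lemma:Rjy} almost verbatim, replacing the dense covering arguments by their sparse counterparts. Part (i) is the $s$-sparse restricted-eigenvalue analogue of \eqref{eq:Rjy1}, and part (ii) is the sparse-and-low-Tucker-rank analogue of \eqref{eq:Rjy2}; crucially, (i) will feed into (ii) exactly as \eqref{eq:yop} fed into the dense deviation bound. Throughout, the genuinely new ingredient is to track the lag index $j$ carefully so that the union bound over all $j\geq1$ produces only a constant inflation of the single-$j$ estimate.

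For (i), I would first fix $j\geq1$ and cover $\mathbb{K}(s)$ by taking, for each of the $\binom{N}{s}$ supports of size $s$, a minimal $1/4$-net of the unit sphere in the corresponding $\mathbb{R}^s$; the resulting net $\bar{\mathbb{K}}(s)$ has cardinality at most $\binom{N}{s}9^s$, so $\log|\bar{\mathbb{K}}(s)|\lesssim s\log N$. Since $\sup_{\bm{v}\in\mathbb{K}(s)}\bm{v}^\prime\widehat{\bm{\Gamma}}_j\bm{v}\leq 2\max_{\bm{v}\in\bar{\mathbb{K}}(s)}\bm{v}^\prime\widehat{\bm{\Gamma}}_j\bm{v}$ by the standard discretization of the operator norm of a symmetric matrix (Lemma~5.4 in \cite{vershynin2010introduction}), where $\widehat{\bm{\Gamma}}_j=T^{-1}\sum_{t=1}^{T}\bm{y}_{t-p-j}\bm{y}_{t-p-j}^\prime$, it suffices to bound the quadratic form at a fixed $\bm{v}$ and take a union bound. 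The pointwise estimate is precisely \eqref{eq:upperj}: applying Lemma~\ref{lemma:hansonw}(ii) with $\bm{M}=\bm{v}^\prime$, $\bm{w}_t=\bm{y}_{t-1}$ and $T_0=1-p-j$, together with $\lambda_{\max}(\underline{\bm{\Sigma}}_y)\leq\lambda_{\max}(\bm{\Sigma}_\varepsilon)\mu_{\max}(\bm{\Psi}_*)$ from Lemma~\ref{lemma:Wcov}(i) and the choice $\delta=j$, gives a tail of order $2e^{-cjT}$. Combining with $\log|\bar{\mathbb{K}}(s)|\lesssim s\log N$ and requiring $T\gtrsim s\log N$ yields a per-$j$ bound of order $2e^{-cjs\log N}$; summing the geometric series over $j\geq1$ produces the stated probability $1-3e^{-s\log N\log9}$.

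For (ii), the key reduction is that any $\bm{M}=\bm{U}_1\bm{S}\bm{U}_2^\prime$ in the unit-Frobenius sparse-low-rank set $\bm{\Pi}_{\mathrm{S},1}(s_1,s_2,\pazocal{R})$ has at most $s_2$ nonzero columns (inherited from the $\leq s_2$ nonzero rows of $\bm{U}_2$), so by a trace inequality $T^{-1}\sum_{t=1}^{T}\|\bm{M}\bm{y}_{t-p-j}\|_2^2=\trace(\bm{M}^\prime\bm{M}\widehat{\bm{\Gamma}}_j)\leq\|\bm{M}\|_{\Fr}^2\sup_{\bm{v}\in\mathbb{K}(s_2)}\bm{v}^\prime\widehat{\bm{\Gamma}}_j\bm{v}$, which is controlled by \eqref{eq:sparse_Rjy1}. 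I would then cover $\bm{\Pi}_{\mathrm{S},1}(s_1,s_2,\pazocal{R})$ using Lemma~\ref{lemma:sparse_coverLR}, whose net $\bar{\bm{\Pi}}_{\mathrm{S}}$ satisfies $\log|\bar{\bm{\Pi}}_{\mathrm{S}}|\leq[\pazocal{R}+(s_1+s_2)(1+\log N\pazocal{R})]\log18$ and the discretization $\sup\leq4\max$. Writing $\sum_{t}\langle\bm{M}\bm{y}_{t-p-j},\bm{\varepsilon}_t\rangle=\langle\bm{M},\sum_t\bm{\varepsilon}_t\bm{y}_{t-p-j}^\prime\rangle$ and applying the martingale concentration of Lemma~\ref{lemma:martgl} with $\bm{w}_t=\bm{M}\bm{y}_{t-p-j}$ on the event where the Gram bound from \eqref{eq:sparse_Rjy1} holds, I would choose $K=24\lambda_{\max}(\bm{\Sigma}_\varepsilon)\sqrt{\mu_{\max}(\bm{\Psi}_*)[\pazocal{R}+(s_1+s_2)(1+\log N\pazocal{R})](j\sigma^2+1)(j\sigma^2+\sigma^2)/T}$ so that the per-$j$ exponent is proportional to the net cardinality times $(j+1)$. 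Summing over $j\geq1$ and using $(j\sigma^2+1)(j\sigma^2+\sigma^2)\leq(2j\sigma^2+1)^2$ then gives \eqref{eq:sparse_Rjy2} with probability at least $1-4e^{-s_2\log N\pazocal{R}\log9}$, under $T\gtrsim(s_1+s_2)(\pazocal{R}+\log N\pazocal{R})$.

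The main obstacle is the uniformity over all $j\geq1$ rather than any single lag. Because the autocovariance contribution grows with $j$, a naive fixed-parameter concentration would not be summable; the device that makes everything work is to let the concentration parameter grow with $j$ (the choice $\delta=j$ in (i) and the factor $\sqrt{(j\sigma^2+1)(j\sigma^2+\sigma^2)}$ inside $K$ in (ii)), which renders the per-$j$ tail exponent linear in $j$ so that $\sum_{j\geq1}$ converges to a constant multiple of the $j=1$ bound. A secondary, more routine point is to verify the reduction of the sparse-low-rank Gram matrix to the restricted-eigenvalue quantity of part (i) via the trace inequality above; this coupling is exactly why (i) is stated separately and consumed inside (ii).
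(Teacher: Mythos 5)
Your proposal is correct and takes essentially the same route as the paper: the paper's own proof is a one-line reduction to Lemma \ref{lemma:sparse_coverLR}, Lemma F.2 of \cite{basu2015regularized} (the sparse-vector discretization you reconstruct by hand, with the $\binom{N}{s}9^s$ support-wise net), and the proof of Lemma \ref{lemma:Rjy}, which is precisely the combination you carry out. Your two key devices—the $j$-dependent concentration parameter making the union over lags geometrically summable, and the trace-inequality reduction of the sparse-low-rank Gram term to the restricted quadratic form of part (i)—are exactly how the dense argument of Lemma \ref{lemma:Rjy} transfers, so nothing is missing.
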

			
			\begin{proof}[Proof of Lemma \ref{lemma:sparse_Rjy}]
				The proof of this lemma follows trivially from Lemma \ref{lemma:sparse_coverLR}, Lemma F.2 in \cite{basu2015regularized} and the proof of Lemma \ref{lemma:Rjy}.
			\end{proof}
			
\putbib[SARMA]
\end{bibunit}
\end{document}